\documentclass[twocolumn,pdftex]{svjour3}


\usepackage{cancel}
\usepackage[normalem]{ulem}
\usepackage{microtype}
\usepackage{url}
\usepackage[utf8]{inputenc}
\usepackage{amsmath}
\usepackage{balance}
\usepackage{relsize} 
\usepackage{wrapfig}
\usepackage{cite} 
\usepackage{amssymb}
\usepackage{stmaryrd}
\usepackage{extarrows}
\usepackage{mathtools}
\usepackage[mathscr]{euscript}
\usepackage{rotating}

\usepackage{amsthm}
\usepackage{thmtools, thm-restate} 
\usepackage[table]{xcolor}
\usepackage{epsfig}
\usepackage{bbold}
\usepackage{multicol}
\usepackage{tikz}
\usepackage{calc}
\usepackage{xspace} 
\usepackage[caption=false]{subfig}
\usepackage{graphicx,tikz}
\graphicspath{ {images/} }
\usetikzlibrary{arrows,automata,calc,shapes}
\usetikzlibrary{positioning}
\usetikzlibrary{decorations.pathmorphing}
\pgfdeclarelayer{background}
\pgfsetlayers{background,main}
\usepackage{wrapfig} 
\usepackage{subfig}
\usepackage{multirow}
\usepackage{array}
\usepackage[outdir=./]{epstopdf}
\usepackage{bm} 
\usepackage{paralist} 
\usepackage{algorithm} 
\usepackage[noend]{algpseudocode}

\allowdisplaybreaks

\algrenewcommand\algorithmicindent{2ex}%
\newcommand{\LineFor}[2]{%
    \State \algorithmicfor\ {#1}\ \algorithmicdo\ {#2} %
}

\usepackage[utf8]{inputenc}
\usepackage{tikz}
\usetikzlibrary{topaths,calc}
\usepackage{lipsum,adjustbox}
\usepackage{silence}
\WarningFilter{latex}{Text page}

 \newcommand{\subparagraph}{} 
 \usepackage[explicit]{titlesec}


 \titleformat{\paragraph}[runin]
     {\normalfont\bfseries}
     {}
     {0pt}
     {#1. }

\titlespacing{\paragraph}
    {0pt} 
    {1ex plus .3ex minus .1ex}
    {1ex} 

\makeatletter\@ifclassloaded{article}{\theoremstyle{plain}}{}\makeatother

\theoremstyle{definition}






\newcommand{\oomtime}{two\xspace}

\newcommand{\bi}{BI\xspace}
\newcommand{\cer}{CER\xspace}
\newcommand{\esper}{$E$\xspace}
\newcommand{\dbt}{$DBT$\xspace}
\newcommand{\zstream}{$Z$\xspace}
\newcommand{\sase}{$SE$\xspace}
\newcommand{\tesla}{$T$\xspace}

\newcommand{\reduc}{\ensuremath{\rho}}

\newcommand{\gjt}{GJT\xspace}
\newcommand{\gjf}{GJF\xspace}

\newcommand{\gjts}{GJTs\xspace}
\newcommand{\gjfs}{GJFs\xspace}
\newcommand{\cq}{CQ\xspace}
\newcommand{\cqs}{CQs\xspace}
\newcommand{\gcq}{GCQ\xspace}
\newcommand{\gcqs}{GCQs\xspace}

\newcommand{\dtree}[1]{{#1}-rep\xspace}

\newcommand{\ivm}{IVM\xspace}
\newcommand{\hivm}{HIVM\xspace}

\newcommand{\gmr}{GMR\xspace}

\newcommand{\cde}{CDE\xspace}



\DeclareMathOperator{\drep}{\mathcal{D}}

\DeclareMathOperator{\var}{\textit{var}}
\DeclareMathOperator{\pred}{\textit{pred}}

\DeclareMathOperator{\free}{\textit{out}}

\DeclareMathOperator{\supp}{supp}

\DeclareMathOperator{\proj}{\pi}

\DeclareMathOperator{\child}{ch}

\DeclareMathOperator{\nodes}{nodes}
\DeclareMathOperator{\leafs}{leafs}
\DeclareMathOperator{\treeroot}{root}

\DeclareMathAlphabet{\mathcal}{OMS}{cmsy}{m}{n}

\newcommand{\seq}[1]{\overline{#1}}
\newcommand{\tup}[1]{\vec{#1}}

\newcommand{\sem}[1]{#1}

\DeclareMathOperator{\routenum}{\algenum}

\newcommand{\aset}{\ats}
\newcommand{\ats}{\ensuremath{\mathcal{A}}}

\DeclareMathOperator{\preds}{\textit{pred}}

\DeclareMathOperator{\at}{\textit{at}}
\DeclareMathOperator{\atoms}{\textit{at}}
\DeclareMathOperator{\predicates}{\textit{pred}}
\DeclareMathOperator{\hypergraph}{\textit{hyp}}

\DeclareMathOperator{\db}{\textit{db}}

\DeclareMathOperator{\upd}{\mathit{u}}

\newcommand{\delt}[1]{\ensuremath{\Delta{#1}}}

\DeclareMathOperator{\semijoin}{\ltimes}


\newcommand{\card}[1]{|{#1}|}
\newcommand{\size}[1]{\parallel\hspace{-0.75ex}{#1}\hspace{-0.75ex}\parallel}


\newcommand{\dyn}{\text{\sc Dyn}\xspace}
\newcommand{\edyn}{\text{\sc GDyn}\xspace}
\newcommand{\iedyn}{\text{\sc IEDyn}\xspace}

\newcommand{\algenum}{\textsc{enum}}

\newcommand{\trip}[1]{\mathcal{#1}}
\newcommand{\ftrip}[1]{\mathbb{#1}}
\DeclareMathOperator{\ext}{\textit{ext}}

\DeclareMathOperator{\isolated}{\textit{isol}}

\DeclareMathOperator{\equijoinvars}{\textit{jv}}
\DeclareMathOperator{\forest}{\textit{forest}}

\newcommand{\nf}[1]{\ensuremath{{#1}\!\!\downarrow}}
\DeclareMathOperator*{\rewr}{\rightsquigarrow}
\DeclareMathOperator{\hypertrip}{\mathcal{H}}
\DeclareMathOperator*{\cse}{\sqsubseteq}
\newcommand{\fpreds}[1]{\ensuremath{\Theta_{\ftrip #1}}}




\begin{document}

\title{
       Conjunctive Queries with Theta Joins Under Updates
}


\author{
       Muhammad Idris \and
       Martín Ugarte \and
       Stijn Vansummeren \and
       Hannes Voigt \and
       Wolfgang Lehner 
}

\institute{M. Idris \at
  Universit\'e Libre de Bruxelles, Belgium and TU Dresden, Germany \\
              \email{midris@ulb.ac.be}           
           \and
           M. Ugarte \at
  Universit\'e Libre de Bruxelles, Belgium\\
  \email{mugartec@ulb.ac.be}
  \and 
  S. Vansummeren \at
  Universit\'e Libre de Bruxelles, Belgium\\
  \email{svsummer@ulb.ac.be}
  \and
  Hannes Voigt \at
  neo4j, Germany \\
  This work was done while the author was
    affiliated to TU Dresden, Germany \\
  \email{hannes.voigt@neo4j.com}
  \and
  Wolfgang Lehner \at 
  TU Dresden, Germany\\
  \email{wolfgang.lehner@tu-dresden.de}
}

\date{Received: date / Accepted: date}


\maketitle


\begin{abstract}
  Modern application domains such as Composite Event Recognition
  (\cer) and real-time Analytics require the ability to dynamically
  refresh query results under high update rates. Traditional
  approaches to this problem are based either on the materialization
  of subresults (to avoid their recomputation) or on the recomputation
  of subresults (to avoid the space overhead of materialization). Both
  techniques have recently been shown suboptimal: instead of
  materializing results and subresults, one can maintain a data
  structure that supports efficient maintenance under updates and can
  quickly enumerate the full query output, as well as the changes
  produced under single updates.  Unfortunately, these data structures
  have been developed only for aggregate-join queries composed of
  equi-joins, limiting their applicability in domains such as \cer
  where temporal joins are commonplace. In this paper, we present a
  new approach for dynamically evaluating queries with multi-way
  $\theta$-joins under updates that is effective in avoiding both
  materialization and recomputation of results, while supporting a
  wide range of applications. To do this we generalize Dynamic
  Yannakakis, an algorithm for dynamically processing acyclic
  equi-join queries.  In tandem, and of independent interest, we
  generalize the notions of acyclicity and free-connexity to arbitrary
  $\theta$-joins and show how to compute corresponding join trees. We
  instantiate our framework to the case where $\theta$-joins are only
  composed of equalities and inequalities ($<, \leq, >, \geq$) and
  experimentally compare our algorithm to state of the art \cer
  systems as well as incremental view maintenance engines. Our
  approach performs consistently better than the competitor systems
  with up to \oomtime orders of magnitude improvements in both time
  and memory consumption.
\end{abstract}


\section{Introduction}

The ability to analyze dynamically changing data is a key requirement
of many contemporary applications, usually associated with Big Data,
that require such analysis in order to obtain timely insights and
implement reactive and proactive measures. Example
applications include Financial Systems \cite{cormode2007}, Industrial
Control Systems \cite{groover2007}, Stream Processing
\cite{Stonebraker:2005}, Composite Event Recognition (\cer, also known as
Complex Event Processing) \cite{buchmann:2009,cugola:2012}, and
Business Intelligence (\bi) \cite{sahay2008}. Generally, the analysis that
needs to be kept up-to-date, or at least their basic elements, are
specified in a query language. The main task is then to efficiently
update the query results under frequent data updates.

In this paper, we focus on the problem of dynamic evaluation for
queries that feature multi-way $\theta$-joins in addition to standard
equi-joins. To illustrate our setting, consider that we wish to detect
potential credit card frauds. Credit card transactions have a timestamp (\textit{ts}), account number
(\textit{acc}), and amount (\textit{amnt}), among other attributes. A typical
fraud pattern is that the criminal tests the credit card
with a few small purchases to then make larger purchases
(cf. \cite{DBLP:conf/debs/Schultz-MollerMP09}). In this respect, we
would like to dynamically evaluate the following query, assuming new transactions arrive
in a streaming fashion and the pattern must be detected in less than 1 hour.

\begin{verbatim}
SELECT * FROM Trans S1, Trans S2, Trans L
WHERE S1.ts < S2.ts AND S2.ts < L.ts 
AND L.ts < S1.ts + 1h
AND S1.acc = S2.acc AND S2.acc = L.acc
AND S1.amnt < 100 AND S2.amnt < 100 
AND L.amnt > 400
\end{verbatim}

Queries like this with inequality joins appear in both \cer and \bi
scenarios. Traditional techniques to process these queries dynamically
can be categorized in two approaches: relational and
automaton-based. We next discuss both approaches, their strengths and drawbacks.

\paragraph*{Relational} Relational approaches, such as
\cite{DBLP:journals/vldb/KochAKNNLS14, DBLP:conf/sigmod/MeiM09,
  DBLP:books/sp/16/ArasuBBCDIMSW16} are based on a form of
\emph{Incremental View Maintenance} (\ivm). To process a query $Q$
over a database $\db$, \ivm techniques materialize the output $Q(\db)$
and evaluate \emph{delta queries}. Upon update $\upd$, delta queries
use $\db$, $\upd$ and the materialized $Q(\db)$ to compute the set of
tuples to add/delete from $Q(\db)$ in order to obtain
$Q(\db+\upd)$. If $u$ is small w.r.t.\@ $\db$, this is expected to be
faster than recomputing $Q(\db+\upd)$ from scratch. To further speed
up dynamic query processing, we may materialize not only $Q(\db)$ but
also the result of some subqueries. This is known as Higher-Order \ivm
(\hivm for short) \cite{DBToaster2016,
  DBLP:journals/vldb/KochAKNNLS14,DBLP:conf/pods/Koch10}. Both \ivm
and \hivm have drawbacks, however. First, materialization of $Q(\db)$
requires $\Omega(\size{Q(\db)})$ space, where $\size{\db}$ denotes the
size of $\db$. Therefore, when $Q(\db)$ is large compared to $\db$,
materializing $Q(\db)$ quickly becomes impractical, especially for
main-memory based systems. \hivm is even more affected by this problem
because it not only materializes the result of $Q$ but also the
results to some subqueries. For example, in our fraud query \hivm
would materialize the results of the following join in order to
respond quickly to the arrival of a potential transaction $L$:
\begin{equation}
  \label{eq:1}
  \sigma_{\text{amnt} < 100}(S_1) \Join_{S1.ts < S2.ts \wedge S1.acc =
    S2.acc} \sigma_{\text{amnt} < 100}(S_2)   \tag{$\star$}
\end{equation}
If we assume that there are $N$ small transactions
in the time window, all of the same account, this materialization will take $\Theta(N^2)$ space.
This becomes rapidly impractical when $N$ becomes large.

\paragraph*{\bf Automata} Automaton-based approaches (e.g.,
\cite{brenna07,DBLP:conf/sigmod/WuDR06,DBLP:journals/jss/CugolaM12,DBLP:conf/debs/CugolaM10,Sase2014,DBLP:conf/sigmod/AgrawalDGI08})
are primarily employed in \cer systems. In contrast to the relational
approaches, they assume that the arrival order of event tuples
corresponds to the timestamp order (i.e., there are no out-of-order
events) and build an automaton to recognize the desired temporal
patterns in the input stream. Broadly speaking, there are two
automata-based recognition approaches. In the first approach, followed
by \cite{DBLP:conf/sigmod/WuDR06,DBLP:conf/sigmod/AgrawalDGI08},
events are cached per state and once a final state is reached a search
through the cached events is done to recognize the complex
events. While it is no longer necessary to check the temporal constraints 
during the search, the additional constraints (in our example, $L.ts <
S1.ts + 1h$ and $S_1.\text{acc} = S_2.\text{acc} = L.\text{acc}$) must
still be verified. If the additional constraints are highly selective
this approach creates an unnecessarily large update latency, given
that each event triggering a transition to a final state may cause
re-evaluation of a sub-join on the cached data, only to find few new output
tuples. 

In the second approach, followed by
\cite{brenna07,Sase2014,DBLP:journals/jss/CugolaM12,DBLP:conf/debs/CugolaM10},
partial runs are materialized according to the automaton's
topology. For our example query, this means that, just like \hivm, the
join \eqref{eq:1} is materialized and maintained so it is available
when a large amount transaction $L$ arrives. This approach hence
shares with \hivm its high memory overhead and maintenance cost.

\medskip
It has been recently shown that the drawbacks of these two approaches
can be overcome by a rather simple
idea~\cite{dyn:2017,olteanu:fivm}. 
Instead of fully materializing (potentially large) results and
subresults, we can build a compact representation of the query result
that supports efficient maintenance under updates. The representation
is equipped with index structures so that, whenever necessary, we can
generate the actual query result one tuple at a time, spending a limited
amount of work to produce each new result tuple. This makes
the generation performance-wise competitive with enumeration from a
fully materialized (non-compact) output. In essence, we are hence
separating dynamic query processing into two stages: (1) an update
stage where we only maintain under updates the (small) information that is necessary
to be able to efficiently generate the query result and (2)
an enumeration stage where the query result is efficiently
enumerated. Moreover, for single-tuple updates
the representation also supports efficient enumeration of the changes
to the query result. This is relevant for push-based query processing
systems, where users do not ping the system for the complete current
query answer, but instead ask to be notified of the changes to the
query results when the database changes.

This idea was first presented by a subset of the authors in the
Dynamic Yannakakis Algorithm (\dyn for short)~\cite{dyn:2017}, an
algorithm for efficiently processing acyclic aggregate-join queries. 
\dyn is worst-case optimal for two classes of queries, namely 
the q-hierarchical and free-connex acyclic conjunctive queries. A
different approach named F-IVM, based on so-called \emph{factorized
  databases}, was later developed to dynamically process
aggregate-join queries that are not necessarily acyclic or need to
support complex aggregates~\cite{olteanu:fivm}.

Unfortunately, both \dyn and F-IVM are only applicable to queries with
equality joins, and as such they do not support analytical queries
with other types of joins like the ones with inequalities $(\leq, <,
\geq, >)$ or
disequalities $(\not =)$. 
Therefore, the current state of the art techniques for dynamically
processing queries with joins beyond equality suffer either from a
high update latency (if subresults are not materialized) or a high
memory footprint (if subresults are materialized).
In this paper, we overcome these problems by generalizing the Dynamic
Yannakakis algorithm to conjunctive queries with arbitrary
$\theta$-joins. We show that, in the specific case of inequality
joins, this generalization performs consistently better than the state
of the art, with up to \oomtime orders of magnitude improvements in
processing time and memory consumption.

\paragraph*{\bf Contributions} 
We focus on the class of Generalized Conjunctive Queries (\gcqs for
short), which are conjunctive queries with $\theta$-joins, that are evaluated under
multiset semantics. 

(1) We devise a succinct and efficiently updatable data structure to dynamically process \gcqs. To this end, we first generalize the notions of acyclicity and free-connexity to queries with arbitrary $\theta$-joins (Section~\ref{sec:acycl-aggr-join}). Our data structure degrades gracefully: if a \gcq only contains equalities our approach inherits the worst-case optimality provided by \dyn.

(2) We present \edyn, a general framework for extending \dyn to
free-connex acyclic \gcqs. Our treatment is general in the sense that
the $\theta$-join predicates are treated abstractly. \edyn hence
applies to all predicates, not only inequality joins. We analyze the
complexity of \edyn, and identify properties of indexing structures
that are required in order for \edyn to support efficient
enumeration of results as well as efficient update
processing (Section~\ref{sec:gdyn}). 

(3) We instantiate \edyn to the particular case of inequality and
equality joins. We show that updates can be processed in
time $O(n^2\cdot\log(n))$, where $n$ is the size of the database plus the size of the update, and results can be enumerated with logarithmic delay. Moreover, if there is at most one inequality between
any pair of relations, updates take time $O(n\cdot\log(n))$ and enumeration is with constant delay. We call the resulting algorithm
\iedyn. We first illustrate this algorithm by means of an extensive
example (Section~\ref{sec:iedyn}), and then describe the required data
structures formally at the end of Section~\ref{sec:gdyn}.

(4) The operation of \edyn and \iedyn is driven by a \emph{Generalized
  Join Tree} (\gjt). \gjts are essentially query plans that specify
the data structure to be materialized, how it should be updated, and
how to enumerate the query results. We present an algorithm
that can be used both to check whether a \gcq is (free-connex) acyclic
and to construct a corresponding \gjt if this is the
case. (Section~\ref{sec:gyo}).
  
(5) We experimentally compare \iedyn with state-of-the-art \hivm and
\cer frameworks. 
\iedyn performs
consistently better, with up to \oomtime\ order of magnitude
improvements in both speed and memory consumption
(Section~\ref{sec:experiments} and Section~\ref{sec:exper-eval}).

We introduce the required background in
Section~\ref{sec:preliminaries}.

\paragraph*{Additional material} This article presents the following
additional contributions compared to its previously published
conference version~\cite{DBLP:journals/pvldb/IdrisUVVL18}:

(1) \emph{Correctness proofs.} The conference version only sketched
why \edyn and \iedyn work correctly and within the claimed bounds. In
contrast, here we formally prove correctness.

(2) \emph{Novel algorithm for computing GJTs.} As outlined, above,
\edyn and \iedyn work on acyclic \gcqs and their operation is driven
by the specification of a \gjt for such queries. 
The conference version  only stated that an
algorithm for checking acyclicity and free-connexity and computing
GJTs exists. In contrast, here, we fully present this algorithm and
illustrate its correctness.

\paragraph*{Additional related work}
In addition to the work already cited on \cer and (H)\ivm, our setting
is closely related to query evaluation with constant delay
enumeration~\cite{Bagan:2007,DBLP:journals/sigmod/Segoufin15,Berkholz:2017,braultbaron,Olteanu:2015,Bakibayev:2013,Olteanu:2015,Schleich:2016,dyn:2017,olteanu:fivm}. This
setting, however, deals with equi-joins only. Also related, although
restricted to the static setting, is the practical evaluation of
binary~\cite{DBLP:conf/vldb/DeWittNS91, DBLP:conf/vldb/HellersteinNP95,DBLP:conf/sigmod/EnderleHS04} and
multi-way~\cite{DBLP:journals/is/BernsteinG81,DBLP:conf/vldb/YoshikawaK84}
inequality joins. %
Our work, in contrast, considers dynamic processing of multi-way
$\theta$-joins, with a specialization to inequality joins. 
Recently, Khayyat et
al.~\cite{DBLP:journals/vldb/KhayyatLSOPQ0K17} proposed fast multi-way
inequality join algorithms based on sorted arrays and space efficient
bit-arrays. They focus on the case where there are exactly two
inequality conditions per pairwise join. While they also present an
incremental algorithm for pairwise joins, their algorithm makes no
effort to minimize the update cost in the case of multi-way joins. As
a result, they either materialize subresults (implying a space
overhead that can be more than linear), or recompute subresults. We do
neither.


\section{Preliminaries}
\label{sec:preliminaries}

\label{sec:ql}
Traditional conjunctive queries are cross products between relations,
restricted by equalities. Similarly, generalized conjunctive queries (\gcqs) are
cross products between relations, but restricted by arbitrary
predicates. We use the following notation for queries.

\paragraph*{\bf  Query Language}
Throughout the paper, let $x, y, z, \dots$ denote \emph{variables}
(also commonly called \emph{column names} or \emph{attributes}). A
\emph{hyperedge} is a finite set of variables. We use $\seq{x},
\seq{y}$, \dots to denote hyperedges.  A \gcq is an expression of the
form
\begin{equation}\label{eq:gcq}
    Q = \proj_{\seq{y}} \big(r_1(\seq{x_1}) \Join \dots \Join r_n(\seq{x_n})\mid \bigwedge_{i=1}^m \theta_i(\seq{z_i})\big)
\end{equation}
Here $r_1,\dots, r_n$ are \emph{relation symbols}; $\seq{x_1},\dots,
\seq{x_n}$ are hyperedges (of the same arity as $r_1,\dots, r_n$);
$\theta_1,\ldots,\theta_m$ are predicates over
$\seq{z_1},\dots,\seq{z_m}$, respectively; and both $\seq{y}$ and
$\bigcup_{i=1}^m\seq{z_i}$ are subsets of $\bigcup_{i=1}^n
\seq{x_i}$. We treat predicates abstractly: for our purpose, a
predicate over $\seq{x}$ is a (not necessarily finite) decidable set
$\theta$ of tuples over $\seq{x}$. For example, $\theta(x,y)=x < y$ is
the set of all tuples $(a,b)$ satisfying $a<b$. We indicate that
$\theta$ is a predicate over $\seq{x}$ by writing
$\theta(\seq{x})$. Throughout the paper, we consider only non-nullary
predicates, i.e., predicates with $\seq{x} \not = \emptyset$.

\begin{example}
  \label{ex:running-gcq}
  The following query is a \gcq.
  \begin{equation*}
     \proj_{y,z,w,u} \big(r(x,y)\Join s(y,z,w)\Join t(u,v) \mid x<z \wedge w<u\big)
  \end{equation*}
  Intuitively, the query asks for the natural join between $r(x,y)$, $s(y,z,w)$, and $t(u,v)$, and from this
  result select only those tuples that satisfy both $x < z$ and $w < u$.
\end{example}
We call $\seq{y}$ the \emph{output variables} of $Q$ and denote
it by $\free(Q)$. If $\seq{y} = \seq{x_1} \cup \dots \cup \seq{x_n}$
then $Q$ is called a \emph{full query} and we may omit the symbol
$\pi_{\seq{y}}$ altogether for brevity. The elements $r_i(\seq{x_i})$ are called \emph{atomic queries} (or
\emph{atoms}). We write $\atoms(Q)$ for the set of all atoms in $Q$,
and $\predicates(Q)$ for the set of all predicates in $Q$.  A
\emph{normal conjunctive query} (\cq for short) is a \gcq where
$\predicates(Q) = \emptyset$.

\paragraph*{Semantics} We evaluate \gcqs over Generalized Multiset
Relations (GMRs for short)~\cite{DBLP:journals/vldb/KochAKNNLS14,DBLP:conf/pods/Koch10,dyn:2017}. A GMR
over $\seq{x}$ is a relation $R$ over $\seq{x}$ (i.e., a finite
set of tuples with schema $\seq{x}$) in which each tuple $\tup{t}$ is
associated with a non-zero integer multiplicity $R(\tup{t}) \in
\mathbb{Z} \setminus \{0\}$.\footnote{In their full generality, GMRs can carry
  multiplicities that are taken from an arbitrary algebraic ring
  structure (cf., \cite{DBLP:conf/pods/Koch10}), which can be useful
  to describe the computation of aggregations over the result of a
  \gcq. To keep the notation and discussion simple, we fix the ring
  $\mathbb{Z}$ of integers throughout the paper but our result
  generalize trivially to arbitrary rings.} In
contrast to classical multisets, the multiplicity of a tuple in a GMR
can hence be negative, allowing to treat insertions and deletions
uniformly.  We write $\supp(R)$ for the finite
set of all tuples in $R$; $\tup{t} \in R$ to indicate $\tup{t} \in
\supp(R)$; and $\card{R}$ for $\card{\supp(R)}$.
A GMR $R$ is \emph{positive} if $R(\tup{t}) >
0$ for all $\tup{t} \in \supp(R)$.

The operations of GMR union ($R+S$), minus $(R - S)$, projection
($\proj_{\seq{z}} R$), natural join ($R \Join T$) and selection
($\sigma_P(R)$) are defined similarly as in relational algebra with
multiset semantics.
Figure~\ref{fig:semantics}
illustrates these operations.
We refer to \cite{dyn:2017,
  DBLP:journals/vldb/KochAKNNLS14} for a formal semantics. We abbreviate $\sigma_P(R\Join T)$ by 
  $R\Join_PT$ and, if $\seq{x}=\var(R)$, we abbreviate $\pi_{\seq{x}}(R\Join_P T)$ by $R\semijoin_PT$.

A \emph{database} over a set $\aset$ of atoms
is a function $\db$ that maps every atom $r(\seq{x}) \in \aset$ to a
positive \gmr $\db_{r(\seq{x})}$ over $\seq{x}$. Given a database
$\db$ over the atoms occurring in query $Q$, the evaluation of $Q$
over $\db$, denoted $Q(\db)$, is the GMR over $\seq{y}$ constructed in
the expected way: take the natural join of all GMRs in the
database, do a selection over the result w.r.t. each
predicate, and finally project on $\seq{y}$.

\begin{figure}[tbp]
  	\centering
	\[
	\begin{array}[t]{|lll|l|}
    	\multicolumn{4}{c}{R} \\ \hline
    	x & y & z & \mathbb{Z} \\ \hline
    	1 & 2 & 2 & 2\\
    	2 & 4 & 6 & 3\\
    	1 & 2 & 3 & 3\\\hline
	\end{array} 
	\qquad
	\begin{array}[t]{|ll|l|}
    	\multicolumn{3}{c}{S} \\ \hline
    	u & v & \mathbb{Z} \\ \hline
    	4 & 5 & 5\\
    	2 & 3 & 4 \\ 
    	1 & 4 & 2\\\hline
	\end{array} 
	\qquad
    \begin{array}[t]{|ll|r|}
        \multicolumn{3}{c}{T} \\ \hline
        u & v & \mathbb{Z} \\ \hline
        4 & 5 & -4\\
        2 & 1 & 6\\ 
    	1 & 4 & 3\\\hline
    \end{array}
    \qquad
	\begin{array}[t]{|ll|r|}
    	\multicolumn{3}{c}{S \Join T} \\ \hline
    	u & v & \mathbb{Z} \\ \hline
    	4 & 5 & -20\\
    	1 & 4 & 6\\\hline
	\end{array} 
	\]
	\[
	\begin{array}[t]{|r|r|}
    	\multicolumn{2}{c}{\proj_{y}(R)} \\ \hline
    	y & \mathbb{Z} \\ \hline
    	\phantom{\ }2 & 5\\
    	\phantom{\ }4 & 3\\\hline
	\end{array}
	\qquad 
	\begin{array}[t]{|ll|r|}
    	\multicolumn{3}{c}{S+T} \\ \hline
    	u & v & \mathbb{Z} \\ \hline
    	4 & 5 & 1\\
    	2 & 3 & 4 \\ 
        1 & 4 & 5 \\
    	2 & 1 & 6 \\ \hline
	\end{array} 
	\qquad 
	\begin{array}[t]{|ll|r|}
    	\multicolumn{3}{c}{S-T} \\ \hline
    	u & v & \mathbb{Z} \\ \hline
    	4 & 5 & 9\\
    	2 & 3 & 4 \\ 
        1 & 4 & -1 \\
    	2 & 1 & -6 \\ \hline
	\end{array} 
	\qquad 
	\begin{array}[t]{|lllll|r|}
    	\multicolumn{6}{c}{R \Join_{y<u} S} \\ \hline
    	x & y & z & u & v & \mathbb{Z} \\ \hline
    	1 & 2 & 2 & 4 & 5 & 10\\
        1 & 2 & 3 & 4 & 5 & 15\\ \hline
	\end{array} 
	\]
  	\label{fig:semantics}
  	\caption{Operations on GMRs}
\end{figure}

\paragraph*{\bf \bf Updates and deltas}
\label{sec:updates-deltas}

An \emph{update} to a \gmr $R$ is simply a \gmr $\delt R$ over the same variables as $R$. Applying update $\delt R$ to $R$ yields the GMR $R + \delt R$. An \emph{update to a database} $\db$ is a collection $u$ of (not necessarily positive) GMRs, one GMR $u_{r(\overline{x})}$ for every atom $r(\seq{x})$ of $\db$, such that $\db_{r(\seq{x})} + \upd_{r(\seq{x})}$ is positive. We write $\db + u$ for the database obtained by applying $u$ to each atom of $\db$, i.e., $(\db + u)_{r(\seq{x})} = \db_{r(\seq{x})} + u_{r(\seq{x})}$, for every atom $r(\seq{x})$ of $\db$. For every query $Q$, every database $\db$ and every update $\upd$ to $\db$, we define the delta query $\delt{Q}(\db,\upd)$ of $Q$ w.r.t. $\db$ and $\upd$ by
  $\sem{\delt Q}(\db, \upd) := \sem{Q}(\db+\upd) - \sem{Q}(\db)$.
As such, $\delt Q(\db,\upd)$ is the update that we need to apply to
$Q(\db)$ in order to obtain $Q(\db + \upd)$.

\paragraph*{\bf Enumeration with bounded delay}

A data structure $D$ supports \emph{enumeration} of a set $E$ if there
is a routine $\routenum$ such that $\routenum(D)$ outputs each element
of $E$ exactly once. Such enumeration occurs with delay $d$ if the
time until the first output; the time between any two consecutive
outputs; and the time between the last output and the termination of
$\routenum(D)$, are all bounded by $d$.  $D$ supports
enumeration of a GMR $R$ if it supports enumeration of the set $E_R =
\{ (\tup{t}, R(\tup{t})) \mid \tup{t} \in \supp(R)\}$.  When
evaluating a GCQ $Q$, we will be interested in representing the possible
outputs of $Q$ by means of a family ${\cal D}$ of data structures, one
data structure $D_{\db} \in {\cal D}$ for each possible input database $\db$. We say that
\emph{$Q$ can be enumerated from ${\cal D}$ with delay $f$}, if
for every input $\db$ we can enumerate $Q(\db)$ from $D_{\db}$ with
delay $O(f(D_{db}))$, where $f$ assigns a natural number to each
$D_{\db}$. Intuitively $f$ measures $D_{\db}$ in some way. In
particular, if $f$ is constant we say the results are generated from
the data structure with \emph{constant-delay enumeration} (CDE).

As a trivial example of \cde of a GMR $R$, assume that the pairs
$(\tup{t}, R(\tup{t}))$ of $E_R$ are stored in an array $A$ (without
duplicates). Then $A$ supports \cde of $R$: $\routenum(A)$ simply
iterates over each element in $A$, one by one, always outputting the
current element. Since array indexation is a $O(1)$ operation, this
gives constant delay. This example shows that \cde of the result
$Q(\db)$ of a query $Q$ on input database $\db$, can always be done
naively by materializing $Q(\db)$ in an in-memory array $A$.
Unfortunately, $A$ then requires memory proportional to
$\size{Q(\db)}$ which, depending on $Q$, can be of size polynomial in
$\size{\db}$. We hence search for other data structures that can
represent $Q(\db)$ using less space, while still allowing for efficient enumeration. Our experiments
in Section~\ref{sec:experiments} show that for the data structures
described in this paper, \cde is indeed competitive with enumeration
from an array while requiring much less space.

\paragraph*{\bf Computational Model}

It is important to note that we focus on dynamic query evaluation in
main memory.
Furthermore, we assume
a model of computation where the space used by tuple values and integers, the time of arithmetic operations on integers, and the time of memory lookups
are all $O(1)$. We also assume that every GMR $R$ can be
represented by a data structure that allows (1) enumeration of $R$
with constant delay; (2)
multiplicity lookups $R(\tup{t})$ in $O(1)$ time given $\tup{t}$; (3)
single-tuple insertions and deletions in $O(1)$ time; while (4) having
a size that is proportional to the number of tuples in the support of
$R$. Essentially, our assumptions amount to perfect hashing of linear
size~\cite{cormen2009introduction}. Although this is not realistic for
practical computers \cite{Papadimitriou:2003}, it is well known that
complexity results for this model can be translated, through amortized
analysis, to average complexity in real-life
implementations~\cite{cormen2009introduction}.


\section{Generalized Acyclicity}
\label{sec:acycl-aggr-join}

Join queries are \gcqs without projections that feature equality joins
only. The well-known subclass of acyclic join
queries~\cite{abiteboul1995foundations,DBLP:conf/vldb/Yannakakis81}, in contrast to the
entire class of join queries, can be evaluated in time $O(\size{\db} +
\size{Q(\db)})$, i.e., linear in both input and output. This result
relies on the fact that acyclic join queries admit a tree structure
that can be exploited during evaluation. In previous
work~\cite{dyn:2017}, we showed that this tree structure can also be
exploited for efficient processing of \cqs under updates.  In this
section, we therefore extend the tree structure and the notion of
acyclicity from join queries to \gcqs with both projections and
arbitrary $\theta$-joins.  We begin by defining this tree structure
and the related notion of acyclicity for full \gcqs. Then, we proceed
with the notion corresponding to \gcqs that feature projections, known
as free-connex acyclicity. 

\paragraph*{\bf Generalized Join Trees} To simplify notation, we
denote the set of all variables (resp. atoms, resp. predicates) that
occur in an object $X$ (such as a query) by $\var(X)$
(resp. $\atoms(X)$, resp. $\preds(X)$). In particular, if $X$ is
itself a set of variables, then $\var(X) = X$. We extend this notion
uniformly to labeled trees. E.g., if $n$ is a node in tree $T$, then
$\var_T(n)$ denotes the set of variables occurring in the label of
$n$, and similarly for edges and trees themselves. Finally, we write
$\child_T(n)$ for the set of children of $n$ in tree $T$. If $T$ is
clear from the context, we omit subscripts from our notation.

\begin{definition}[\gjt]
  \label{def:gjt}
  A \emph{Generalized Join Tree} (\gjt) is a node-labeled and
  edge-labeled directed tree $T = (V,E)$ such that:
    \begin{itemize}
    \item Every leaf is labeled by an atom.
    \item Every interior node $n$ is labeled by a hyperedge and has at
      least one child $c$ such that $\var(n)\subseteq\var(c)$.
    \item Whenever the same variable $x$ occurs in the label of two
      nodes $m$ and $n$ of $T$, then $x$ occurs in the label of
      each node on the unique path linking $m$ and
      $n$. This condition is called the \emph{connectedness condition}.
    \item Every edge $p \to c$ from parent $p$ to child $c$ in $T$ is
      labeled by a set $\predicates(p \to c)$ of predicates. It is
      required that for every predicate $\theta(\seq{z}) \in \predicates(p
      \to c)$ we have $\var(\theta) = \seq{z} \subseteq \var(p) \cup \var(c)$.
    \end{itemize}
\end{definition}

Let $n$ be a node in \gjt $T$. Every node $m$ with $\var(n) \subseteq
\var(m)$ is called a \emph{guard} of $n$. Observe that every interior
node must have a guard child by the second requirement above. Since
this child must itself have a guard child, which must itself have a
guard child, and so on, it holds that every interior node has at least
one guard descendant that is a leaf.

\begin{definition}
  \label{def:acyclicity}
  A \gjt $T$ is a \emph{\gjt for \gcq $Q$} if $\atoms(T) = \atoms(Q)$
  and the number of times that an atom occurs in $Q$ equals the number
  of times that it occurs as a label in $T$, and $\preds(T) =
  \preds(Q)$. A \gcq $Q$ is \emph{acyclic} if there is a \gjt for
  $Q$. It is \emph{cyclic} otherwise.
\end{definition}

\begin{example}\label{ex:running}
  The two trees depicted in Fig. \ref{fig:extrees} are \gjts for the following
  full \gcq $Q$, which is hence acyclic.
  \begin{equation*}\label{query:running-example}
    Q_1 =  \big(r(x,y)\Join s(y,z,w)\Join t(u,v) \mid x<z \wedge w<u\big)
  \end{equation*}
  In contrast, the query $r(x,y) \Join s(y,z) \Join t(x,z)$ (also
  known as the triangle query) is the prototypical cyclic join
  query. 
\end{example}

\begin{figure}
  \centering
    \begin{tikzpicture}[-,>=stealth',thick]
        \begin{scope}[xshift=-4cm]
            \node (w) [label={above:$(T_1)$}] { $\{y,w\}$ } [level distance=1cm,sibling distance=1.9cm]
            child {
                node (xu) { $\{y,z,w\}$ } [sibling distance=1.3cm]
                child {
                    node (xyz) { $r(x,y)$ } edge from parent node[left] {\scriptsize $x<z$}
                }
                child{
                    node (xut) { $s(y,z,w)$ }
                }
            }
            child{
                node (xu) { $t(u,v)$ } edge from parent node[right] {\scriptsize $w<u$}
            };
        \end{scope}

        \begin{scope}
            \node (w) [label={above:$(T_2)$}] { $\{y,w\}$ } [level distance=1cm,sibling distance=1.9cm]
            child {
                node (xu) { $\{y,z,w\}$ } [sibling distance=1.3cm]
                child {
                    node (xyz) { $r(x,y)$ } edge from parent node[left] {\scriptsize $x<z$}
                }
                child{
                    node (xut) { $s(y,z,w)$ }
                }
            }
            child{
                node (xu) { $\{u\}$ }
                child {
                    node (uv) { $t(u,v)$ }
                }
                edge from parent node[right] {\scriptsize $w<u$}
            };
        \end{scope}
    \end{tikzpicture}
  \caption{Two example \gjts.}
  \label{fig:extrees}
\end{figure}

If $Q$ does not contain any predicates, that is, if $Q$ is a \cq, then
the last condition of Definition~\ref{def:gjt} vacuously holds. In
that case, the definition corresponds to the definition of a
generalized join tree given in \cite{dyn:2017}, where it was also
shown that a \cq is acyclic under any of the traditional definitions
of acyclicity (e.g., \cite{abiteboul1995foundations}) if and only if the query has a
\gjt $T$ for $Q$ with $\preds(T) = \emptyset$. In this sense, Definition~\ref{def:acyclicity}
indeed generalizes acyclicity from \cqs to \gcqs. 

\paragraph*{Discussion}
\newcommand{\semijoinprog}{\ensuremath{\mathcal{S}}} The notion of
ayclicity for normal \cqs is well-studied in database
theory~\cite{abiteboul1995foundations} and has many equivalent
definitions, including a definition based on the 
existence of a \emph{full reducer}. Here, a full reducer for a \cq $Q$
is a program $\semijoinprog$ in the semijoin algebra (the variant of
relational algebra where joins are replaced by semijoins) that, given
a database $\db$ computes a new database $\semijoinprog(\db)$ with the
following properties. (1) $Q(\semijoinprog(\db)) = Q(\db)$; (2)
$\semijoinprog(\db)_{r(\seq{x})} \subseteq \db_{r(\seq{x})}$ for every
atom $r(\seq{x})$; and (3) no strict subset of
$\semijoinprog(\db)$ has $Q(\semijoinprog(\db)) = Q(\db)$. In other
words, $\semijoinprog$ selects a \emph{minimal} subset of $\db$
needed to answer $Q$.

Bernstein and Goodman~\cite{DBLP:journals/is/BernsteinG81} consider
conjunctive queries with inequalities and classify the class of such
queries that admit full reducers. As such, one can view this as a
definition of acyclicity for conjunctive queries with
inequalities. Bernstein and Goodman's notion of acyclicity is
incomparable to ours. On the on hand, our definition is more general:
Bernstein and Goodman consider only queries where for each pair of
atoms there is exactly one variable being compared by means of
equality or inequality. We, in contrast, allow an arbitrary number of
variables to be compared per pair of atoms. In particular, Bernstein
and Goodman's disallow queries like $(r(x,y), s(x,z) \mid y < z)$
since it compares $r.x$ with $s.x$ by means of equality and $r.y <
s.z$ by means of inequality, while this is trivially acyclic in our setting.

On the other hand, for this more restricted class of queries,
Bernstein and Goodman show that certain queries that we consider to be
cyclic have full reducers (and would be hence acyclic under their
notion). An example here is
\begin{multline*}
 r(x_r) \Join s(x_s,
y_s) \Join t(x_t, y_t) \Join u(y_u) \\ \mid x_s \leq x_r, x_t \leq x_r,
y_s \leq y_u, y_t \leq y_u 
\end{multline*}
The crucial reason that this query admits a full reducer is due to the
transitivity of $\leq$. Since our notion of acyclicity interprets
predicates abstractly and does hence not assume properties such as
transitivity on them, we must declare this query cyclic (as can be
checked by running the algorithm of Section~\ref{sec:gyo} on
it). It is an interesting direction for future work to incorporate
Bernstein and Goodman's notion of acyclicity in our framework.

\paragraph*{\bf Free-connex acyclicity} Acyclicity is actually a notion
for full \gcqs. Indeed, note that whether or not $Q$ is acyclic does not depend on the projections of $Q$ (if any). To also process
queries with projections efficiently, a related structural constraint
known as free-connex acyclicity is required. 

\begin{definition}[Connex, Frontier]
  Let $T = (V,E)$ be a \gjt. A \emph{connex subset} of $T$ is a set $N
  \subseteq V$ that includes the root of $T$ such that the subgraph of
  $T$ induced by $N$ is a tree. The \emph{frontier} of a connex set
  $N$ is the subset $F \subseteq N$ consisting of those nodes in $N$
  that are leaves in the subtree of $T$ induced by $N$.
\end{definition}
To illustrate, the set $\{\{y,w\},\{u\}, \{y,z,w\}\}$ is a connex subset of
  the tree $T_2$ shown in  Fig.~\ref{fig:extrees}. Its frontier is $\{ \{y,z,w\},
  \{u\}\}$. In contrast, $\{\{y,w\}, \{y,z,w\}, \allowbreak t(u,v) \}$  is not a
  connex subset of $T_2$.

  \begin{definition}[Compatible, Free-Connex Acyclic] A \emph{GJT
      pair} is a pair $(T, N)$ with $T$ a \gjt and $N$ a connex subset
    of $T$. A \gcq $Q$ is \emph{compatible with} $(T, N)$ if
    $T$ is a \gjt for $Q$ and $\var(N) = \free(Q)$. A \gcq is
    \emph{free-connex acyclic} if it has a compatible GJT pair.
\end{definition}

In particular, every full acyclic \gcq is free-connex acyclic since
the entire set of nodes $V$ of a \gjt $T$ for $Q$ is a connex set with
$\var(V) = \free(Q)$. Therefore, $(T, V)$ is a compatible \gjt pair for $Q$.

\begin{example}
  \label{ex:connexity}
  Let $Q_2 = \proj_{y,z,w,u}(Q_1)$ with $Q_1$ the \gcq from Example
  \ref{ex:running}. $Q_2$ is free-connex acyclic since it is
  compatible with the pair $(T_2, \{ \{y,w\}, \{y,z,w\}, \{u\}\})$
  with $T_2$ the \gjt from Fig.~\ref{fig:extrees}. By contrast,
  $Q_2$ is not compatible with any \gjt pair containing $T_1$, since
  any connex set of $T_1$ that includes a node with variable $u$ will
  also include variable $v$, which is not in $\free(Q_2)$.  Finally,
  it can be verified that no \gjt pair is compatible with
  $\proj_{x,u}(Q_1)$; this query is hence not free-connex
  acyclic.
\end{example}

In Section~\ref{sec:gyo} we show how to efficiently check
free-connex acyclicity and compute compatible \gjt pairs.

\paragraph*{\bf Binary \gjts and sibling-closed connex sets} As we
will see in Sections~\ref{sec:iedyn} and \ref{sec:gdyn}, a \gjt pair
$(T,N)$ essentially acts as query plan by which \edyn and \iedyn
process queries dynamically.  In particular, the \gjt $T$ specifies
the data structure to be maintained and drives the processing of
updates, while the connex set $N$ drives the enumeration of query
results.

In order to simplify the presentation of what follows, we will focus
exclusively on the class of \gjt pairs $(T,N)$ with $T$ a binary \gjt
and $N$ sibling-closed. 

\begin{definition}[Binary, Sibling-closed]
  A \gjt $T$ is binary if every node in it has at most two children. A
  connex subset $N$ of $T$ is \emph{sibling-closed} if for every node
  $n \in N$ with a sibling $m$ in $T$, $m$ is also in $N$.
\end{definition}

Our interest in limiting to sibling-closed connex sets is due to
the following property, which will prove useful
for enumerating query results, as explained in Section~\ref{sec:iedyn}.

\begin{lemma}
  \label{lem:sibling-closed-var-at-frontier}
  If $N$ is a sibling-closed connex subset, then $\var(N) = \var(F)$
  where $F$ is the frontier of $N$.
\end{lemma}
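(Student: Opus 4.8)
The plan is to prove the two inclusions $\var(F) \subseteq \var(N)$ and $\var(N) \subseteq \var(F)$ separately. The first is immediate: by definition the frontier $F$ is a subset of $N$, so every variable appearing in the label of a frontier node also appears in the label of some node of $N$. All the work is therefore in the reverse inclusion $\var(N) \subseteq \var(F)$, which amounts to the statement that every variable occurring somewhere in $N$ is already ``witnessed'' at the frontier.

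To establish this, I would fix a variable $x$ and a node $n \in N$ with $x \in \var(n)$, and then produce a frontier node whose label also contains $x$ by descending through the tree. If $n$ is itself a leaf of the subtree of $T$ induced by $N$ (i.e.\ $n \in F$), there is nothing to do. Otherwise $n$ is an interior node of this induced subtree; since it then has a child lying in $N \subseteq V$, it is in particular an interior node of $T$ itself, so by the guard-child clause of Definition~\ref{def:gjt} it has a child $g$ in $T$ with $\var(n) \subseteq \var(g)$, and hence $x \in \var(g)$. The idea is to replace $n$ by $g$ and repeat, obtaining a strictly descending chain of nodes all of whose labels contain $x$. Since $T$ is finite this chain must terminate, and it can only terminate at a frontier node, giving $x \in \var(F)$.

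The crux — and the only place where the hypothesis is used — is verifying that the guard child $g$ actually lies in $N$, so that the descent stays inside the connex subset. This is where sibling-closedness enters: since $n$ is interior in the induced subtree, it has at least one child $c \in N$. If $c = g$ we are done immediately; otherwise $g$ and $c$ are distinct children of the common parent $n$, hence siblings in $T$, and sibling-closedness applied to $c \in N$ forces $g \in N$. Thus the guard child is always available in $N$, the descent is well defined, and the argument closes. I expect the verification that $g \in N$ to be the main (indeed essentially the only nontrivial) obstacle; the remainder is a routine finiteness/termination argument. It is worth noting that the reasoning never uses that $T$ is binary, so the lemma holds for \gjts of arbitrary arity.
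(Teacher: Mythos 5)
Your proof is correct and is essentially the paper's own argument: both hinge on the observation that sibling-closedness forces the guard child of any non-frontier node of $N$ into $N$, so that variables are preserved along a chain of guard edges descending to the frontier. The paper packages this descent as an induction on the distance from $n$ to the frontier rather than an explicit termination argument, but the substance (including the fact, which you correctly note, that binarity of $T$ is never used) is the same.
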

\begin{proof}
  Since $F \subseteq N$ the inclusion $\var(F) \subseteq \var(N)$ is
  immediate. It remains to prove $\var(N) \subseteq \var(F)$. To this
  end, let $n$ be an arbitrary but fixed node in $N$. We prove that
  $\var(n) \subseteq \var(F)$ by induction on the \emph{height} of $n$
  in $N$, which is defined as the length of the shortest path from $n$
  to a frontier node in $F$. The base case is where the height is
  zero, i.e., $n \in F$, in which case $\var(n) \subseteq \var(F)$
  trivially holds. For the induction step, assume that the height of
  $n$ is $k > 0$. In particular, $n$ is not a frontier node, and has
  at least one child in $N$. Because $N$ is sibling-closed, all
  children of $n$ are in $N$. In particular, the guard child $m$ of
  $n$ is in $N$ and has height at most $k-1$. By induction hypothesis,
  $\var(m) \subseteq \var(F)$. Then, because $m$ is a guard of $n$,
  $\var(n) \subseteq \var(m) \subseteq \var(F)$, as desired.
\end{proof}

Let us call a \gjt pair $(T,N)$ \emph{binary} if $T$ is binary, and
\emph{sibling-closed} if $N$ is sibling-closed. We say that two \gjt
pairs $(T,N)$ and $(T', N')$ are \emph{equivalent} if $T$ and $T'$ are
equivalent and $\var(N) = \var(N')$. Two GJTs $T$ and $T'$ are
equivalent if $\atoms(T) = \atoms(T')$, the number of times that an
atom appears as a label in $T$ equals the number of times that it
appears in $T'$, and $\preds(T) = \preds(T')$.

The following proposition shows that we can always convert an
arbitrary \gjt pair into an equivalent one that is binary and
sibling-closed. As such, we are assured that our focus on binary and
sibling-closed \gjt pairs  is without loss of
generality.  

\begin{proposition}
\label{prop:bin-sc-equivalent}
Every \gjt pair can be transformed in polynomial time into an
equivalent pair that is binary and sibling closed.
\end{proposition}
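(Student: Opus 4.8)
The plan is to establish the result in two phases: first make the connex set sibling-closed (on the possibly non-binary tree), then binarize the tree while preserving sibling-closedness. Both phases only \emph{insert interior nodes whose label is already contained in} $\var(N)$ and \emph{relocate subtrees together with the predicates on their incoming edges}, so the resulting pair will automatically be equivalent to the original: the atoms and the predicate set are untouched, and $\var(N)$ is unchanged because every newly added node we place in $N$ carries a label that is a subset of $\var(N)$.

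For sibling-closure, call a node $p$ \emph{violating} if it has both a child in $N$ and a child not in $N$; such a $p$ necessarily lies in $N$. I repeatedly fix the topmost violating node $p$. Write $C_{\mathrm{in}}$ and $C_{\mathrm{out}}$ for its children in and out of $N$ (both nonempty) and recall $p$ has a guard child $g$, i.e. $\var(p)\subseteq\var(g)$. Two cases arise. If some guard of $p$ lies in $C_{\mathrm{out}}$, I insert a fresh node $p'$ labeled $\var(p)$ as a child of $p$, re-hang all of $C_{\mathrm{out}}$ (with their predicate-labeled edges) below $p'$, and add $p'$ to $N$; then $p'$ is a valid interior node since $\var(p')=\var(p)\subseteq\var(c)$ for the relocated guard $c$, a predicate $\theta$ with $\var(\theta)\subseteq\var(p)\cup\var(c)$ stays valid on the edge $p'\to c$ because $\var(p')=\var(p)$, and now $p$'s children are all in $N$ while $p'$'s are all out, so no new violation appears. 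Otherwise every guard of $p$ lies in $C_{\mathrm{in}}$, and I re-hang all of $C_{\mathrm{out}}$ below the guard child $g$; this is legal because $\var(\theta)\subseteq\var(p)\cup\var(c)\subseteq\var(g)\cup\var(c)$ for each relocated edge. This removes the violation at $p$ but may make the strictly deeper node $g$ violating.

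The main obstacle is showing that this downward pushing terminates in polynomial time, since relocating subtrees downward increases depths. I will use the potential $\Phi=\sum_{c}\,\bigl|\{m\in N:\ m\text{ is a proper descendant of }\parent(c)\}\bigr|$, the sum ranging over nodes $c\notin N$ whose parent lies in $N$. Relocating an out-child from $p$ to its guard child $g$ strictly decreases the count of $N$-descendants of its parent (the node $g$ itself and $p$'s other $N$-subtrees drop out of the count), so each such term strictly decreases; relocating below a fresh $p'$ sends the term to $0$. Moreover, because $p$ is the \emph{topmost} violating node, every $N$-ancestor of $p$ is a non-frontier node all of whose children lie in $N$, hence has no out-children, so inserting the $N$-node $p'$ below $p$ cannot raise any surviving term of $\Phi$. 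Thus $\Phi$, a nonnegative integer bounded initially by $O(|T|^2)$, strictly decreases at each step, giving polynomially many steps and overall polynomial time. When no violating node remains, every node has its children either all in $N$ or all outside $N$, i.e. $N$ is sibling-closed.

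For the second phase I binarize each node $n$ with children $c_1,\dots,c_k$ for $k\ge 3$ (taking $c_1$ as its guard) by replacing it with a chain $n=m_0\to m_1\to\cdots\to m_{k-2}$ of copies all labeled $\var(n)$, where $m_i$ has children $c_{i+2}$ and $m_{i+1}$ and the bottom node $m_{k-2}$ has children $c_1$ and $c_k$. Each $m_i$ ($i<k-2$) is guarded by $m_{i+1}$ (same label) and $m_{k-2}$ by $c_1$, every original predicate edge keeps an endpoint of label $\var(n)\cup\var(c_j)$, and connectedness is preserved because any variable shared across two children of $n$ already lies in $\var(n)$. Since after phase one the children of each node are either all in $N$ or all outside $N$, I place the copies $m_i$ in $N$ exactly when the $c_j$ are; this keeps $N$ sibling-closed and leaves $\var(N)$ unchanged, as the copies only carry the label $\var(n)\subseteq\var(N)$. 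The resulting pair is binary, sibling-closed, and equivalent to the original, and since each phase inserts only polynomially many nodes it runs in polynomial time.
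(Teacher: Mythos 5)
Your proof is correct, and it follows the paper's two-phase skeleton (sibling-closure first, then binarization; your binarization chain and the rule for adding the copies to $N$ are essentially identical to the paper's), but the sibling-closure phase runs on a genuinely different engine. The paper splits violators by whether some child \emph{inside} $N$ is a guard: if yes, it relocates the out-children all the way down, onto a new node inserted just above a descendant \emph{leaf} guard; if no, it creates a fresh child labeled $\var(p)$ holding the out-children (your Case A, which you invoke under the broader condition that some guard lies outside $N$). Relocating down to a leaf guard is precisely what guarantees the paper's operations never create new violators, so its termination argument is simply that the violator count strictly decreases (Lemmas~\ref{lemma:op1} and~\ref{lemma:op2}). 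Your Case B instead relocates only one level down, to the guard child $g$, which can re-create a violation at $g$; you compensate with the potential $\Phi$ and the topmost-first processing discipline. The trade-off: your operations are more local, avoid the descendant-leaf-guard construction altogether, and never remove a node from $N$ (the paper's type-1 step swaps the leaf $l$ out of $N$ for the new node), which makes the equivalence bookkeeping immediate; the price is a subtler termination proof, and topmost-first is genuinely necessary for it --- if $p$ had a violating ancestor with out-children, adding $p'$ to $N$ in Case A would increase their terms, and $\Phi$ could grow. One step you leave implicit but should state: subtrees rooted at out-children contain no nodes of $N$ (by connexity of $N$), which is why relocating them in Case B leaves every other term of $\Phi$ unchanged, and why the relocated terms drop to $0$ in Case A.
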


The rest of this section is devoted to proving
Proposition~\ref{prop:bin-sc-equivalent}. We do so in two
steps. First, we show that any pair $(T, N)$ can be transformed in
polynomial time into an equivalent sibling-closed pair. Next, we show
that any sibling-closed \gjt pair $(T, N)$  can be converted
in polynomial time into an equivalent binary and sibling-closed
pair. Proposition~\ref{prop:bin-sc-equivalent} hence follows by
composing these two transformations.

\begin{figure}[t]
  \centering
\tikzset{
    expand bubble/.style={
        preaction={draw,line width=1.4pt},
        gray!20,fill,draw,line width=2pt,
    },
}

    \begin{tikzpicture}[-,>=stealth',thick,every node/.style={transform shape},scale=0.98]
\begin{scope}[xshift=-4cm]
            \node (w) [label={above:$(T,N)$}] { $\{y,z,w\}$ } [level distance=1cm,sibling distance=1.3cm]
            child {
                node (xu) { $\{y,z,w\}$ } [sibling
                distance=1.3cm,level distance=1cm]
                child {
                    node (xyz) { $f(u,v)$ } edge from parent node[left] {\scriptsize $w<u$}
                }
                child{
                    node (xut) { $h(y,z,w,t)$ }
                }
            }
            child{
                    node (uv) { $r(x,y)$ }
            }
            child{
                    node (uw) { $s(y,z,m)$ }
                edge from parent node[right] {\scriptsize $m<w$}
            };
            \begin{pgfonlayer}{background}
              \path[expand bubble]plot [smooth cycle,tension=1]
              coordinates {(w.north) (xu.west) (uv.east)};
            \end{pgfonlayer}
        \end{scope}

        \begin{scope}
            \node (w) [label={above:$(T',N')$}] { $\{y,z,w\}$ } [level distance=1cm,sibling distance=1.9cm]
            child {
                node (xu) { $\{y,z,w\}$ } [sibling distance=1.3cm]
                child {
                    node (xyz) { $f(u,v)$ } edge from parent node[left] {\scriptsize $w<u$}
                }
                child{
                    node (xut) { $\{y,z,w\}$ }[sibling distance=1.5cm]
                    child{
                    	node(sss){$h(y,z,w,t)$}
                    }
                    child{
                    node (uw) { $s(y,z,m)$ }
                    edge from parent node[right] {\scriptsize $m<w$}
                    }
                }
            }
            child{
                    node (uv) { $r(x,y)$ }
            };
            \begin{pgfonlayer}{background}
              \path[fill=gray!20]plot [smooth cycle,tension=1]
              coordinates {(w.north) (xu.south west) (uv.south east)};
            \end{pgfonlayer}
        \end{scope}
    \end{tikzpicture}
  \caption{Illustration of the sibling-closed transform: removal of
    type-1 violator. The connex sets $N$ and $N'$ are indicated by the
  shaded areas.}
  \label{fig:sc-transform}
\end{figure}
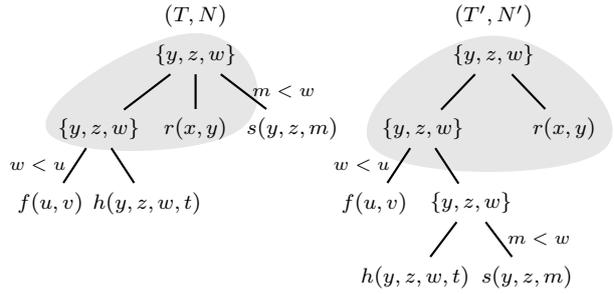

\paragraph*{\bf Sibling-closed transformation}
\label{sec:sc-transform}

We say that $n \in T$ is a \emph{violator} node in a \gjt pair
$(T,N)$  if $n \in N$ and some, but not all children of $n$ are in
$N$. A violator is of \emph{type 1} if some node in $\child(n) \cap N$
is a guard of $n$. It is of \emph{type 2} otherwise. We now define two
operations on $(T,N)$ that remove violators of type 1 and type 2,
respectively. The sibling-closed transformation is then obtained by
repeatedly applying these operators until all violators are removed.

The first operator is applicable when $n$ is a type 1
violator. It returns the pair $(T',N')$ obtained as follows:
\begin{itemize}
\item Since $n$ is a type 1 violator, some $g \in \child_T(n) \cap N$
  is a child guard of $n$ (i.e., $\var(n) \subseteq \var(g)$).
\item Because every node has a guard, there is some leaf node $l$ that
  is a descendant guard of $g$ (i.e. $\var(g)
  \subseteq \var(l)$). Possibly, $l$ is $g$ itself.
\item Now create a new node $p$ between node $l$ and its parent with
  label $\var(p)=\var(l)$. Since $l$ is a descendant guard of $n$ and $g$, $p$
  becomes a descendant guard of $n$ and $g$ as well. Detach all nodes in
  $\child(n) \setminus N$ from $n$ and attach them as children to $p$,
  preserving their edge labels. This effectively moves all subtrees
  rooted at nodes in $\child(n) \setminus N$ from $n$ to $p$. Denote
  by $T'$ the final result.
\item If $l$ was not in $N$, then $N' = N$. Otherwise, $N' = N
  \setminus \{l\} \cup \{p\}$.
\end{itemize}
We write $(T,N) \xrightarrow{1,n} (T',N')$ to indicate that $(T',N')$
can be obtained by applying the above-described operation on node
$n$. 

\begin{example}
  Consider the \gjt pair $(T,N)$ from Fig.~\ref{fig:sc-transform}
  where $N$ is indicated by the nodes in the shaded area. Let us
  denote the root node by $n$ and its guard child with label
  $\{y,z,w\}$ by $g$. The node $l = h(y,z,w,t)$ is a descendant guard
  of $g$. Since $s(y,z,m)$ is not in $N$, $n$ is violator of type
  1. After applying the operation 1 for the choice of guard node $g$
  and descendant guard node $l$, $(T',N')$ shows the resulting valid
  sibling-closed \gjt.
\end{example}

\begin{restatable}{lemma}{lemmaRemOne}
\label{lemma:op1}
  Let $n$ be a violator of type $1$ in $(T,N)$ and assume $(T,N)
  \xrightarrow{1,n} (T',N')$. Then $(T',N')$ is a \gjt pair and it is
  equivalent to $(T,N)$. Moreover, the number of violators in
  $(T',N')$ is strictly smaller than the number of violators in
  $(T,N)$.
\end{restatable}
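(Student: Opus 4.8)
The plan is to verify, in turn, the three assertions of the lemma: that $(T',N')$ is a \gjt pair, that it is equivalent to $(T,N)$, and that it has strictly fewer violators. Before anything else I would record the inclusions that drive the whole argument. Since $g$ is a child guard of $n$ and $l$ is a descendant guard of $g$, we have $\var(n)\subseteq\var(g)\subseteq\var(l)=\var(p)$; moreover, applying the connectedness condition of Definition~\ref{def:gjt} in $T$ to each variable of $\var(n)$ shows that \emph{every} node on the $T$-path from $n$ down to $l$ contains $\var(n)$. Finally, because the subtrees being moved hang off $n$ in $T$, any variable shared between such a subtree and the rest of $T$ must, again by connectedness, lie in $\var(n)$, and hence in $\var(p)$. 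These three observations are the engine of the proof.

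To show $T'$ is a \gjt I would check the four conditions of Definition~\ref{def:gjt}. The leaves of $T'$ coincide with those of $T$ (we only insert the \emph{interior} node $p$), so every leaf is still an atom. For the guard condition: $n$ retains $g$ as a child guard; $p$ has $l$ as a child with $\var(p)=\var(l)$, so $l$ is a guard child of $p$; and the old parent of $l$ now has $p$ in place of $l$, which is harmless since $\var(p)=\var(l)$. The edge-label condition survives because a moved edge into some $c\in\child(n)\setminus N$ carried predicates with variables in $\var(n)\cup\var(c)\subseteq\var(p)\cup\var(c)$, and the predicates on the split edge into $l$ can be reattached to an incident edge without changing their variable set, which also yields $\preds(T')=\preds(T)$.

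The delicate point — and the main obstacle — is the connectedness condition, since relocating subtrees is exactly the kind of move that can violate it. Here I would use the observations above. Inserting $p$ with $\var(p)=\var(l)$ on the edge above $l$ is by itself connectedness-preserving. For a moved subtree, take a variable $x$ occurring both inside it and elsewhere; then $x\in\var(n)\subseteq\var(p)$, and the entire $T'$-path from $p$ up to $n$ consists of nodes that contain $\var(n)\ni x$, while the remaining stretch of the path (from $n$ outward) is identical to the corresponding stretch in $T$ and already carried $x$. A short case analysis on whether the second occurrence of $x$ lies inside or outside the subtree rooted at $p$ then closes this step.

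For equivalence it remains to see that $N'$ is a connex subset of $T'$ with $\var(N')=\var(N)$; the atom- and predicate-equivalence of $T'$ and $T$ was already noted. If $l\notin N$ then $N'=N$ and none of the rewired edges touch $N$, so $N'$ stays connex; if $l\in N$ then (by connexity of $N$) the parent of $l$ is in $N$, and replacing the leaf $l$ by the leaf $p$ at the same attachment point keeps the induced subgraph a tree, while $\var(p)=\var(l)$ gives $\var(N')=\var(N)$. Finally, for the violator count I would show $n$ ceases to be a violator — all its remaining children lie in $\child(n)\cap N\subseteq N'$ (in the degenerate case $l=g$, the child $g$ is replaced by $p\in N'$) — and that no new violator appears: $p$ has no child in $N'$, the moved roots and the nodes inside moved subtrees are not in $N'$, $l$ is a leaf, and the old parent of $l$ sees $l$ and $p$ swap roles so its children-in-$N'$ profile is unchanged. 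Hence exactly the violator $n$ is eliminated and the count strictly drops.
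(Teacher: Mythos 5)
Your proof is correct and follows essentially the same route as the paper's: a direct verification that the GJT conditions, connexity of $N'$, equivalence (via $\var(p)=\var(l)$ and untouched leaves/predicates), and the strict drop in the violator count all survive the operation, with the connectedness step resting on the same key observation that every variable of $\var(n)$ appears on the entire path from $n$ down to $p$. Your write-up is in fact more detailed than the paper's terse proof (e.g., the degenerate case $l=g$ and the relabeling of the split edge above $l$), but there is no substantive difference in approach.
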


We prove this lemma in Appendix~\ref{sec:app-acyclicity}.
The second operator is applicable when $n$ is a type 2 violator. When
applied to $n$ in $(T,N)$ it returns the pair $(T',N')$ obtained as
follows:
\begin{itemize}
\item Since $n$ is a type 2 violator, no node in $\child_T(n) \cap N$ is
  a guard of $n$. Since every node has a guard,
  there is some $g \in \child(n) \setminus N$ which is a guard of $n$.
\item Create a new child $p$ of $n$ with label $\var(p) = \var(n)$;
  detach all nodes in $\child(n)\setminus N$ (including $g$) from $N$,
  and add them as children of $p$, preserving their edge labels. This
  moves all subtrees rooted at nodes in $\child(n)
  \setminus N$ from $n$ to $p$. Denote by $T'$ the final result.
\item Set $N' = N \cup \{p\}$.
\end{itemize}
We write $(T,N) \xrightarrow{2,n} (T',N')$ to indicate that $(T',N')$
was obtained by applying this operation on $n$.

\begin{example}
  Consider the \gjt pair $(T,N)$ in
  Fig.~\ref{fig:sc2-transform}. Let us denote the root node by
  $n$. Since its guard child $h(y,z,w,t)$ is not in $N$, $n$ is
  violator of type 2. After applying operation 2 on $n$, $(T',N')$
  shows the resulting valid sibling-closed \gjt.
\end{example}

\begin{figure}[t]
  \centering
\tikzset{
    expand bubble/.style={
        preaction={draw,line width=1.4pt},
        gray!20,fill,draw,line width=2pt,
    },
}
    \begin{tikzpicture}[-,>=stealth',thick]
\begin{scope}[xshift=-4cm]
            \node (w) [label={above:$(T,N)$}] { $\{y,z,w\}$ } [level distance=1cm,sibling distance=1.3cm]
            child {
                node (xu) { $h(y,z,w,t)$ } 
            }
            child{
                    node (uv) { $r(x,y)$ }
            }
            child{
                    node (uw) { $s(y,z,m)$ }
                edge from parent node[right] {\scriptsize $m<w$}
            };
            \begin{pgfonlayer}{background}
              \path[expand bubble]plot [smooth cycle,tension=1]
              coordinates {(w.west) (uv.south) (w.east)};
            \end{pgfonlayer}
        \end{scope}

        \begin{scope}
            \node (w) [label={above:$(T',N')$}] { $\{y,z,w\}$ } [level distance=1cm,sibling distance=1.9cm]
            child {
                node (xu) { $\{y,z,w\}$ } [sibling distance=1.5cm]
                    child{
                    	node(sss){$h(y,z,w,t)$}
                    }
                    child{
                    node (uw) { $s(y,z,m)$ }
                    edge from parent node[right] {\scriptsize $m<w$}
                    }
            }
            child{
                    node (uv) { $r(x,y)$ }
            };
            \begin{pgfonlayer}{background}
              \path[fill=gray!20]plot [smooth cycle,tension=1]
              coordinates {(w.north) (xu.south west) (uv.south east)};
            \end{pgfonlayer}
        \end{scope}
    \end{tikzpicture}
  \caption{Illustration of the sibling-closed transform: removal of
    type-2 violator. The connex sets $N$ and $N'$ are indicated by the
  shaded areas.}
  \label{fig:sc2-transform}
\end{figure}
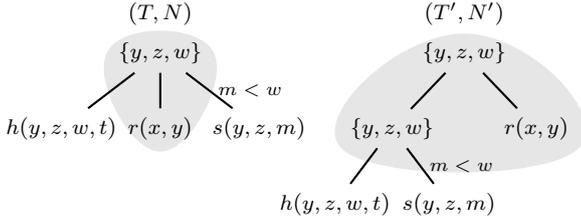

\tikzset{
itria/.style={
  draw,dashed,shape border uses incircle,
  isosceles triangle,shape border rotate=90,yshift=-0.80cm, minimum height=0.6cm,
        minimum width=0.4cm,
        shape border rotate=#1,
        isosceles triangle stretches,
        inner sep=0pt},
rtria/.style={
  draw,dashed,shape border uses incircle,
  isosceles triangle,isosceles triangle apex angle=90,
  shape border rotate=-45,yshift=0.2cm,xshift=0.5cm},
ritria/.style={
  draw,dashed,shape border uses incircle,
  isosceles triangle,isosceles triangle apex angle=110,
  shape border rotate=-55,yshift=0.1cm},
letria/.style={
  draw,dashed,shape border uses incircle,
  isosceles triangle,isosceles triangle apex angle=110,
  shape border rotate=235,yshift=0.1cm}
}
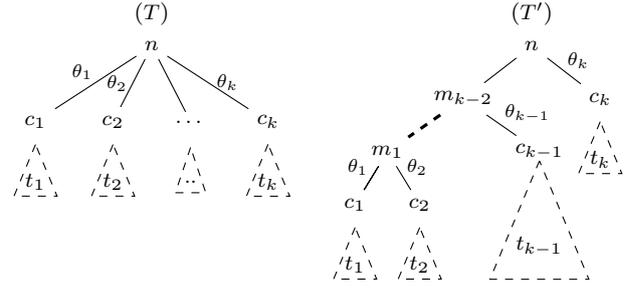
\begin{figure}[t]
\centering
\begin{tikzpicture}[sibling distance=1.02cm, level 2/.style={sibling distance =2cm},
emph/.style={edge from parent/.style={dashed,black,very thick,draw}},
norm/.style={edge from parent/.style={solid,black,thin,draw}}]
\begin{scope}[xshift=-5cm, level distance=1cm]
\node[label={above:$(T)$}] {$n$}
    child{ node[] {$c_1$}    
    { node[itria] {$t_1$} }
    edge from parent node[above left=-.1] {\scriptsize $\theta_1$}
    }
     child{ node[] {$c_2$}
    { node[itria] {$t_2$} }
    edge from parent node[left=-.04] {\scriptsize $\theta_2$}
    }
     child{ node[] {$\dots$}
     { node[itria] {$..$} }
    }
    child{ node[] {$c_k$}
            { node[itria] {$t_k$} }
    edge from parent node[right] {\scriptsize $\theta_k$}
        };
\end{scope}
\begin{scope}
\node[label={above:$(T')$}] {$n$}[sibling distance = 1.8cm, level distance=0.7cm]
    child[norm] { node[] {$m_{k-2}$}[sibling distance= 0.85cm]
                    child[emph] {
                        node[] {$m_1$}[sibling distance=0.85cm]
                            child[norm]{ node[] {$c_1$}{
                            node[itria] {$t_1$}
                            edge from parent node[above left=-.1] {\scriptsize $\theta_1$}
                            } }
                            child[norm] { node[] {$c_2$} 
                                     { node[itria] {$t_2$} }                         
                            edge from parent node[above right=-.1] {\scriptsize $\theta_2$}
                                }
                        }
                child[norm]{ node[] {$c_{k-1}$}
                        { node[itria,yshift=-13] {$t_{k-1}$} } 
                  edge from parent node[above right=-.1] {\scriptsize $\theta_{k-1}$}
                }
        }
    child[norm]{ node[] {$c_k$}
            { node[itria] {$t_k$} }      
                  edge from parent node[above right=-.1] {\scriptsize $\theta_{k}$}
        };
\end{scope}
\end{tikzpicture}
\caption{Binarizing a $k$-ary node $n$. }
\label{fig:binary}
\end{figure}

\begin{restatable}{lemma}{lemmaRemTwo}
\label{lemma:op2}
  Let $n$ be a violator of type $2$ in $(T,N)$ and assume $(T,N)
  \xrightarrow{2,n} (T',N')$. Then $(T',N')$ is a \gjt pair and it is
  equivalent to $(T,N)$. Moreover, the number of violators in
  $(T',N')$ is strictly smaller than the number of violators in
  $(T,N)$.
\end{restatable}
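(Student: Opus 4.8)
The plan is to verify, in order, the three assertions of the lemma, following the template used for Lemma~\ref{lemma:op1}. First I would show that $T'$ is again a \gjt by checking the four clauses of Definition~\ref{def:gjt}. Writing $S = \child_T(n) \cap N$ for the children that stay with $n$ and $M = \child_T(n) \setminus N$ for the subtrees moved under $p$, the only structural change is that $n$'s children become $S \cup \{p\}$ while $p$'s children are exactly $M$, with $\var(p) = \var(n)$. Leaves are untouched, so the leaf clause holds. For the guard clause, the guard child $g$ of $n$ lies in $M$ precisely because $n$ is a type-2 violator (no child in $N$ guards $n$, yet a guard child exists), so after the move $\var(p) = \var(n) \subseteq \var(g)$ makes $g$ a guard child of $p$; and $n$ now has $p$ itself as a guard child since $\var(n) = \var(p)$. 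All other interior nodes keep their children and hence their guards. The edge-label clause is immediate: the moved edges retain their labels and $\var(\theta) \subseteq \var(n) \cup \var(m) = \var(p) \cup \var(m)$ because $\var(p) = \var(n)$, while the fresh edge $n \to p$ carries the empty predicate set.

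The delicate clause is connectedness, and I would argue it by observing that $p$ behaves as a duplicate of $n$ on every path it lies on. Concretely, each path of $T'$ is obtained from a path of $T$ either by replacing $n$ with $p$ (for paths between two moved subtrees) or by inserting $p$ immediately below $n$ (for paths from a moved subtree to the rest of the tree); since $\var(p) = \var(n)$, a variable occurs in $p$ iff it occurs in $n$, so no connectedness requirement can be newly violated. Equivalently, for a fixed variable $x$ the set of $x$-labelled nodes forms a subtree in $T$; if $x \notin \var(n)$ this subtree lies wholly inside one component hanging off $n$ and is preserved, and if $x \in \var(n)$ then $x \in \var(p)$ and $p$ attaches to the $x$-subtree through the edge $n \to p$ together with the moved edges, keeping it connected.

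It then remains to check that $N'$ is connex and that $(T',N')$ is equivalent to $(T,N)$, both of which are short: $N' = N \cup \{p\}$ still contains the root, and in the induced subgraph $p$ is a pendant attached only to $n \in N$, so $N'$ induces a tree; equivalence follows because no atom is added or removed, $\preds(T') = \preds(T)$ as the edge labels are merely relocated, and $\var(N') = \var(N) \cup \var(p) = \var(N)$ since $\var(p) = \var(n) \subseteq \var(N)$. The main obstacle, and the crux of the lemma, is the final count: I expect to show that the violators of $(T',N')$ are exactly those of $(T,N)$ with $n$ removed. Indeed, $n$ is no longer a violator because all its children $S \cup \{p\}$ now lie in $N'$; the new node $p$ is not a violator because none of its children (all in $M$, which is disjoint from $N$ and from $\{p\}$) belong to $N'$; and every other node keeps both its child set and the $N'$-membership of those children, hence keeps its violator status verbatim. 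Since $n$ was a violator of $(T,N)$ by hypothesis and is resolved without creating any new one, the violator count drops by exactly one, giving the strict decrease.
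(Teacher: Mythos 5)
Your proposal is correct and follows essentially the same route as the paper's proof: the same four checks ($T'$ is a valid \gjt via leaves, guards, edge labels, and connectedness all hinging on $\var(p)=\var(n)$; $N'$ is connex since $p$ attaches as a pendant to $n\in N$; equivalence since atoms and predicates are preserved and $\var(p)\subseteq\var(N)$; and the violator count drops because $n$ is repaired, $p$ has no children in $N'$, and all other nodes keep their status). Your version is slightly more detailed — in particular you note the count decreases by exactly one rather than merely strictly — but the argument is the same.
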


The proof can be found in Appendix~\ref{sec:app-acyclicity}.
\begin{proposition}
\label{prop:sctransform}
  Every GJT pair can be transformed in polynomial time into an equivalent
  sibling-closed pair.
\end{proposition}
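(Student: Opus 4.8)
The plan is to reduce the proposition to eliminating \emph{violators}, by first proving the characterization that a \gjt pair $(T,N)$ is sibling-closed if and only if it contains no violators, and then driving the violator count to zero by iterating the two operations $\xrightarrow{1,n}$ and $\xrightarrow{2,n}$, which by Lemma~\ref{lemma:op1} and Lemma~\ref{lemma:op2} preserve equivalence and strictly decrease the number of violators.

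First I would establish the characterization. For the forward direction, if $p \in N$ has a child in $N$, then sibling-closedness forces all children of $p$ (which are mutual siblings) into $N$, so $p$ is not a violator; since this applies to every $p \in N$, there are no violators. For the converse I would use connexity: since $N$ induces a subtree containing the root, every ancestor of a node in $N$ is again in $N$. Hence if $n \in N$ has a sibling $m$, their common parent $p$ lies in $N$, and because $p$ has the child $n \in N$ the no-violator assumption forces \emph{every} child of $p$—in particular $m$—into $N$, giving sibling-closedness. This reduces the claim to the statement that all violators can be removed.

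Next I would set up the iteration. Whenever $(T,N)$ still contains a violator $n$, it is by definition either of type $1$ (some child of $n$ in $N$ guards $n$) or of type $2$ (no child of $n$ in $N$ guards $n$); these cases are mutually exclusive and exhaustive, so the appropriate operator $\xrightarrow{1,n}$ or $\xrightarrow{2,n}$ always applies. By the two lemmas the resulting pair is again a \gjt pair, is equivalent to the previous one, and has strictly fewer violators. Since the number of violators is a non-negative integer that strictly decreases at each step, the process terminates in a pair with no violators, which by the characterization is sibling-closed; equivalence with the original pair follows by transitivity along the chain of applications. For the complexity bound, the initial pair has at most $\card{V}$ violators, so by the strict-decrease invariant at most $\card{V}$ operations occur in total, despite each one inserting a fresh node $p$. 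Consequently the tree never exceeds $2\card{V}$ nodes, and each operation performs only local surgery—inserting one node and re-attaching the subtrees rooted at $\child(n)\setminus N$ with their edge labels—in time polynomial in the current tree size, so the whole transformation runs in polynomial time.

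The subtle point I would watch most carefully is termination: because each operation \emph{adds} a node, a naive size-based monovariant does not decrease, and one might worry the process loops forever. The resolution is precisely the ``strictly fewer violators'' clause of Lemmas~\ref{lemma:op1} and~\ref{lemma:op2}, which supplies a clean monovariant bounded by the initial node count and thereby yields both termination and the linear bound on the number of iterations underlying the polynomial-time claim. The genuinely hard part—checking that re-attaching the non-selected children under the new guard node does not introduce new violators or break the \gjt conditions—is exactly what those two lemmas assert, and I am permitted to assume them here.
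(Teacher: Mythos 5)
Your proposal is correct and follows essentially the same route as the paper: iterate the type-1 and type-2 operations of Lemmas~\ref{lemma:op1} and~\ref{lemma:op2}, using the strictly decreasing violator count (bounded by the node count) as the termination and complexity monovariant. Your only addition is an explicit proof, via connexity of $N$, that ``no violators'' is equivalent to sibling-closedness --- a step the paper's proof treats as immediate --- and your argument for it is sound.
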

\begin{proof}
  The two operations introduced above remove violators, one at a
  time. By repeatedly applying these operations until no violator
  remains we obtain an equivalent pair without violators, which must
  hence be sibling-closed. Since each operator can clearly be executed
  in polynomial time and the number of times that we must apply
  an operator is bounded by the number of nodes in the GJT
  pair, the removal takes polynomial time.
\end{proof}

\paragraph*{\bf Binary transformation} Next, we show how to transform
a sibling-closed pair $(T, N)$ into an equivalent binary and
sibling-closed pair $(T', N')$. The idea here is to ``binarize'' each
node $n$ with $k > 2$ children as shown in
Fig.~\ref{fig:binary}. There, we assume without loss of
generality that $c_1$ is a guard child of $n$. The binarization
introduces $k-2$ new intermediate nodes $m_1, \dots, m_{k-2}$, all
with $\var(m_i) = \var(n)$. Note that, since $c_1$ is a guard of $n$
and $\var(m_i) = \var(n)$, it is straightforward to see that $c_1$
will be a guard of $m_1$, which will be a guard of $m_2$, which will
be a guard of $m_3$, and so on. Finally, $m_{k-2}$ will be a guard of
$n$.  The connex set $N$ is updated as follows. If none of $n$'s
children are in $N$ i.e. $n$ is a frontier node, set
$N'=N$. Otherwise, since $N$ is sibling-closed, all children of $n$
are in $N$, and we set $N'=N\cup\{m_1,\dots,m_{k-2}\}$. Clearly, $N'$
remains a sibling-closed connex subset of $T'$ and $\var(N') =
\var(N)$. We may hence conclude:

\begin{lemma}
  By binarizing a single node in a sibling-closed GJT pair $(T, N)$ as
  shown in Fig.~\ref{fig:binary}, we obtain an equivalent GJT pair
  $(T', N')$ that has strictly fewer non-binary nodes than $(T,N)$.
\end{lemma}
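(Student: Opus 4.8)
The plan is to establish three facts about the pair $(T',N')$ produced by the binarization of Fig.~\ref{fig:binary}: (i) $T'$ is a valid \gjt, (ii) $(T',N')$ is equivalent to $(T,N)$ with $N'$ a sibling-closed connex set, and (iii) $T'$ has strictly fewer non-binary nodes than $T$. I would dispatch (iii) first, since it is the quickest: binarization leaves every node outside the reorganized star at $n$ untouched, turns $n$ from a $k$-ary node ($k>2$) into a node with the two children $m_{k-2}$ and $c_k$, and gives each fresh node $m_1,\dots,m_{k-2}$ exactly two children. Hence the non-binary nodes of $T'$ are exactly those of $T$ with $n$ removed, which is strictly fewer.

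For (i) I would verify the four clauses of Definition~\ref{def:gjt} in turn. Leaves are unchanged, so every leaf is still an atom. For the guard clause I use that $c_1$ is a guard of $n$ and that every new node carries the label $\var(m_i)=\var(n)$: then $c_1$ guards $m_1$, each $m_{i-1}$ guards $m_i$, and $m_{k-2}$ guards $n$, while guards of all untouched interior nodes are inherited from $T$. For the edge-predicate clause, each original predicate $\theta_i$ is reattached to an edge into $c_i$ whose other endpoint is a node of label $\var(n)$ (or is $n$ itself, for $\theta_k$); since $\var(\theta_i)\subseteq\var(n)\cup\var(c_i)$ held in $T$, the condition survives, and the new ``chain'' edges carry no predicates.

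The remaining and most delicate clause is connectedness, which I expect to be the main obstacle. Here I would exploit that every new node has label exactly $\var(n)$ and that the construction only subdivides the star at $n$ into a caterpillar, altering neither any subtree rooted at a $c_i$ nor any part of the tree above $n$. Concretely, take a variable $x$ occurring in two nodes $a,b$ of $T'$ and argue that $x$ persists along the $T'$-path between them: if $a,b$ lie in a common $c_i$-subtree, or both avoid the reorganized region, the path is identical to a path in $T$; otherwise the path traverses some of the $m_i$ (and possibly $n$), and any such crossing forces $x\in\var(n)$ by connectedness in $T$, so $x$ lies in every $m_i$ on the new segment while connectedness in $T$ covers the untouched ends. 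Equivalently, contracting the path $n,m_{k-2},\dots,m_1$ back to $n$ recovers $T$, and this contraction preserves all labels and sends $T'$-paths onto $T$-paths, so connectedness of $T$ transfers to $T'$.

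Finally, for (ii), equivalence of the trees is immediate from the above: $\atoms(T')=\atoms(T)$ with identical multiplicities (the leaves are the same), and $\preds(T')=\preds(T)$ (the $\theta_i$ are merely relocated, the chain edges are empty). It remains to handle $N'$. If $n$ is a frontier node of $N$ we set $N'=N$, and since nothing in $N'$ was altered it is still a sibling-closed connex set with $\var(N')=\var(N)$. Otherwise, because $N$ is sibling-closed, all children of $n$ lie in $N$ and $N'=N\cup\{m_1,\dots,m_{k-2}\}$; the added nodes connect $n$ to the $c_i$ entirely within $N'$, so the induced subgraph remains a tree containing the root, each $m_i$ has a sibling $c_j\in N'$, and since $n\in N$ already yields $\var(n)\subseteq\var(N)$ we get $\var(N')=\var(N)\cup\var(n)=\var(N)$. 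Thus $(T',N')$ is an equivalent sibling-closed \gjt pair with strictly fewer non-binary nodes, which completes the argument.
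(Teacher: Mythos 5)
Your proof is correct and takes essentially the same route as the paper, which justifies the lemma by the same observations: the guard chain ($c_1$ guards $m_1$, each $m_i$ guards $m_{i+1}$, and $m_{k-2}$ guards $n$, all enabled by $\var(m_i)=\var(n)$), the same case distinction for $N'$ (frontier node versus all children in $N$ by sibling-closedness), and $\var(N')=\var(N)$. The paper treats the remaining \gjt clauses as ``straightforward to see,'' so your explicit checks of connectedness (via the chain-contraction argument) and of the relocated predicates $\theta_i$ simply spell out details the paper leaves implicit.
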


Binarizing a single node is a polynomial-time
operation. Then, by iteratively binarizing non-binary nodes until all
nodes have become binary we hence obtain:
\begin{proposition}
\label{prop:bintransform}
  Every sibling-closed GJT pair can be transformed in polynomial time into an equivalent,
  binary and sibling-closed pair.
\end{proposition}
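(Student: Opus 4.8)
The plan is to prove Proposition~\ref{prop:bintransform} by iterating the single-node binarization operation of Fig.~\ref{fig:binary} until no node has more than two children. The essential groundwork is already in place: the preceding lemma guarantees that applying this operation to a non-binary node $n$ of a sibling-closed \gjt pair $(T,N)$ yields an equivalent sibling-closed \gjt pair in which the number of non-binary nodes has strictly decreased. The observation underpinning this is that the operation is \emph{local}: the $k-2$ freshly introduced intermediate nodes $m_1,\dots,m_{k-2}$ are each binary by construction, the node $n$ itself becomes binary, and the subtrees rooted at the former children $c_1,\dots,c_k$ are left untouched. Hence no node other than $n$ changes its number of children, so binarizing one node removes exactly the single non-binary node $n$ without creating any new ones.

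First I would set up the iteration formally. Starting from $(T_0,N_0) = (T,N)$, as long as $T_i$ contains a node with more than two children, pick any such node and binarize it to obtain $(T_{i+1},N_{i+1})$. By the single-node lemma each $(T_{i+1},N_{i+1})$ is a sibling-closed \gjt pair equivalent to $(T_i,N_i)$, and the number of non-binary nodes drops by at least one at each step. Since this count is a non-negative integer bounded above by the number of nodes of $T$, the process halts after at most a linear number of steps; call the terminal pair $(T',N')$. Upon termination no node has more than two children, so $T'$ is binary, and $N'$ is sibling-closed because every intermediate connex set is.

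Next I would discharge equivalence and the running-time bound. Equivalence of \gjt pairs, as defined in the excerpt, constrains only the multiset of atom labels, the predicate set $\preds$, and $\var(N)$; since each elementary step preserves all three and equivalence is transitive, the composed transformation yields a pair equivalent to $(T,N)$. For the complexity, a single binarization inserts $k-2$ nodes (fewer than the total number of nodes of $T$) and relabels a constant number of edges per inserted node, so it runs in polynomial time; composed with the at-most-linearly-many iterations, the whole transformation is polynomial.

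I expect no genuine obstacle here, since the substantive content has been isolated in the single-node lemma and what remains is the standard pattern of applying a strictly-decreasing local rewrite until no candidate node remains. The only point warranting a line of care is confirming that binarizing one node cannot spawn a fresh non-binary node elsewhere in the tree; this is precisely the locality observation made above, and it is exactly what makes the number of non-binary nodes a valid termination measure.
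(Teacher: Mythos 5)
Your proposal is correct and follows essentially the same route as the paper: the paper likewise iterates the single-node binarization of Fig.~\ref{fig:binary}, relying on the preceding lemma that each step yields an equivalent sibling-closed pair with strictly fewer non-binary nodes, and concludes polynomial time from the polynomial cost of each step and the bounded number of iterations. Your added remarks on locality and transitivity of equivalence merely make explicit what the paper leaves implicit.
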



\section{Dynamic joins with equalities and inequalities: an example}
\label{sec:iedyn}

In this section we illustrate how to dynamically process free-connex acyclic
\gcqs when all predicates are inequalities $(\leq, <, \geq, >)$. We do
so by means of an extensive example that shows the
indexing structures and GMRs. The definitions and algorithms (that apply to arbitrary $\theta$-joins) will be formally presented in
Section~\ref{sec:gdyn}.

Throughout this section we consider the following query $Q$, which is free-connex acyclic (see Example~\ref{ex:connexity}):
\[  \pi_{y,z,w,u}\big(r(x,y)\Join s(y,z,w)\Join t(u,v) \mid x<z \wedge w<u\big).\]
Let $T_2$ be the \gjt from Fig.~\ref{fig:extrees}.
We process $Q$ based on a $T_2$\emph{-reduct}, a data structure that
succinctly represents the output of $Q$. For every node $n$, define $\predicates(n)$ as the set of all predicates on outgoing edges of $n$, i.e.
$\predicates(n)=\bigcup_{c \text{ child of } n} \predicates(n \to c).$

\begin{definition}[$T$-reduct]
  Let $T$ be a \gjt for a query $Q$ and let $\db$ be a database over $\atoms(Q)$. The $T$-reduct (or \emph{semi-join reduction}) of $\db$ is a collection $\reduc$ of
  GMRs, one GMR $\reduc_n$ for each node $n \in T$, defined inductively as follows:
  \begin{compactitem}[-]
  \item if $n = r(x)$ is an atom, then $\reduc_n = \db_{r(x)}$
  \item if $n$ has a single child $c$, then
      $\reduc_n = \pi_{\var(n)}\sigma_{\predicates(n)}\reduc_c$
  \item otherwise, $n$ has two children $c_1$ and
      $c_2$. In this case we have $\reduc_n = \pi_{\var(n)}\left(\reduc_{c_1} \Join_{\predicates(n)}
    \reduc_{c_2}\right)$.
  \end{compactitem}
\end{definition}

\begin{figure}[t]
  \centering
  \addtolength{\tabcolsep}{-2pt} 
  \begin{tikzpicture}[-,>=stealth',thick, level distance=11mm]
    \tikzstyle{detailnode}=[anchor=north,inner sep=2pt,text width=2cm, text centered]
    \node [detailnode] (yw) { %
      \scriptsize
      $\reduc_{\{y,w\}}$ \\
      \begin{tabular}{|l|r|}\hline
        \rowcolor{brown!60}
        $y$ $w$ & \# \\ \hline
        1 3 & 84 \\ \hline
      \end{tabular} %
    }
    [child anchor=north, sibling distance=4cm]
    child {
      node [detailnode] (yzw) { %
        \scriptsize
        $\reduc_{\{y,z,w\}}$ \\
        \begin{tabular}{|l|r|}\hline
          \rowcolor{brown!60}
          $y$ $z$ $w$ & \# \\ \hline
          1 3 3 & 12 \\ \hline
          2 4 6 & 15 \\ \hline
        \end{tabular} 
      }
      [sibling distance=2cm]
      child {
        node [detailnode] (r) { %
          \scriptsize
          $\reduc_r$ \\
          \begin{tabular}{|l|l|}\hline
            \rowcolor{brown!60}
            $x$ $y$ & \# \\ \hline
            2 2 & 2 \\ \hline
            3 2 & 3 \\ \hline
            2 1 & 4 \\ \hline
          \end{tabular} %
        }
        edge from parent node[left] {\scriptsize $x<z$}
      }
      child {
        node [detailnode] (s) { %
          \scriptsize
          $\reduc_{s}$\\
          \begin{tabular}{|l|l|}\hline
            \rowcolor{brown!60}
            $y$ $z$ $w$ & \# \\ \hline
            1 2 2 & 2 \\ \hline
            1 3 3 & 3 \\ \hline
            2 4 6 & 3 \\ \hline
          \end{tabular} %
        }
      }
    }
    child {
      node [detailnode] (u){ %
        \scriptsize
        $\reduc_{\{u\}}$\\
        \begin{tabular}{|l|l|}\hline
          \rowcolor{brown!60}
          $u$ & \# \\ \hline
          4 & 7 \\ \hline
          2 & 4 \\ \hline
        \end{tabular} %
      } 
      child {
        node [detailnode] (t){ %
          \scriptsize
          $\reduc_{t}$\\
          \begin{tabular}{|l|l|} \hline
            \rowcolor{brown!60}
            $u$ $v$ & \# \\ \hline
            2 3 & 4 \\ \hline
            4 6 & 2 \\ \hline
            4 5 & 5 \\ \hline
          \end{tabular} %
        }
      }
      edge from parent node[right] {\ \scriptsize $w<u$}
    };
    \node (ywlab) at ([shift={(2.2,-0.15)}]yw.north) {\scriptsize $=\pi_{y,w}(\reduc_{\{y,z,w\}} \Join_{w<u} \reduc_{\{u\}})$};
    \node (yzwlab) at ([shift={(1.9,-0.15)}]yzw.north) {\scriptsize $=\pi_{y,z,w}(\reduc_r \Join_{x<z} \reduc_s)$};
    \node (ulab) at ([shift={(0.7,-0.15)}]u.north) {\scriptsize $=\pi_{u}\reduc_t$};
    
    \node (db) at (0,1) { \scriptsize   \begin{tabular}[t]{|l|l|}
        \multicolumn{2}{c}{$\db_{t(u,v)}$} \\ \hline
        $u$  $v$ & $\mathbb{Z}$ \\ \hline
        2  3 & 4 \\
        4  6 & 2 \\
        4  5 & 5 \\\hline
    \end{tabular}
    \qquad
    	\begin{tabular}[t]{|l|l|}
    	\multicolumn{2}{c}{$\db_{s(y,z,w)}$} \\ \hline
    	$y$ $z$ $w$ & $\mathbb{Z}$ \\ \hline
    	1  2  2 & 2\\
    	1  3  3 & 3\\
    	2  4  6 & 3\\\hline
	\end{tabular} 
        \qquad 
	\begin{tabular}[t]{|l|l|}
    	\multicolumn{2}{c}{$\db_{r(x,y)}$} \\ \hline
    	$x$ $y$ & $\mathbb{Z}$ \\ \hline
    	2  2 & 2 \\
    	3  2 & 3 \\ 
    	2  1 & 4 \\\hline
	\end{tabular} 
};
  \end{tikzpicture}
  \caption{Example database and its $T_2$-reduct.}
  \label{fig:ex-reduct}
\end{figure}
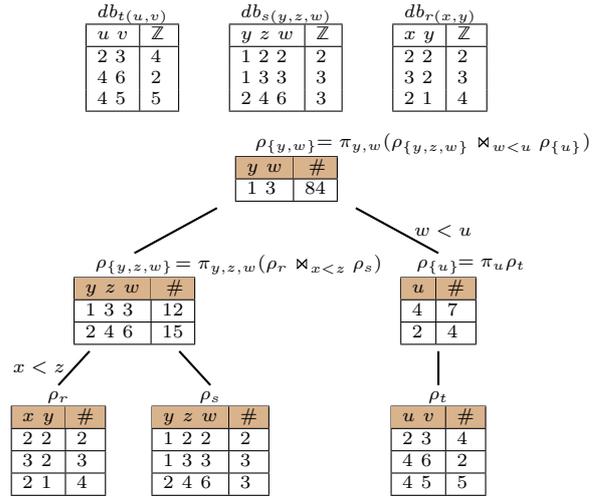

Fig.~\ref{fig:ex-reduct} depicts an
example database (top) and its $T_2$-reduct $\rho$ (bottom). Note, for
example, that the only tuple in the GMR at the root $\rho_{\{y,w\}}$ is
the join of $\rho_{\{y,z,w\}}$ and $\rho_{\{u\}}$ restricted to $w<y$
and projected over $\{y,w\}$. 

It is important to observe that the size of a $T$-reduct of a database
$\db$ can be at most linear in the size of $\db$. The reason is that, as illustrated in Fig.~\ref{fig:ex-reduct}, for each node $n$
there is some descendant atom $\alpha$ (possibly $n$ itself) such that
$\supp(\rho_n) \subseteq \supp(\proj_{\var(n)}\db_{\alpha})$. Note
that $Q(\db)$, in contrast, can easily become polynomial in the size of $\db$ in
the worst case.

\paragraph*{\bf Enumeration} 
From a $T$-reduct we can enumerate the result
$Q(\db)$ rather naively simply by recomputing the query results, in
particular because we have access to the complete database in the
leaves of $T$. We would like, however, to make the enumeration as
efficient as possible. To this end, we equip $T$-reducts with a set of
indices. To avoid the space cost of materialization, we do not
want the indices to use more space than the $T$-reduct itself (i.e.,
linear in $\db$). We illustrate these ideas in our running example by introducing a simple set of indices that allow for efficient enumeration.

Let $N = \{ \{y,w\}, \{y,z,w\}, \{u\} \}$ be the
connex subset of $T_2$ satisfying $\var(N) = \free(Q) = \{
y,z,w,u\}$. $(T_2,N)$ is compatible with $Q$, binary and
sibling-closed. We rely on the sibling-closed property of $N$ to
enumerate query results, and can do so without loss of generality by Proposition~\ref{prop:bin-sc-equivalent}. To enumerate the query results, we will traverse top-down the nodes in $N$. The
traversal works as follows: for each tuple $\tup{t_1}$ in $\rho_{\{y,w\}}$,
we consider all tuples $\tup{t_2}$ in $\rho_{\{y,z,w\}}$ that are compatible
with $\tup{t_1}$, and all tuples $\tup{t_3}\in\rho_{\{u\}}$ that are compatible
with $\tup{t_1}$. Compatibility here means that the corresponding equalities
and inequalities are satisfied. Then, for each pair $(\tup{t_2}, \tup{t_3})$, we
output the tuple $\tup{t_2}\cup \tup{t_3}$ with multiplicity
$\rho_{\{y,z,w\}}(\tup{t_2})\times\rho_{\{u\}}(\tup{t_3})$. A crucial difference here
with naive recomputation is that, since $\rho_{\{y,w\}}$ is already a
join between $\rho_{\{y,z,w\}}$ and $\rho_{\{u\}}$, we will only
iterate over \emph{relevant} tuples: each tuple that we iterate over
will produce a new output tuple. For example, we will never look at
the tuple $\langle y:2,z:4,w:6\rangle$ in $\rho_{\{y,z,w\}}$ because
it does not have a compatible tuple at the root.

To implement this enumeration strategy efficiently, we desire
index structures on $\reduc_{\{y,z,w\}}$ and $\reduc_{\{u\}}$ that
allow to enumerate, for a given tuple $\tup{t_1}$ in $\rho_{\{y,w\}}$, all
compatible tuples $\tup{t_2} \in \rho_{\{y,z,w\}}$ (resp. $\tup{t_3} \in
\rho_{\{u\}}$) with constant delay. In the case of $\rho_{\{u\}}$ this
is achieved simply by keeping $\rho_{\{u\}}$ sorted decreasingly on
variable $u$. Given tuple $\tup{t_1}$, we can enumerate the compatible
tuples from $\reduc_{\{u\}}$ by iterating over its tuples one by one in a decreasing manner, starting from the
largest value of $u$, and stopping whenever the current $u$ value is
smaller or equal than the $w$ value in $\tup{t_1}$. 
For
indexing $\rho_{\{y,z,w\}}$ we use a more standard index. Since we need to enumerate all tuples that have the same $y$ and
$w$ value as $\tup{t_1}$, CDE can be achieved by using a hash-based
index on $y$ and $w$. This index is depicted as $I_{\rho_{\{y,z,w\}}}$
in Fig.~\ref{fig:ex-trep}. We can see that, since the
described indices provide \cde of the compatible tuples given $\tup{t_1}$,
our strategy provides enumeration of
$Q(\db)$ with constant delay if we assume the query to be fixed (i.e. in data complexity~\cite{Vardi:1982}). 

\paragraph*{\bf Updates}
Next we illustrate how to process updates. The objective here is to
transform the $T_2$-reduct of $\db$ into a $T_2$-reduct of $\db+u$,
where $u$ is the received update. To do this efficiently we use additional indexes on $\rho$. We present the intuitions
behind these indices with an update consisting of two
insertions: $\langle y\colon 2, z\colon 3, w\colon 6\rangle$ with multiplicity $2$ and
$\langle u\colon 4, v\colon 9\rangle$ with multiplicity
$3$. Fig.~\ref{fig:ex-trep} depicts the update
process highlighting the modifications caused by the update.

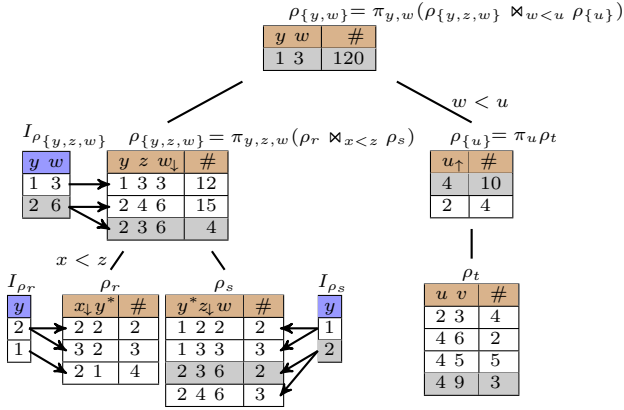
\begin{figure}[t!]
	\centering
	\addtolength{\tabcolsep}{-2pt} 
	\begin{tikzpicture}[-,>=stealth',thick, level distance=11mm]

		\tikzstyle{detailnode}=[anchor=north,inner sep=3.6pt,text width=2.5cm, text centered]

		\node [detailnode] (yw) { %
		\scriptsize
		$\reduc_{\{y,w\}}$ \\
		\begin{tabular}{|l|r|}\hline
		  \rowcolor{brown!60}
		  $y$ $w$ & \# \\ \hline
		  \rowcolor{gray!40}
		  1 3 & 120 \\ \hline
		\end{tabular} %
		}
		[child anchor=north, sibling distance=4cm]
		child {
		node [detailnode] (yzw) { %
		  \scriptsize
		  $\reduc_{\{y,z,w\}}$ \\
		    \begin{tabular}{|l|r|}\hline
		      \rowcolor{brown!60}
		      $y$ $z$ $w_{\hspace{-.04cm}\downarrow}$\hspace{-.1cm} & \# \\ \hline
		      1 3 3 & 12 \\ \hline
		      2 4 6 & 15 \\ \hline
		      \rowcolor{gray!40}
		      2 3 6 & 4 \\ \hline
		    \end{tabular} 
		}
		[sibling distance=1.5cm]
		child {
		  node [detailnode] (r) { %
		    \scriptsize
		    $\reduc_r$ \\
		    \begin{tabular}{|l|l|}\hline
		      \rowcolor{brown!60}
		      $x_{\hspace{-.04cm}\downarrow}$\hspace{-.1cm} $y^*$\hspace{-.13cm}\text{}& \# \\ \hline
		      2 2 & 2 \\ \hline
		      3 2 & 3 \\ \hline
		      2 1 & 4 \\ \hline
		      \end{tabular} %
		    }
		    edge from parent node[left] {\scriptsize $x<z$}
		}
		child {
		  node [detailnode] (s) { %
		    \scriptsize
		    $\reduc_{s}$\\
		    \begin{tabular}{|l|l|}\hline
		        \rowcolor{brown!60}
		        $y^*$\hspace{-.13cm} $z_{\hspace{-.04cm}\downarrow}$\hspace{-.1cm} $w$ & \# \\ \hline
		        1 2 2 & 2 \\ \hline
		        1 3 3 & 3 \\ \hline
		        \rowcolor{gray!40}
		        2 3 6 & 2 \\ \hline
		        2 4 6 & 3 \\ \hline
		      \end{tabular} %
		    }
		}
		}
		child {
		node [detailnode] (u){ %
		  \scriptsize
		  $\reduc_{\{u\}}$\\
		  \begin{tabular}{|l|l|}\hline
		    \rowcolor{brown!60}
		    $u_\uparrow$\hspace{-.1cm} & \# \\ \hline
		    \rowcolor{gray!40}
		    4 & 10 \\ \hline
		    2 & 4 \\ \hline
		  \end{tabular} %
		} 
		child {
		  node [detailnode] (t){ %
		    \scriptsize
		    $\reduc_t$\\
		    \begin{tabular}{|l|l|} \hline
		      \rowcolor{brown!60}
		      $u$ $v$ & $\#$ \\ \hline
		      2 3 & 4 \\ \hline
		      4 6 & 2 \\ \hline
		      4 5 & 5 \\ \hline
		      \rowcolor{gray!40}
		      4 9 & 3 \\ \hline
		    \end{tabular} %
		  }
		}
		edge from parent node[right] {\ \scriptsize $w<u$}
		};

		\addtolength{\tabcolsep}{-2pt}

		\node [anchor=south east, inner sep=0pt, text width=1cm]
		(In1) at (-2.9,-2.95) {%
			\scriptsize
			$I_{\rho_{\{y,z,w\}}}$\text{}\\
			\begin{tabular}{|l|}\hline
			    \rowcolor{blue!40}
			    $y$ $w$ \\ \hline
			    1\ \,3  \\ \hline
			    \rowcolor{gray!40}
			    2\ \,6  \\ \hline
			\end{tabular}
		};

		\node [anchor=south east, text width=1cm, inner sep=0pt] 
		(Is) at (0.99,-4.86) {%
			\scriptsize
			$I_{\rho_s}$\text{}\\
			\begin{tabular}{|l|}\hline
			    \rowcolor{blue!40}
			    $y$  \\ \hline
			    1  \\ \hline
			    \rowcolor{gray!40}
			    2  \\ \hline
			\end{tabular}
		};

		\node [anchor=south east, text width=1cm, inner sep=0pt] 
		(Ir) at (-3.1,-4.86) {%
		\scriptsize
		$I_{\rho_r}$\text{}\\
		  \begin{tabular}{|l|}\hline
		    \rowcolor{blue!40}
		    $y$  \\ \hline
		    2  \\ \hline
		    1  \\ \hline
		  \end{tabular}
		};

		\draw[->] (Ir) ++(-0.21cm,-.13) -- +(0.5,0);
		\draw[->] (Ir) ++(-0.21cm,-.43) -- +(0.5,-0.3);
		\draw[->] (Ir) ++(-0.21cm,-.13) -- +(0.5,-0.3);

		\draw[->] (In1) ++(.11cm,-.17) -- +(0.55,0);
		\draw[->] (In1) ++(.11cm,-.47cm) -- +(0.55,0);
		\draw[->] (In1) ++(.11cm,-.47cm) -- +(0.55,-.3);

		\draw[<-] (Is) ++(-1cm,-.13) -- +(0.5,0);
		\draw[<-] (Is) ++(-1cm,-.43) -- +(0.5,.3);
		\draw[<-] (Is) ++(-1cm,-.73) -- +(0.5,.3);
		\draw[<-] (Is) ++(-1cm,-1.03) -- +(0.5,.6);

    \node (ywlab) at ([shift={(2.2,-0.2)}]yw.north) {\scriptsize $=\pi_{y,w}(\reduc_{\{y,z,w\}} \Join_{w<u} \reduc_{\{u\}})$};
    \node (yzwlab) at ([shift={(1.9,-0.2)}]yzw.north) {\scriptsize $=\pi_{y,z,w}(\reduc_r \Join_{x<z} \reduc_s)$};
    \node (ulab) at ([shift={(0.7,-0.2)}]u.north) {\scriptsize $=\pi_{u}\reduc_t$};

	\end{tikzpicture}
	\caption{\dtree{$T_2$} of $\db+u$ with $T_1$ and $\db$ as in
          Fig.~\protect \ref{fig:ex-reduct}
          and $u$ the update
          containing  $(\langle y\colon 2, z\colon 5,
          w\colon 6\rangle,2)$ and
          $(\langle u\colon 4, v\colon 9\rangle,3)$. }
	\label{fig:ex-trep}
\end{figure}

Let us first discuss how to process the tuple $\tup{t_1} = \langle y\colon 2,
z\colon 3, w\colon 6\rangle$. We proceed bottom-up, starting at
$\reduc_{s}$ which is itself affected by the insertion of
$\tup{t_1}$. Subsequently, we need to \emph{propagate} the modification of
$\reduc_{s}$ to its ancestors $\reduc_{\{y,z,w\}}$ and
$\reduc_{\{y,w\}}$. Concretely, from the definition of $T$-reduction,
it follows that we need to add some modifications to
$\reduc_s$, $\reduc_{\{y,z,w\}}$, and $\reduc_{\{y,w\}}$ on $\tup{t_1}$:
\begin{compactitem}[]
 \item $\Delta \reduc_s = [\tup{t_1} \mapsto 2]$,
 \item $\Delta \reduc_{\{y,z,w\}} = \proj_{y,z,w}\left( \reduc_r \Join_{x<z}
    \Delta \reduc_s\right)$,
 \item $\Delta \reduc_{\{y,w\}} = \proj_{y,w} \left( \Delta
    \reduc_{\{y,z,w\}} \Join_{w < u} \reduc_{\{u\}} \right)$.
\end{compactitem}
To compute the joins on the right-hand sides efficiently, we create a
number of additional indexes on $\reduc_r, \reduc_s$, and
$\reduc_{\{y,z,w\}}$. Concretely, in order to efficiently compute
$\proj_{y,z,w}\left( \reduc_r \Join_{x<z} \Delta \reduc_s\right)$,
we \emph{group} tuples in the GMR $\rho_r$ by the
variables that $\reduc_r$ has in common with $\reduc_s$ (in this case
$y$) and then, per group, \emph{sort} tuples ascending on variable
$x$. We mark grouping variables in Fig.~\ref{fig:ex-trep} with $*$
(e.g. $y^*$), and sorting by $\downarrow$ (for ascending, e.g.,
$x_\downarrow$) and $\uparrow$ (for descending). A hash index on the
grouping variables (denoted $I_{\reduc_{r}}$ in
Fig.~\ref{fig:ex-trep}) then allows to find the group given a $y$
value. The join can then be processed by means of a hybrid form of
sort-merge and index nested loop join. Sort $\Delta \reduc_s$
ascendingly on $y$ and $z$. For each $y$-group in $\Delta \reduc_s$
find the corresponding group in $\reduc_r$ by passing the $y$ value to
the index $I_{\rho_r}$. Let $\tup{t'}$ be the first tuple in the $\Delta
\reduc_s$ group. Then iterate over the tuples of the $\reduc_r$ group
in the given order and sum up their multiplicities until $x$ becomes
larger than $\tup{t'}(z)$. Add $\tup{t'}$ to the result with its original
multiplicity multiplied by the found sum (provided it is
non-zero). Then consider the next tuple in the $\Delta \reduc_s$
group, and continue summing from the current tuple in the $\reduc_r$
group until $x$ becomes again larger than $z$, and add the result
tuple with the correct multiplicity. Continue repeating this process
for each tuple in the $\Delta \reduc_s$ group, and for each group in
$\Delta \reduc_s$.
In our case, there is only one group in $\Delta \reduc_s$ (given by $y=2$) and we will
only iterate over the tuple $\langle x\colon 2, y\colon 2\rangle$ in $\reduc_r$,
obtaining a total multiplicity of 2, and therefore compute $\Delta
\reduc_{\{y,z,w\}} = [ \tup{t_1} \to 4]$. In order to compute the join
$\proj_{y,w} \left( \Delta \reduc_{\{y,z,w\}} \Join_{w < u}
  \reduc_{\{u\}} \right)$ efficiently, we proceed similarly. Here,
however, there are no grouping variables on $\reduc_{\{u\}}$ and it
hence suffices to sort $\reduc_{\{u\}}$ descendingly on $u$. Note that
this was actually already required for efficient enumeration. Also
note that $\Delta \reduc_{\{y,w\}}$ is empty.

Now we discuss how to process $\tup{t_2} = \langle u\colon 4,
v\colon 9\rangle$. First, we insert $\tup{t_2}$ into $\rho_t$. We need to
propagate this change to the parent $\reduc_{\{u\}}$ by calculating
$\Delta \reduc_{\{u\}} = \proj_u \Delta \reduc_{t}$. This is done
by a simple hash-based aggregation. 
Finally, we need to propagate $\Delta \reduc_{\{u\}}$ to the root by
computing $\Delta \reduc_{\{y,w\}} =
\pi_{y,w}(\reduc_{\{y,z,w\}} \Join_{w<u} \Delta \reduc_{\{u\}})$. To process this join efficiently we proceed as before. Again, there are no grouping variable on $\reduc_{\{y,z,w\}}$ (since
it has no variables in common with $\reduc_{\{u\}}$) and it hence
suffices to sort $\reduc_{\{y,z,w,\}}$ ascending on $w$. The only tuple that we iterate over during the hybrid join is $\langle
y\colon 1, z\colon 3, w\colon 3\rangle$ wich has multiplicity 12. Hence, we have $\Delta
\reduc_{y,w} = [\langle y\colon 1, w\colon 3\rangle \mapsto 36 \ ]$, concluding
the example.


\section{Dynamic Yannakakis Over GCQs}
\label{sec:gdyn}

Dynamic Yannakakis (\dyn) is an algorithm to efficiently evaluate free-connex acyclic
aggregate-equijoin queries under updates~\cite{dyn:2017}. This algorithm matches two
important theoretical lower bounds (for q-hierarchical \cqs
\cite{Berkholz:2017} and free-connex acyclic \cqs \cite{Bagan:2007}),
and is highly efficient in practice. In this section we
present a generalization of \dyn, called \edyn, to dynamically
process free-connex acyclic \gcqs. Since predicates in a \gcq can be arbitrary, our
approach is purely algorithmic; the  efficiency by which \edyn process
updates and produces results will depend entirely on the efficiency
of the underlying data structures. Here we only describe the properties
that those data structures should satisfy and present the general (worst-case)
complexity of the algorithm.
The techniques and indices presented in the previous section provide a practical instantiation of \edyn to a \gcq with equalities and inequalities, and throughout this section we make a parallel between that instantiation and the more abstract definitions of \edyn.

In this section we assume that $Q$ is a free-connex acyclic
\gcq and that $(T,N)$ is a binary and sibling-closed \gjt pair
compatible with $Q$. Like in the case of equalities and inequalities,
the dynamic processing of $Q$ will be based on a $T$-reduct of
the current database $\db$. A set of indices will be added to
optimize the enumeration of query results and maintenance of the
$T$-reduct under updates. We formalize the notion of index as follows:

\begin{definition}[Index]
  Let $R$ be a GMR over $\seq{x}$, let $\seq{y}$ be a hyperedge, let
  $\seq{w}$ be a hyperedge  satisfying
  $\seq{w}\subseteq\seq{x} \cup \seq{y}$, and let $\theta(\seq{z})$ be a predicate with $\seq{z}
  \subseteq \seq{x}\cup\seq{y}$. An \emph{index} on $R$ by $(\theta,\seq{y},\seq{w})$ with delay $f$ is a data structure $I$ that provides, for any given GMR $R_{\seq{y}}$ over $\seq{y}$, enumeration of $\proj_{\seq{w}}(R \bowtie_\theta R_{\seq{y}})$ with delay
  $O(f(\card{R}+\card{R_{\seq{y}}}))$. The update time of index $I$ is the time required
  to update $I$ to an index on $R + \Delta{R}$ (by $(\theta,\seq{y},\seq{w})$) given update
  $\Delta{R}$ to $R$.
\end{definition}

For example, $I_{\rho_{r}}$ in Fig.~\ref{fig:ex-trep} is used as an index on $\rho_{r}$ by $(x<z, \{y,z,w\}, \{y,z,w\})$. Indeed, in the previous section we precisely discussed how $I_{\reduc_r}$ allows to efficiently compute $\pi_{y,z,w}(\rho_{r}\Join_{x<z}\Delta{\rho_{s}})$ for an update $\Delta{\rho_{s}}$ to $\rho_{s}$. Having the notion of index, we discuss how \edyn enumerates query results and processes updates.

\paragraph*{\bf Enumeration} Let $\db$ be the current database. To enumerate $Q(\db)$
from a $T$-reduct $\reduc$ of $\db$ we can iterate over the reductions
$\rho_n$ with $n \in N$ in a nested fashion, starting at the root and
proceeding top-down.  When $n$ is the root, we iterate over all tuples
in $\rho_n$. For every such tuple $\tup{t}$, we iterate only over the
tuples in the children $c$ of $n$ that are compatible with $\tup{t}$
(i.e., tuples in $\rho_c$ that join with $\tup{t}$ and satisfy
$\predicates(n \to c)$).  This procedure continues until we reach
nodes in the frontier of $N$ at which time the output tuple can be
constructed. The pseudocode is given in
Algorithm~\ref{alg:enumeration-st}, where the tuples that are
compatible with $\tup{t}$ are computed by $\rho_c
\semijoin_{\predicates(n\to c)} \tup{t}$.

\begin{algorithm}[tb]
  \caption {Enumerate $Q(\db)$ given $T$-reduct $\reduc$ of $\db$.}
  \label{alg:enumeration-st}
  \begin{algorithmic}[1]
    \Function{$\routenum_{T,N}$}{$\reduc$}
    \LineFor{\textbf{each} $\tup{t}\in \reduc_{\treeroot(T)}$}
        {$\routenum_{T,N}(root(T), \tup{t}, \reduc)$}\label{line:rootloop}
    \EndFunction
    \medskip
    \Function{$\routenum_{T,N}$}{$n, \tup{t},\reduc$}\label{line:subenum}
    \If{$n$ is in the frontier of $N$}{\textbf{ yield} $(\tup{t}, \reduc_n(\tup{t}))$}\label{line:frontiertuple}
    \ElsIf{$n$ has one child $c$}
      \LineFor{\textbf{each} $\tup{s}\in \reduc_c \semijoin_{\predicates(n
          \to c)} \tup{t}$}{$\routenum_{T,N}(c,\tup{s},\reduc)$}
    \Else{ $n$ has two children $c_1$ and $c_2$} \label{line:twochildren-st}
      \For{\textbf{each} $\tup{t_1}\in \rho_{c_1} \semijoin_{\predicates(n
          \to c_1)} \tup{t}$}
        \For{\textbf{each} $\tup{t_2}\in \rho_{c_2} \semijoin_{\predicates(n \to c_2)} \tup{t}$}
          \For{\textbf{each} ($\tup{s_1}, \mu)\in \routenum_{T,N}(c_1,\tup{t_1},\reduc)$}
            \For{\textbf{each} ($\tup{s_2}, \nu) \in \routenum_{T,N}(c_2, \tup{t_2},\reduc)$}
              \State \textbf{yield} ($\tup{s_1}\cup \tup{s_2}$, $\mu\times\nu$)
            \EndFor
          \EndFor
        \EndFor
      \EndFor
   \EndIf
    \EndFunction
  \end{algorithmic}
\end{algorithm}

Now we show the correctness of the enumeration algorithm, for which we need to introduce some further notation. Let $Q$, $T$ and $N$ be as above. Given a node $n\in T$ we denote the sub-tree of $T$ rooted at $n$ by $T_n$, and define the query induced by $T_n$ as
$$Q_{n}=(\Join_{r(\seq{x}) \in\atoms(T_n)}r(\seq{x}) \mid\pred(T_n))$$
where $\atoms(T_n)$ and $\pred(T_n)$ are the sets of all atoms and
predicates occurring in $T_n$, respectively.

\begin{restatable}{lemma}{subtreeQuery}
\label{lem:subtree-query}
    Let $Q$, $T$, $N$, and $n$ be defined as above, and let $\rho$ be a $T$-reduct for $Q$. Then, $\rho_n=\pi_{\var(n)}Q_n(\db)$.
\end{restatable}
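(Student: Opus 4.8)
The plan is to prove $\rho_n = \pi_{\var(n)} Q_n(\db)$ by structural induction on the subtree $T_n$, i.e.\ by induction on the height of $n$, mirroring the three cases of the $T$-reduct definition. The base case is when $n$ is a leaf labeled by an atom $r(\seq{x})$. Then $T_n$ consists of $n$ alone, so $\atoms(T_n) = \{r(\seq{x})\}$ and $\pred(T_n) = \emptyset$, whence $Q_n(\db) = \db_{r(\seq{x})}$. Since $\var(n) = \seq{x}$, the projection $\pi_{\var(n)}$ is the identity on a relation over $\seq{x}$, so $\pi_{\var(n)} Q_n(\db) = \db_{r(\seq{x})} = \rho_n$ by the definition of the reduct.

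For the induction step, first suppose $n$ has a single child $c$. Then $T_n$ and $T_c$ contain the same atoms, and $\pred(T_n) = \pred(T_c) \cup \predicates(n \to c)$, so $Q_n(\db) = \sigma_{\predicates(n \to c)} Q_c(\db)$. Using $\predicates(n) = \predicates(n \to c)$ together with the induction hypothesis $\rho_c = \pi_{\var(c)} Q_c(\db)$, the reduct becomes $\rho_n = \pi_{\var(n)} \sigma_{\predicates(n \to c)} \pi_{\var(c)} Q_c(\db)$. It then remains to push the inner projection out, i.e.\ to establish $\pi_{\var(n)} \sigma_P \pi_{\var(c)} R = \pi_{\var(n)} \sigma_P R$ for $R = Q_c(\db)$ and $P = \predicates(n \to c)$. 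This identity holds whenever both $\var(n)$ and $\var(P)$ are contained in $\var(c)$: the first inclusion holds because a single-child interior node must have $c$ as a guard, so $\var(n) \subseteq \var(c)$, and the second follows from the GJT edge-labeling condition $\var(P) \subseteq \var(n) \cup \var(c) = \var(c)$.

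The two-child case is the crux. Let $c_1, c_2$ be the children of $n$. Here $\atoms(T_n) = \atoms(T_{c_1}) \uplus \atoms(T_{c_2})$ is a disjoint union, and $\pred(T_n) = \pred(T_{c_1}) \cup \pred(T_{c_2}) \cup \predicates(n)$ with $\predicates(n) = \predicates(n \to c_1) \cup \predicates(n \to c_2)$. Since every predicate of a subtree mentions only variables of that subtree, selections commute across the join of the two subtrees, and I can rewrite $Q_n(\db) = \sigma_{\predicates(n)}\bigl(Q_{c_1}(\db) \Join Q_{c_2}(\db)\bigr)$. Substituting the two induction hypotheses $\rho_{c_i} = \pi_{\var(c_i)} Q_{c_i}(\db)$ into the reduct definition $\rho_n = \pi_{\var(n)} \sigma_{\predicates(n)}(\rho_{c_1} \Join \rho_{c_2})$, the goal reduces to the join analogue of the projection-pushing identity, with $R_i = Q_{c_i}(\db)$:
$$\pi_{\var(n)}\sigma_{\predicates(n)}\bigl(\pi_{\var(c_1)}R_1 \Join \pi_{\var(c_2)}R_2\bigr) = \pi_{\var(n)}\sigma_{\predicates(n)}\bigl(R_1 \Join R_2\bigr).$$

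The hard part is justifying this last identity, and this is precisely where the structural conditions on GJTs enter. The inner projections $\pi_{\var(c_i)}$ are harmless provided they retain (a) every join variable shared between $R_1$ and $R_2$, (b) every variable mentioned by $\predicates(n)$, and (c) the output variables $\var(n)$. For (a), the join variables are $\var(T_{c_1}) \cap \var(T_{c_2})$, and by the connectedness condition any variable occurring in both subtrees must occur on the entire path linking them, in particular in both $\var(c_1)$ and $\var(c_2)$; hence the shared variables lie in $\var(c_1) \cap \var(c_2)$ and survive both projections. For (b), the edge-labeling condition gives $\var(\predicates(n \to c_i)) \subseteq \var(n) \cup \var(c_i)$, and since the guard child (say $c_1$) satisfies $\var(n) \subseteq \var(c_1)$, we get $\var(\predicates(n)) \subseteq \var(c_1) \cup \var(c_2)$; (c) holds for the same reason. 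Granted (a)--(c), I would finish by a direct multiplicity computation in the GMR semantics: fixing an output tuple over $\var(n)$, I group the sum defining the left-hand multiplicity by the $\bigl(\var(c_1) \cup \var(c_2)\bigr)$-restrictions of the contributing tuples and observe it equals the right-hand sum, because no two tuples merged by the projections can differ on any variable relevant to the join, the selection, or the output. Combining the three cases completes the induction.
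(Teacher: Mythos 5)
Your proof is correct and follows essentially the same route as the paper's: a structural induction on the subtree with the same three cases, using the guard-child property, the edge-labeling condition $\var(\theta)\subseteq\var(p)\cup\var(c)$, and the connectedness condition to justify inserting the projections $\pi_{\var(c_i)}$ inside the selection and join. The only difference is cosmetic—you additionally sketch the multiplicity-level verification of the projection-pushing identity, which the paper asserts without spelling out.
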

The proof by induction is detailed in Appendix~\ref{sec:app-algorithm}.
To show correctness of enumeration, we need the following additional
lemma regarding the subroutine of Algorithm~\ref{alg:enumeration-st}
(Line~\ref{line:subenum}). The proof is again by induction and
detailed in Appendix~\ref{sec:app-algorithm}.

\begin{restatable}{lemma}{enumSubRoutine}
  \label{lem:enum_subroutine}
    Let $Q$, $T$, and $N$ be as above. If $\rho$ is a $T$-reduct of
    $\db$, then for every node $n\in N$ and every tuple $\tup{t}$ in
    $\rho_{n}$, $\routenum_{T,N}(n, \tup{t},\rho)$ correctly
    enumerates $\pi_{\var(N) \cap \var(Q_n)}Q_n(\db)\semijoin \tup{t}$.
\end{restatable}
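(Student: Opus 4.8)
The plan is to prove the statement by induction on the structure of the connex subtree of $N$ rooted at $n$, the base case being when $n$ is a frontier node and the inductive step splitting on whether $n$ has one or two children in $T$. Write $W=\var(N)\cap\var(Q_n)$ for the schema to be enumerated and $P_i=\predicates(n\to c_i)$ for the predicates on the edge to a child $c_i$. Three earlier facts drive the argument: Lemma~\ref{lem:subtree-query}, giving $\rho_m=\pi_{\var(m)}Q_m(\db)$ at every node $m$; the connectedness condition of Definition~\ref{def:gjt}, which forces a variable shared by two nodes to occur on every node of the path between them; and Lemma~\ref{lem:sibling-closed-var-at-frontier}, whose proof in fact establishes $\var(m)\subseteq\var(N)$ for \emph{every} $m\in N$, not only for frontier nodes. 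This last fact is what makes the enumeration duplicate-free, as explained below.

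In the base case $n$ is a frontier node and the routine emits only $(\tup{t},\rho_n(\tup{t}))$. I first check $W=\var(n)$: the inclusion $\var(n)\subseteq W$ is immediate, and if $x\in W$ then $x$ occurs in some $m\in N$ and in some node of $T_n$; since $n$ is a frontier node no proper descendant of $n$ lies in $N$, so the path witnessing connectedness must cross $n$, whence $x\in\var(n)$. Thus $\pi_W Q_n(\db)=\rho_n$ by Lemma~\ref{lem:subtree-query}, and because $\tup{t}$ is a $\var(n)$-tuple the reduction $\rho_n\semijoin\tup{t}$ is the singleton $[\tup{t}\mapsto\rho_n(\tup{t})]$, exactly the output. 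When $n$ has a single child $c$ (which lies in $N$ since $n$ is not a frontier node), the guard property gives $\var(n)\subseteq\var(c)$, so $Q_n(\db)=\sigma_{P}(Q_c(\db))$ and $W=\var(N)\cap\var(Q_c)$. Since $\var(c)\subseteq\var(N)$ forces $\var(c)\subseteq W$, every $W$-tuple produced by the recursive call on $(c,\tup{s})$ determines, by restriction to $\var(c)$, the unique $\tup{s}\in\rho_c\semijoin_P\tup{t}$ it came from. Ranging $\tup{s}$ over $\rho_c\semijoin_P\tup{t}$ and invoking the induction hypothesis then partitions $\pi_W\sigma_P(Q_c(\db))\semijoin\tup{t}$ exactly, matching the loop.

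The two-child case is the crux and the step I expect to fight with. Sibling-closedness puts both $c_1,c_2$ in $N$; taking $c_1$ as a guard child yields $\var(Q_n)=\var(Q_{c_1})\cup\var(Q_{c_2})$ and $Q_n(\db)=\sigma_{P_1,P_2}(Q_{c_1}(\db)\Join Q_{c_2}(\db))$, so with $W_i=\var(N)\cap\var(Q_{c_i})$ we have $W=W_1\cup W_2$ and, by the connectedness argument used in the base case, $W_1\cap W_2\subseteq\var(Q_{c_1})\cap\var(Q_{c_2})\subseteq\var(n)\cap\var(c_1)\cap\var(c_2)$. I would then combine three observations. (i) \emph{Consistency}: for loop tuples $\tup{t_1},\tup{t_2}$ compatible with $\tup{t}$ and recursion outputs $\tup{s_1},\tup{s_2}$ (valid by the induction hypothesis on $c_1,c_2$, which root strictly smaller $N$-subtrees), both $\tup{s_1}$ and $\tup{s_2}$ agree with $\tup{t}$ on $W_1\cap W_2$, so $\tup{s_1}\cup\tup{s_2}$ is a well-defined $W$-tuple. (ii) \emph{No duplicates}: since $\var(c_i)\subseteq\var(N)$ gives $\var(c_i)\subseteq W_i$, the restriction $\tup{s_i}|_{\var(c_i)}$ recovers $\tup{t_i}$, and $(\tup{s_1},\tup{s_2})$ is recovered from $\tup{s_1}\cup\tup{s_2}$, so each output is produced by exactly one choice of $(\tup{t_1},\tup{t_2},\tup{s_1},\tup{s_2})$. (iii) \emph{Factorization of multiplicities}: because $\var(Q_{c_1})\cap\var(Q_{c_2})\subseteq\var(n)$ is fixed by $\tup{t}$, the reduction of $Q_{c_1}(\db)\Join Q_{c_2}(\db)$ by $\tup{t}$, projected to $W$, factors as a product over the two sides, so the emitted multiplicity $\mu\times\nu$ is exactly the join multiplicity, and the outer projection $\pi_W$ is realized by summing $\mu\times\nu$ over all iterations mapping to the same $W$-tuple.

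Assembling (i)--(iii) shows that the nested loops enumerate $\pi_W Q_n(\db)\semijoin\tup{t}$ with each tuple produced once and with correct multiplicity. The bulk of the remaining work, and the step I expect to be the main obstacle, is the multiset bookkeeping behind (iii): verifying that the reductions $\rho_{c_i}\semijoin_{P_i}\tup{t}$ select precisely the tuples contributing to the reduced join, and that because the only variables shared between the two subtrees lie in $\var(n)$ and are pinned by $\tup{t}$, summing the products $\mu\times\nu$ reproduces the multiplicities of $Q_n(\db)$ after projection onto $W$, with no cross terms lost or double-counted.
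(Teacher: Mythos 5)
Your proposal is correct and follows essentially the same route as the paper's proof: induction on the size of $N \cap T_n$, with the frontier base case and the one- and two-child inductive cases resting on exactly the same pillars---Lemma~\ref{lem:subtree-query} to identify $\rho_m$ with $\pi_{\var(m)}Q_m(\db)$, the guard property, connectedness (which pins $\var(Q_{c_1})\cap\var(Q_{c_2})$ inside $\var(n)$ and hence under $\tup{t}$), and sibling-closedness to place both children in $N$. The only difference is presentational: where the paper verifies the multiset identities by pushing and pulling projections and selections through the algebra, you check them tuple-by-tuple via consistency, injectivity, and multiplicity factorization, thereby making explicit the duplicate-freeness that the paper itself only spells out in the disjointness remarks of Proposition~\ref{prop:enum-complexity}.
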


\begin{proposition}
  Let $Q$, $T$, $N$ and $\rho$ be as above. Then $\routenum_{T,N}(\rho)$ enumerates $Q(\db)$.
\end{proposition}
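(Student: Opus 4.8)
The plan is to invoke the two preceding lemmas at the root of $T$ and then argue that iterating over the root reduction induces a partition of the output. Write $r = \treeroot(T)$ for the root. Since a connex set always contains the root, $r \in N$, and since the subtree $T_r$ equals all of $T$, the induced query $Q_r$ is exactly the full (unprojected) query underlying $Q$; hence $Q(\db) = \proj_{\free(Q)} Q_r(\db)$. Because $(T,N)$ is compatible with $Q$ we have $\var(N) = \free(Q)$, and as $\var(Q_r)$ contains every variable of the query this gives $\var(N) \cap \var(Q_r) = \free(Q)$. Moreover $r \in N$ forces $\var(r) \subseteq \var(N) = \free(Q)$.

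First I would apply Lemma~\ref{lem:subtree-query} at $r$ to obtain $\reduc_r = \proj_{\var(r)} Q_r(\db)$, so that the top-level loop of $\routenum_{T,N}(\reduc)$ (Line~\ref{line:rootloop}) ranges precisely over $\supp(\proj_{\var(r)} Q_r(\db))$. Next I would instantiate Lemma~\ref{lem:enum_subroutine} at $n = r$: for each such tuple $\tup{t}$, the call $\routenum_{T,N}(r, \tup{t}, \reduc)$ enumerates $\proj_{\free(Q)} Q_r(\db) \semijoin \tup{t} = Q(\db) \semijoin \tup{t}$. Since $\tup{t}$ is a tuple over $\var(r)$ and the semijoin is on the shared variables $\var(r) \subseteq \free(Q)$, this set is exactly the sub-GMR of $Q(\db)$ consisting of those $\tup{o}$ with $\restr{\tup{o}}{\var(r)} = \tup{t}$, each carrying its original multiplicity $Q(\db)(\tup{o})$.

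It then remains to show that the union, over $\tup{t} \in \supp(\reduc_r)$, of these sub-GMRs is $Q(\db)$ with each tuple emitted exactly once and with the correct multiplicity. Here I would use that $\db$ is positive, so that every intermediate GMR produced by join, selection and projection is positive and no multiplicities cancel. Consequently, taking supports commutes with projection, and from $\reduc_r = \proj_{\var(r)} Q(\db)$ we get $\supp(\reduc_r) = \{\restr{\tup{o}}{\var(r)} : \tup{o} \in \supp(Q(\db))\}$. Thus every output tuple $\tup{o}$ satisfies $\restr{\tup{o}}{\var(r)} \in \supp(\reduc_r)$ and lies in exactly one group, namely the one indexed by $\tup{t} = \restr{\tup{o}}{\var(r)}$; conversely, distinct values of $\tup{t}$ yield disjoint groups, as their members disagree on $\var(r)$. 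The groups therefore partition $\supp(Q(\db))$, each output tuple is yielded once with multiplicity $Q(\db)(\tup{o})$, and so $\routenum_{T,N}(\reduc)$ enumerates $Q(\db)$.

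The main obstacle is this last step: verifying that ranging over $\reduc_r$ really induces a clean partition of the output. This is where $\var(r) \subseteq \free(Q)$ is essential, so that the restriction $\restr{\tup{o}}{\var(r)}$ is well defined on output tuples, together with positivity of $Q_r(\db)$, so that $\supp(\reduc_r)$ is exactly the set of $\var(r)$-projections of output tuples, with neither spurious index tuples nor groups lost to cancellation.
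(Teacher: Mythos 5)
Your proposal is correct and takes essentially the same route as the paper's own proof: apply Lemma~\ref{lem:subtree-query} at the root to identify $\reduc_r$ with $\proj_{\var(r)}Q(\db)$, then apply Lemma~\ref{lem:enum_subroutine} at the root so that each call $\routenum_{T,N}(r,\tup{t},\reduc)$ enumerates $Q(\db)\semijoin\tup{t}$, and conclude via the disjoint union over $\tup{t}\in\reduc_r$. The only difference is that you make explicit the partition/no-cancellation argument (positivity of $\db$, hence supports commuting with projection) that the paper compresses into the phrase ``which is equivalent to the disjoint union.''
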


\begin{proof}
    Let $r$ be the root of $T$. By Lemma~\ref{lem:subtree-query} we have $\rho_n=\pi_{\var(r)}Q_r(\db)=\pi_{\var(r)}Q(\db)$, and therefore $\rho_n$ is a projection of $Q(\db)$. This implies that $Q(\db)=Q(\db)\semijoin\rho_r$, which is equivalent to the disjoint union $\bigcup_{\tup{t}\in\rho_r}Q(\db)\semijoin\tup{t}$. By Lemma~\ref{lem:enum_subroutine}, it is clear that this is exactly what $\routenum_{T,N}(\rho)$ enumerates.
\end{proof}

We now analyze the complexity of $\routenum_{T,N}$. First, observe
that by definition of $T$-reducts, compatible tuples will
exist at every node. Hence, every tuple that we iterate over will
eventually produce a new output tuple.  This ensures that we do not
risk wasting time in iterating over tuples that in the end yield no
output. As such, the time needed for $\routenum_{T,N}(\rho)$ to
produce a single new tuple is determined by the time taken to
enumerate the tuples in $\rho_n \semijoin_{\predicates(p \to n)}\tup{t}$, where $p$ is the parent of $n$. Since this is equivalent to $\proj_{\var(n)}(\rho_n \Join_{\predicates(p \to n)}\tup{t})$ we can do this efficiently by creating an index on $\rho_n$ by $(\predicates(p\rightarrow n),\var(p),\var(n))$.
For example, in Section~\ref{sec:iedyn} we defined hash-maps and group-sorted GMRs so that given one tuple from a parent we could enumerate the compatible tuples in the child with constant delay. In general, the efficiency of enumeration will depend on the delay provided by the indices.

\begin{proposition}\label{prop:enum-complexity}
  Assume that for every $n \in N$ we have an index on $\rho_n$ by $(\predicates(p \to n), \var(p), \var(n))$ with delay $f$, where $p$ is the parent of $n$ and $f$ is a monotone function. Then, using these indices, $\routenum_{T,N}(\rho)$ correctly enumerates $Q(\db)$ with delay $O(|N| \times f(M))$ where $M$ is given by $\max_{n\in N}(\card{\rho_n})$. Thus, the total time required to execute $\routenum_{T,N}(\rho)$ is $O(\card{Q(\db)}\cdot f(M)\cdot \card{N})$.
\end{proposition}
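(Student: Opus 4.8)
The plan is to treat correctness as already settled — it is precisely the content of the preceding proposition, which shows $\routenum_{T,N}(\rho)$ enumerates $Q(\db)$ — and to spend all effort on the two complexity bounds. Both follow from three ingredients: a per-probe cost of $O(f(M))$ supplied by the indices, the fact that a $T$-reduct admits \emph{no dead ends}, and a bound of $O(|N|)$ on the number of nested loops enclosing any output. I would prove the delay bound first and then read off the total-time bound.

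First I would bound the cost of one enumeration step. At a node $n$ with parent $p$, holding a tuple $\tup{t}$ over $\var(p)$, the algorithm enumerates $\rho_n \semijoin_{\predicates(p \to n)} \tup{t} = \proj_{\var(n)}(\rho_n \Join_{\predicates(p\to n)} \{\tup{t}\})$. Feeding the singleton GMR $\{\tup{t}\}$ into the assumed index on $\rho_n$ by $(\predicates(p\to n), \var(p), \var(n))$, the definition of index yields delay $O(f(\card{\rho_n}+1))$; since $\card{\rho_n}\le M$ and $f$ is monotone, this is $O(f(M))$ (the lone probe tuple being absorbed into $M$, which we may take to be $\ge 1$ as we only ever probe non-empty reductions). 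The root loop iterates $\rho_{\treeroot(T)}$ directly and each frontier yield is an $O(1)$ multiplicity lookup, so those steps cost no more than $O(f(M))$ as well.

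Next I would establish the no-dead-ends property: for every $n\in N$ and every $\tup{t}\in\rho_n$, the call $\routenum_{T,N}(n,\tup{t},\rho)$ yields at least once. This comes straight from the inductive definition of a $T$-reduct. If $n$ has a single child $c$ then $\rho_n = \proj_{\var(n)}\sigma_{\predicates(n)}\rho_c$, so $\tup{t}\in\rho_n$ forces $\rho_c \semijoin_{\predicates(n\to c)}\tup{t}\ne\emptyset$; if $n$ has two children then $\rho_n = \proj_{\var(n)}(\rho_{c_1}\Join_{\predicates(n)}\rho_{c_2})$, so both child semijoins are non-empty; and in either case every witnessing child tuple again lies in the child's reduction, so induction down the tree shows the call reaches the frontier and yields. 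Consequently no tuple is ever iterated over in vain. I would then observe that, unfolding the recursion, any single yield is nested inside at most $2|N|$ loops, since each node contributes at most two loops (one per child) and these accumulate over the recursion tree. The delay argument now runs as follows. To produce the first output the algorithm performs one successful probe per enclosing loop, $O(|N|)$ probes of cost $O(f(M))$ each, giving initial delay $O(|N|\cdot f(M))$. Between consecutive outputs the loop stack advances by one position: a suffix of inner loops is found exhausted (each exhaustion detected within the index delay $O(f(M))$), one loop is advanced once, and the loops below are re-entered and driven down to a fresh frontier tuple; by no-dead-ends every re-entered loop is non-empty, so this re-descent succeeds with one $O(f(M))$ probe per level and never wanders into a yield-free branch. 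Touching $O(|N|)$ levels, the inter-output delay — and symmetrically the final exhaustion cascade after the last tuple — is $O(|N|\cdot f(M))$.

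Since the algorithm enumerates $Q(\db)$ it emits exactly $\card{Q(\db)}$ outputs, separated and bracketed by delays of order $O(|N|\cdot f(M))$, whence the total time is $O((\card{Q(\db)}+1)\cdot|N|\cdot f(M)) = O(\card{Q(\db)}\cdot f(M)\cdot\card{N})$. I expect the delicate point to be the between-outputs delay: one must argue that the backtracking (exhaustion detection), the single advance, and the re-descent together cost only $O(|N|)$ index operations. This is exactly where no-dead-ends is indispensable — without it, advancing an outer loop could send the algorithm into inner loops that produce nothing, forcing repeated backtracking and unbounded wasted work, so the clean $O(|N|\cdot f(M))$ delay would fail.
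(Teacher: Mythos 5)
Your proof is correct, and it reaches the paper's bounds by a genuinely different organization. The paper argues by structural induction on $\card{N \cap T_n}$: it shows that each call $\routenum_{T,N}(n,\tup{t},\rho)$ has delay $O(\card{N\cap T_n}\cdot f(M))$ via the recurrence ``one index probe of cost $O(f(M))$ plus the children's delays,'' using disjointness of the child enumerations and the fact that any $\tup{s_1},\tup{s_2}$ drawn from the two child semijoins are automatically compatible (by connectedness, $\var(Q_{c_1})\cap\var(Q_{c_2})$ lies within the variables of $\tup{t}$), so every innermost iteration yields. You instead reason operationally about the iterator stack: you promote the productivity invariant (``no dead ends'') to an explicit lemma proved from the inductive definition of $T$-reducts, bound the nesting depth by $2\card{N}$, and charge backtracking, exhaustion detection, and re-descent between consecutive outputs by hand. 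Both arguments rest on the same three ingredients --- per-probe cost $O(f(M))$, productivity of every iterated tuple, and $O(\card{N})$ nesting --- but the paper's induction buys a sharper per-subtree delay bound and never needs to reason about execution state, while your version makes explicit, as a standalone lemma, the fact the paper only asserts in prose before the proposition (``every tuple that we iterate over will eventually produce a new output tuple''), and you correctly identify it as the load-bearing step without which the inter-output delay analysis fails. Your delegation of correctness, including duplicate-freeness, to the preceding proposition is legitimate; the paper re-derives the disjointness facts inside its induction only because there they double as the productivity argument. One cosmetic looseness you share with the paper: probing the index with a singleton GMR gives delay $O(f(\card{\rho_n}+1))$, and $f(M+1)=O(f(M))$ is not automatic for an arbitrary monotone $f$ --- the paper makes the same silent simplification.
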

\begin{proof}
    We show that for every $n\in N$ and $\tup{t}\in\rho_n$, the call $\routenum_{T,N}(n,\tup{t},\rho)$ enumerates $\pi_{\var(N)}Q_n(\db)$ with delay $O(|N\cap T_n|\times f(M))$. We proceed by induction in $|N|$. If $|N|=1$ then $N=\treeroot(T)$ and the delay is clearly constant as the algorithm will only yield $\tup{t}$. Now assume that $|N|>1$. If $n$ has a single child $c$, the index on $\rho_c$ by $(\pred(n), \var(n), \var(c))$ allows us to iterate over $\rho_c\semijoin_{\pred(n)} \tup{t}$ with delay $O(f(|\rho_c|))$ and therefore delay $O(f(M))$. For each element $\tup{s}$ of this enumeration, the algorithm calls $\routenum_{T, N}(c, \tup{s}, \rho)$, which by induction hypothesis enumerates $\pi_{\var(N)}Q_c(\db)\semijoin \tup{s}$ with delay $O(|N\cap T_c|\times f(M))$. Then, the maximum delay between two outputs is $O(f(|\rho_c|))+O(|N\cap T_c|\times f(M))$, and since $|\rho_c|\leq M$ this is in 
    $$O\left((|N\cap T_c| + 1)\times f(M)\right)=O\left(|N\cap T_n|\times f(M)\right).$$
    The final observation is that the sets $\pi_{\var(N)}Q_c(\db)\semijoin\tup{s}$ are disjoint for different values of $\tup{s}$, and thus the enumeration does not produce repeated values.

    For the case in which $n$ has two children $c_1$ and $c_2$, by similar reasoning it is easy to show that the maximum delay between two outputs is 
    \begin{multline*}
        O(f(|\rho_{c_1}|))+O(|N\cap T_{c_1}|\times f(M)) +\\
        O(f(|\rho_{c_2}|))+O(|N\cap T_{c_2}|\times f(M))
    \end{multline*}
    $$=O((|N\cap T_{c_1}| + |N\cap T_{c_2}| + 2)\times f(M))$$
    $$=O(|N\cap T_n|\times f(M)).$$
    It is also important to mention that the sets enumerated by $\routenum_{T, N}(c_i,\tup{t_i}, \rho)$ are disjoint for each $\tup{t_i}\in\rho_{c_i}\semijoin_{\pred(n\rightarrow c_i)}\tup{t}$ ($i\in\{1,2\}$), and that for each $(\tup{s_1},\mu)\in \routenum_{T, N}(c_1,\tup{t_1}, \rho)$  and $(\tup{s_2},\mu)\in \routenum_{T, N}(c_2,\tup{t_2}, \rho)$, it is the case that $\tup{s_1}$ and $\tup{s_2}$ are compatible, thus producing outputs in every iteration.
\end{proof}

In particular, if $f$ is constant we enumerate $\card{Q(\db)}$ with delay $O(N)$ (i.e. constant in data complexity).
\paragraph*{\bf Update processing} To allow enumeration of
$Q(\db)$ under updates to $\db$ we need to
maintain the $T$-reduct $\reduc$ (and, if present, its indexes)
up to date. As illustrated in the previous section, it suffices to traverse the nodes of
$T$ in a bottom-up fashion. At each node $n$ we have to compute the delta of $\rho_n$. For leaf nodes, this delta is given by the update $\upd$ itself. For interior nodes, the delta can be computed from the delta and original reduct of its children. Algorithm~\ref{alg:update-processing-st} gives the pseudocode.

The fundamental part of Algorithm~\ref{alg:update-processing-st} is to compute joins and
produce delta GMRs (Line~\ref{line:childrenjoinupdate}), propagating updates from each node to its parent. When there is an update
$\Delta_n$ to a node $n$ with sibling $m$ and parent $p$, we need to
compute $\proj_{\var(p)} \left( \reduc_m \Join_{\predicates(p)}
  \Delta_n \right)$. To do this efficiently, we naturally store an
index on $\rho_m$ by $(\predicates(p),\var(n),\var(p))$. For example,
we discussed how the hash-map $I_{\rho_r}$ in Fig.~\ref{fig:ex-trep}
plus the sorting on $x$ of $\rho_r$ allowed us to efficiently compute
$\proj_{y,z,w}(\rho_r\Join_{x<z}\Delta\rho_s)$.

\begin{algorithm}[tbp]
  \caption {Update($\reduc$, $\upd$)}
  \label{alg:update-processing-st}
  \begin{algorithmic}[1]
    \State \textbf{Input:} A $T$-reduct $\reduc$ for $\db$ and an update $\upd$.
    \State \textbf{Result:} Transforming $\reduc$ to a $T$-reduc for $\db+\upd$.

    \For{\textbf{each} $n\in\leafs(T)$ labeled by $r(\seq{x})$}
      \State $\Delta_n\leftarrow\upd_{r(\seq{x})}$
    \EndFor
  	\For{\textbf{each} $n\in\nodes(T)\setminus\leafs(T)$}
  		\State $\Delta_n\leftarrow$ empty GMR over $\var(n)$
    \EndFor

    \For{\textbf{each} $n\in\nodes(T)$, traversed bottom-up}\label{line:updateiteration}
    	\State $\reduc_n += \Delta_n$
      \If{$n$ has a parent $p$ and a sibling $m$}
        \State $\Delta_{p} += \proj_{\var(p)} \left( \reduc_m \Join_{\predicates(p)} \Delta_n \right)$\label{line:childrenjoinupdate}
      \ElsIf{$n$ has parent $p$}
        \State $\Delta_{p} += \proj_{\var(p)}\sigma_{\predicates(p)} \Delta_n$\label{line:singlechildupdate}
    	\EndIf
    \EndFor
  \end{algorithmic}
\end{algorithm}
\smallskip

Summarizing, to efficiently enumerate query results and process updates we need to store a $T$-reduct plus a set of indices on its GMRs. The data structure containing these elements is called a \emph{$(T,N)$-representation}.

\begin{definition}[$(T,N)$-representation]
  \label{def:trep}
  Let $\db$ be a database. A $(T,N)$-representation (\dtree{$(T,N)$} for short) of $\db$ is composed by a $T$-reduct of $\db$ and, for each node $n$
  with parent $p$, the following set of indices:
  \begin{compactitem}[-]
    \item If $n$ belongs to $N$, then we store an index $P_n$ on $\rho_n$ by $(\predicates(p\rightarrow n),\var(p),\var(n))$.
    \item If $n$ is a node with a sibling $m$, then we store an index $S_n$ on $\rho_n$ by $(\predicates(p), \var(m), \var(p))$.
  \end{compactitem}
\end{definition}
Together with the notion of \dtree{$(T,N)$}, Algorithms~\ref{alg:enumeration-st} and~\ref{alg:update-processing-st} provide a framework for dynamic query evaluation. By constructing the
$T$-reduct and set of indices (and their update procedures) one can process free-connex
acyclic \gcqs under updates. Naturally, to implement such framework
one needs to devise indices for a particular set of predicates. For
example, \dyn is an instantiation to the class of
\cqs, and in the previous section we showed how to
instantiate this framework for a \gcq based on equalities and
inequalities. Next, we present the general set of indices
required to process free-connex acyclic \gcqs with equalities and
inequalities.

\vspace{-1ex}
\paragraph*{\bf IEDyn}
For queries that have only inequality predicates, the instantiation of
a $(T,N)$-representation of $\db$ contains a $T$-reduct of $\db$ and, for
each node $n$ with parent $p$, the following data structures:
  \begin{compactitem}[-]
  \item If $n \in N$, the index $P_n$ on $\reduc_n$ from
    Definition~\ref{def:trep} is obtained by doing two things. (1)
    First, group $\reduc_n$ according to the variables in $\var(n)
    \cap \var(p)$. Then, per group, sort the tuples according to
    the variables of $\var(n)$ mentioned in $\predicates(p \to n)$ (if
    any).
    (2) Create a hash table that maps each tuple $\tup{t} \in
    \proj_{\var(n) \cap \var(p)}(\reduc_n)$ to its corresponding group
    in $\reduc_n$. If $\var(n) \cap \var(p)$ is empty this
    hash table is omitted.
  \item If $n$ has a sibling $m$, the index $S_n$ of
    Definition~\ref{def:trep} is obtained by doing two things. (1)
    First, group $\reduc_n$ according to the variables in $\var(n)
    \cap \var(m)$. Then, per group, sort the tuples according to
    the variables of $\var(n)$ mentioned in $\predicates(p)$ (if any).
    (2) Create a hash table mapping each $\tup{t} \in
    \proj_{\var(n) \cap \var(m)}(\reduc_n)$ to the corresponding group
    in $\tup{s} \in \reduc_n$. If $\var(n) \cap \var(m)$ is
    empty this hash table is omitted.
\end{compactitem}
In Wection~\ref{sec:iedyn} we illustrated how use these data structures. Effectively, in
Figure~\ref{fig:ex-trep} $I_{\rho_r}$ and $I_{\rho_s}$ are examples
of $S_n$, used for update propagation, while $I_{\reduc_{\{y,z,w\}}}$ is an example of $P_n$, used for enumeration.\par

Note that the example query from Section~\ref{sec:iedyn} has at most
one inequality between each pair of atoms. This causes each edge in
$T$ to consist of at most inequality. As such, when creating the index
$P_n$ for a node $n \in N$, the reduct $\reduc_n$ will be sorted per group
according to at most one variable. This is important for enumeration
delay because, as exemplified in Section~\ref{sec:iedyn}, we can then
find compatible tuples by first the corresponding group and then
iterating over the sorted group from the start and stopping when the
first non-compatible tuple is found. When there are multiple
inequalities per pair of atoms then we will need to sort according to
multiple variables under some lexicographic order. This causes
enumeration delay to become logarithmic since then compatible tuples
will intermingle with non-compatible tuples, and a binary search is
necessary to find the next batch of compatible tuples in the group.

We call \iedyn the algorithm for processing free-connex acyclic
\gcqs with equalities and inequalities.

\begin{theorem}\label{thm:iedyn-efficiency}
    Let $Q$ be a \gcq in which all predicates are equalities and
    inequalities. Let $(T,N)$ be a binary and sibling-closed \gjt pair
    compatible with $Q$. Given a database $\db$ over $\atoms(Q)$, a \dtree{$(T,N)$} $\drep$ of $\db$, under \iedyn Algorithm~\ref{alg:enumeration-st} enumerates $Q(\db)$ with delay $O(|N|\cdot\log(\card{\db}))$. Also, given an update $\upd$ under \iedyn Algorithm~\ref{alg:update-processing-st} transforms $\drep$ into a \dtree{$(T,N)$} of $\db+u$ in time $O(|T|\cdot M^2\cdot\log(M))$, where $M=\card{\db}+\card{\upd}$.
\end{theorem}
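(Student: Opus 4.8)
The plan is to prove the enumeration bound and the update bound separately, obtaining the former from the general analysis of Proposition~\ref{prop:enum-complexity} and the latter from a node-by-node accounting of Algorithm~\ref{alg:update-processing-st}.

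For enumeration I would invoke Proposition~\ref{prop:enum-complexity} directly: it already guarantees that if each $\rho_n$ with $n\in N$ carries an index by $(\predicates(p\to n),\var(p),\var(n))$ of monotone delay $f$, then $\routenum_{T,N}(\rho)$ enumerates $Q(\db)$ with delay $O(|N|\cdot f(M))$, where $M=\max_{n\in N}\card{\rho_n}$. By Definition~\ref{def:trep} the \iedyn instantiation equips every such $n$ with precisely the index $P_n$ of this signature, so it remains only to bound its delay. Given a single parent tuple $\tup{t}$, the index enumerates $\rho_n\semijoin_{\predicates(p\to n)}\tup{t}$ by first performing an $O(1)$ hash lookup on $\var(n)\cap\var(p)$ to locate the relevant group (resolving all equalities), and then scanning that group for the tuples meeting the inequalities imposed by $\tup{t}$. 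Since each group is kept sorted on its inequality variables, each step of the scan produces one compatible tuple; only when an edge carries more than one inequality do compatible tuples interleave with incompatible ones, in which case a binary search of cost $O(\log\card{\rho_n})$ locates the next compatible batch. Hence $f(k)=O(\log k)$ is monotone, and since the linear-size property of $T$-reducts gives $M\le\card{\db}$, the delay is $O(|N|\cdot\log\card{\db})$.

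For the update bound I would analyse the single bottom-up pass of Algorithm~\ref{alg:update-processing-st}. Correctness---that the pass turns $\drep$ into a \dtree{$(T,N)$} of $\db+\upd$---follows from the delta rules for $T$-reducts implied by Lemma~\ref{lem:subtree-query} and the standard incremental identities for projection, selection and join over GMRs, while the index-maintenance steps simply re-establish the sort order of the groups touched by each $\Delta_n$. The quantitative heart is a size bound: since the $T$-reduct of any database $\db'$ satisfies $\card{\rho_n}\le\card{\db'}$ at every node, both the reduct over $\db$ and the reduct over $\db+\upd$ have at most $M=\card{\db}+\card{\upd}$ tuples per node, whence every delta obeys $\card{\Delta_n}\le 2M=O(M)$. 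With this in hand the per-node cost is dominated by the propagation on Line~\ref{line:childrenjoinupdate}, which computes $\proj_{\var(p)}(\rho_m\Join_{\predicates(p)}\Delta_n)$ through the sibling index $S_m$. I would bound this by observing that the intermediate join ranges over at most $\card{\rho_m}\cdot\card{\Delta_n}=O(M^2)$ compatible pairs, each reached by an $O(1)$ group lookup followed by at most one $O(\log M)$ binary search and contributing $O(\log M)$ work to aggregate its projection into $\Delta_p$; this gives $O(M^2\log M)$. The single-child step on Line~\ref{line:singlechildupdate} costs only $O(M\log M)$, as does re-sorting the affected groups of the indices, so both are absorbed, and summing over the $|T|$ nodes yields $O(|T|\cdot M^2\cdot\log M)$.

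The step I expect to be the main obstacle is the tight analysis of the join on Line~\ref{line:childrenjoinupdate} in the presence of several inequalities per edge. With a single inequality the compatible tuples form a contiguous sorted suffix, and the running-sum sort-merge illustrated in Section~\ref{sec:iedyn} collapses the work to $O(M\log M)$; but with several inequalities the compatible tuples interleave with incompatible ones under any lexicographic order, the running-sum shortcut fails, and one is forced to traverse the $\Theta(M^2)$ join pairs, the extra $\log M$ factor arising from locating batches and from maintaining the sorted indices. Establishing simultaneously that $O(M^2\log M)$ is a valid upper bound in general and that it degrades to $O(M\log M)$ for single inequalities is where the careful bookkeeping resides.
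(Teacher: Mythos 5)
Your proposal is correct and follows essentially the same route as the paper's proof: the enumeration bound is obtained exactly as in the paper by instantiating Proposition~\ref{prop:enum-complexity} with the logarithmic-delay hash-plus-sorted-group indices (using $|\rho_n|\le|\db|$ at every node), and the update bound by the same per-edge accounting of Algorithm~\ref{alg:update-processing-st}, charging $O(|\rho_m|\cdot|\Delta_n|\cdot\log M)$ to Line~\ref{line:childrenjoinupdate} and absorbing the single-child and sorting/maintenance costs into $O(M\log M)$ before summing over $|T|$. The only cosmetic difference is the orientation of the index nested-loop join (you probe the stored sibling index $S_m$ with tuples of $\Delta_n$, whereas the paper sorts $\Delta_n$ and iterates over the tuples of $\rho_m$, enumerating matches in $\Delta_n$ with logarithmic delay); both give the same $O(|T|\cdot M^2\cdot\log(M))$ bound, and your closing remark correctly locates where the single-inequality case improves to $O(M\log M)$, which is the content of Theorem~\ref{thm:iedyn-single-efficiency} rather than of this statement.
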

\begin{proof}

    Let us first prove the enumeration bounds. It is immediate to see that for every node $n\in T$ the GMR $\rho_n$ satisfies $|\rho_n|\leq |\db|$, given that $\rho_n$ is defined as a series of semi-joins based on $\db$ (or, equivalently, because every internal node has a guard). Therefore, according to Proposition~\ref{prop:enum-complexity} the enumeration delay is $O(|N|\cdot f(|\db|))$ where $N$ is the connex subset of $T$ and $f$ is the delay provided by the index $P_n$. Now, from the description of \iedyn these indices are implemented as hash tables that map each tuple $\tup{t}$ in $\pi_{\var(p)\cap\var(n)}\rho_n$ to a lexicographically sorted set containing $\rho_n\semijoin_{\pred(p\rightarrow n)} \tup{t}$, where $(p,n)$ is a parent-child pair. Therefore, given a tuple $\tup{t}\in\rho_p$ we can enumerate $\rho_n\semijoin_{\pred(p\rightarrow n)}\tup{t}$ by first projecting $\tup{t}$ over $\var(n)$ and then iterating over all tuples satisfying $\pred(p\rightarrow n)$. Since these predicates are only inequalities, each group can be kept sorted lexicographically and, as mentioned earlier, enumeration can be achieved with logarithmic delay. It follows from Prop.~\ref{prop:enum-complexity} that the enumeration delay is $O(|N|\cdot \log(|db|))$.

Now we discuss update time. As can be seen in
Algorithm~\ref{alg:update-processing-st}, for each parent-child pair
$(p,n)\in T$ we need to compute either
$\pi_{\var(p)}(\rho_m\Join_{\pred(p\rightarrow n)}\Delta_n)$ or
$\pi_{\var(p)}\sigma_{\pred(p)}(\Delta_n)$, depending on whether or
not $n$ has a sibling $m$. If $n$ does not have a sibling, computing
$\pi_{\var(p)}\sigma_{\pred(p)}(\Delta_n)$ can be done directly by
sorting $\Delta_n$ lexicographically, enumerating those tuples
satisfying $\pred(p)$ (with logarithmic delay), and finally projecting
over $\var(p)$. This takes time in
$O(|\Delta_n|\cdot\log(|\Delta_n|)$, which is clearly contained in
$O(M^2\cdot \log(M))$ since $|\Delta_n|\leq M$. The more involved case
is when $n$ has a sibling $m$ and we need to compute
$\pi_{\var(p)}(\rho_m\Join_{\pred(p)}\Delta_n)$. Here we first sort $\Delta_n$ lexicographically. Then, for every
tuple $\tup{t}$ in $\pi_{\var(p)}\rho_m$ compute
$\pi_{\var(p)}(\tup{t}\Join_{\pred(p)} \Delta_n)$. Note that this can
be done in time $O(|\Delta_n|\cdot\log(|\Delta_n|))$ since from the
constructed data structures we can enumerate
$\Delta_n\semijoin_{\pred(p)}\tup{t}$ with logarithmic delay. Because
the previous procedure needs to be performed for each
$\tup{t}\in\rho_n$, this can be done in time
$O(|\rho_n|\cdot|\Delta_n|\cdot\log(|\Delta_n|))$ and therefore in
time $O(M^2\cdot\log(M))$. Note that here we ignore the sorting steps
as well as the maintenance of the corresponding GMRs as those steps
are clearly $O(M\cdot\log(M))$. Finally, since we need to perform the
procedure described above once per each parent-child pair, the entire
routine takes at most $O(|T|\cdot M^2\cdot\log(M)$). \qedhere
\end{proof}

From the previous result we can see that for the general case of equalities and inequalities we already have a procedure that can be quadratic in the size of the database.\footnote{In the conference version of this paper~\cite{DBLP:journals/pvldb/IdrisUVVL18} there was an incorrect claim: we stated that updates could be processed in time $O(M\cdot\log(M))$ in data complexity. We then found a bug in our algorithm and we currently do not know if this bound can be achieved.} However, if we restrict the use of inequalities in a particular way, we can speed up both update processing and enumeration delay.

\begin{theorem}\label{thm:iedyn-single-efficiency}
    Let $Q$, $T$ and $N$ be defined as in Theorem~\ref{thm:iedyn-efficiency}, and assume that for each $p\in T$ it is the case that $|\pred(p)|\leq 1$. Given a database $\db$ over $\atoms(Q)$, a \dtree{$(T,N)$} $\drep$ of $\db$, under \iedyn Algorithm~\ref{alg:enumeration-st} enumerates $Q(\db)$ with delay $O(|N|)$. Also, given an update $\upd$ under \iedyn Algorithm~\ref{alg:update-processing-st} transforms $\drep$ into a \dtree{$(T,N)$} of $\db+u$ in time $O(|T|\cdot M\cdot\log(M))$, where $M=\card{\db}+\card{\upd}$.
\end{theorem}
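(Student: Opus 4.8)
The plan is to mirror the structure of the proof of Theorem~\ref{thm:iedyn-efficiency}, but to exploit the hypothesis $|\pred(p)| \leq 1$ in order to replace every logarithmic enumeration step by a constant-delay one and every quadratic update step by a linearithmic one. As in that proof I would start from the observation that $|\rho_n| \leq |\db|$ for every node $n$ (each interior node has a guard), so that Proposition~\ref{prop:enum-complexity} and the per-pair update analysis remain applicable; only the delay $f$ of the relevant indices changes.

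For \emph{enumeration}, by Proposition~\ref{prop:enum-complexity} the delay is $O(|N| \cdot f(|\db|))$, where $f$ is the delay of the index $P_n$, so it suffices to show $f$ is now constant. Given a parent tuple $\tup{t}$, the hash component of $P_n$ locates in $O(1)$ time the group of $\rho_n$ agreeing with $\tup{t}$ on the shared variables $\var(p) \cap \var(n)$; since $|\pred(p \to n)| \leq 1$, each group is sorted on a single variable, so the tuples compatible with $\tup{t}$ form a contiguous block at one end of the group. Hence $\rho_n \semijoin_{\pred(p \to n)} \tup{t}$ can be enumerated by scanning from that end and halting at the first incompatible tuple, producing each result with constant delay and without any binary search. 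Thus $f = O(1)$ and the delay is $O(|N|)$.

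For \emph{update processing}, the per-pair cost for a child $n$ without a sibling is again $O(M \log M)$, the single inequality removing the extra logarithm: we sort $\Delta_n$ and scan the contiguous block satisfying $\pred(p)$ to obtain $\proj_{\var(p)} \sigma_{\pred(p)} \Delta_n$. The crux is the sibling case, where we must compute $\Delta_p = \proj_{\var(p)}(\rho_m \Join_{\pred(p)} \Delta_n)$ and where Theorem~\ref{thm:iedyn-efficiency} spent $O(|\rho_m| \cdot |\Delta_n| \cdot \log|\Delta_n|)$. Here I would replace the pairwise iteration by the merge-with-running-sum technique already illustrated in Section~\ref{sec:iedyn}. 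Using the hash indices $S_m$ and $S_n$, group both $\rho_m$ and $\Delta_n$ by the equality-join variables $\var(m) \cap \var(n)$ and match groups in linear time. Within a pair of matching groups $\pred(p)$ contributes at most one inequality; sorting both sides on that single variable and performing one synchronized pass, while accumulating the cumulative multiplicity of the already-scanned side, yields for each scanned tuple the sum of multiplicities of all compatible tuples on the other side. Multiplying by the tuple's own multiplicity and projecting onto $\var(p)$ produces $\Delta_p$. Because $p$ has a guard child among $\{m,n\}$, either $\var(p) \subseteq \var(m)$ or $\var(p) \subseteq \var(n)$, so the join result collapses to at most $M$ distinct $\var(p)$-tuples; the pass therefore costs $O(|\rho_m| + |\Delta_n|)$ per group and $O(M \log M)$ over all groups including the sorting.

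I expect the \emph{main obstacle} to be establishing the correctness and the $O(M \log M)$ bound of this merge step uniformly. One must verify it for both orientations of the guard (the updated child being the guard, in which case the running sum is taken over $\rho_m$, versus the sibling being the guard, in which case it is taken over $\Delta_n$ and we may iterate all $|\rho_m| \le M$ tuples of the matched groups), confirm that the running-sum pass correctly aggregates multiplicities across the projection, and remark that the degenerate cases where $\pred(p)$ is an equality or empty are strictly easier and handled by the grouping alone. Summing the $O(M \log M)$ cost over the $O(|T|)$ parent-child pairs, and absorbing the $O(M \log M)$ cost of re-sorting the groups and rebuilding the hash indices exactly as in Theorem~\ref{thm:iedyn-efficiency}, yields the claimed $O(|T| \cdot M \cdot \log M)$ update time.
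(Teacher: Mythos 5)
Your proposal is correct and follows essentially the same route as the paper's proof: constant-delay enumeration by scanning the contiguous compatible block of each totally-sorted group, and, for the sibling update case, a sorted-merge pass with a running multiplicity sum that exploits the guard child (in either orientation, swapping the roles of $\rho_m$ and $\Delta_n$) so that each tuple is visited once, giving $O(M\cdot\log(M))$ per parent--child pair and $O(|T|\cdot M\cdot\log(M))$ overall. The only cosmetic difference is that the paper constructs a fresh grouped-and-sorted index on $\Delta_n$ rather than invoking $S_n$ (which indexes $\rho_n$, not $\Delta_n$), but your synchronized pass implicitly requires exactly that construction, so nothing is missing.
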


\begin{proof}
    The main observation to prove this result is that when there is a single predicate, a lexicographically sorted set is totally sorted by a single attribute. Regarding enumeration, this implies that given a parent-child pair $(p,n)$ and a tuple $\tup{t}\in\pi_{\var(n)}\rho_p$, we can enumerate $\rho_n\semijoin_{\pred(P)} \tup{t}$ with constant delay. The reason behind this is that the index $P_n$ maps $\tup{t}$ to a totally sorted set, and therefore we can start from the largest/smallest value of the relevant attribute, and iterate over all tuples decreasingly/increasingly until we find a tuple that does not satisfy the inequality. At that point we are certain that we have visited all tuples satisfying the inequality.

    The update processing can also be improved by a similar argument, although the modification is slightly more involved. Assume again that we have a parent-child pair $(p,n)$ and want to compute $\pi_{\var(p)}(\rho_m\Join_{\pred(p)}\Delta_n)$, where $m$ is the sibling of $n$. We do so efficiently as follows. Recall that the index $S_m$ groups $\rho_m$ by $\var(n)\cap\var(m)$ and sorts each group by the variables involved in $\pred(p)$. We construct an index over $\Delta_n$ with the same characteristics, which is achieved by a vanilla implementation in $O(|\Delta_n|\cdot\log(|\Delta_n|))$. Again, since $\pred(p)$ contains at most a single inequality, each group will be sorted by a single variable and hence totally sorted.
    Assume now that $m$ is a guard of $p$. Since by definition $\rho_m\Join_{\pred(p)}\Delta_n=\sigma_{\pred(p)}(\rho_m\Join\Delta_n)$, to compute this join it is sufficient to find for each tuple $\tup{t}$ in $\rho_m$ the matching tuples in the corresponding group of $\Delta_n$. However, a naive implementation would take $O(M^2)$, since for such $\tup{t}$ we might iterate over a potentially linear set of tuples in $\Delta_m$. This can be avoided by considering the following two observations:
    \begin{enumerate}
    \item Given a tuple $\tup{t}$ in $\rho_m$, since $m$ is a guard of $p$ we only need to compute the multiplicity associated to $\tup{t}$ in $\sigma_{\pred(p)}(\pi_{\var(p)}(\rho_m\Join\Delta_n))$, which can be computed as $\rho_m(\tup{t})\cdot\sum_{\tup{s}\in\Delta_n\semijoin_{\pred(p)} \tup{t}} \Delta_n(\tup{s})$.
    \item Let $\tup{t_1}$ and $\tup{t}_2$ be two tuples belonging to the same group in $\rho_m$. Assume $\pred(p)=a<b$, with $a\in\var(n)$ and $b\in\var(m)$. Then, if $\tup{t}_1(a)<\tup{t}_2(a)$ we have that $\Delta_n\semijoin_{\pred(p)} \tup{t_2}$ is a subset of $\Delta_n\semijoin_{\pred(p)} \tup{t_1}$.        
    \end{enumerate}
    By these two facts, if we iterate in order over the tuples $\tup{t}$ of each group of $\rho_n$, and we iterate simultaneously in order over the tuples $\tup{s}$ in the group of $\Delta_n$ corresponding to $\tup{t}$ (which can be done with constant delay), we can compute the corresponding multiplicities incrementally, visiting each tuple in $\Delta_n$ only once. Therefore, this join can be computed in linear time in $M$ and the most expensive part of this procedure is to actually construct and maintain the sorted groups, an $O(M\cdot\log(M))$ procedure. It is easy to see that this can be generalized to any inequality, and that in the case in which $n$ is a guard of $p$ it suffices to swap the roles of $\rho_m$ and $\Delta_n$. We conclude that in this case \iedyn updates the corresponding $(T,N)$-representation in $O(M\cdot\log(M))$.
\end{proof}


\section{Computing GJTs}
\label{sec:gyo}
In this section, we discuss how to check acyclicity and free-connex
acyclicity for \gcqs, and give an algorithm to compute a compatible
\gjt pair for a given \gcq. 

The canonical algorithm for checking acyclicity of normal conjunctive
queries is the GYO algorithm~\cite{abiteboul1995foundations}. Our algorithm is
a generalisation of the GYO algorithm that checks free-connex
acyclicity in addition to normal acyclicity and  deals with \gcqs
featuring $\theta$-join predicates instead of \cqs that have equality
joins only.

\subsection{Classical GYO}
\label{sec:classical}

The GYO algorithm operates on \emph{hypergraphs}.  A \emph{hypergraph}
$H$ is a set of non-empty hyperedges. Recall from
Section~\ref{sec:preliminaries} that a hyperedge is just a finite set
of variables. Every \gcq is associated to a hypergraph as
follows.  
\begin{definition}
	\label{def:hypergraph}
	Let $Q$ be a \gcq. The hypergraph of $Q$, denoted $\hypergraph(Q)$, is
	the hypergraph
    $$\hypergraph(Q) = \{ \seq{x} \mid r(\seq{x}) \text{ is an atom of $Q$ with } \seq{x} \not = \emptyset \}.$$
\end{definition}

The GYO algorithm checks acyclicity of a normal conjunctive query $Q$
by constructing $\hypergraph(Q)$ and repeatedly removing \emph{ears}
from this hypergraph. If ears can be removed
until only the empty hypergraph remains, then the query is acyclic;
otherwise it is cyclic.

An \emph{ear} in a hypergraph $H$ is a hyperedge $e$ for which
we can divide its variables into two groups: (1) those that appear
exclusively in $e$, and (2) those that are contained in another
hyperedge $\ell$ of $H$.  A variable that appears exclusively in a
single hyperedge is also called an \emph{isolated variable}. Thus, ear
removal corresponds to executing the following two reduction
operations.
\begin{itemize}
	\item Remove isolated variables: select a hyperedge $e$ in $H$ and
	remove isolated variables from it; if $e$ becomes empty, remove
	$e$ it altogether from $H$.
	\item Subset elimination: remove hyperedge $e$ from $H$ if there
	exists another hyperedge $\ell$ for which $e \subseteq \ell$.
\end{itemize}
The \emph{GYO reduction} of a hypergraph is the hypergraph that is
obtained by executing these operations until no further operation is
applicable. The following result is standard; see e.g.,
\cite{abiteboul1995foundations} for a proof.

\begin{proposition}
	A \cq $Q$ is acyclic if and only if the
	GYO-reduction of $\hypergraph(Q)$ is the empty hypergraph.
\end{proposition}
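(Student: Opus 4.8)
The plan is to establish the equivalence by linking the GYO reduction to the existence of a \gjt $T$ for $Q$ with $\preds(T)=\emptyset$, which by Definition~\ref{def:acyclicity} is exactly what acyclicity of the \cq $Q$ means. The bridge in both directions is that removing an ear from $\hypergraph(Q)$ corresponds to pruning a leaf of such a tree, so I would phrase each direction as an induction on the number of hyperedges (equivalently, atoms).

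For the direction ``GYO empties $\Rightarrow$ acyclic'', I would record the trace of ear removals. Whenever an ear $e$ is eliminated by subset elimination, its non-isolated variables are contained in some surviving hyperedge $\ell$; I store $\ell$ as the \emph{witness} of $e$. Reading the trace in reverse removal order, I build a tree whose nodes are the atoms: the hyperedge removed last is the root, and each earlier-removed $e$ is attached as a child of the node carrying its witness $\ell$. The \gjt conditions are then checked: leaves are atoms by construction, and the guard requirement holds because each attached child $e$ has all of its shared variables inside its parent's label. The crucial point is the connectedness condition: if a variable $x$ occurs in two nodes $m,n$, then $x$ is non-isolated at the moment the deeper of the two is removed, hence lies in its witness; iterating this up the tree forces $x$ onto the whole path between $m$ and $n$. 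Since $Q$ is a \cq, every edge predicate set is empty, so $T$ is a \gjt for $Q$.

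For the direction ``acyclic $\Rightarrow$ GYO empties'', I would start from a \gjt $T$ for $Q$, show that $\hypergraph(Q)$ always contains an ear, that ear removal leaves an acyclic residual query, and induct. To locate an ear, pick a leaf atom $a$ of $T$ of maximal depth and let $p$ be its parent. By the connectedness condition, every variable of $\var(a)$ that also occurs elsewhere in $T$ must occur in $\var(p)$, since the unique path from $a$ to any other node passes through $p$. Because $p$ is interior it has a guard descendant leaf $g$ with $\var(p)\subseteq\var(g)$, and $\var(g)$ is a genuine member of $\hypergraph(Q)$; hence all non-isolated variables of $\var(a)$ lie in the single hyperedge $\var(g)$, so $\var(a)$ is an ear with witness $\var(g)$ (choosing $g\neq a$, which is possible since $a$ is only one of $p$'s guard descendants). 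Removing this ear mirrors deleting the leaf $a$ from $T$, yielding a \gjt for the residual query, and the induction closes at the empty hypergraph.

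I expect the main obstacle to be the mismatch between the set-valued $\hypergraph(Q)$ and the more permissive \gjt of Definition~\ref{def:gjt}. Concretely, $\hypergraph(Q)$ forgets atom multiplicities, merges atoms with identical variable sets, and its subset-elimination step silently discards hyperedges contained in others; none of this is visible in the hypergraph, yet all of it must be faithfully reinstated as (possibly duplicated) leaf children when reconstructing $T$, so that $\atoms(T)=\atoms(Q)$ holds with the correct counts. The presence of interior hyperedge nodes in this generalized notion of \gjt is a second delicate point: a witness must be a true member of $\hypergraph(Q)$, which is why I route through the guard descendant leaf $g$ rather than through $\var(p)$ directly. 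The final technical care is turning the informal slogan ``shared variables propagate to the witness'' into a clean induction on removal order, correctly accounting for the interleaving of isolated-variable removal with subset elimination inside a single ear removal.
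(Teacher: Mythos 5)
First, a point of reference: the paper itself does not prove this proposition --- it is quoted as a standard fact with a pointer to \cite{abiteboul1995foundations} --- so your attempt is being measured against the classical GYO argument that you are reconstructing. Your first direction (a reduction trace ending in the empty hypergraph yields a \gjt) follows that classical route: witnesses of eliminated ears become parents, atoms are reinstated (with multiplicity) as leaf children of interior nodes labeled by their variable sets, which also supplies the guard children, and connectedness follows by induction on removal order. That half is sound, modulo the bookkeeping you explicitly flag.

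The genuine gap is in the direction ``acyclic $\Rightarrow$ GYO empties'', in your ear-location step. Having picked an \emph{arbitrary} deepest leaf $a$ with parent $p$, you assert that a guard descendant leaf $g$ of $p$ with $g \neq a$ can be chosen, ``since $a$ is only one of $p$'s guard descendants.'' This is false in general: $a$ may be the \emph{unique} guard descendant leaf of $p$, and in that case $\var(a)$ need not be an ear at all. Concretely, take the \cq $Q = r(x,y) \Join s(x) \Join t(y)$ with the \gjt whose root, labeled $\{x,y\}$, has the three atoms as leaf children. All leaves are deepest, so $a = r(x,y)$ is a legal choice; the only guard descendant leaf of the root is $a$ itself, and in $\hypergraph(Q) = \{\{x,y\},\{x\},\{y\}\}$ the hyperedge $\{x,y\}$ is \emph{not} an ear, since its non-isolated variables $x$ and $y$ lie in no single other hyperedge. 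So with this choice your induction cannot take even its first step. The repair is standard but is precisely the content that must be supplied: when the chosen deepest leaf is its parent's unique guard leaf, all of $p$'s children are leaves (by maximality of depth), and one must instead eliminate a sibling $b$ with $\var(b) \neq \var(a)$, whose shared variables lie in $\var(p) \subseteq \var(a)$, making $\var(b)$ an ear with witness $\var(a)$; a degenerate case remains when only one hyperedge is left. A secondary omission: the proposition concerns \emph{the} GYO-reduction, i.e.\ the normal form of a nondeterministic rewriting process, so exhibiting one sequence of ear removals that reaches the empty hypergraph suffices only if you also invoke order-independence of the reduction (the classical counterpart of Proposition~\ref{prop:confluence}), or strengthen the induction to show that \emph{every} applicable operation preserves acyclicity.
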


\subsection{GYO-reduction for \gcqs}
\label{sec:gyo-reduction-gcqs}

In order to extend the GYO-reduction to check
free-connex acyclicity (not simply acyclicity) of \gcqs (not simply
standard \cqs), we will: (1) Redefine the notion of being an ear to
take into account the predicates; and (2)
transform the GYO-reduction into a two-stage procedure. The first
stage allows to check that a connex set with exactly $\free(Q)$ can
exist while the first and second stage combined check that the query
is acyclic.

Our algorithm operates on \emph{hypergraph
	triplets} instead of hypergraphs, which are defined as follows.

\begin{definition}
	A \emph{hypergraph triplet} is a triple $\trip H = (\hypergraph(\trip H), \free(\trip H), \preds(\trip H))$
	with $\hypergraph(\trip H)$ a hypergraph, $\free(\trip H)$ a
	hyperedge, and $\preds(\trip H)$ a set of predicates.
\end{definition}

Intuitively, the variables in $\free(\trip H)$ will correspond to the
output variables of a query and the set $\preds(\trip H)$ will contain 
predicates that need to be taken into account when removing
ears. Every \gcq is therefore naturally associated to a hypergraph
triplet as follows.

\begin{definition}
	The hypergraph triplet of a \gcq $Q$, denoted $\hypertrip(Q)$, is
	the triplet $(\hypergraph(Q), \free(Q), \preds(Q))$.
\end{definition}

In order to extend the notion of an ear, we require the following
definitions. Let $\trip H$ be a hypergraph
triplet. Variables that occur in $\free(\trip H)$ or in at least two
hyperedges in $\hypergraph(\trip H)$ are called \emph{equijoin
	variables} of $\trip{H}$. We denote the set of all equijoin
variables of $\trip H$ by $\equijoinvars(\trip H)$ and abbreviate
$\equijoinvars_{\trip{H}}(e) = e \cap \equijoinvars(\trip H)$.  A
variable $x$ is \emph{isolated} in $\trip{H}$ if it is not an equijoin
variable and is not mentioned in any predicate, i.e., if $x \not \in
\equijoinvars(\trip H)$ and $x \not \in \var(\preds(\trip H))$. We
denote the set of isolated variables of $\trip H$ by $\isolated(\trip
H)$ and abbreviate $\isolated_{\trip{H}}(e) = e \cap \isolated(\trip
H)$. The \emph{extended variables} of hyperedge $e$ in $\trip H$,
denoted $\ext_{\trip{H}}(e)$ is the set of all variables of predicates
that mention some variable in $e$, except the variables in $e$
themselves:
\[ \ext_{\trip{H}}(e) = \bigcup \{ \var(\theta) \mid \theta \in
\preds(\trip H), \var(\theta) \cap e \not = \emptyset\} \setminus e.\]
Finally, a hyperedge $e$ is a \emph{conditional subset} of hyperedge
$\ell$ w.r.t. $\trip{H}$, denoted $e \cse_{\trip H} \ell$, if
$\equijoinvars_{\trip{H}}(e) \subseteq \ell$ and $\ext_{\trip{H}}(e
\setminus \ell) \subseteq \ell$. We omit subscripts from our notation if the
triplet is clear from the context.

\tikzset{
    expand bubble/.style={
        preaction={draw,line width=1.4pt},
        gray!20,fill,draw,line width=2pt,
    },
    vertex/.style={
      minimum height=13pt,minimum width=13pt, node distance=23pt
    },
    hyperedge/.style={
      rounded corners, fill, opacity=0.35, line join=round
    },
    pred/.style={
      style=dashed,color=black,very thick
    },
    rewr/.style={
      ->,thick,decorate,
      decoration={snake,amplitude=.5mm,segment length=2mm, post length=1mm}
    }
}

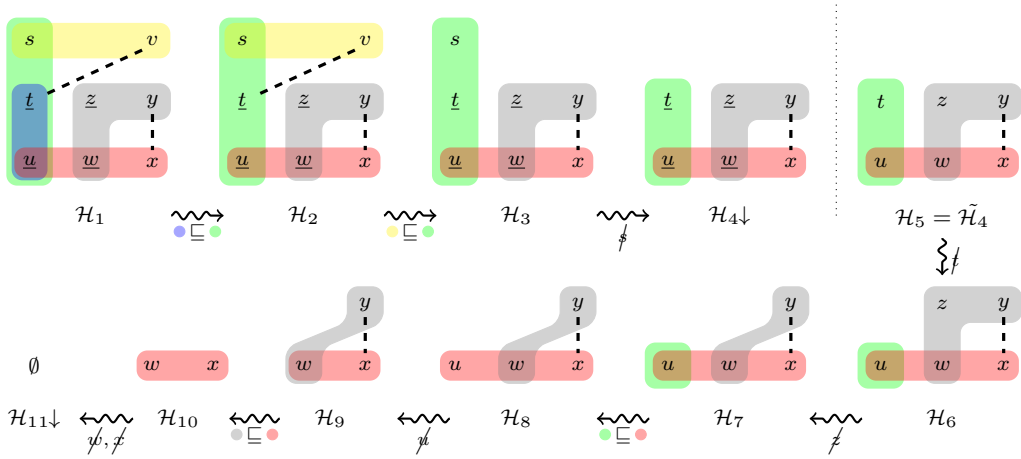
\begin{figure*}[t]
\centering
 \begin{tikzpicture}
   \begin{scope}
     \node[vertex,          ]  at(0,0) (s) {$s$};
     \node[vertex,below of=s] (t) {$\underline{t}$};
     \node[vertex,below of=t] (u) {$\underline{u}$};
     \node[vertex,right of=u] (w) {$\underline{w}$};
     \node[vertex,right of=w] (x) {$x$};
     \node[vertex,right of=t] (z) {$\underline{z}$};
     \node[vertex,right of=z] (y) {$y$};
     \node[vertex,above of=y] (v) {$v$};

     \node[below of=w, node distance=20pt] (H1) {$\trip H_1$};


     \begin{pgfonlayer}{background}
       \path[hyperedge, color=green] 
       ($ (s.north east) + (2pt,2pt)$) rectangle ($ (u.south west) - (2pt, 2pt)$);

       \path[hyperedge, color=yellow] 
       (s.north west) rectangle (v.south east);

       \path[hyperedge, color=blue] 
       (t.north east) rectangle (u.south west);

       \path[hyperedge, color=red] 
       ($(u.north west) + (1pt,-1pt)$) rectangle ($ (x.south east) + (-1pt,1pt)$);

       \draw[hyperedge, color=gray] 
       (w.south west) -- (z.north west) -- (y.north east) -- (y.south east) --
       (z.south east) -- (w.south east) -- cycle;

       \draw[pred] (t) -- ($ (v.south) + (-2pt,4pt)$);
       \draw[pred] ($ (y.south) + (0pt,2pt)$) -- ($ (x.north) + (0pt,-2pt) $);
     \end{pgfonlayer}
   \end{scope}

   \begin{scope}[xshift=2.8cm]
     \node[vertex,          ]  at(0,0) (s) {$s$};
     \node[vertex,below of=s] (t) {$\underline{t}$};
     \node[vertex,below of=t] (u) {$\underline{u}$};
     \node[vertex,right of=u] (w) {$\underline{w}$};
     \node[vertex,right of=w] (x) {$x$};
     \node[vertex,right of=t] (z) {$\underline{z}$};
     \node[vertex,right of=z] (y) {$y$};
     \node[vertex,above of=y] (v) {$v$};

     \node[below of=w, node distance=20pt] (H2) {$\trip H_2$};


     \begin{pgfonlayer}{background}
       \path[hyperedge, color=green] 
       ($ (s.north east) + (2pt,2pt)$) rectangle ($ (u.south west) - (2pt, 2pt)$);

       \path[hyperedge, color=yellow] 
       (s.north west) rectangle (v.south east);


       \path[hyperedge, color=red] 
       ($(u.north west) + (1pt,-1pt)$) rectangle ($ (x.south east) + (-1pt,1pt)$);

       \draw[hyperedge, color=gray] 
       (w.south west) -- (z.north west) -- (y.north east) -- (y.south east) --
       (z.south east) -- (w.south east) -- cycle;

       \draw[pred] (t) -- ($ (v.south) + (-2pt,4pt)$);
       \draw[pred] ($ (y.south) + (0pt,2pt)$) -- ($ (x.north) + (0pt,-2pt) $);
     \end{pgfonlayer}
   \end{scope}

   \begin{scope}[xshift=5.6cm]
     \node[vertex,          ]  at(0,0) (s) {$s$};
     \node[vertex,below of=s] (t) {$\underline{t}$};
     \node[vertex,below of=t] (u) {$\underline{u}$};
     \node[vertex,right of=u] (w) {$\underline{w}$};
     \node[vertex,right of=w] (x) {$x$};
     \node[vertex,right of=t] (z) {$\underline{z}$};
     \node[vertex,right of=z] (y) {$y$};
     \node[coordinate,above of=y] (v) {};

     \node[below of=w, node distance=20pt] (H3) {$\trip H_3$};


     \begin{pgfonlayer}{background}
       \path[hyperedge, color=green] 
       ($ (s.north east) + (2pt,2pt)$) rectangle ($ (u.south west) - (2pt, 2pt)$);



       \path[hyperedge, color=red] 
       ($(u.north west) + (1pt,-1pt)$) rectangle ($ (x.south east) + (-1pt,1pt)$);

       \draw[hyperedge, color=gray] 
       (w.south west) -- (z.north west) -- (y.north east) -- (y.south east) --
       (z.south east) -- (w.south east) -- cycle;

       \draw[pred] ($ (y.south) + (0pt,2pt)$) -- ($ (x.north) + (0pt,-2pt) $);
     \end{pgfonlayer}
   \end{scope}

   \begin{scope}[xshift=8.4cm]
     \node[coordinate       ]  at(0,0) (s) {};
     \node[vertex,below of=s] (t) {$\underline{t}$};
     \node[vertex,below of=t] (u) {$\underline{u}$};
     \node[vertex,right of=u] (w) {$\underline{w}$};
     \node[vertex,right of=w] (x) {$x$};
     \node[vertex,right of=t] (z) {$\underline{z}$};
     \node[vertex,right of=z] (y) {$y$};
     \node[coordinate,above of=y] (v) {};

     \node[below of=w, node distance=20pt] (H4) {$\nf{\trip H_4}$};


     \begin{pgfonlayer}{background}
       \path[hyperedge, color=green] 
       ($ (t.north east) + (2pt,2pt)$) rectangle ($ (u.south west) - (2pt, 2pt)$);



       \path[hyperedge, color=red] 
       ($(u.north west) + (1pt,-1pt)$) rectangle ($ (x.south east) + (-1pt,1pt)$);

       \draw[hyperedge, color=gray] 
       (w.south west) -- (z.north west) -- (y.north east) -- (y.south east) --
       (z.south east) -- (w.south east) -- cycle;

       \draw[pred] ($ (y.south) + (0pt,2pt)$) -- ($ (x.north) + (0pt,-2pt) $);
     \end{pgfonlayer}
   \end{scope}

   \begin{scope}[xshift=11.2cm]
     \node[coordinate       ]  at(0,0) (s) {};
     \node[vertex,below of=s] (t) {$t$};
     \node[vertex,below of=t] (u) {$u$};
     \node[vertex,right of=u] (w) {$w$};
     \node[vertex,right of=w] (x) {$x$};
     \node[vertex,right of=t] (z) {$z$};
     \node[vertex,right of=z] (y) {$y$};
     \node[coordinate,above of=y] (v) {};

     \node[below of =w, node distance=20pt] (H5) {$\trip H_5=\tilde{\trip H_4}$};


     \begin{pgfonlayer}{background}
       \path[hyperedge, color=green] 
       ($ (t.north east) + (2pt,2pt)$) rectangle ($ (u.south west) - (2pt, 2pt)$);



       \path[hyperedge, color=red] 
       ($(u.north west) + (1pt,-1pt)$) rectangle ($ (x.south east) + (-1pt,1pt)$);

       \draw[hyperedge, color=gray] 
       (w.south west) -- (z.north west) -- (y.north east) -- (y.south east) --
       (z.south east) -- (w.south east) -- cycle;

       \draw[pred] ($ (y.south) + (0pt,2pt)$) -- ($ (x.north) + (0pt,-2pt) $);
     \end{pgfonlayer}
   \end{scope}

   \begin{scope}[xshift=11.2cm,yshift=-2.5cm]
     \node[coordinate       ]  at(0,0) (s) {};
     \node[coordinate,below of=s] (t) {};
     \node[vertex,below of=t] (u) {$u$};
     \node[vertex,right of=u] (w) {$w$};
     \node[vertex,right of=w] (x) {$x$};
     \node[vertex,right of=t] (z) {$z$};
     \node[vertex,right of=z] (y) {$y$};
     \node[coordinate,above of=y] (v) {};

     \node[below of =w, node distance=20pt] (H6) {$\trip H_6$};


     \begin{pgfonlayer}{background}
       \path[hyperedge, color=green] 
       ($ (u.north east) + (2pt,2pt)$) rectangle ($ (u.south west) - (2pt, 2pt)$);



       \path[hyperedge, color=red] 
       ($(u.north west) + (1pt,-1pt)$) rectangle ($ (x.south east) + (-1pt,1pt)$);

       \draw[hyperedge, color=gray] 
       (w.south west) -- (z.north west) -- (y.north east) -- (y.south east) --
       (z.south east) -- (w.south east) -- cycle;

       \draw[pred] ($ (y.south) + (0pt,2pt)$) -- ($ (x.north) + (0pt,-2pt) $);
     \end{pgfonlayer}
   \end{scope}

   \begin{scope}[xshift=8.4cm,yshift=-2.5cm]
     \node[coordinate       ]  at(0,0) (s) {};
     \node[coordinate,below of=s] (t) {};
     \node[vertex,below of=t] (u) {$u$};
     \node[vertex,right of=u] (w) {$w$};
     \node[vertex,right of=w] (x) {$x$};
     \node[coordinate,right of=t] (z) {};
     \node[vertex,above of=x] (y) {$y$};
     \node[coordinate,above of=y] (v) {};

     \node[below of =w, node distance=20pt] (H7) {$\trip H_7$};


     \begin{pgfonlayer}{background}
       \path[hyperedge, color=green] 
       ($ (u.north east) + (2pt,2pt)$) rectangle ($ (u.south west) - (2pt, 2pt)$);



       \path[hyperedge, color=red] 
       ($(u.north west) + (1pt,-1pt)$) rectangle ($ (x.south east) + (-1pt,1pt)$);

       \draw[hyperedge, color=gray] 
       (w.south west) -- (w.north west) -- (y.south west) -- 
       (y.north west) -- (y.north east) -- (y.south east) -- 
       (w.north east) -- (w.south east) -- cycle;

       \draw[pred] ($ (y.south) + (0pt,2pt)$) -- ($ (x.north) + (0pt,-2pt) $);
     \end{pgfonlayer}
   \end{scope}

   \begin{scope}[xshift=5.6cm,yshift=-2.5cm]
     \node[coordinate       ]  at(0,0) (s) {};
     \node[coordinate,below of=s] (t) {};
     \node[vertex,below of=t] (u) {$u$};
     \node[vertex,right of=u] (w) {$w$};
     \node[vertex,right of=w] (x) {$x$};
     \node[coordinate,right of=t] (z) {};
     \node[vertex,above of=x] (y) {$y$};
     \node[coordinate,above of=y] (v) {};

     \node[below of =w, node distance=20pt] (H8) {$\trip H_8$};


     \begin{pgfonlayer}{background}
       ;



       \path[hyperedge, color=red] 
       ($(u.north west) + (1pt,-1pt)$) rectangle ($ (x.south east) + (-1pt,1pt)$);

       \draw[hyperedge, color=gray] 
       (w.south west) -- (w.north west) -- (y.south west) -- 
       (y.north west) -- (y.north east) -- (y.south east) -- 
       (w.north east) -- (w.south east) -- cycle;

       \draw[pred] ($ (y.south) + (0pt,2pt)$) -- ($ (x.north) + (0pt,-2pt) $);
     \end{pgfonlayer}
   \end{scope}

   \begin{scope}[xshift=2.8cm,yshift=-2.5cm]
     \node[coordinate       ]  at(0,0) (s) {};
     \node[coordinate,below of=s] (t) {};
     \node[vertex,below of=t] (u) {};
     \node[vertex,right of=u] (w) {$w$};
     \node[vertex,right of=w] (x) {$x$};
     \node[coordinate,right of=t] (z) {};
     \node[vertex,above of=x] (y) {$y$};
     \node[coordinate,above of=y] (v) {};

     \node[coordinate,below of=w, node distance=20pt] (lbl) {};
     \node (H9) at ($ (lbl) + (10pt,0pt)$) {$\trip H_{9}$};


     \begin{pgfonlayer}{background}
       ;



       \path[hyperedge, color=red] 
       ($(w.north west) + (1pt,-1pt)$) rectangle ($ (x.south east) + (-1pt,1pt)$);

       \draw[hyperedge, color=gray] 
       (w.south west) -- (w.north west) -- (y.south west) -- 
       (y.north west) -- (y.north east) -- (y.south east) -- 
       (w.north east) -- (w.south east) -- cycle;

       \draw[pred] ($ (y.south) + (0pt,2pt)$) -- ($ (x.north) + (0pt,-2pt) $);
     \end{pgfonlayer}
   \end{scope}

   \begin{scope}[xshift=0.8cm,yshift=-2.5cm]
     \node[coordinate       ]  at(0,0) (s) {};
     \node[coordinate,below of=s] (t) {};
     \node[vertex,below of=t] (u) {};
     \node[vertex,right of=u] (w) {$w$};
     \node[vertex,right of=w] (x) {$x$};
     \node[coordinate,right of=t] (z) {};
     \node[vertex,above of=x] (y) {};
     \node[coordinate,above of=y] (v) {};

     \node[coordinate,below of=w, node distance=20pt] (lbl) {};
     \node (H10) at ($ (lbl) + (10pt,0pt)$) {$\trip H_{10}$};


     \begin{pgfonlayer}{background}
       ;



       \path[hyperedge, color=red] 
       ($(w.north west) + (1pt,-1pt)$) rectangle ($ (x.south east) + (-1pt,1pt)$);


     \end{pgfonlayer}
   \end{scope}

   \node[left of=H10, node distance=1.9cm] (H11) {$\nf{\trip H_{11}}$};
   \node[above of=H11, node distance=20pt] (empty) {$\emptyset$};

   \draw[rewr] ($(H1) + (30pt,0pt)$) -- 
               node[below] {%
\scriptsize ${{\tikz \draw[hyperedge,color=blue] (0,0) circle
    [radius=2pt];} \cse {\tikz \draw[hyperedge,color=green] (0,0) circle [radius=2pt];}}$}
              ($(H2) - (30pt,0)$);

   \draw[rewr] ($(H2) + (30pt,0pt)$) -- 
               node[below] {%
\scriptsize ${{\tikz \draw[hyperedge,color=yellow] (0,0) circle [radius=2pt];} \cse {\tikz \draw[hyperedge,color=green] (0,0) circle [radius=2pt];}}$}
             ($(H3) - (30pt,0)$);

   \draw[rewr] ($(H3) + (30pt,0pt)$) -- 
               node[below] {\scriptsize $\cancel{s}$}
               ($(H4) - (30pt,0)$);

   \draw[dotted] ($(H4) + (40pt,0pt)$) -- ($(H4) + (40pt,80pt)$);
   \draw[rewr] ($(H5.south) + (0pt,-1pt)$) -- 
               node[right] {\scriptsize $\cancel{t}$}
               ($(H5.south) - (0pt,15pt)$);

   \draw[rewr] ($(H6) - (30pt,0pt)$) -- 
               node[below] {\scriptsize $\cancel{z}$}
               ($(H7) + (30pt,0)$);

   \draw[rewr] ($(H7) - (30pt,0pt)$) -- 
               node[below] {%
\scriptsize ${{\tikz \draw[hyperedge,color=green] (0,0) circle [radius=2pt];} \cse {\tikz \draw[hyperedge,color=red] (0,0) circle [radius=2pt];}}$}
               ($(H8) + (30pt,0)$);

   \draw[rewr] ($(H8) - (25pt,0pt)$) -- 
               node[below] {\scriptsize $\cancel{u}$}
               ($(H9) + (25pt,0)$);

   \draw[rewr] ($(H9) - (19pt,0pt)$) -- 
               node[below] {%
\scriptsize ${{\tikz \draw[hyperedge,color=gray] (0,0) circle [radius=2pt];} \cse {\tikz \draw[hyperedge,color=red] (0,0) circle [radius=2pt];}}$}
               ($(H10) + (19pt,0)$);

   \draw[rewr] ($(H10) - (17pt,0pt)$) --node[below] {\scriptsize $\cancel{w},\cancel{x}$} ($(H11) + (17pt,0)$);

\end{tikzpicture}
 \caption{Illustration of GYO-reduction for GCQs. Colored regions
   depict hyperedges. Variables in $\free$ are underlined. Variables
   occurring in the same predicate are connected by dashed lines.}
 \label{fig:hypergraphEx2}
\end{figure*}

\begin{example}
  \label{ex:hypergraphEx2}
  In Fig.~\ref{fig:hypergraphEx2} we depict several hypergraph
  triplets. There, hyperedges in $\trip H$ are depicted by colored
  regions and variables in $\free(\trip H)$ are underlined.  We use
  dashed lines to connect variables that appear together in a
  predicate. So, in $\trip H_1$, we have predicates $\theta_1,
  \theta_2$ with $\var(\theta_1) = \{t,v\}$ and $\var(\theta_2) =
  \{x,y\}$. Now consider triplet $\trip H_1$ in particular. It is the
  hypergraph triplet $\hypertrip(Q)$ for the following \gcq $Q$:
  \begin{multline*}
  Q= \pi_{t,u,z,w}(r_1(s, t, u) \Join r_2(t, u) \Join r_3(u, w, x)
  \Join \\ r_4(s, v)
  \Join r_5(w, z, y) \mid  t < v \wedge x < y).
  \end{multline*}
   Moreover, $\equijoinvars(\trip H_1) =\{s,t,u,w,z\} $
  and $\isolated(\trip H_1) = \emptyset$. Furthermore, $\ext_{\trip
    H_1}(\{v\}) = \{t\}$ since $\theta_1 = t < v$ shares
  variables with $\{v\}$. Finally $\equijoinvars_{\trip H_1}(\{s,v\}) =
  \{s\} \subseteq \{s,t,u\}$ and $\ext_{\trip H_1}(\{s,v\} \setminus
  \{s,t,u\}) = \ext_{\trip H_1}(\{v\}) = \{t\} \subseteq
  \{s,t,u\}$. Therefore, $\{s, v\} \cse_{\trip H_1} \{s,t,u\}$. Similarly,
  $\{t,u\} \cse_{\trip H_1} \{s,t,u\}$.
\end{example}


We define ears in our context as follows. 
\begin{definition}
	A hyperedge $e$ is an ear in a hypergraph triplet $\trip H$ if $e \in
	\hypergraph(\trip H)$ and either
	\begin{enumerate}
		\item we can divide its variables into
		two: (a) those that are isolated and (b) those that form a
		conditional subset of another hyperedge $\ell \in \hypergraph(\trip H)
		\setminus \{e\}$; or
		\item $e$ consists only of non-join variables, i.e., $\equijoinvars(e)
		= \emptyset$ and $\ext(e) = \emptyset$.
	\end{enumerate}
\end{definition}

Note that case (2) allows for $\theta \in \preds(\trip H)$ with
$\var(\theta) \subseteq e$. We call predicates that are covered by a
hyperedge in this sense \emph{filters} because they correspond to
filtering a single GMR instead of $\theta$-joining two
GMRs. If, in case (2), there is no filter $\theta$ with $\var(\theta)
\subseteq e$, then $e = \isolated_{\trip H}(e)$.
Similar to the classical GYO reduction, we can view ear removal
as a rewriting process on triplets, where we consider the
following reduction operations. 
\begin{itemize}[-]
	\item (ISO) Remove isolated variables: select a hyperedge $e \in
	\hypergraph(\trip H)$
	and remove a non-empty set $X \subseteq \isolated_{\trip{H}}(e)$
	from it. If $e$ becomes empty, remove it from
	$\hypergraph(\trip H)$. 
	\item (CSE) Conditional subset elimination: remove hyperedge $e$ from
	$\hypergraph(\trip H)$ if it is a conditional subset of another
	hyperedge $f$ in $\hypergraph(\trip H)$. Also update $\preds(\trip
	H)$ by removing all predicates $\theta$ with $\var(\theta) \cap (e
	\setminus f) \not = \emptyset$.
	\item (FLT) Filter elimination: select  $e \in
	\hypergraph(\trip H)$ and a non-empty subset of predicates $\Theta
	\subseteq \preds(\trip H)$ with $\var(\Theta) \subseteq e$.
	Remove all predicates in $\Theta$ from $\preds(\trip H)$.
\end{itemize}
We write $\trip{H} \rewr \trip{I}$ to denote that triplet $\trip{I}$
is obtained from triplet $\trip{H}$ by applying a single such
operation, and $\trip{H} \rewr^* \trip{I}$ to denote that $\trip{I}$
is obtained by a sequence of zero or more of such operations. 

\begin{example} 
\label{ex:reductions}
For the hypergraph triplets illustrated in
Fig.~\ref{fig:hypergraphEx2} we have $\trip H_1 \rewr \trip H_2
\rewr \trip H_3 \rewr \trip H_4$ and $\trip H_5 \rewr \allowbreak
\trip H_6 \allowbreak \rewr \trip H_7 \allowbreak \rewr \trip H_8 \rewr \trip H_9
\rewr \trip H_{10} \rewr \trip H_{11}$.  For each reduction, it is
illustrated in the figure which set of isolated variables is removed,
or which conditional subset is removed.
\end{example}

We write
$\nf{\trip{H}}$ to denote $\trip{H}$ is in \emph{normal form}, i.e.,
that no operation is applicable on triplet $\trip{H}$. Note that,
because each operation removes at least one variable, hyperedge, or
predicate, we will always reach a normal form after a finite number of
operations. Furthermore, while multiple different reduction steps may
be applicable on a given triplet $\trip{H}$, the order in which we
apply them does not matter:
\begin{proposition}[Confluence]
	\label{prop:confluence}
	Whenever $\trip{H} \rewr^* \trip{I}_1$ and $\trip{H} \rewr^* \trip{I}_2$,
	there exists $\trip{J}$ such that $\trip{I}_1 \rewr^* \trip{J}$
	and $\trip{I}_2 \rewr^* \trip{J}$. 
\end{proposition}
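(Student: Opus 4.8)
The plan is to invoke Newman's Lemma. The reduction relation $\rewr$ is terminating, since (as already observed in the excerpt) every single step strictly decreases the total count of variable-occurrences-in-hyperedges, hyperedges, and predicates, a well-founded measure. By Newman's Lemma it therefore suffices to establish \emph{local} confluence: whenever $\trip H \rewr \trip I_1$ and $\trip H \rewr \trip I_2$ are single steps, there is a common $\trip J$ with $\trip I_1 \rewr^* \trip J$ and $\trip I_2 \rewr^* \trip J$. The global statement then follows by the standard diagram-tiling induction on the lengths of the two given reduction sequences $\trip H \rewr^* \trip I_1$ and $\trip H \rewr^* \trip I_2$.

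Local confluence is a case analysis over the unordered pair of operations applied at $\trip H$, and I would first isolate two monotonicity facts that drive almost every case. (i) Removing an isolated variable (ISO) or a filter predicate (FLT) changes neither $\equijoinvars(\trip H)$ nor $\ext$ nor the applicability of any other operation: an isolated variable occurs in exactly one hyperedge and in no predicate, so deleting it preserves every occurrence count and every predicate, while a filtered predicate has all its variables inside one hyperedge, so deleting it can only shrink $\ext$. (ii) Removing a hyperedge (CSE) or a predicate can only \emph{shrink} $\equijoinvars$ and $\ext$; hence any subset condition $\equijoinvars(e)\subseteq\ell$ or $\ext(e\setminus\ell)\subseteq\ell$, and thus any relation $e\cse\ell$ whose target $\ell$ survives, is preserved. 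With (i) and (ii) the pairs ISO/ISO, ISO/FLT, FLT/FLT, ISO/CSE, and FLT/CSE all commute: the two operations act on disjoint data, or when they overlap the second remains applicable after the first and both orders reach the identical triplet. In the FLT/CSE overlap one re-chooses the covering hyperedge among the survivors; in ISO/CSE one checks that removing an isolated variable from the CSE-removed edge, from its target, or from a third edge never falsifies the $\cse$ relation nor alters which predicates CSE deletes (because isolated variables touch no predicate).

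The genuine obstacle is CSE/CSE, say $g_1\cse\ell_1$ and $g_2\cse\ell_2$ with $g_1\neq g_2$. When neither target is the other removed edge, fact (ii) keeps both relations valid after either removal, and a short computation shows both orders delete exactly the predicates touching $(g_1\setminus\ell_1)\cup(g_2\setminus\ell_2)$ (the sets $g_i\setminus\ell_i$ being unaffected by removing the other edge), so the two orders converge. The hard subcase is when one CSE destroys the other's target, say $\ell_2=g_1$. Here I would prove a near-transitivity lemma for $\cse$: from $g_2\cse g_1$ and $g_1\cse\ell_1$ it follows that $g_2\cse\ell_1$ in the triplet $\trip I_1$ obtained by removing $g_1$. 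The condition $\equijoinvars(g_2)\subseteq\ell_1$ is immediate by chaining the two given inclusions, but the second condition $\ext(g_2\setminus\ell_1)\subseteq\ell_1$ is the delicate point where I expect to spend most effort: it requires showing, for each \emph{surviving} predicate $\theta$ meeting $g_2\setminus\ell_1$, that its remaining variables land first in $g_1$ (via $g_2\cse g_1$) and then in $\ell_1$ (via $g_1\cse\ell_1$, using that $\theta$ survived the CSE of $g_1$ and so avoids $g_1\setminus\ell_1$). Re-targeting $g_2$ to $\ell_1$ then lets both paths remove $g_1$ and $g_2$; a final bookkeeping step shows the two paths can differ only in predicates that have become \emph{filters} (entirely contained in the surviving edge $\ell_1$), which are then removed by extra FLT steps. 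Both paths thus reach a common $\trip J$, which completes local confluence and, via Newman's Lemma, the proposition.
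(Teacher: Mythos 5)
Your overall architecture coincides with the paper's own proof: termination of $\rewr$ plus Newman's Lemma reduces the claim to local confluence; a preservation lemma (reductions only shrink $\preds$, $\equijoinvars$ and $\ext$, hence any relation $e \cse \ell$ whose target $\ell$ survives is preserved) handles most pairs; and the one-sided CSE/CSE conflict, where one removal destroys the other's target, is resolved exactly as in the paper by a transitivity argument: from $g_2 \cse g_1$ and $g_1 \cse \ell_1$ derive $g_2 \cse \ell_1$ together with $g_2 \setminus g_1 = g_2 \setminus \ell_1$, so that re-targeting removes the same predicates. Your checks for the mixed pairs (ISO/CSE, ISO/FLT, CSE/FLT) also name the right obligations.

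There is, however, a genuine gap: the \emph{mutual} CSE/CSE conflict, where each removed edge is the other's target, i.e.\ $\ell_1 = g_2$ and $\ell_2 = g_1$ simultaneously, so that $g_1 \cse g_2$ and $g_2 \cse g_1$. Your ``say $\ell_2 = g_1$'' treats target destruction as one-sided, but the mutual case is not an instance of it: there your near-transitivity lemma degenerates to $g_2 \cse g_2$, and CSE requires a \emph{different} surviving hyperedge as target, which need not exist. Moreover the claim that ``both paths remove $g_1$ and $g_2$'' is false in general. Take $\hypergraph(\trip{H}) = \{\{x,y\},\{x,z\}\}$ with $\free(\trip{H}) = \{x\}$ and no predicates: each edge is a conditional subset of the other, yet after removing either one by CSE, the surviving edge cannot be removed by CSE at all (there is no other edge); it can only be \emph{trimmed}, by FLT steps (removing predicates that have become filters) followed by ISO steps (removing the now-isolated variables), here yielding $\{\{x\}\}$ on both paths. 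So convergence in the mutual case needs a different argument: one must prove that the predicate sets remaining after the FLT steps agree on both paths, and that the two trimmed edges are equal, i.e.\ $g_2 \setminus \isolated(g_2)$ computed after removing $g_1$ equals $g_1 \setminus \isolated(g_1)$ computed after removing $g_2$ — using both $\cse$ relations in both directions. This is precisely case (2d) of the paper's proof and is its longest sub-case; your closing ``bookkeeping step'' about filters does not supply it. The rest of your proposal stands, but this case must be added before the local-confluence argument is complete.
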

Because the proof is technical but not overly enlightning, we defer it
to Appendix~\ref{sec:proof-prop-confluence}. A direct consequence is that normal forms
are unique: if $\trip{H} \rewr^* \nf{\trip{I}_1}$ and $\trip{H}
\rewr^* \nf{\trip{I}_2}$ then $\trip{I}_1 = \trip{I}_2$.

Let $\trip{H}$ be a triplet. The residual of $\trip{H}$, denoted
$\tilde{\trip{H}}$, is the triplet $(\hypergraph(\trip H),
\emptyset, \preds(\trip H))$, i.e., the triplet where $\free(\trip H)$
is set to $\emptyset$. A triplet is \emph{empty} if it equals
$(\emptyset, \emptyset, \emptyset)$.

Our main result in this section states that to
check whether a \gcq $Q$ is free-connex acyclic it suffices to start
from $\hypertrip(Q)$ and do a two stage reduction: the first from
$\hypertrip(Q)$ until a normal form $\nf{\trip{I}}$ is reached, and
the second from the residual of $\nf{\trip{I}}$, until
another normal form $\trip{J}$ is reached.\footnote{Note that because we set
	$\free(\trip I) = \emptyset$ on the residual, new variables may
	become isolated and therefore more reductions steps may be possible on
	the normal form of $\trip{I}$.}

\begin{theorem}
	\label{thm:gyo}
	Let $Q$ be a \gcq. Assume $\hypertrip(Q) \rewr^* \nf{\trip{I}}$
	and $\tilde{\trip{I}} \rewr^* \nf{\trip{J}}$. Then the following hold.
	\begin{enumerate}
		\item $Q$ is acyclic if, and only if, $\trip{J}$ is the empty
		triplet.
		\item $Q$ is free-connex acyclic if, and only if, $\trip{J}$ is
		the empty triplet and $\var(\hypergraph(\trip{I})) = \free(Q)$.
		\item For every \gjt $T$ of $Q$ and every connex subset $N$ of $T$
		it holds that $\var(\hypergraph(\trip{I})) \subseteq \var(N)$.
	\end{enumerate}
\end{theorem}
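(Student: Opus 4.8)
The plan is to reduce all three parts to a single correspondence between GYO-reduction sequences on $\hypertrip(Q)$ and generalized join trees for $Q$, and then to read off acyclicity, free-connex acyclicity, and the lower bound from \emph{where} this correspondence stops. Throughout I would lean on confluence (Proposition~\ref{prop:confluence}), which guarantees that $\trip{I}$ and $\trip{J}$ are well defined independently of the chosen reduction order; this is exactly what lets me fix a convenient order without loss of generality. I would also record at the outset the easy monotonicity facts that each of (ISO), (CSE), (FLT) only ever \emph{deletes} variables, hyperedges, or predicates, and that equijoin variables (in particular all of $\free(Q)$, since output variables lie in $\geq 1$ atom and are never isolated) are never deleted, so $\free(Q) \subseteq \var(\hypergraph(\trip{I}))$.

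First I would prove the \emph{soundness} direction: if the two-stage reduction ends in the empty triplet then $Q$ is acyclic, and it is free-connex acyclic when moreover $\var(\hypergraph(\trip{I})) = \free(Q)$. I would argue by induction on the length of a reduction sequence, reading it backwards as a construction of a \gjt. The three operations map to tree operations as follows: a step $e \cse_{\trip H} \ell$ attaches the already-built subtree for $e$ as a child of the node for $\ell$, labelling the connecting edge with exactly the predicates deleted in that step (those $\theta$ with $\var(\theta) \cap (e \setminus \ell) \neq \emptyset$); an (ISO) step that strips isolated variables $X$ from $e$ without emptying it creates a new parent node with label $e \setminus X$ directly above the node for $e$; and (FLT) records its predicates on the edge above a single-child node. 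The points to verify are the conditions of Definition~\ref{def:gjt}: the guard condition holds because each newly created interior node is placed directly above a node whose label contains it; the connectedness condition holds because a variable is deleted only once it no longer occurs elsewhere; and the edge-predicate containment $\var(\theta) \subseteq \var(p) \cup \var(c)$ is \emph{precisely} what $\ext_{\trip{H}}(e \setminus \ell) \subseteq \ell$ in the definition of $\cse$ guarantees. For the free-connex refinement, the nodes created during the \emph{second} stage (reducing the residual $\tilde{\trip{I}}$) are exactly those lying strictly below the frontier, so declaring $N$ to be the nodes created up to the end of the first stage yields a connex set, and the hypothesis $\var(\hypergraph(\trip{I})) = \free(Q)$ forces $\var(N) = \free(Q)$, i.e.\ compatibility.

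Next I would prove the \emph{completeness} direction together with Part~3. From a \gjt $T$ I would exhibit a reduction sequence by processing $T$ bottom-up: a leaf atom is always an ear (its isolated variables go by (ISO), and its remaining variables form a conditional subset of its parent's guard child via (CSE), the incident edge predicates supplying $\ext$), while interior projection nodes are simulated by (ISO)/(FLT). Iterating leaf-to-root empties the residual, which, combined with soundness, gives Part~1. For Part~3 I would \emph{stop} this simulation at the frontier of a connex set $N$ (here the relevant case, and the one needed for Part~2, is $\free(Q) \subseteq \var(N)$, e.g.\ any pair compatible with $Q$): removing only the part of $T$ strictly below $N$ already yields a triplet whose hypergraph uses only variables of $\var(N)$, because every variable that is pushed upward and survives must reappear in a frontier node and hence in $N$ (this uses Lemma~\ref{lem:sibling-closed-var-at-frontier}-style reasoning about guards along the frontier). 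Since reductions only delete variables, continuing to a normal form keeps the variable set inside $\var(N)$; by uniqueness of normal forms (confluence) that normal form is $\trip{I}$, so $\var(\hypergraph(\trip{I})) \subseteq \var(N)$. Applying this to a compatible pair and combining with $\free(Q) \subseteq \var(\hypergraph(\trip{I}))$ sandwiches $\var(\hypergraph(\trip{I})) = \free(Q)$, which is the ``only if'' direction of Part~2; the ``if'' direction follows from the soundness paragraph.

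The main obstacle will be the predicate bookkeeping rather than the tree combinatorics: I must track, for each surviving predicate, which hyperedge(s) still contain its variables, and show that (CSE), (FLT) and (ISO) never strand a predicate --- that whenever $\var(\theta)$ straddles two hyperedges it is consumed as an edge label exactly when those hyperedges are joined, and whenever it is confined to one hyperedge it becomes a filter. This is where the definitions of $\ext$, of conditional subset, and of (FLT) must interlock, and it is the step most easily botched in the induction. A secondary subtlety is maintaining the guard and connectedness conditions simultaneously under (ISO): projecting out isolated variables must not destroy the guarantee that every interior node has a guard child, which I would ensure by always placing the projection node directly above a descendant guard, mirroring the construction in Lemma~\ref{lemma:op1}.
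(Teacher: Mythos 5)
Your proposal is correct and takes essentially the same route as the paper: soundness by induction on the reduction sequence, enacting each (ISO)/(CSE)/(FLT) step as a tree-building operation (the paper's Lemma~\ref{lem:forest-enactment} and Proposition~\ref{prop:alg-gjt-soundess}), and completeness together with Part~3 by using the \gjt as a parse tree whose frontier nodes are peeled off by reduction steps, concluding via confluence and uniqueness of normal forms (the paper's Lemmas~\ref{lem:progress} and~\ref{lem:subset-removal} feeding Proposition~\ref{prop:completeness-core}). The only organizational difference is that the paper first passes to \emph{canonical} \gjt pairs (Proposition~\ref{prop:canonical}) --- root labeled $\emptyset$, every leaf duplicated under an identically-labeled parent, predicates relocated to the edge where they first bind a new variable --- which is precisely the device that discharges the two frictions you correctly flag as delicate (predicate bookkeeping, and leaves not being ears with respect to their parents in a raw tree); note also that, exactly as in the paper's Proposition~\ref{prop:completeness-core}, your Part~3 argument genuinely requires $\free(Q) \subseteq \var(N)$, since $\free(Q) \subseteq \var(\hypergraph(\trip{I}))$ always holds.
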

We devote Section~\ref{sec:gyo-correctness} to the proof.

\begin{example} 
	\label{ex:hyp}
	Fig.~\ref{fig:hypergraphEx2} illustrates the two-stage sequence of
	reductions starting from $\hypertrip(Q)$ with $Q$ the \gcq of
	Example~\ref{ex:hypergraphEx2}. Note that $\hypertrip(Q) = \trip H_1$
	and $\trip H_5$ is the residual of $\trip H_4$. Because we end with
	the empty triplet, $Q$ is acyclic but not free-connex since $\free(Q) \subsetneq \var(\trip H_4)$.
\end{example}

Theorem~\ref{thm:gyo} gives us a decision procedure for checking
free-connex acyclicity of \gcq $Q$. From its proof in
Section~\ref{sec:gyo-correctness}, we can actually derive an algorithm for
constructing a compatible \gjt pair for $Q$. At its essence, this
algorithm starts with the set of atoms appearing in $Q$, and
subsequently uses the sequence of reduction steps from
Theorem~\ref{thm:gyo} to construct a \gjt from it, at the same time
checking free-connex acyclicity. Every reduction step causes new nodes
to be added to the partial \gjt constructed so far. We will refer to
such partial \gjts as \emph{Generalized Join Forests} (\gjf).

\begin{definition}[\gjf]
	A \emph{Generalized Join Forest }is a set $F$ of pairwise disjoint
	\gjts s.t. for distinct trees $T_1, T_2 \in F$ we have
	$\var(T_1) \cap \var(T_2) = \var(n_1) \cap \var(n_2)$ where $n_1$
	and $n_2$ are the roots of $T_1$ and $T_2$.
\end{definition}

Every \gjf encodes a hypergraph as follows.

\begin{definition}
	The hypergraph $\hypergraph(F)$ associated to \gjf $F$ is the
	hypergraph that has one hyperedge for every non-empty root node in
	$F$,
	$$\hypergraph(F) = \{ \var(n) \mid n \text{ root  node in } F, \var(n) \not =
	\emptyset\}.$$
\end{definition}


The \gjt construction algorithm does not manipulate hypergraph
triplets directly. Instead, it manipulates \emph{\gjf triplets}. A
\gjf triplet is defined like a hypergraph triplet, except that it has
a \gjf instead of a hypergraph.

\begin{definition}
	A \emph{\gjf triplet} is a triple $\ftrip F = (\forest(\ftrip F), \allowbreak
	\free(\ftrip F), \fpreds{F})$ with $\forest(\ftrip F)$ a \gjf,
	$\free(\ftrip F)$ a hyperedge, and $\Theta_{\ftrip F}$ a set of
	predicates. Every \gjf triplet $\ftrip{F}$ induces a hypergraph
	triplet $\hypertrip(\ftrip{F}) = (\hypergraph(\forest(\ftrip F)),
	\free(\ftrip F), \allowbreak \fpreds{F})$.
\end{definition}

The algorithm for constructing a \gjt pair compatible with a given
\gcq $Q$ is now shown in Algorithm~\ref{alg:gjt}.  It starts in
line~\ref{alg:gjt-1} by initializing the \gjf triplet $\ftrip{F}$ to
$\ftrip{F} = (\forest(Q), \free(Q), \pred(Q)$. Here, $\forest(Q)$ is
the \gjf obtained by creating, for every atom $r(\overline{x})$ that
occurs $k > 0$ times in $Q$, $k$ corresponding leaf nodes labeled by
$r(\overline{x})$. In Lines~\ref{alg:gjt-2}--\ref{alg:gjt-3},
Algorithm~\ref{alg:gjt} then performs the first phase of reduction
steps of Theorem~\ref{thm:gyo}.  To this end, it checks whether a
reduction operation is applicable to $\hypertrip(\ftrip{F})$ and, if
so, \emph{enacts} this operation by modifying $\ftrip{F}$ as follows.
\begin{itemize}[-]
	\item (ISO). If the reduction operation on the hypergraph triplet
	$\hypertrip(\ftrip{F})$ were to remove a non-empty subset $X$ of
	isolated variables from hyperedge $e$, then $\ftrip{F}$ is modified
	as follows. Let $n_1,\dots, n_k$ be all the root nodes in
	$\forest(\ftrip{F})$ that are labeled by $e$. Merge the
	corresponding trees into one tree by creating a new node $n$ with
	$\var(n) = e$ and attaching $n_1, \dots, n_k$ as children to it with
	$\pred(n \to n_i) = \emptyset$ for $1 \leq i \leq k$. Then, enact
	the removal of $X$ by creating a new node $p$ with $\var(p) = e
	\setminus X$ and attaching $n$ as child to it with $\pred(p \to n) =
	\emptyset$. 
	\item (CSE) If the reduction operation on $\hypertrip(\ftrip{F})$ were
	to remove a hyperedge $e$ because it is a conditional subset of
	another hyperedge $\ell$, then $\ftrip{F}$ is modified as follows. Let
	$n_1,\dots, n_k$ (resp. $m_1, \dots, m_l$) be all the root nodes in
	$\forest(\ftrip{F})$ that are labeled by $e$ (resp. $\ell$), and let
	$T_1,\dots, T_k$ (resp. $U_1,\dots, U_l$) be their corresponding
	trees. Similar to the previous case, merge the $T_i$ (resp. $U_j$)
	into a single tree with new root $n$ labeled by $e$ (resp. $m$
	labeled by $\ell$). Then enact the removal of $e$ by creating a new
	node $p$ with $\var(p) = \ell$ and attaching $n$ and $m$ as children
	with $\pred(p \to n) = \{ \theta \in \preds(\ftrip F) \mid
	\var(\theta) \cap (e \setminus \ell) \not = \emptyset\}$ and $\pred(p
	\to m) =
	\emptyset$. 
	\item (FLT) If the reduction operation on $\hypertrip(\ftrip{F})$ were
	to remove non-empty set of predicates $\Theta$ because there exists
	a hyperedge $e$ with $\var(\Theta) \subseteq e$, then $\ftrip{F}$ is
	modified as follows. Let $n_1,\dots, n_k$ be all the root nodes in
	$\forest(\ftrip{F})$ that are labeled by $e$. Merge the
	corresponding trees into one tree by creating a new root $n$ labeled
	by $e$, and attaching $n_1, \dots, n_k$ as children with $\pred(n
	\to n_i) = \Theta$. Enact the removal of $\Theta$ by removing all
	$\theta \in \Theta$ from $\Theta(\ftrip F)$.
\end{itemize}

\begin{algorithm}[tb]
  \caption {Compute a \gjt pair}
  \label{alg:gjt}
  \begin{algorithmic}[1]
    \State \textbf{Input:} A \gcq $Q$.
    \medskip
    
    \State $\ftrip{F} \leftarrow (\forest(Q), \free(Q), \preds(Q))$
    \label{alg:gjt-1} 
    \While{ a reduction step is applicable to $\hypertrip(\ftrip{F})$}
    \label{alg:gjt-2}  
    \State enact the reduction on $\ftrip{F}$
    \EndWhile
    \label{alg:gjt-3}
    \State $X \leftarrow$ set of all root nodes in $\ftrip{F}$
    \label{alg:gjt-4}
    \State set $\preds(\ftrip F) := \emptyset$
    \label{alg:gjt-5}
    \While{ a reduction step is applicable to $\hypertrip(\ftrip{F})$}
    \label{alg:gjt-6}
    \State enact the reduction on $\ftrip{F}$
    \EndWhile
    \label{alg:gjt-7}
    \If{ $\hypertrip(\ftrip{F})$ is not the empty triplet}
    \label{alg:gjt-8}
    \State \textbf{error} ``$Q$ is not acyclic''
    \label{alg:gjt-9}
    \Else
    \State $T \leftarrow$ tree obtained by connecting all root nodes
    of $\ftrip{F}$'s forest to a new root, labeled by $\emptyset$
    \label{alg:gjt-10}
    \State $N \leftarrow$ all nodes in $X$ and their ancestors in $T$
    \label{alg:gjt-11}
    \State \textbf{return} $(T,N)$
    \label{alg:gjt-12}
    \EndIf
  \end{algorithmic}
\end{algorithm}

It is straightforward to check that these modifications of the forest
triplet $\ftrip{F}$ faithfully enact the corresponding operations on
$\hypertrip(\ftrip{F})$, in the following sense.

\begin{lemma}
	\label{lem:forest-enactment}
	Let $\ftrip{F}$ be a forest triplet and assume
	$\hypertrip(\ftrip{F}) \rewr \trip I$. Let $\ftrip{G}$ be the result
	of enacting this reduction operation on $\ftrip{F}$.  Then
	$\ftrip{G}$ is a valid forest triplet and $\hypertrip(\ftrip{G}) =
	\trip I$. 
\end{lemma}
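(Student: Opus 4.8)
The plan is to establish the two assertions — that $\ftrip{G}$ is a valid forest triplet and that $\hypertrip(\ftrip{G}) = \trip{I}$ — by a case analysis on which of the three operations (ISO), (CSE), (FLT) produced $\trip{I}$ from $\hypertrip(\ftrip{F})$. In all three cases the enactment begins by \emph{merging}, under a fresh node labelled by $e$, all root nodes $n_1,\dots,n_k$ of $\forest(\ftrip F)$ carrying a common label $e$. I would first isolate this merging as an auxiliary observation: attaching $n_1,\dots,n_k$ as children (with empty edge labels) of a new node $n$ with $\var(n)=e$ yields a valid \gjt whose variable set is $\bigcup_i \var(T_i)$ and whose root label is $e$. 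Validity is immediate here: each $n_i$ is a guard of $n$ (since $\var(n_i)=e=\var(n)$), so $n$ has a guard child; every leaf is still one of the original atom-labelled leaves (the added nodes are all interior); and connectedness follows from the forest root-intersection property, since any variable shared between two merged subtrees $T_i,T_j$ lies in $\var(n_i)\cap\var(n_j)=e=\var(n)$ and hence threads through $n$.

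Given this, I would discharge the remaining \gjt conditions for the extra nodes of each case. For (ISO) the new root $p$ has $\var(p)=e\setminus X\subseteq e=\var(n)$, so its child $n$ is a guard and connectedness is inherited. For (FLT) the edges $n\to n_i$ carry $\Theta$, and the edge condition $\var(\theta)\subseteq\var(n)\cup\var(n_i)=e$ holds precisely because (FLT) requires $\var(\Theta)\subseteq e$. The only genuinely delicate \gjt check is the (CSE) edge $p\to n$, labelled $\{\theta\mid\var(\theta)\cap(e\setminus\ell)\neq\emptyset\}$: to satisfy the edge condition I must show $\var(\theta)\subseteq\var(p)\cup\var(n)=\ell\cup e$. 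This is where the conditional-subset hypothesis $e\cse_{\trip H}\ell$ does its work, since $\ext(e\setminus\ell)\subseteq\ell$ gives $\var(\theta)\setminus(e\setminus\ell)\subseteq\ell$, hence $\var(\theta)\subseteq(e\setminus\ell)\cup\ell=e\cup\ell$. The guard condition for $p$ holds because its child $m$, labelled $\ell=\var(p)$, is a guard, and connectedness of the combined tree holds because any variable shared between the $n$-branch and the $m$-branch lies in $e\cap\ell\subseteq\var(p)$.

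The step I expect to be the main obstacle is re-establishing the \emph{forest root-intersection property}: for the new tree rooted at the fresh node $r^{*}$ ($p$ for (ISO) and (CSE), $n$ for (FLT)) and any untouched tree $T'$ with root $n'$, I need $\var(\text{new tree})\cap\var(T')=\var(r^{*})\cap\var(n')$. Applying the old forest property to each merged subtree reduces the left-hand side to $(\,\text{union of the merged root labels}\,)\cap\var(n')$; the task is to show this equals the new root's label intersected with $\var(n')$. For (FLT) the merged label is exactly $e=\var(n)$ and nothing more is needed. For (ISO) I must show $X\cap\var(n')=\emptyset$, and for (CSE) that $(e\setminus\ell)\cap\var(n')=\emptyset$. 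Both follow from one structural fact: every variable of $X$ (isolated) and every variable of $e\setminus\ell$ (non-equijoin, since $\equijoinvars(e)\subseteq\ell$ forces any $x\in e\setminus\ell$ out of $\equijoinvars(\trip H)$) occurs in at most one hyperedge of $\hypertrip(\ftrip F)$, namely $e$; as all $e$-labelled roots are among those merged, every untouched root $n'$ has label $\neq e$ and hence cannot contain such a variable. I would spell this out directly from the definitions of $\isolated$ and $\equijoinvars$.

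Finally, I would confirm $\hypertrip(\ftrip G)=\trip I$ componentwise. The hypergraph changes exactly as prescribed: (FLT) replaces the $e$-labelled roots by another $e$-labelled root, leaving $\hypergraph(\forest(\cdot))$ unchanged; (CSE) collapses the $e$- and $\ell$-labelled roots into the single $\ell$-labelled root $p$, removing $e$ and keeping $\ell$; and (ISO) relabels the root to $e\setminus X$, which is dropped from the hypergraph exactly when $e\setminus X=\emptyset$, matching the convention that empty hyperedges vanish. The component $\free$ is untouched by all three enactments, in agreement with the operations. For the predicate component, (ISO) sets $\fpreds{G}=\fpreds{F}$, while (FLT) and (CSE) obtain $\fpreds{G}$ by deleting from $\fpreds{F}$ precisely the predicates they place on tree edges ($\Theta$, resp. $\{\theta\mid\var(\theta)\cap(e\setminus\ell)\neq\emptyset\}$), which is the predicate update of the corresponding reduction. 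Assembling the three components completes the proof.
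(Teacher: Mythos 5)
Your proof is correct and is exactly the verification the paper leaves implicit: the paper states Lemma~\ref{lem:forest-enactment} with only the remark that it is ``straightforward to check,'' and your case analysis on (ISO), (CSE) and (FLT) — first isolating the merge of equally-labeled roots as a valid \gjt construction, then checking guards, connectedness, edge labels, the forest root-intersection property, and finally the componentwise equality $\hypertrip(\ftrip G)=\trip I$ — supplies that check in full. You also correctly identify and discharge the only two genuinely non-trivial points: the edge-label condition on the (CSE) edge $p\to n$, which indeed follows from $\ext_{\hypertrip(\ftrip F)}(e\setminus\ell)\subseteq\ell$, and the preservation of the root-intersection property for the new tree against untouched trees, which reduces to $X\cap\var(n')=\emptyset$ (ISO) and $(e\setminus\ell)\cap\var(n')=\emptyset$ (CSE) and follows, as you argue, because isolated variables and variables of $e\setminus\ell$ (the latter non-equijoin since $\equijoinvars_{\hypertrip(\ftrip F)}(e)\subseteq\ell$) occur in no hyperedge of $\hypergraph(\ftrip F)$ other than $e$.
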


We continue the explanation of Algorithm~\ref{alg:gjt}.
In line \ref{alg:gjt-4}, Algorithm~\ref{alg:gjt} records the set of
root nodes obtained after the first stage of reductions. It then
sets $\free(\ftrip F) = \emptyset$ in
line~\ref{alg:gjt-5} and continues with the second stage of reductions
in lines~\ref{alg:gjt-6}--\ref{alg:gjt-7}. It then employs
Theorem~\ref{thm:gyo} to check acyclicity of $Q$. If $Q$ is not
acyclic, it reports this in lines \ref{alg:gjt-8}--\ref{alg:gjt-9}. If
$Q$ is acyclic, then we know by Theorem~\ref{thm:gyo} that
$\hypertrip(\ftrip{F})$ has become the empty triplet. Note that
$\hypertrip(\ftrip{F})$ can be empty only if all the roots of
$\ftrip{F}$'s join forest are labeled by the empty set of
variables. As such, we can transform this forest into a join tree $T$
by linking all of these roots to a new unique root, also labeled
$\emptyset$. This is done in line~\ref{alg:gjt-10}. In
line~\ref{alg:gjt-11}, the set of nodes $N$ is computed, and consists
of all nodes identified at the end of the first stage
(line~\ref{alg:gjt-4}) plus all of their parents in $T$.

We will prove in Section~\ref{sec:gyo-correctness} that
Algorithm~\ref{alg:gjt} is correct, in the following sense.

\begin{theorem}
  \label{thm:construction-correct}
  Given a \gcq $Q$, Algorithm~\ref{alg:gjt} reports an error if $Q$ is
  cyclic. Otherwise, it returns a sibling-closed \gjt pair $(T,N)$
  with $T$ a \gjt for $Q$. If $Q$ is free-connex acyclic, then $(T,N)$
  is compatible with $Q$. Otherwise, $\free(Q) \subsetneq \var(N)$,
  but $\var(N)$ is minimal in the sense that for every other \gjt pair
  $(T', N')$ with $T'$ a \gjt for $Q$ we have $\var(N) \subseteq
  \var(N')$.
\end{theorem}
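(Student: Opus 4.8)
The plan is to delegate the combinatorial heavy lifting to the two results we already have — Theorem~\ref{thm:gyo} on GYO-reduction and Lemma~\ref{lem:forest-enactment} on faithful enactment — and then merely assemble them. First I would set up the correspondence between the forest triplet $\ftrip F$ maintained by Algorithm~\ref{alg:gjt} and the hypergraph reductions of Theorem~\ref{thm:gyo}: iterating Lemma~\ref{lem:forest-enactment}, after any sequence of enactments $\hypertrip(\ftrip F)$ equals the hypergraph triplet obtained by the corresponding sequence of triplet reductions, and $\forest(\ftrip F)$ is a valid \gjf throughout. By confluence (Proposition~\ref{prop:confluence}) the normal forms reached are independent of the order in which reductions are chosen, so at the end of the first stage $\hypertrip(\ftrip F)=\nf{\trip I}$, and after passing to the residual (resetting the free variables to $\emptyset$, so $\hypertrip(\ftrip F)=\tilde{\trip I}$) and running the second stage $\hypertrip(\ftrip F)=\nf{\trip J}$, with $\trip I,\trip J$ as in Theorem~\ref{thm:gyo}. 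The acyclicity test in Algorithm~\ref{alg:gjt} is precisely ``$\nf{\trip J}$ is nonempty'', so Theorem~\ref{thm:gyo}(1) immediately yields that the algorithm reports an error exactly when $Q$ is cyclic.

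Assuming $Q$ acyclic, I would next show $T$ is a \gjt for $Q$. Acyclicity forces $\nf{\trip J}$ to be empty, which can happen only when every surviving root of $\forest(\ftrip F)$ is labelled $\emptyset$; by the \gjf disjointness condition these surviving trees then share no variables, so attaching all roots under a fresh $\emptyset$-labelled root preserves the guard, connectedness, and predicate-edge conditions (trivially at the new root, and unchanged below it). Since enactment introduces no atoms and attaches each consumed predicate to exactly one edge, and since all predicates are consumed once the empty triplet is reached, we get $\atoms(T)=\atoms(Q)$ with the right multiplicities and $\preds(T)=\preds(Q)$; hence $T$ is a \gjt for $Q$.

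For $N$, recall it is the set $X$ of roots recorded after the first stage together with all their ancestors, so $N$ is upward-closed and contains the root and is therefore connex. Sibling-closedness I would prove contrapositively: every node outside $N$ is a strict descendant of some node $x\in X$, and because the second stage only ever attaches nodes \emph{above} existing roots (never below them), the parent and hence all children of such a node remain strictly below $x$ and thus outside $N$; so a node outside $N$ cannot be the sibling of a node inside $N$. To identify $\var(N)$, observe $\var(X)=\var(\hypergraph(\trip I))$ (empty roots contribute nothing), and every node created in the second stage is labelled by a subset of the union of its children's labels, so by induction neither the second stage nor the final root introduces new variables; therefore $\var(N)=\var(\hypergraph(\trip I))$.

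Finally I would relate $\var(N)$ to $\free(Q)$. Because free variables are equijoin variables they are never isolated and are preserved by every first-stage reduction (\textsf{CSE} keeps the equijoin variables of the eliminated edge inside the surviving one), so $\free(Q)\subseteq\var(\hypergraph(\trip I))=\var(N)$. If $Q$ is free-connex acyclic, Theorem~\ref{thm:gyo}(2) gives $\var(\hypergraph(\trip I))=\free(Q)$, whence $\var(N)=\free(Q)$ and $(T,N)$ is compatible with $Q$; otherwise $\free(Q)\subsetneq\var(N)$, and minimality is exactly Theorem~\ref{thm:gyo}(3), which states $\var(\hypergraph(\trip I))\subseteq\var(N')$ for every \gjt pair $(T',N')$ of $Q$, i.e.\ $\var(N)\subseteq\var(N')$. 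I expect the main obstacle to be the precise bookkeeping that pins the abstract triplet reduction to the concrete tree, namely that enactment maintains the \gjf invariant and the guard/connectedness conditions so that the linked tree is genuinely a \gjt, together with the inductive verification that $\var(N)=\var(\hypergraph(\trip I))$; these are the two places where the correspondence must be made airtight rather than merely asserted.
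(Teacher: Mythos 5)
Your proposal is correct and follows essentially the same route as the paper: the paper factors the argument into Theorem~\ref{thm:gyo} plus Proposition~\ref{prop:alg-gjt-soundess}, and the content of that proposition is exactly what you reconstruct inline — the correspondence between the forest triplet and the hypergraph reductions via iterated application of Lemma~\ref{lem:forest-enactment}, uniqueness of normal forms by confluence, validity of the assembled tree as a \gjt for $Q$, and connexity/sibling-closedness of $N$ with $\var(N)=\var(\hypergraph(\trip I))$ — after which the four claims follow from Theorem~\ref{thm:gyo} just as in the paper. The only cosmetic difference is that you prove $\var(N)=\var(X)$ by a direct induction on the labels of second-stage nodes, whereas the paper obtains it by applying Lemma~\ref{lem:sibling-closed-var-at-frontier} to $X$ viewed as the frontier of the sibling-closed connex set $N$.
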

It is straightforward to check that this algorithm runs in polynomial
time in the size of $Q$.

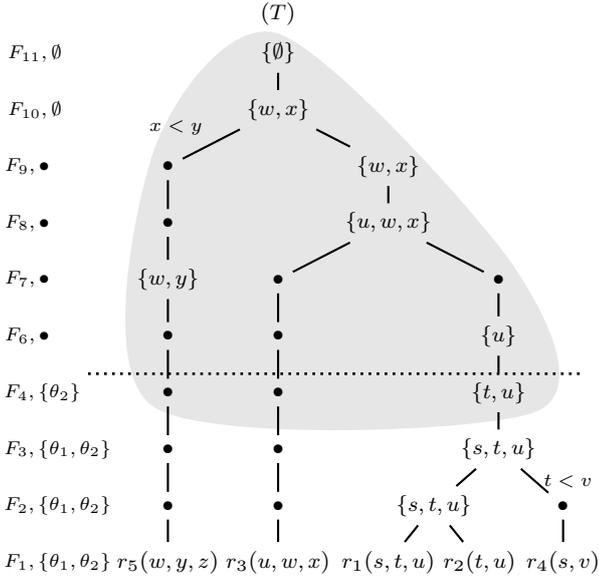
\begin{figure}[t]
\tikzset{
    expand bubble/.style={
        preaction={draw,line width=1.4pt},
        gray!20,fill,draw,line width=2pt,
    },
}
	\centering
	\begin{tikzpicture}[-,>=stealth',thick]
	\begin{scope}[xshift=-4cm]
	\draw[style=dotted, color=black, line width=1pt] (-2.5,-4.25) -- (4,-4.25);
	\node[color=white][label=center:{\scriptsize$F_1, \{\theta_1,\theta_2\}$}] at(-2.9, -6.75)(s){};
	\node[color=white][label=center:{\scriptsize$F_2, \{\theta_1,\theta_2\}$}] at(-2.9, -6.0)(s){};
	\node[color=white][label=center:{\scriptsize$F_3, \{\theta_1,\theta_2\}$}] at(-2.9, -5.25)(s1){};
	\node[color=white][label=center:{\scriptsize$F_4, \{\theta_2\}$}] at(-3.1, -4.50)(s2){};
	\node[color=white][label=center:{\scriptsize$F_6, \bullet$}] at(-3.3, -3.75)(s3){};
	\node[color=white][label=center:{\scriptsize$F_7, \bullet$}] at(-3.3, -3.0)(s4){};
	\node[color=white][label=center:{\scriptsize$F_8, \bullet$}] at(-3.3, -2.25)(s5){};
	\node[color=white][label=center:{\scriptsize$F_9, \bullet$}] at(-3.3, -1.50)(s6){};
	\node[color=white][label=center:{\scriptsize$F_{10},\emptyset$}] at(-3.2, -0.75)(s7){};
	\node[color=white][label=center:{\scriptsize$F_{11},\emptyset$}] at(-3.2, 0.0)(s8){};

	\node (w) [label={above:$(T)$}] { $\{\emptyset\}$ }[level distance=0.75cm,sibling distance=1.9cm]
	child{
		node(c8) {$\{w,x\}$}[sibling distance=2.9cm]
		child{
			node(c9) {$\bullet$} 
			child{
				node(c10) {$\bullet$}
				child{
					node(c11) {$\{w,y\}$}
					child{
                                          node (c112) { $\bullet$}
                                          child {
						node(c12) {$\bullet$}
						child{
							node(c13) {$\bullet$}
							child{
								node(c14) {$\bullet$}
								child{
									node(c15) {$r_5(w,y,z)$}
								}
							}
						}
					}
				}
                              }
			}edge from parent node[left,anchor=south east] {\scriptsize $x<y$}
		}
		child{
			node(c7) {$\{w,x\}$}
			child{
				node(c5) {$\{u,w,x\}$}[sibling distance=2.9cm]
				child{
                                   node (c15a) {$\bullet$}
                                    child {
					node(c15) {$\bullet$}
					child{
						node(c16) {$\bullet$}
						child{
							node(c17) {$\bullet$}
							child{
								node(c18) {$\bullet$}
								child{
									node(l4) {$r_3(u,w,x)$}
								}
							}
						}
					}
                                      }
				}
				child{
                                  node (c4a) {$\bullet$}
                                   child {
					node(c4){$\{u\}$}
					child{
						node(c3){ $\{t,u\}$ } 
						child{
							node (c1) { $\{s,t,u\}$ } [sibling distance=1.7cm]
							child{
								node (c2){$\{s,t,u\}$}[sibling distance=1.2cm]
								child{ node(l1) {$r_1(s,t,u)$}}
								child{ node(l2) {$r_2(t,u)$}}
							}
							child{
								node (c22){$\bullet$}
								child{
									node(l3){$r_4(s,v)$}
								}edge from parent node[right] {\scriptsize $t<v$}
							}
						}
					}
                                      }
				}
			}
		}
	};
	\begin{pgfonlayer}{background}
	\path[fill=gray!20]plot [smooth cycle,tension=0.5]
	coordinates {(w.north) (c9.north west) (c12.south west)
          (c3.south east)           (c4a.north east) };
	\end{pgfonlayer}
	
	\end{scope}
	\end{tikzpicture}
	\caption{\gjt Construction by  GYO-reduction.}
	\label{fig:gjtconstruction}
\end{figure}

\begin{example}
	\label{ex:forest}
	
	In Fig.~\ref{fig:gjtconstruction}, we show a \gjt $T$ and use
        this $\gjt$ to illustrate a number of \gjfs $F_1,\dots,
        F_{10}$ in the following way: let level $1$ be the leaf nodes,
        level $2$ the parents of the leaves, and so on. Then we take
        \gjf $F_i$ to be the set of all trees rooted at nodes at level
        $i$, for $1 \leq i \leq 10$, and with each level $i$, we
        mention the set of remaining predicates $\theta_i$ for $1 \leq
        i \leq k$ where $k$ is the number of predicates in $Q$.  Nodes
        (resp. predicates with each $F_i$) labeled by ``$\bullet$'' in
        Fig.~\ref{fig:gjtconstruction} indicates that the node (and
        hence tree, resp. predicates) was already present in $F_{i-1}$
        and did not change. These should hence not be interpreted as
        new nodes (resp. predicates changed). With this coding of
        forests, it is easy to see that for all $1 \leq i \leq 9$,
        $F_i = \hypergraph(\trip H_i)$ with $\trip H_i$ illustrated in
        Fig.~\ref{fig:hypergraphEx2} (note here that the hypergraph
        of residual of $\trip H_4$ i.e. $\trip H_5$ is the same as
        $\trip H_4$, hence we do not show the corresponding $F_5$).
        Furthermore, $\preds(F_i) = \preds(Q) \setminus \preds(\trip
        H_i)$ with $Q$ the \gcq from
        Example~\ref{ex:hypergraphEx2}. As such, the tree illustrates
        the sequence of \gjf triplets that is obtained by enacting the
        hypergraph reductions illustrated in
        Fig.~\ref{fig:hypergraphEx2}. For example, let $\ftrip F_1 =
        (F_1, \free(Q), \preds(Q)$. After enacting the removal of
        hyperedge $\{t,u\}$ from $\trip H_1$ to obtain $\trip H_2$ we
        obtain $\ftrip F_2 = (F_2, \free(Q), \preds(Q))$. Here, $F_2$
        is obtained by merging the single-node trees (i.e. labelled by
        the atoms in $Q$) $\{s,t,u\}$ and $\{t,u\}$ in to a single
        tree with root $\{s,t,u\}$. The shaded area illustrate the
        nodes in the connex subset $N$ computed by
        Algorithm~\ref{alg:gjt}.
\end{example}

We stress that Algorithm~\ref{alg:gjt} is non-deterministic in the
sense that the pair $(T,N)$ returned depends on the order in which the
reduction operations are performed. 

\subsection{Correctness}
\label{sec:gyo-correctness}
To prove theorems~\ref{thm:gyo} and \ref{thm:construction-correct} we show some propositions. 

\begin{proposition}
  \label{prop:alg-gjt-soundess}
  Let $Q$ be a \gcq. Assume $\hypertrip(Q) \rewr^*
  \nf{\trip{I}}$ and $\tilde{\trip{I}} \rewr^* \nf{\trip{J}}$.  If
  $\trip{J}$ is the empty triplet, then, when run on $Q$,
  Algorithm~\ref{alg:gjt} returns a pair $(T,N)$ s.t. $T$ is
  a \gjt for $Q$, $N$ is sibling-closed, and $\var(N) =
  \var(\hypergraph(\trip I))$.
\end{proposition}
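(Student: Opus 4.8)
The plan is to run the forest triplet $\ftrip F$ maintained by Algorithm~\ref{alg:gjt} in lockstep with the two purely hypergraph-level reduction sequences $\hypertrip(Q) \rewr^* \nf{\trip I}$ and $\tilde{\trip I} \rewr^* \nf{\trip J}$ supplied by the hypothesis. The bridge is Lemma~\ref{lem:forest-enactment}: enacting a reduction on $\ftrip F$ yields a valid forest triplet whose induced hypergraph triplet $\hypertrip(\ftrip F)$ undergoes exactly the corresponding reduction. Since initially $\hypertrip(\ftrip F)=\hypertrip(Q)$, a straightforward induction shows that after the first loop $\hypertrip(\ftrip F)$ is a normal form reachable from $\hypertrip(Q)$; by confluence (Proposition~\ref{prop:confluence}) normal forms are unique, so $\hypertrip(\ftrip F)=\nf{\trip I}$ there. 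Passing to the residual at line~\ref{alg:gjt-5} makes $\hypertrip(\ftrip F)=\tilde{\trip I}$, and the same argument applied to the second loop gives $\hypertrip(\ftrip F)=\nf{\trip J}$ at its termination. Because $\trip J$ is empty, $\hypergraph(\forest(\ftrip F))$ is empty, i.e.\ every surviving root of the forest is labelled $\emptyset$; this is precisely the condition under which line~\ref{alg:gjt-10} may join these roots under a fresh $\emptyset$-labelled root to form $T$.

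Next I would check that $T$ is a \gjt for $Q$. The leaves of $T$ are exactly the atom-nodes created in $\forest(Q)$ (every enactment introduces only interior nodes), so $\atoms(T)=\atoms(Q)$ with the right multiplicities. The guard, connectedness, and edge-label conditions of Definition~\ref{def:gjt} hold throughout because, by Lemma~\ref{lem:forest-enactment}, each $\ftrip F$ is a valid forest triplet, hence a disjoint union of valid \gjts. The only new structure is the final root: it is guard-valid since $\emptyset\subseteq\var(c)$ for any child $c$, and joining the trees cannot violate connectedness precisely because the \gjf\ property forces trees whose roots are labelled $\emptyset$ to share no variables. For $\preds(T)=\preds(Q)$ I would use the invariant that a predicate leaves $\preds(\ftrip F)$ only when a (CSE) or (FLT) enactment attaches it to an edge, with the containment $\var(\theta)\subseteq\var(p)\cup\var(c)$ following from the definitions of conditional subset and of ear case~(2); the join predicates surviving phase~1 (those in $\preds(\trip I)$) must therefore be placed during the second phase, which presupposes that this phase operates on the residual $\tilde{\trip I}$ that \emph{retains} $\preds(\trip I)$. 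Since $\trip J$ is empty, $\preds(\ftrip F)=\emptyset$ at the end, so every predicate of $Q$ ends up on exactly one edge.

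It remains to analyse $N$, the nodes of $X$ (the roots recorded at line~\ref{alg:gjt-4}) together with their ancestors. The key structural observation is that at the start of the second phase the forest roots are exactly $X$, and every subsequent enactment—and the join at line~\ref{alg:gjt-10}—creates nodes only as parents of currently existing roots. Hence every node created in or after the second phase is an ancestor of some node of $X$, and conversely every ancestor of a node of $X$ is created in or after the second phase; so $N$ is exactly the set of nodes at the ``$X$-level'' or strictly above it. Two consequences follow. First, $\var(N)=\var(X)$: by the ISO/CSE enactment rules the labels of second-phase nodes are subsets of hyperedges present after phase~1, and the variable universe only shrinks in phase~2, so nothing new is introduced above $X$; combined with $\var(X)=\var(\hypergraph(\trip I))$ (empty roots contribute nothing) this gives $\var(N)=\var(\hypergraph(\trip I))$. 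Second, $N$ is connex (it is the ancestor-closure of $X$ and contains the root) and sibling-closed: any node of $N$ other than the global root has its parent among the second-phase nodes or the final root, and every child of such a parent was a forest root at the moment the parent was created, hence lies at the $X$-level or above and is therefore in $N$; in particular all siblings of a node of $N$ lie in $N$.

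I expect the main obstacle to be the bookkeeping of the last two paragraphs taken together: making precise that $N$ is exactly the ``upper part'' of $T$ and that the children of every node created in the second phase are roots-at-creation-time (which is what delivers sibling-closedness), while keeping the predicate-placement invariant consistent with the residual step at line~\ref{alg:gjt-5}. Indeed, the delicate point is that the surviving join predicates of $\preds(\trip I)$ are genuinely reattached during phase~2 rather than discarded, so that $\preds(T)=\preds(Q)$; the correspondence and structural parts (Lemma~\ref{lem:forest-enactment}, confluence, the guard/connectedness checks) are comparatively mechanical once these invariants are stated correctly.
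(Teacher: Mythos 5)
Your proposal is correct and follows essentially the same route as the paper's proof: lockstep simulation via Lemma~\ref{lem:forest-enactment} plus uniqueness of normal forms to identify $\hypertrip(\ftrip F)$ with $\trip I$ and $\trip J$, the emptiness of $\trip J$ to justify the final root at line~\ref{alg:gjt-10}, the never-remove/only-add-parents invariant for $\atoms(T)=\atoms(Q)$, $\preds(T)=\preds(Q)$, connexity, and sibling-closedness (and you correctly read line~\ref{alg:gjt-5} as setting $\free$, not $\preds$, to $\emptyset$, so the residual retains the predicates). The only cosmetic deviation is that you derive $\var(N)=\var(X)$ directly from the fact that second-phase node labels are subsets of surviving hyperedges, whereas the paper observes that $X$ is the frontier of the sibling-closed set $N$ and invokes Lemma~\ref{lem:sibling-closed-var-at-frontier}; both arguments are sound.
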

\begin{proof}
  Assume that $\trip J$ is the empty triplet.  Algorithm~\ref{alg:gjt}
  starts in line~\ref{alg:gjt-2} by initializing $\ftrip{F} =
  (\forest(Q),\allowbreak \free(Q), \allowbreak \preds(Q))$. Clearly,
  $\hypertrip(\ftrip F) = \hypertrip(Q)$ at this
  point. Algorithm~\ref{alg:gjt} subsequently modifies $\ftrip F$
  throughout its execution. Let $\ftrip H$ denote the initial version
  of $\ftrip F$; let $\ftrip I$ denote the version of $\ftrip F$ when
  executing line~\ref{alg:gjt-4}; let $\tilde{\ftrip I}$ denote the
  version of $\ftrip F$ after executing line \ref{alg:gjt-5} and let
  $\ftrip J$ denote the version of $\ftrip F$ when executing line
  \ref{alg:gjt-8}. By repeated application of
  Lemma~\ref{lem:forest-enactment} we know that $\hypertrip(Q) =
  \hypertrip(\ftrip H) \rewr^* \hypertrip(\ftrip I)$. Furthermore,
  $\hypertrip(\ftrip I)$ is in normal form. Since also $\hypertrip(Q)
  \rewr^* \nf{\trip I}$ and normal forms are unique,
  $\hypertrip(\ftrip I) = \trip I$. Therefore,
  $\hypertrip(\tilde{\ftrip I}) = \tilde{\trip I}$. Again by repeated
  application of Lemma~\ref{lem:forest-enactment} we know that
  $\tilde{\trip I} = \hypertrip(\tilde{\ftrip I}) \rewr^*
  \hypertrip(\ftrip J)$. Moreover, $\hypertrip(\ftrip J)$ is in normal
  form. Since also $\tilde{\trip I} \rewr^* \nf{\trip J}$ and normal
  forms are unique, $\hypertrip(\ftrip J) = \trip J$. As $\trip J$
  is empty, we will execute lines~\ref{alg:gjt-10}--\ref{alg:gjt-12}.
  Since $\trip J$ is the empty hypergraph triplet, every root of every
  tree in $\forest(\ftrip J)$ must be labeled by $\emptyset$. By
  definition of join forests, no two distinct trees in $\forest(\ftrip
  J)$ hence share variables. As such, the tree $T$ obtained in line
  \ref{alg:gjt-10} by linking all of these roots to a new unique root,
  also labeled $\emptyset$, is a valid \gjt.

  We claim that $T$ is a \gjt for $Q$. Indeed, observe that $\atoms(T)
  = \atoms(Q)$ and the number of times that an atom occurs in $Q$
  equals the number of times that it occurs as a label in $T$. This is
  because initially $\forest(\ftrip H) = \forest(Q)$ and by enacting
  reduction steps we never remove nor add nodes labeled by
  atoms. Furthermore $\preds(T) = \preds(Q)$. This is because
  initially $\preds(\ftrip H) = \preds(Q)$ yet $\fpreds{J}$ is empty.
  This means that, for every $\theta \in \preds(Q)$, there was some
  reduction step that removed $\theta$ from the set of predicates of
  the current \gjf triplet $\ftrip F$. However, when enacting
  reduction steps we only remove predicates after we have added them
  to $\forest(\ftrip F)$. Therefore, every predicate in $\preds(Q)$
  must occur in $T$. Conversely, during enactment of reduction steps
  we never add predicates to $\forest(\ftrip F)$ that are not in
  $\fpreds F$, so all predicates in $T$ are also in $\preds(Q)$. Thus,
  $T$ is a \gjt for $Q$.

  It remains to show that $N$ is a sibling-closed connex subset of $T$
  and $\var(\hypergraph(\trip I)) = \var(N)$.  To this end, let $X$ be
  the set of all root nodes of $\forest(\ftrip I)$, as computed in
  Line~\ref{alg:gjt-4}. Since $\ftrip{J}$ is obtained from
  $\tilde{\ftrip I}$ by a sequence of reduction enactments, and since
  such enactments only add new nodes and never delete them, $M$ is a
  subset of nodes of $\forest(\ftrip J)$ and therefore also of $T$.
  As computed in Line~\ref{alg:gjt-11}, $N$ consists of $X$ and all
  ancestors of nodes of $X$ in $T$. Then $N$ is a connex subset of $T$
  by definition. Moreover, since enactments of reduction steps can
  only merge existing trees or add new parent nodes (never new child
  nodes), $N$ must also be sibling-closed. Furthermore, since
  $\hypertrip(\ftrip I) = \trip I$, $\hypergraph(\forest(\ftrip I)) =
  \hypergraph(\trip I)$. Thus, $\var(X) = \var(\hypergraph(\ftrip
  I)) = \var(\hypergraph(\trip I))$. Then, since $X$ is the frontier
  of $N$ and $N$ is sibling-closed we have $\var(N) = \var(X) =
  \var(\hypergraph(\trip I))$ by Lemma~\ref{lem:sibling-closed-var-at-frontier}.
\end{proof}

\begin{corollary}[Soundness]
	\label{prop:soundness}
	Let $Q$ be a \gcq and assume that $\hypertrip(Q) \rewr^* \nf{\trip{I}}$
	and $\tilde{\trip{I}} \rewr^* \nf{\trip{J}}$. Then:
	\begin{enumerate}
		\item If $\trip{J}$ is the empty triplet then $Q$ is acyclic.
		\item If $\trip{J}$ is the empty triplet and
		$\var(\hypergraph(\trip{I})) = \free(Q)$ then $Q$ is free-connex
		acyclic.
	\end{enumerate}
\end{corollary}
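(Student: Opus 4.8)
The plan is to obtain both statements as immediate consequences of Proposition~\ref{prop:alg-gjt-soundess}, which does all the structural work, together with the definition of acyclicity (Definition~\ref{def:acyclicity}) and the definition of free-connex acyclicity. The key observation is that the hypotheses of the corollary --- namely $\hypertrip(Q) \rewr^* \nf{\trip{I}}$ and $\tilde{\trip{I}} \rewr^* \nf{\trip{J}}$ --- are exactly the hypotheses of Proposition~\ref{prop:alg-gjt-soundess}. Hence, whenever $\trip{J}$ is the empty triplet, I may freely invoke the existence of the pair $(T,N)$ that Algorithm~\ref{alg:gjt} returns, knowing that $T$ is a \gjt for $Q$, that $N$ is a sibling-closed connex subset of $T$, and that $\var(N) = \var(\hypergraph(\trip{I}))$.

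For part~(1), I would argue as follows. Assuming $\trip{J}$ is empty, Proposition~\ref{prop:alg-gjt-soundess} guarantees that there exists a \gjt $T$ for $Q$. By Definition~\ref{def:acyclicity}, the mere existence of a \gjt for $Q$ means that $Q$ is acyclic, so there is nothing further to check.

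For part~(2), I would additionally exploit the returned connex set. Since $\trip{J}$ is empty, Proposition~\ref{prop:alg-gjt-soundess} again yields a pair $(T,N)$ with $T$ a \gjt for $Q$, $N$ a connex subset of $T$, and $\var(N) = \var(\hypergraph(\trip{I}))$. Combining this last equality with the extra hypothesis $\var(\hypergraph(\trip{I})) = \free(Q)$ gives $\var(N) = \free(Q)$. Thus $(T,N)$ is a \gjt pair in which $T$ is a \gjt for $Q$ and $\var(N) = \free(Q)$, which is precisely the condition for $Q$ to be compatible with $(T,N)$. Consequently $Q$ possesses a compatible \gjt pair and is therefore free-connex acyclic.

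I expect essentially no obstacle here: all of the difficult structural reasoning --- constructing $T$, proving it is a genuine \gjt for $Q$, and verifying that $N$ is a sibling-closed connex subset with the stated variable set --- has already been discharged inside Proposition~\ref{prop:alg-gjt-soundess}. The only point one must be careful about is to rely on that proposition for the fact that $N$ is a \emph{bona fide} connex subset of $T$ (and not merely an arbitrary set of nodes), since this is exactly what makes $(T,N)$ a legitimate \gjt pair in part~(2). Given that, both implications reduce to unwinding the two relevant definitions.
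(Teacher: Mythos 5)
Your proof is correct and matches the paper exactly: the paper states this result as an unproved corollary of Proposition~\ref{prop:alg-gjt-soundess}, intending precisely the unwinding you give --- part~(1) from the existence of the \gjt $T$ for $Q$ via Definition~\ref{def:acyclicity}, and part~(2) from $\var(N) = \var(\hypergraph(\trip{I})) = \free(Q)$ yielding a compatible \gjt pair. Your closing caution is also well placed, since the proposition's proof does explicitly establish that $N$ is a genuine (sibling-closed) connex subset of $T$, which is what licenses calling $(T,N)$ a \gjt pair.
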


To also show completeness, we will interpret a \gjt $T$ for a \gcq $Q$
as a ``parse tree'' that specifies the two-stage sequence of reduction
steps that can be done on $\hypertrip(Q)$ to reach the empty
triplet. Not all \gjts will allows us to do so easily, however, and we
will therefore restrict our attention to those \gjts that are
\emph{canonical}.

\begin{definition}[Canonical]
	A \gjt $T$ is \emph{canonical} if:
	\begin{enumerate}
		\item its root is labeled by $\emptyset$;
		\item every leaf node $n$ is the child of an internal node $m$ with
		$\var(n) = \var(m)$;
		\item for all internal nodes $n$ and $m$ with $n \not = m$ we have
		$\var(n) \not = \var(m)$; and
		\item for every edge $m \to n$ and all $\theta \in \preds(m \to n)$
		we have $\var(\theta) \cap (\var(n) \setminus \var(m)) \not = \emptyset$.
	\end{enumerate}
	A connex subset $N$ of $T$ is \emph{canonical} if every node in it
	is interior in $T$.  A \gjt pair $(T, N)$ is canonical if both $T$
	and $N$ are canonical.
\end{definition}

The following proposition, proven in Appendix~\ref{sec:app-gyo}, shows
that we may restrict our attention to canonical \gjt pairs without
loss of generality.
\begin{restatable}{proposition}{canonicalPair}
	\label{prop:canonical}
	For every \gjt pair there exists an equivalent canonical pair.
\end{restatable}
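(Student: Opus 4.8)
The plan is to keep the query fixed and transform the given pair $(T,N)$ by a finite sequence of equivalence-preserving operations, each establishing one of the four structural requirements of canonicity together with the requirement that $N$ contain only interior nodes. Recall that equivalence only constrains the multiset of atoms, the set $\preds(T)$, and the value $\var(N)$; the \emph{shape} of the tree and the placement of predicates on edges are otherwise free. I would order the operations so that each one preserves the invariants established by its predecessors: first force the root to be empty (condition~1), then give every leaf an interior parent with the same label (condition~2), then collapse internal nodes that carry the same label (condition~3), then relocate predicates (condition~4), and finally prune $N$.

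The structural bookkeeping steps are routine. For condition~1, if the root of $T$ is not labeled $\emptyset$ I would splice a fresh $\emptyset$-labeled root above it (with an empty edge-predicate set) and add this node to $N$; since $\emptyset\subseteq\var(c)$ for any child $c$ and $\emptyset$ contributes no variables, this stays a valid \gjt, keeps $N$ connex, and leaves $\var(N)$ unchanged. For condition~2, for each leaf $\ell$ labeled $r(\seq{x})$ whose parent $p$ has $\var(p)\neq\seq{x}$ I would insert a new interior node $q$ with $\var(q)=\seq{x}$ between $p$ and $\ell$, moving the predicates of the edge $p\to\ell$ onto $p\to q$ and setting $\preds(q\to\ell)=\emptyset$; validity and connectedness are immediate because $\var(q)=\var(\ell)$. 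For condition~4, I would relocate every predicate $\theta$ that violates it: edge-validity together with the violation gives $\var(\theta)\subseteq\var(m)$ for the edge $m\to n$ carrying $\theta$, and I would move $\theta$ up the path toward the (now empty) root, stopping at the first edge $p'\to c'$ at which a variable of $\theta$ is introduced, i.e.\ $\var(\theta)\cap(\var(c')\setminus\var(p'))\neq\emptyset$; such an edge exists because $\var(\theta)\neq\emptyset$ while the root has label $\emptyset$, and every intermediate edge is a legal destination since $\var(\theta)$ stays contained in the lower node. This step only relabels edges, so it preserves conditions~1--3. Finally, to make $N$ canonical I would delete every leaf from $N$: by condition~2 a leaf $\ell\in N$ has its variables available at its interior parent, which lies on the root-to-$\ell$ path and is therefore already in $N$, so removing $\ell$ keeps $N$ connex and does not change $\var(N)$.

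The crux is condition~3, eliminating internal nodes that share a label. Whenever two distinct internal nodes $u,v$ have $\var(u)=\var(v)=V$, the connectedness condition forces every node on the $u$--$v$ path to contain $V$; in particular the parent of the deeper copy is a guard of it, and any variable appearing in a child subtree of one copy but outside $V$ can appear neither above that copy nor in the other copy. These facts let me merge the two copies by reattaching all children of one copy under the other (with empty connecting predicates) and deleting the emptied copy, and I would argue that this preserves the \gjt axioms: connectedness holds because the only variables shared across the moved subtrees lie in $V$, and the receiving copy retains a guard child since its label is unchanged. Each merge strictly decreases the number of interior nodes, so the process terminates with all interior labels distinct, and because merging never changes a surviving node's label it cannot reintroduce a violation of condition~2.

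I expect the delicate part to be carrying the connex set $N$ through the merges while keeping $\var(N)$ fixed and $N$ connex. The right discipline is to eliminate, of the two copies, the one that is \emph{not} in $N$: its whole subtree is then disjoint from $N$, since a connex set containing a node contains the entire path from that node to the root, so the move is transparent to $N$. When both copies lie in $N$ one must additionally re-route the induced subtree of $N$ through the surviving copy, and when neither does the move leaves $N$ untouched. Verifying these cases, together with the termination measure and the degenerate situations where a copy is the root or becomes a node's unique guard child, is where the real work lies; the remaining conditions then follow by composing the steps in the stated order, and the resulting pair is canonical and equivalent to $(T,N)$.
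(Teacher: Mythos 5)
Your blueprint is the same as the paper's: a chain of local, equivalence-preserving surgeries, one per canonicality requirement, and your handling of conditions (1), (2) and (4) coincides with the paper's construction almost verbatim (in particular your relocation of a violating predicate $\theta$ --- note $\var(\theta)\subseteq\var(m)$, then walk toward the $\emptyset$-labelled root and stop at the first edge that introduces a variable of $\theta$ --- is exactly the paper's argument). You differ in two places: you establish condition (4) \emph{after} condition (3), where the paper does it first, and your condition-(3) step merges \emph{arbitrary} equal-labelled interior pairs, where the paper only collapses equal-labelled parent--child pairs (the reverse of its binarization construction). On the second point your instinct is sound and in fact necessary: collapsing only parent--child pairs does not establish condition (3). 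Consider the \gjt with interior path $\emptyset \to \{x\} \to \{x,y\} \to \{x\}$, a leaf $r_1(x,y)$ under the $\{x,y\}$ node and a leaf $r_2(x)$ under the bottom node: it satisfies conditions (1), (2), (4), violates (3), yet contains no parent--child pair with equal labels, so a merge of non-adjacent copies such as yours is what is really required there.

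The genuine gaps are in how you execute that merge, and both stem from postponing condition (4). First, predicates get lost, which breaks equivalence (equivalence demands $\preds(T)=\preds(T')$): reattaching the children of the discarded copy $v$ ``with empty connecting predicates'' drops $\preds(v\to c_i)$, and you never say what happens to the predicates on the edge $p_v\to v$ into the discarded copy --- they silently vanish with $v$. Both are repairable: $\preds(v\to c_i)$ can travel unchanged with $c_i$ because $\var(u)=\var(v)$, and since connectedness forces $\var(v)\subseteq\var(p_v)$ (take for $v$ a copy that is not an ancestor of the other, which your rule of discarding the copy outside $N$ respects), every $\theta$ on $p_v\to v$ has $\var(\theta)\subseteq\var(p_v)$ and can be parked on a surviving edge at $p_v$. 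Your text does neither, whereas the paper's ordering makes the issue disappear: once condition (4) holds, an edge whose child label is contained in its parent label can carry no predicate at all. Second, the guard degeneracy you defer has exactly one resolution, and it is the paper's parent--child collapse: with $\var(v)\subseteq\var(p_v)$ as above, $v$ can be a guard child of $p_v$ only if $\var(p_v)=\var(v)$, i.e.\ only if $(p_v,v)$ is itself an equal-labelled parent--child pair, in which case you must merge $v$ into $p_v$ (which then inherits $v$'s guard child) rather than into the distant copy; in every other case $p_v$ keeps its guard and at least one child, so the tree stays a valid \gjt. Finally, a bookkeeping slip: when you interpose $q$ between $p$ and a leaf $\ell\in N$, you must also put $q$ into $N$ (or substitute it for $\ell$, as the paper does); otherwise $N$ is temporarily not connex, and the connexity-based argument in your final pruning step no longer applies.
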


We also require the following auxiliary notions and insights.
First, if $(T,N)$ is a \gjt pair, then define the \emph{hypergraph
	associated to $(T,N)$}, denoted $\hypergraph(T,N)$, to be the
hypergraph formed by node labels in $N$,
$$\hypergraph(T,N) = \{ \var_T(n) \mid n \in N, \var_T(n) \not =
\emptyset \}.$$
Further, define $\preds(T,N)$ to be the set of all
predicates occurring on edges between nodes in $N$. For a hyperedge
$\seq{z}$, define the \emph{hypergraph triplet} of $(T,N)$
w.r.t. $\seq{z}$, denoted $\hypertrip(T,N,\seq{z})$ to be the
hypergraph triplet $(\hypergraph(T,N), \allowbreak \seq{z},
\preds(T,N))$. 

The following technical Lemma shows that we can use canonical pairs as
``parse'' trees to derive a sequence of reduction steps. Its proof can be found in Appendix~\ref{sec:app-gyo}.
\begin{restatable}{lemma}{progress}
	\label{lem:progress}
	Let $(T,N_1)$ and $(T,N_2)$ be canonical \gjt pairs with $N_2 \subseteq
	N_1$. Then $\hypertrip(T,N_1, \seq{z}) \rewr^* \hypertrip(T,N_2,
	\seq{z})$ for every $\seq{z} \subseteq \var(N_2)$.
\end{restatable}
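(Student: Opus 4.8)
The plan is to prove the statement by induction on $|N_1 \setminus N_2|$, peeling off one node of $N_1 \setminus N_2$ at a time, starting from the bottom of the tree induced by $N_1$. In the base case $N_1 = N_2$ and $\hypertrip(T,N_1,\seq{z}) = \hypertrip(T,N_2,\seq{z})$, so there is nothing to reduce. For the inductive step I would first locate a convenient node to remove: choose $n \in N_1 \setminus N_2$ of maximal depth in $T$. I claim $n$ is a leaf of the subtree of $T$ induced by $N_1$. Indeed, if $n$ had a child $c$ in $N_1$, then maximality forces $c \in N_2$; but $N_2$ is connex and contains the root, so the whole root-to-$c$ path, which includes its parent $n$, lies in $N_2$, contradicting $n \notin N_2$. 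Write $p$ for the parent of $n$ (which lies in $N_1$ since $N_1$ is connex and $n$ is not the root), and let $\theta_0$ be the predicate labelling the edge $p \to n$, if any. Since $N_1 \setminus \{n\}$ is again a canonical connex subset containing $N_2$, and $\seq{z} \subseteq \var(N_2) \subseteq \var(N_1 \setminus \{n\})$, it suffices to show $\hypertrip(T, N_1, \seq{z}) \rewr^* \hypertrip(T, N_1 \setminus \{n\}, \seq{z})$ and then invoke the induction hypothesis.

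The heart of the argument is to show that removing node $n$ corresponds to a legal reduction on the triplet $\trip H := \hypertrip(T,N_1,\seq{z})$. Set $e = \var(n)$ and $\ell = \var(p)$. Using the connectedness condition I would first establish that every equijoin variable of $\trip H$ lying in $e$ must lie in $\ell$: if $x \in e$ also occurs in $\seq{z}$ or in the label of a second node of $N_1$, then (since $\seq{z} \subseteq \var(N_1 \setminus \{n\})$) some node $m \neq n$ in $N_1$ carries $x$, and the unique $T$-path from $n$ to $m$ runs inside $N_1$ and through $p$, forcing $x \in \ell$. Hence $\equijoinvars_{\trip H}(e) \subseteq \ell$. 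Next, because $n$ is a leaf of the induced tree, any variable of $e \setminus \ell$ occurs in no $N_1$-node other than $n$, so the only predicate of $\preds(T,N_1)$ that can mention it is $\theta_0$ (the sole edge of $N_1$ incident to $n$); as $\var(\theta_0) \subseteq \ell \cup e$, this yields $\ext_{\trip H}(e \setminus \ell) \subseteq \ell$. Together these give $e \cse_{\trip H} \ell$.

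With this established I would split on whether $p$ is the root. If $p$ is not the root, then $\ell \neq \emptyset$ is a genuine hyperedge distinct from $e$ (canonical trees have distinct internal labels), so a single CSE step eliminates $e$; moreover CSE deletes precisely those predicates meeting $e \setminus \ell$, which by the previous paragraph is exactly $\theta_0$ (canonical condition (4) guarantees $\var(\theta_0)$ meets $\var(n) \setminus \var(p)$, so $\theta_0$ is indeed removed). The resulting triplet is exactly $\hypertrip(T, N_1 \setminus \{n\}, \seq{z})$. If $p$ is the root, then $\ell = \emptyset$, whence $\equijoinvars_{\trip H}(e) = \emptyset$ and $\var(\theta_0) \subseteq e$, so $\theta_0$ is a filter and $e$ is an ear of type (2): one FLT step removes $\theta_0$, after which every variable of $e$ is isolated, and a single ISO step removes all of $e$, again producing $\hypertrip(T, N_1 \setminus \{n\}, \seq{z})$.

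Finally I would compose $\hypertrip(T, N_1, \seq{z}) \rewr^* \hypertrip(T, N_1 \setminus \{n\}, \seq{z}) \rewr^* \hypertrip(T, N_2, \seq{z})$, the last step by the induction hypothesis. The main obstacle is the middle step: carefully verifying the two defining inclusions of $e \cse_{\trip H} \ell$ and confirming that the CSE (resp.\ FLT$+$ISO) step removes exactly the hyperedge $\var(n)$ and exactly the predicate $\theta_0$ --- no more and no less --- since any mismatch would leave a triplet different from $\hypertrip(T, N_1 \setminus \{n\}, \seq{z})$. This relies delicately on the connectedness condition together with canonicity, specifically the distinctness of internal labels and condition (4).
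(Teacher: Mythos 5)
Your proposal is correct and takes essentially the same route as the paper: the paper likewise inducts on $|N_1 \setminus N_2|$, at each step removes a frontier node $n$ of $N_1$ lying outside $N_2$ (its Lemma~\ref{lem:progress-1step}), establishes $\var(n) \cse \var(p)$ from the connectedness condition together with canonicity conditions (3) and (4), and splits exactly as you do into a single CSE step when the parent is not the root and an FLT-then-ISO step when the parent is the root (with label $\emptyset$). The only cosmetic differences are that you inline what the paper factors into auxiliary lemmas, you make the existence of a suitable frontier node explicit via a maximal-depth choice where the paper merely asserts it, and you treat $\preds(p \to n)$ as a single predicate $\theta_0$ rather than a set of predicates, which changes nothing in the argument.
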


We require the following additional lemma, proven in
Appendix~\ref{sec:app-gyo}:
\begin{restatable}{lemma}{subsetRemoval}
  \label{lem:subset-removal}
  Let $H_1$ and $H_2$ be two hypergraphs such that for all $e \in H_2$
  there exists $\ell \in H_1$ such that $e \subseteq \ell$. Then $(H_1 \cup
  H_2, \seq{z}, \Theta) \rewr^* (H_1, \seq{z}, \Theta)$, for every
  hyperedge $\seq{z}$ and set of predicates $\Theta$.
\end{restatable}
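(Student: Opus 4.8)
The plan is to remove, one at a time, exactly those hyperedges of $H_1 \cup H_2$ that do not already belong to $H_1$, using only the conditional subset elimination rule (CSE). Write $S = H_2 \setminus H_1$ for this set of ``extra'' hyperedges; since a hypergraph is a set of hyperedges, $H_1 \cup H_2$ and $H_1$ differ precisely in $S$. Fix an arbitrary enumeration $e_1, \dots, e_k$ of $S$. For each $e_i \in S \subseteq H_2$, the hypothesis provides some $\ell_i \in H_1$ with $e_i \subseteq \ell_i$; since $e_i \notin H_1$ while $\ell_i \in H_1$, these are distinct hyperedges, so in fact $e_i \subsetneq \ell_i$.

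The key observation I would isolate is that $e \subseteq \ell$ with $e \neq \ell$ forces $e \cse_{\trip H} \ell$ in \emph{any} triplet $\trip H$ containing both, independently of that triplet's equijoin- and isolated-variable structure. Indeed, $\equijoinvars_{\trip H}(e) = e \cap \equijoinvars(\trip H) \subseteq e \subseteq \ell$, and since $e \setminus \ell = \emptyset$ we have $\ext_{\trip H}(e \setminus \ell) = \ext_{\trip H}(\emptyset) = \emptyset \subseteq \ell$; both defining conditions of $\cse_{\trip H}$ hold. This is what makes the argument robust: although CSE-applicability depends a priori on the whole triplet, plain set containment already guarantees it, so neither the order of removals nor the changing equijoin/isolated classification can interfere.

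I would then proceed by induction on the number of not-yet-removed members of $S$ (equivalently, process $e_1, \dots, e_k$ in order). At each step the target hyperedge $\ell_i$ is still present, because every removal so far deleted only members of $S \subseteq H_2 \setminus H_1$, whereas $\ell_i \in H_1$; hence by the previous paragraph CSE is applicable to remove $e_i$ via $\ell_i$. Moreover CSE deletes from the predicate set only those $\theta$ with $\var(\theta) \cap (e_i \setminus \ell_i) \neq \emptyset$, and $e_i \setminus \ell_i = \emptyset$, so the predicate component stays equal to $\Theta$ throughout, while the free-variable component $\seq{z}$ is never touched by CSE. After all $k$ steps the hypergraph component is exactly $H_1$, giving $(H_1 \cup H_2, \seq{z}, \Theta) \rewr^* (H_1, \seq{z}, \Theta)$; when $S = \emptyset$ the empty sequence of reductions already suffices.

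There is essentially no hard step here. The only point requiring care is the observation in the second paragraph, namely that the abstract definition of $\cse_{\trip H}$ collapses to ordinary subset containment once $e \subseteq \ell$, which simultaneously guarantees CSE-applicability and ensures that no predicate is ever discarded. I would state this as a small standalone claim and then feed it into the straightforward induction sketched above.
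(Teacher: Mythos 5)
Your proposal is correct and follows essentially the same route as the paper's proof: induction on the number of hyperedges in $H_2 \setminus H_1$, removing each one by a (CSE) step justified by the observation that $e \subseteq \ell$ already yields $\equijoinvars(e) \subseteq \ell$ and $\ext(e \setminus \ell) = \emptyset \subseteq \ell$, so that no predicate is ever deleted. The only cosmetic difference is that you isolate this observation as a standalone claim and note the strict containment $e \subsetneq \ell$ explicitly, while the paper folds both into the inductive step by choosing $\ell \in H_1 \setminus \{e\}$.
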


We these tools in hand we can prove completeness.
\begin{proposition}
  \label{prop:completeness-core}
  Let $Q$ be a \gcq, let $T$ be a \gjt for $Q$ and let $N$ be a connex
  subset of $T$ with $\free(Q) \subseteq \var(N)$. Assume that
  $\hypertrip(Q) \rewr^* \nf{\trip{I}}$ and $\tilde{\trip{I}} \rewr^*
  \nf{\trip{J}}$. Then $\trip J$ is the empty triplet and
  $\var(\hypergraph(\trip{I})) \subseteq \var(N)$. 
\end{proposition}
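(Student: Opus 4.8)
The plan is to treat the given \gjt $T$ as a ``parse tree'' that drives the reduction, translating every claim into a statement about reductions of the tree-induced triplets $\hypertrip(T,\cdot,\cdot)$, where Lemma~\ref{lem:progress} applies. First I would invoke Proposition~\ref{prop:canonical} to replace $(T,N)$ by an equivalent \emph{canonical} pair; this preserves both hypotheses ($T$ stays a \gjt for $Q$, and $\free(Q)\subseteq\var(N)$ is unaffected since equivalence preserves $\var(N)$) and the conclusion (which mentions only $\var(N)$). So assume $(T,N)$ canonical, let $n_0$ be its root and $V^{\ast}$ the set of all interior nodes of $T$, and record that $\{n_0\}\subseteq N\subseteq V^{\ast}$, with $\{n_0\}$ and $V^{\ast}$ both canonical connex subsets and $\var(\{n_0\})=\emptyset$.

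The key bridging step I would establish is $\hypertrip(T,V^{\ast},\free(Q))\rewr^{*}\hypertrip(Q)$. Two observations drive it. First, in a canonical tree no predicate sits on an edge into a leaf (canonicity condition~4 is vacuous there, since a leaf shares its parent's label), so every predicate lies on an edge inside $V^{\ast}$; hence the predicate component $\preds(T,V^{\ast})$ equals $\preds(T)=\preds(Q)$. Second, each atom's label equals that of its interior parent, giving $\hypergraph(Q)\subseteq\hypergraph(T,V^{\ast})$, while iterating the guard condition down to a leaf shows that every interior label is a subset of some atom label. Applying Lemma~\ref{lem:subset-removal} with $H_1=\hypergraph(Q)$ and $H_2=\hypergraph(T,V^{\ast})$ then yields the claimed reduction.

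For part~(b) I combine this bridge with Lemma~\ref{lem:progress}: since $N\subseteq V^{\ast}$ and $\free(Q)\subseteq\var(N)$, we get $\hypertrip(T,V^{\ast},\free(Q))\rewr^{*}\hypertrip(T,N,\free(Q))$. On the other hand, chaining the bridge with the hypothesis $\hypertrip(Q)\rewr^{*}\nf{\trip I}$ gives $\hypertrip(T,V^{\ast},\free(Q))\rewr^{*}\trip I$ with $\trip I$ in normal form. By confluence (Proposition~\ref{prop:confluence}) and uniqueness of normal forms, $\trip I$ is then also the normal form of $\hypertrip(T,N,\free(Q))$, so $\hypertrip(T,N,\free(Q))\rewr^{*}\trip I$. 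Since each of ISO, CSE, FLT can only shrink the variable set of the hypergraph, monotonicity gives $\var(\hypergraph(\trip I))\subseteq\var(\hypergraph(T,N))=\var(N)$, as required.

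For part~(a) I pass to residuals. A routine ``residual transfer'' fact is that $\trip H\rewr\trip I$ implies $\tilde{\trip H}\rewr\tilde{\trip I}$: none of the three operations touches $\free$, and setting $\free=\emptyset$ only enlarges $\isolated$ and shrinks $\equijoinvars$, so every applicable step remains applicable. Applying this to $\hypertrip(T,V^{\ast},\free(Q))\rewr^{*}\trip I$ gives $\hypertrip(T,V^{\ast},\emptyset)\rewr^{*}\tilde{\trip I}$. Independently, Lemma~\ref{lem:progress} with $N_1=V^{\ast}$, $N_2=\{n_0\}$ and $\seq{z}=\emptyset$ gives $\hypertrip(T,V^{\ast},\emptyset)\rewr^{*}\hypertrip(T,\{n_0\},\emptyset)$, and the latter is the empty triplet. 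Thus $\tilde{\trip I}$ and the empty triplet are both reducts of $\hypertrip(T,V^{\ast},\emptyset)$; since the empty triplet is a normal form, confluence forces $\tilde{\trip I}\rewr^{*}$ the empty triplet, so $\trip J$, the unique normal form of $\tilde{\trip I}$, is empty. The main obstacle I anticipate is the bridging lemma and, more broadly, keeping the \emph{directions} straight: the tree-induced triplet reduces \emph{to} $\hypertrip(Q)$ rather than from it, so I must everywhere route through confluence plus uniqueness of normal forms instead of hoping for a direct reduction $\hypertrip(Q)\rewr^{*}\hypertrip(T,N,\free(Q))$; the residual-transfer step is the other place where the $\free$-dependence of ``isolated'' and ``equijoin variable'' must be checked carefully.
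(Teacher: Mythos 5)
Your proposal is correct and follows essentially the same route as the paper's own proof: canonicalize $(T,N)$ via Proposition~\ref{prop:canonical}, establish the bridge $\hypertrip(T,A,\free(Q))\rewr^{*}\hypertrip(Q)$ (the paper calls your $V^{\ast}$ the set $A$) using exactly the two observations you give together with Lemma~\ref{lem:subset-removal}, then obtain $\var(\hypergraph(\trip I))\subseteq\var(N)$ from Lemma~\ref{lem:progress} plus confluence and uniqueness of normal forms, and finally get emptiness of $\trip J$ by the residual-transfer fact combined with Lemma~\ref{lem:progress} applied down to the root. Your explicit care about the direction of reduction (routing through confluence rather than reducing from $\hypertrip(Q)$) and about the $\free$-dependence in the residual-transfer step matches precisely how the paper handles these points.
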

\begin{proof}
	By Proposition~\ref{prop:canonical} we may assume without loss of
	generality that $(T, N)$ is a canonical \gjt pair.  Let $A$ be the
	set of all of $T$'s interior nodes. Clearly, $A$ is a connex subset
	of $T$ and $\var(A) \subseteq \var(Q)$. Furthermore, because for
	every atom $r(\overline{x})$ in $Q$ there is a leaf node $l$ in $T$
	labeled by $r(\overline{x})$ (as $T$ is a \gjt for $Q$), which has a
	parent interior node $n_l$ labeled $\overline{x}$ (because $T$ is
	canonical), also $\var(Q) \subseteq \var(A)$. Therefore, $\var(A) =
	\var(Q)$. By the same reasoning, $\hypergraph(Q) \subseteq
	\hypergraph(T,A)$. Therefore, $\hypergraph(T, A) = \hypergraph(T,A)
	\cup \hypergraph(Q)$. Furthermore, because every interior node in a
	\gjt has a guard descendant, and the leaves of $T$ are all labeled
	by atoms in $Q$, we know that for every node $n \in A$ there exists
	some hyperedge $f \in \hypergraph(Q)$ such that $\var(n) \subseteq
	\var(f)$.  In addition, we claim that $\preds(T,A) =
	\preds(Q)$. Indeed, $\preds(T,A) \subseteq \preds(Q)$ since $T$ is a
	\gjt for $Q$. The converse inclusion follows from canonicality
	properties (2) and (4): because leaf nodes in a canonical \gjt have
	a parent labeled by the same hyperedge, there can be no predicates
	on edges to leaf nodes in $T$. Thus, all predicates in $T$ are on
	edges between interior nodes, i.e., in $\preds(T,A)$. Then, because
	every predicate in $Q$ appears somewhere in $T$ (since $T$ is a \gjt
	for $Q$), we have $\preds(Q) \subseteq \preds(T,A)$.
	From all of the observations made so far and
	Lemma~\ref{lem:subset-removal}, we obtain:
	\begin{align*}
	& \hypertrip(T, A, \free(Q)) \\
	& = (\hypergraph(T, A), \free(Q), \preds(T, A)) \\ & =
	(\hypergraph(T, A) \cup \hypergraph(Q), \free(Q), \preds(T, A)) \\
	& \rewr^* (\hypergraph(Q),
	\free(Q), \preds(T,A)) \\
	& = (\hypergraph(Q), \free(Q), \preds(Q)) = \hypertrip(Q)
	\end{align*}
	Thus $\hypertrip(T, A, \free(Q)) \rewr^* \hypertrip(Q) \rewr^*
	\trip{I}$. Furthermore, because $(T,N)$ is also canonical with $N
	\subseteq A$ and $\free(Q) \subseteq \var(N)$ we have $\hypertrip(T,
	A, \free(Q)) \rewr^* \hypertrip(T, N, \free(Q))$ by
	Lemma~\ref{lem:progress}. Then, because reduction is confluent
	(Proposition~\ref{prop:confluence}) we obtain that $\hypertrip(T,
	\allowbreak N,\allowbreak \free(Q))$ and $\trip{I}$ can be reduced
	to the same triplet. Because $\trip I$ is in normal form,
	necessarily $\hypertrip(T, N, \free(Q)) \rewr^* \trip I$. Since
	reduction steps can only remove nodes and hyperedges (and never add
	them), $\var(\hypergraph(\trip{I})) \subseteq \var(N)$.
	
	It remains to show that $\trip{J}$ is the empty triplet. Hereto,
	first verify the following. For any hypergraph triplets $\trip U$
	and $\trip V$, if $\trip U \rewr^* \trip V$ then also $\tilde{\trip
		U} \rewr^* \tilde{\trip V}$. From this, $\hypertrip(T, A,
	\free(Q)) \rewr^* \trip{I}$, and the fact that $\hypertrip(T,A,
	\emptyset)$ is the residual of $\hypertrip(T,A, \free(Q))$ we
	conclude $\hypertrip(T,A,\emptyset) \allowbreak \rewr^*
	\tilde{\trip{I}}$. Then, because $\tilde{\trip I} \rewr^* \trip{J}$,
	it follows that $\hypertrip(T, A, \emptyset) \allowbreak \rewr^*
	\trip J$.  Let $r$ be $T$'s root node, which is labeled by
	$\emptyset$ since $T$ in canonical. Then $\{r\}$ is a connex
	subset of $T$. By Lemma~\ref{lem:progress}, $\hypertrip(T, A,
	\emptyset) \rewr^* \hypertrip(T, \{r\}, \emptyset)$. Now observe
	that the hypergraph of $\hypertrip(T, \{r\}, \emptyset)$ is empty,
	and its predicate set is also empty. Therefore, $\hypertrip(T,
	\{r\}, \emptyset)$ is the empty hypergraph triplet. In particular,
	it is in normal form.  But, since $\trip J$ is also in normal form
	and normal forms are unique, $\trip J$ must also be the empty
	triplet.
\end{proof}

\begin{corollary}[Completeness]
	\label{prop:completeness}
	Let $Q$ be a \gcq. Assume that $\hypertrip(Q) \rewr^* \nf{\trip{I}}$
	and $\tilde{\trip{I}} \rewr^* \nf{\trip{J}}$. 
	\begin{enumerate}
		\item If $Q$ is acyclic, then $\trip{J}$ is the empty triplet.
		\item If $Q$ is free-connex acyclic, then $\trip{J}$ is the empty
		triplet and $\var(\hypergraph(\trip{I})) = \free(Q)$.
		\item For every \gjt $T$ of $Q$ and every connex subset $N$ of $T$
		it holds that $\var(\hypergraph(\trip{I})) \subseteq \var(N)$.
	\end{enumerate}
\end{corollary}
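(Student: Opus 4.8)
The plan is to obtain all three statements from Proposition~\ref{prop:completeness-core}, which already proves, for any \gjt $T$ for $Q$ and any connex subset $N$ with $\free(Q) \subseteq \var(N)$, that $\trip{J}$ is empty and $\var(\hypergraph(\trip{I})) \subseteq \var(N)$. Because reduction is confluent and normal forms are therefore unique (Proposition~\ref{prop:confluence}), the triplets $\trip{I}$ and $\trip{J}$ fixed in the corollary are the very ones the proposition talks about, so each instantiation of the proposition refers to the same $\trip{I}$ and $\trip{J}$.

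For statement~(1) I would use acyclicity (Definition~\ref{def:acyclicity}) to pick a \gjt $T$ for $Q$ and apply the proposition with $N$ the set of all nodes of $T$. This $N$ is a connex subset (it induces all of $T$) and satisfies $\free(Q) \subseteq \var(N)$, since the leaves of $T$ are labeled by the atoms of $Q$ and hence already carry every variable of $Q$. The proposition then gives that $\trip{J}$ is empty. Statement~(3) requires no further work: it is exactly the second conclusion of Proposition~\ref{prop:completeness-core} for the given pair $(T,N)$, the relevant connex subsets being those that can enumerate the output, i.e. those satisfying the proposition's hypothesis $\free(Q) \subseteq \var(N)$.

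Statement~(2) is where the remaining content sits. Free-connex acyclicity supplies a compatible pair $(T,N)$ with $\var(N) = \free(Q)$, so the proposition yields $\trip{J}$ empty and the inclusion $\var(\hypergraph(\trip{I})) \subseteq \var(N) = \free(Q)$. What the proposition does not give is the reverse inclusion $\free(Q) \subseteq \var(\hypergraph(\trip{I}))$, and this I would establish separately by showing that a first-stage reduction can never delete an output variable. The argument hinges on the fact that every $v \in \free(\trip{H})$ is, by definition, an equijoin variable throughout the reduction (the $\free$-component is left untouched by (ISO), (CSE) and (FLT)): thus $v$ is never isolated and so survives (ISO); (FLT) removes only predicates; and when (CSE) discards $e \cse_{\trip{H}} \ell$ it keeps $\equijoinvars_{\trip{H}}(e) \subseteq \ell$, so any output variable of $e$ remains inside the surviving $\ell$. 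Since initially $\free(Q) \subseteq \var(\hypergraph(Q))$ (each output variable occurs in a non-empty atom hyperedge), this invariant propagates to give $\free(Q) \subseteq \var(\hypergraph(\trip{I}))$, and combining the two inclusions yields the desired equality.

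I expect the only genuinely delicate point to be this last invariance argument for~(2): one must check that discarding a conditional subset never strands an output variable and that the interplay between (CSE) and (ISO) cannot indirectly isolate one. Invoking the proposition for the other statements is purely a matter of choosing the right connex subset.
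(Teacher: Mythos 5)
Your proposal matches the paper's proof essentially step for step: part (1) via the all-nodes connex subset of a \gjt for $Q$, part (3) as a direct instance of Proposition~\ref{prop:completeness-core}, and part (2) by combining that proposition (giving $\trip J$ empty and $\var(\hypergraph(\trip I)) \subseteq \free(Q)$) with the invariant that no reduction step ever removes an output variable from the hypergraph (giving the reverse inclusion). Your case-by-case check of that invariant---output variables are never isolated so (ISO) cannot delete them, (FLT) touches only predicates, and (CSE) keeps the equijoin variables of the discarded hyperedge inside the surviving one---is exactly the verification the paper states as ``first verify the following'' and leaves to the reader.
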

\begin{proof}
	(1) Since $Q$ is acyclic, there exists a \gjt $T$ for $Q$. Let $N$
	be the set of all of $T$'s nodes. Then $N$ is a connex subset of $T$
	and $\free(Q) \subseteq \var(N) = \var(Q)$. The result then follows
	from Proposition~\ref{prop:completeness-core}.
	
	(2) Since $Q$ is free-connex acyclic, there exists a \gjt pair
	$(T,N)$ compatible with $Q$. In particular, $\var(N) = \free(Q)$. By
	Proposition~\ref{prop:completeness-core}, $\trip{J}$ is the empty
	triplet, and $\var(\hypergraph(\trip{I})) \subseteq \var(N) =
	\free(Q)$. It remains to show $\free(Q) \subseteq
	\var(\hypergraph(\trip{I}))$. First verify the following: A
	reduction step on a hypergraph triplet $\trip{H}$ never removes any
	variable in $\free(\trip H)$ from $\hypergraph(\trip H)$, nor does
	it 
	modify $\free(\trip H)$. Then, since $\free(\hypertrip(Q)) =
	\free(Q) \subseteq \var(Q) \subseteq \var(\hypergraph(\hypertrip(Q))))$, and
	$\hypertrip(Q) \rewr^* \trip I$ we obtain $\free(Q) \subseteq \allowbreak
	\var(\hypergraph(\trip I))$.
	
	(3) Follows directly from Proposition~\ref{prop:completeness-core}. 
\end{proof}

Theorem~\ref{thm:gyo} follows directly from
Corollaries~\ref{prop:soundness} and
\ref{prop:completeness}. Theorem~\ref{thm:construction-correct}
follows from Theorem~\ref{thm:gyo} and
Proposition~\ref{prop:alg-gjt-soundess}.



\section{Experimental Setup}
\label{sec:experiments}

In this section, we present the setup of our experimental evaluation,
whose results are discussed in Section~\ref{sec:exper-eval}. We first
present our practical implementation of \iedyn, then 
show the queries and update stream used for evaluation, and finally
discuss the competing systems. 

\vspace{-1ex}
\paragraph*{\bf Practical Implementation} We have implemented \\\iedyn as a
query compiler that generates executable code in the Scala programming language. The generated code instantiates a $T$-rep and defines \emph{trigger functions} that are used for maintaining the \dtree{$T$} under updates. Our implementation is basic in the
sense that we use Scala off-the-shelf collection libraries (notably \texttt{MutableTreeMap}) to implement the required
indices. Faster implementations with specialized code for the index structures are certainly possible.

Our implementation supports two modes of operation: \emph{push-based}
and \emph{pull-based}. In both modes, the system maintains the $T$-rep
under updates. In the \emph{push-based mode} the system generates,
on its output stream, the delta result $\delt Q(\db,\upd)$ after each
single-tuple update $\upd$. To do so, it uses a modified version of enumeration (Algorithm~\ref{alg:enumeration-st}) that we
call \emph{delta enumeration}. Similarly to how
Algorithm~\ref{alg:enumeration-st} enumerates $Q(\db)$, delta
enumeration enumerates $\Delta Q(\db,\upd)$ with constant delay (if
$Q$ has at most one inequality per pair of atoms) resp. logarithmic
delay (otherwise). To do so, it uses both (1) the $T$-reduct GMRs
$\rho_n$ and (2) the delta GMRs $\Delta \reduc_n$ that are computed by
Algorithm~\ref{alg:update-processing-st} when processing $u$. In this
case, however, one also needs to index the $\Delta \reduc_n$ similarly
to $\reduc_n$.  In the \emph{pull-based mode}, in contrast, the system
only maintains the $T$-rep under updates but does not generate any
output stream. Nevertheless, at any time a user can call the
enumeration (Algorithm \ref{alg:enumeration-st}) procedure to obtain the current output.

We have described in Section~\ref{sec:gdyn} how \iedyn can process
free-connex acyclic \gcqs under updates. It should be noted that our
implementation also supports the processing of general acyclic \gcqs
that are not necessarily free-connex. This is done using the following simple
strategy. Let $Q$ be acyclic but not free-connex. First, compute a
free-connex acyclic approximation $Q_F$ of $Q$. $Q_F$ can always be
obtained from $Q$ by extending the set of output variables of $Q$. In
the worst case, we need to add all variables, and $Q_F$ becomes the
full join underlying $Q$. Then, use \iedyn to maintain a \dtree{$T$}
for $Q_F$.  When operating in push-based mode, for each update $\upd$,
we use the $T$-representation to delta-enumerate $\delt{Q_F}(\db, \upd)$ and project
each resulting tuple to materialize $\delt Q(\db,\upd)$ in an array.
Subsequently, we copy this array to the output. Note that the
materialization of $\delt Q(\db, \upd)$ here is necessary since the
delta enumeration on $T$ can produce duplicate tuples after
projection.  When operating in pull-based mode, we materialize
$Q(\db)$ in an array, and use delta enumeration of $Q_F$ to maintain
the array under updates. Of course, under this strategy, we require
$\Omega(\size{Q(\db)})$ space in the worst case, just like (H)IVM would, but we avoid
the (partial) materialization of delta queries. Note the
distinction between the two modes: in push-based mode $\delt Q(\db,
\upd)$ is materialized (and discarded once the output is generated),
while in pull-based mode $Q(\db)$ is materialized upon requests.

\paragraph*{\bf Queries and Streams}
In contrast to the setting for equi-join queries where systems can be
compared based on industry-strength benchmarks such as TPC-H and
TPC-DS, there is no established benchmark suite
for inequality-join queries.

We evaluate \iedyn on the \gcq queries listed in
table~\ref{tab:queries}. Here, queries $Q_1$--$Q_6$ are full join
queries (i.e., queries without projections). Among these, $Q_1$,
$Q_3$ and $Q_4$ are cross products with inequality predicates, while
$Q_2$, $Q_5$ and $Q_6$ have at least one equality in addition to the
inequality predicates. Queries $Q_1$ and $Q_2$ are binary join queries,
while $Q_3$--$Q_6$ are multi-way join queries.
Queries $Q_7$--$Q_{12}$ project over the result of queries
$Q_4$--$Q_6$. Among these, $Q_7$--$Q_9$ are free-connex acyclic
while $Q_{10}$--$Q_{12}$ acyclic but not free-connex.

\begin{table}[t]
\small
\begin{tabular}[l]{|p{0.038\textwidth}|p{0.38\textwidth}|}
\hline
 Query & Expression \\\hline 
 $Q_1$ & $R(a,b,c)  \Join S(d,e,f)                   | a<d$  \\\hline 
 $Q_2$ & $R(a,b,c,k)\Join S(d,e,f,k)                 | a<d$\\ \hline
 $Q_3$ & $R(a,b,c)  \Join S(d,e,f)  \Join T(g,h,i)   | a<d \wedge e<g $ \\ \hline
 $Q_4$ & $R(a,b,c)  \Join S(d,e,f)  \Join T(g,h,i)   | a<d \wedge d<g $ \\\hline 
 $Q_5$ & $R(a,b,c,k)\Join S(d,e,f,k)\Join T(g,h,i)   | a<d \wedge d<g$ \\ \hline
 $Q_6$ & $R(a,b,c)  \Join S(d,e,f,k)\Join T(g,h,i,k) |  a<d \wedge d<g$ \\ \hline
 $Q_7$ & $\pi_{a,b,d,e,f,g,h}  (Q_4) $\\\hline 
 $Q_8$ & $\pi_{a,d,e,f,g,h,k}(Q_5)$ \\ \hline
 $Q_9$ & $\pi_{d,e,f,g,h,k}(Q_6)$ \\ \hline
 $Q_{10}$ & $\pi_{b,c,e,f,h,i}  (Q_4) $ \\\hline 
 $Q_{11}$ & $\pi_{b,c,e,f,h,i}  (Q_5)$ \\ \hline 
  $Q_{12}$ & $\pi_{b,c,e,f,h,i}  (Q_6)$ \\ \hline 
\end{tabular}
\centering
\caption{Queries for experimental evaluation.}
\label{tab:queries}
\vspace{-2ex}
\end{table}
We evaluate these queries on streams of
updates where each update consists of a single tuple insertion. The
database is always empty when we start processing the update stream.
We synthetically generate two kinds of update streams:
\emph{randomly-ordered} and \emph{temporally-ordered}
update streams. In \emph{randomly-ordered} update streams, insertions can
occur in any order. In contrast, \emph{temporally-ordered}
update streams guarantee that any attribute that
participates in an inequality in the query has a larger
value than the same attribute in any of the previously inserted tuples. Randomly-ordered update streams are useful for comparing against systems that allow processing
of out-of-order tuples; temporally-ordered update streams are useful
for comparison against systems that assume
events arrive always with increasing timestamp values. Examples of systems that process temporally-ordered streams are automaton-based \cer systems.

A random update stream of size $N$ for a query with $k$ relations is
generated as follows. First, we generate $N/k$ tuples with random
attribute values for each relation. Then, we insert tuples in the
update stream by uniformly and randomly selecting them without
repetitions. This ensures that there are $N/k$ insertions from each
relation in the stream. To utilize the same update stream for
evaluating each system we compare to, each stream is stored in a
file. We choose the values for equality join attributes uniformly at
random from $1$ to $200$, except for the scalability and selectivity
experiments in Section~\ref{sec:exper-eval} where the interval depends
on the stream size.

Temporally-ordered streams are generated similarly, but when a new insertion tuple is chosen, a new value is inserted in the
attributes that are compared through inequalities. This value is
larger than the corresponding values of previously inserted tuples. All attributes hold integer values, except for attributes $c$ and $i$ which contain string values.

\paragraph*{\bf Competitors}

We compare \iedyn with DBToaster
(DBT) \cite{DBLP:journals/vldb/KochAKNNLS14}, Esper (E)~\cite{esper}, SASE
(SE)\cite{DBLP:conf/sigmod/WuDR06,Sase2014,DBLP:conf/sigmod/AgrawalDGI08},
Tesla (T)\cite{DBLP:conf/debs/CugolaM10,DBLP:journals/jss/CugolaM12},
and ZStream (Z)~\cite{DBLP:conf/sigmod/MeiM09} using \emph{memory
  footprint}, \emph{update processing time}, and \emph{enumeration
  delay} as comparison metrics. The competing systems differ in their
mode of operation (push-based vs pull-based) and some of them only support temporally-ordered streams.

\uline{DBToaster} is a state-of-the-art
implementation of \hivm.  It operates in pull-based mode, and can deal
with randomly-ordered update streams. DBToaster is particularly meticulous in that
it materializes only useful views, and therefore it is an interesting
implementation for comparison. DBToaster has been extensively tested
on equi-join queries and has proven to be more efficient than a
commercial database management system, a commercial stream processing
system and an IVM
implementation~\cite{DBLP:journals/vldb/KochAKNNLS14}.  DBToaster
compiles given SQL statements into executable trigger programs in different
programming languages. We compare against those generated in Scala
from the DBToaster Release
2.2\footnote{https://dbtoaster.github.io/}, and it uses
actors\footnote{https://doc.akka.io/docs/akka/2.5/} to generate events
from the input files. During our experiments, however, we have found
that this creates unnecessary memory overhead. For a fair memory-wise comparison, we have
therefore removed these actors.

\uline{Esper} is a \cer engine with a relational model based on Stanford
STREAM~\cite{DBLP:books/sp/16/ArasuBBCDIMSW16}. It is
push-based, and can deal with randomly-ordered update streams. We use the
Java-based open source\footnote{http://www.espertech.com/esper/esper-downloads/} for
our comparisons. Esper processes queries expressed in the Esper event
processing language (EPL).

\uline{SASE} is an automaton-based \cer system. It
operates in push-based mode, and can deal with temporally-ordered
update streams only. We use the publicly available Java-based
implementation of SASE\footnote{https://github.com/haopeng/sase}. This implementation does not support
projections. Furthermore, since SASE requires queries to specify a
match semantics (any match, next match, partition contiguity) but does
not allow combinations of such semantics, we can only express queries
$Q_1$, $Q_2$, and $Q_4$ in SASE. Hence, we compare against SASE for
these queries only.  To be coherent with our semantics, the corresponding SASE expressions use the any match semantics~\cite{DBLP:conf/sigmod/AgrawalDGI08}.

\uline{Tesla/T-Rex} is also an automaton-based
\cer system. It operates in push-based mode only, and supports
temporally-ordered update streams only. We use the publicly available
C-based implementation\footnote{https://github.com/deib-polimi/TRex}. This implementation operates in a
publish-subscribe model where events are published by clients to the
server, known as TRexServer. Clients can subscribe to receive recognized
composite events.  Tesla cannot deal with queries involving inequalities on multiple attributes e.g. $Q_3$, therefore, we do not show results for $Q_3$. Since Tesla works in a decentralized manner, we measure the update processing time by
logging the time at the Tesla TRexServer from the stream start 
until the end.

\uline{ZStream} is a \cer system based on a
relational internal architecture. It operates in push-based mode, and
can deal with temporally-ordered update streams only.  ZStream is not
available publicly. Hence, we have created our own implementation following the lazy evaluation algorithm of ZStream described
in their original paper~\cite{DBLP:conf/sigmod/MeiM09}. This paper does not
describe how to treat projections, and as such we compare against
ZStream only for full join queries $Q_1$--$Q_6$.

Due to space limitations, we omit the query expressions used for
Esper (in EPL), SASE, and Tesla/TRex (rules) in this paper, but they
are available at~\cite{exps}.

\paragraph*{\bf Setup} Our experiments are run on an 8-core
3.07 GHz machine running Ubuntu with GNU/Linux 3.13.0-57-generic. To compile
the different systems or generated trigger programs, we have used GCC
version 4.8.2, Java 1.8.0\_101, and Scala version 2.12.4. Each query is evaluated 10
times to measure update processing delay, and two times to measure
memory footprint. We present the average over those runs. Each time
a query is evaluated, 20 GB of main memory are freshly allocated to
the program. To measure the memory footprint for Scala/Java based
systems, we invoke the JVM system calls every 10 updates and consider the
maximum value. For C/C++ based systems we use the GNU/Linux \emph{time}
command to measure memory usage. Experiments that measure memory
footprint are always run separately of the experiments that measure
processing time.
\section{Experimental Evaluation}
\label{sec:exper-eval}

Before presenting experimental results we make some remarks. First, when we
compare against another system we run \iedyn in the
operation mode supported by the competitor. For push-based
systems we report the time required to
both process the entire update stream, and generate the changes to the output after each update. When comparing
against a pull-based system, the measured time includes only
processing the entire update stream. We later report the speed with which the result can be generated from the underlying representation of the output (a $T$-representation in the case of \iedyn). When comparing against a
system that supports randomly-ordered update streams, we only report
comparisons using streams of this type. We have also
looked at temporally-ordered streams for these systems, but the
throughput of the competing systems is similar (fluctuating between 3\% and 12\%) while that of \iedyn significantly
improves (fluctuating between 35\% and 50\%) because insertions to sorted lists become constant instead of logarithmic. We omit these experiments due to lack of space.

It is also important to remark that some executions of the competing systems failed either because they
required more than 20GB of main memory or they took more than 1500
seconds. If an execution requires more than 20GB, we report the processing time elapsed until the exception was
raised. If an execution is still running after 1500 seconds, we stop it and report its maximum memory usage while running.

\begin{figure*}[t]
\centering
\includegraphics[width=1\textwidth]{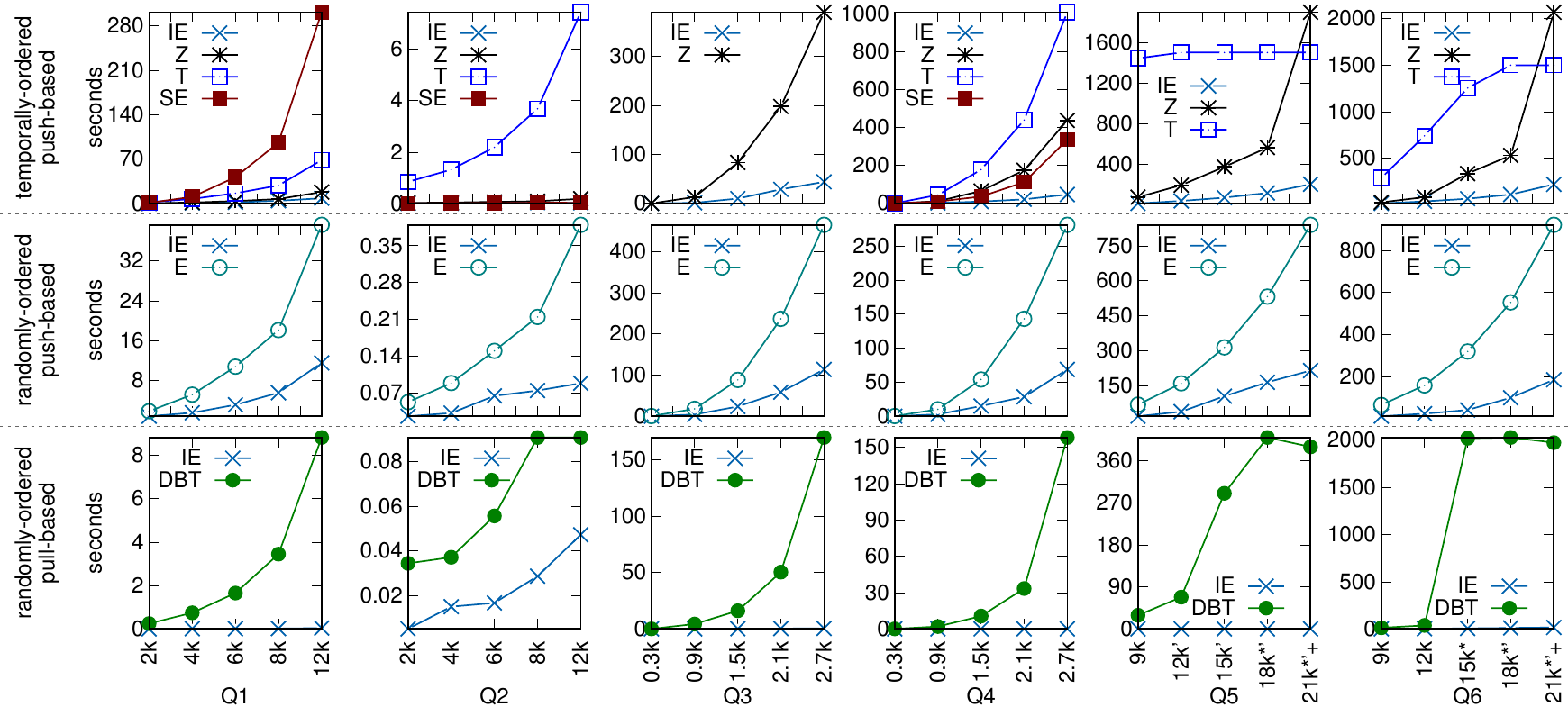}
\caption{\iedyn(IE) VS (\zstream,\dbt,\esper,\tesla, \sase) on full join queries. The X-axis shows stream sizes and the y-axis update delay in $seconds$ (*: \dbt out of memory, +: \zstream out of memory, $'$: \tesla was stopped after 1500 seconds)}
\label{fig:absfigall}
\end{figure*}

\paragraph*{\bf Full join queries} Figure~\ref{fig:absfigall} compares
the update processing time of \iedyn against the competing
systems for full join queries $Q_1$--$Q_6$. We have grouped
experiments that are run under comparable circumstances: in the top
row experiments are conducted for push-based systems on
temporally-ordered update streams (\sase, \tesla, \zstream); in the
second row push-based systems on randomly-ordered update streams (\esper),
and in the bottom row pull-based systems on randomly-ordered update
streams (\dbt). We observe that all of the competing systems have
large processing times even for very small update stream sizes, and
that for some systems execution even failed.
All of these behaviors are due to the low
selectivity of joins on this dataset. Table~\ref{tab:output-sizes} shows the output size of each query for the largest
stream sizes reported in Figure~\ref{fig:absfigall}. We report on
streams that generate outputs of different sizes below.

\begin{table}[t]
\centering
  \label{tab:output-sizes}\scriptsize
\begin{tabular}{|l|r|r|}
    \hline
    Query  & $\left|\text{Stream}\right|$ & $\left|\text{Output}\right|$ \\\hline
    $Q_1$  & 12k &  18,017k  \\ \hline
    $Q_2$  & 12k &       3.8k \\ \hline
    $Q_3$  &  2.7k & 178,847k \\ \hline
    $Q_4$  &  2.7k &  90,425k\\ \hline
    $Q_5$  & 21k & 411,669k \\ \hline
    $Q_6$  & 21k & 297,873k \\ \hline
    $Q_7$  &  2.7k & 114,561k  \\ \hline
    $Q_8$  & 21k & 411,669k \\ \hline
    $Q_9$  & 21k &  99,043k \\ \hline
    $Q_{10}$ &2.7k & 114,561k \\ \hline
    $Q_{11}$&21k & 294,139k\\ \hline
    $Q_{12}$&21k & 297,873k\\ \hline
  \end{tabular}
 \caption{Maximum output sizes per query, k=1000.}
\end{table} 

Figure~\ref{fig:absfigall} is complemented by
Figures~\ref{fig:full-esper-dbt-t} and~\ref{fig:iesez} where
we plot the processing time and memory footprint used by
\iedyn as a percentage of the corresponding usage in the competing
systems. Both, \sase and \zstream support temporally ordered streams, however, \sase supports only queries $Q_1$, $Q_2$, and $Q_4$ and \zstream supports $Q_1$--$Q_6$, therefore in Figure~\ref{fig:iesez} we show \sase (right) and \zstream (left). Note that \iedyn significantly
outperforms the competing systems on all full join
queries. Specifically, it outperforms \dbt up to one order of
magnitude in processing time and up to two orders of magnitude in
memory footprint. It outperforms \tesla up to two orders of magnitude
in processing time, and more than one order of magnitude in memory
footprint.
Moreover, for these queries, even in push-based mode \iedyn can
support the enumeration of query results from its data structures at
any time while competing push-based systems have no such
support. Hence, \iedyn is not only more efficient but also provides
more functionality.

\begin{figure*}[!t]
\centering
\includegraphics[width=1\textwidth]{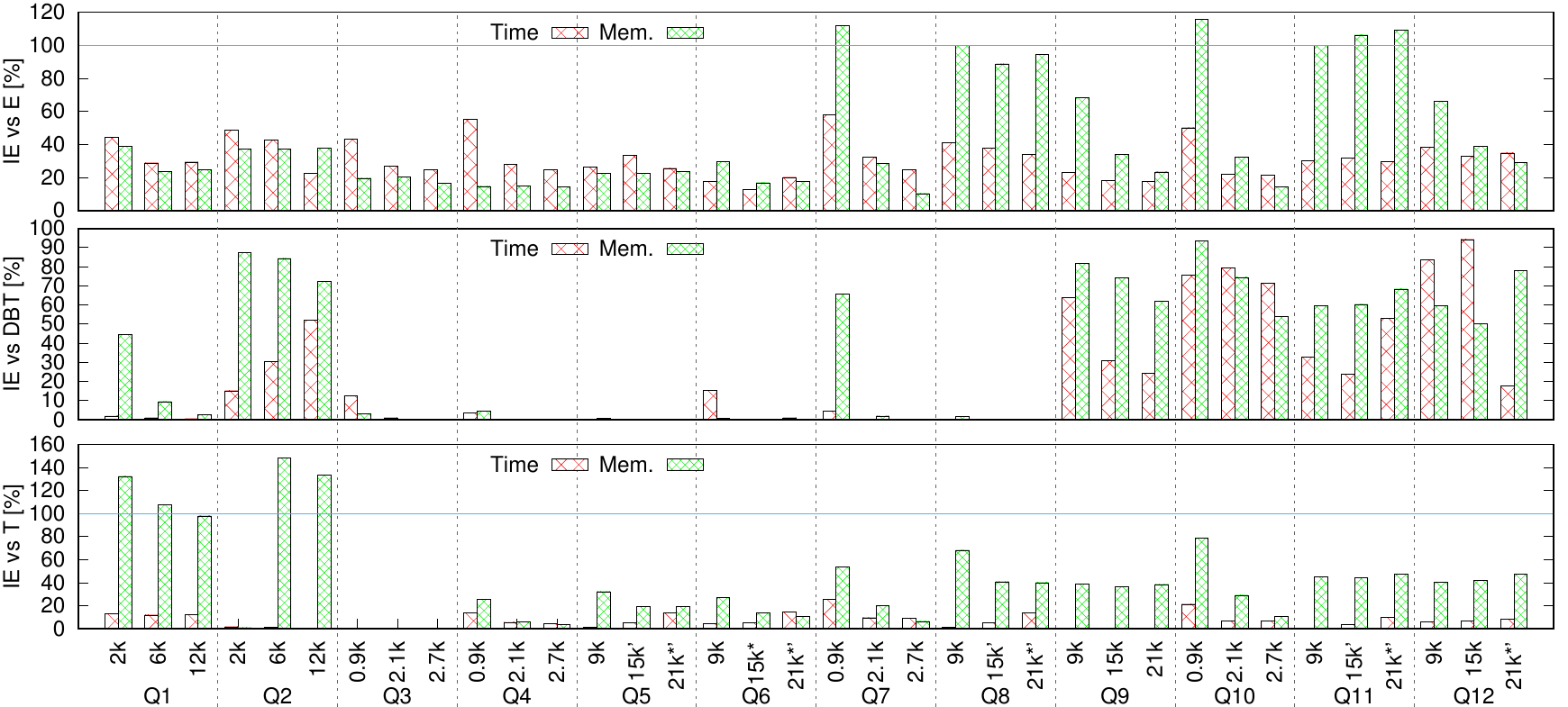}
\caption{\iedyn(IE) VS (\esper, \dbt, \tesla) fulljoin and projection queries, (*: \dbt ran out of memory, $'$: \tesla was stopped after 1500 seconds)}
\label{fig:full-esper-dbt-t}
\vspace*{3mm}
\centering
\includegraphics[width=1\textwidth]{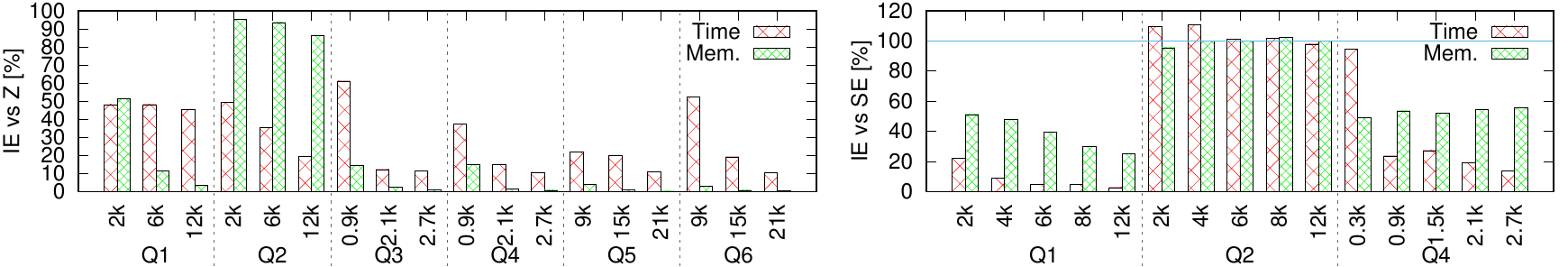}
\caption{\iedyn(IE) VS \sase and \zstream on temporally ordered datasets}
\label{fig:iesez}
\vspace*{3mm}
\begin{minipage}[t]{\columnwidth}
\centering
\includegraphics[width=1\columnwidth]{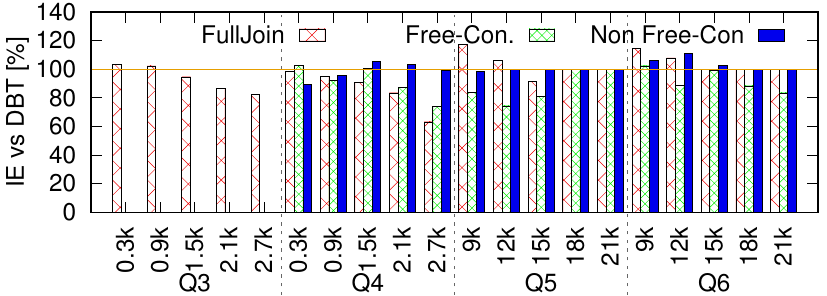}
\caption{Enumeration of query results: \iedyn vs \dbt, different bars for $Q_4,Q_5,Q_6$ show their projected versions}
\label{fig:enum2}
\vspace*{3mm}
\end{minipage}\hfill
\begin{minipage}[t]{\columnwidth}
\centering
\includegraphics[width=1\columnwidth]{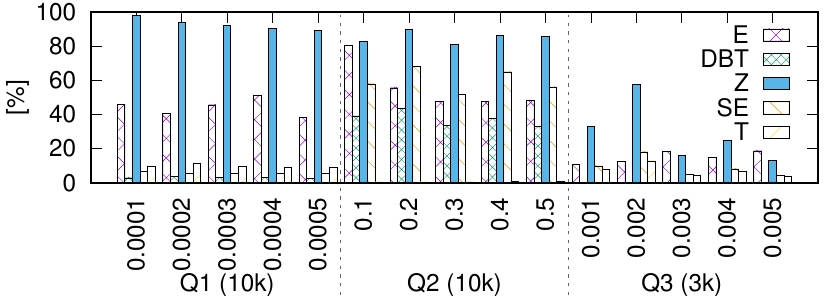}
\caption{\iedyn as percentage of (\esper, \dbt, \sase, \zstream, \tesla) for higher join selectivities. X-axis shows queries with tuples per relation and selectivities}
\label{fig:selectivity}
\end{minipage}
\vspace*{3mm}
\centering
\includegraphics[width=1\textwidth]{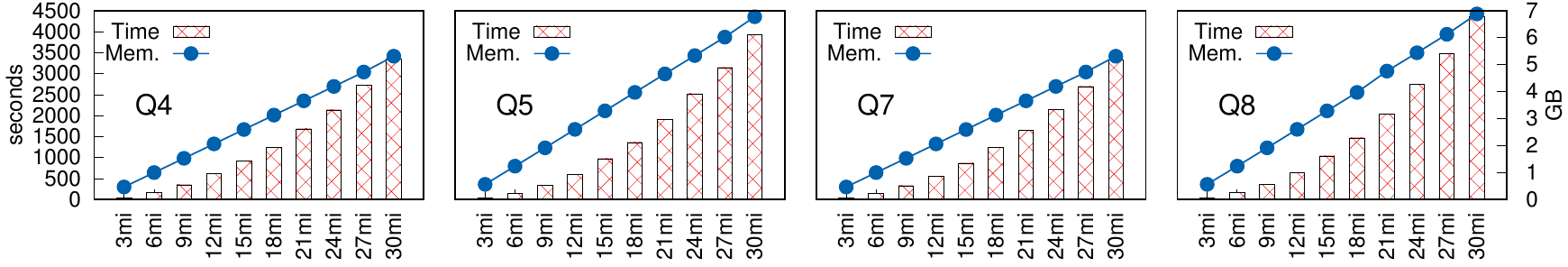}
\caption{\iedyn scalability ($mi = 1,000,000$)}
\label{fig:scalability}
\end{figure*}

\paragraph*{\bf Projections}
Results in Figure~\ref{fig:full-esper-dbt-t} show that \iedyn significantly outperforms both \esper and \dbt on free-connex queries
$Q_7$--$Q_9$: two orders of magnitude improvement over
the throughput of \tesla and more than twofold improvement over that of \esper. Memory usage is also significantly less: one order of magnitude over \esper on the larger
datasets for $Q_7$, and a consistent twofold improvement over
$T$. Similarly, \iedyn outperforms \dbt on free-connex
queries $Q_7$ and $Q_8$ in time and memory by one and two orders of
magnitude, respectively. 

For non-free-connex queries $Q_{10}$--$Q_{12}$, \iedyn continues to outperform \esper, \tesla, and \dbt in terms of processing time. In memory footprint \iedyn outperforms \esper for $Q_{10}$ and $Q_{12}$. Compared to \dbt, \iedyn still improves on memory footprint on non-free-connex queries, though less significantly. In contrast, \iedyn largely improves memory usage over \tesla on larger datasets, even on non-free-connex queries.

\paragraph*{\bf Result enumeration} We know from
Section~\ref{sec:gdyn} that $T$-reps maintained by \iedyn feature
constant delay enumeration (\cde). This theoretical notion, however,
hides a constant factor that could decrease performance in practice
when compared to full materialization. 
In Figure~\ref{fig:enum2}, we show the practical application of \cde
in \iedyn and compare against \dbt which materializes the full query
results. We plot the time required to enumerate the result from
\iedyn's $T$-rep as a fraction of the time required to enumerate the
result from \dbt's materialized views. As can be seen from the figure,
both enumeration times are comparable on average.

Note that we do not compare enumeration time for push-based systems, since for these systems the time required for delta enumeration is already included in the update processing time reported in Figures~\ref{fig:absfigall},~\ref{fig:full-esper-dbt-t} (bottom), and~\ref{fig:iesez}.

\paragraph*{\bf Selective inequality joins} We execute \iedyn over unifromly distributed datasets.
In this case, the inequality joins yield large query results. 
One could argue that this might not be realistic. To
address this problem, we generated datasets
with probability distributions  that are parametrized by a \emph{selectivity}
$s$, such that the expected number of output tuples is $s$ percent of
the cartesian product of all relations in the query. 

Our results show that \iedyn not only outperforms existing systems on less selective inequality joins; we also perform better on very selective inequality joins consistently (see Figure~\ref{fig:selectivity}).  For super selective inequality joins the measurements come similar to what we observe for equality joins, which we investigated in detail in our previous work on equality joins~\cite{dyn:2017}.

\paragraph*{\bf Scalability} To show that \iedyn performs consistently on streams of different sizes, we report the processing delay and the memory footprint each time a $10\%$ of the stream is processed in Figure~\ref{fig:scalability}. These results show that \iedyn has linearly increasing memory footprint as well as update delay as the stream size advances. We show results for queries $Q_4$, $Q_5$, $Q_7$, and $Q_8$ only due to space constraints.


\bibliographystyle{abbrv}
\bibliography{ms}

\begin{appendix}

\section{Proofs of Section~3}
\label{sec:app-acyclicity}

\lemmaRemOne*
\begin{proof}
  The lemma follows from the following observations.
(1) It is straightforward to observe that $T'$ is a valid GJT: the
    construction has left the set of leaf nodes untouched; took care
    to ensure that all nodes (including the newly added node $p$)
    continue to have a guard child; ensures that the connectedness
    condition continues to hold also for the relocated children of $n$
    because every variable in $n$ is present on the entire path
    between $n$ and $p$; and have ensured that also edge labels remain
    valid (for the relocated nodes this is because $\var(p) = \var(g)
    \subseteq \var(n)$).

(2) $N'$ is a connex subset of $T'$ because the subtree of $T$
    induced by $N$ equals to subtree of $T'$ induced by $N'$, modulo
    the replacement of $l$ by $p$ in case that $l$ was in $N$ and $p$
    is hence in $N'$.

(3) $(T,N)$ is equivalent to $(T', N')$ because the construction
    leaves leaf atoms untouched, preserves edge labels, and $\var(N) =
    \var(N')$. The latter is clear if $l \not \in N$ because then $N =
    N'$. It follows from the fact that $\var(l) = \var(p)$ if $l \in
    N$, in which case $N'=N\setminus \{l\} \cup \{p\}$.

(4) All nodes in $\child_T(n) \setminus N$ (and their descendants)
    are relocated to $p$ in $T'$. Therefore, $n$ is no longer a violator
    in $(T', N')$. Because we do not introduce new violators, the
    number of violators of $(T', N')$ is strictly smaller than the number
    of violators of $(T, N)$.\qedhere
\end{proof}

\lemmaRemTwo*
\begin{proof}
  The lemma follows from the following observations.  (1) It is
  straightforward to observe that $T'$ is a valid GJT: the
  construction has left the set of leaf nodes untouched; took care to
  ensure that all nodes (including the newly added node $p$) continue
  to have a guard child; ensures that the connectedness condition
  continues to hold also for the relocated children of $n$ because
  every variable in $n$ is also present in $p$, their new parent; and
  have ensured that also edge labels remain valid (for the relocated
  nodes this is because $\var(p) = \var(n)$).

(2) $N'$ is a connex subset of $T'$ because (i) the subtree of $T$
    induced by $N$ equals to subtree of $T'$ induced by $N' \ \{p\}$,
    (ii) $n \in N$, and (iii) $p$ is a child of $n$ in $T'$. Therefore,
    $N'$ must be connex.

(3) $(T,N)$ is equivalent to $(T', N')$ because the construction
    leaves leaf atoms untouched, preserves edge labels, and $\var(N) =
    \var(N')$. The latter follows because $\var(N') = \var(N \cup
    \{p\})$ and because  $\var(p) = \var(n) \subseteq \var(N)$ since
    $n \in N$.

(4) All nodes in $\child_T(n) \setminus N$ (and their descendants)
    are relocated to $p$ in $T'$. Therefore, $n$ is no longer a violator
    in $(T', N')$. Because we do not introduce new violators, the
    number of violators of $(T', N')$ is strictly smaller than the number
    of violators of $(T, N)$.\qedhere
\end{proof}


\section{Proofs of Section~5}
\label{sec:app-algorithm}

\subtreeQuery*
\begin{proof}
    We proceed by induction on the number of descendants of $n$. If
    $n$ has no descendant then $Q_n$ is a single atom $r(\seq{x})$, so
    we have $\seq{x} = \free(Q_n)=\var(n)$. Then $\pi_{var(n)}Q_n(\db)=Q_n(\db)=\db_{r(\seq{x})}=\rho_n$, concluding the basic case. Now, for the inductive case we distinguish whether $n$ has one or two children.

    Assume $n$ has a single child $c$ and $Q_c=({\mathcal R}\mid
    \Theta)$.  Then, by definition we have $Q_n=({\mathcal R}\mid
    \Theta\cup\pred(n)).$ Therefore
    $Q_n(\db)=\sigma_{\pred(n)}Q_c(\db)$, which implies that
    $\pi_{\var(n)}Q_n(\db)=\pi_{\var(n)}\sigma_{\pred(n)}Q_c(\db)$. Since
    $\pred(n)$ only mentions variables in $\var(c) \cup \var(n)$ and
    $\var(n)\subseteq\var(c)$, as $c$ is a guard of $n$, this is
    equivalent to
    \[ \pi_{\var(n)}Q_n(\db) =
    \pi_{\var(n)}\sigma_{\pred(n)}\pi_{\var(c)}Q_c(\db).\] By
    induction, this equals $\pi_{\var(n)}\sigma_{\pred(n)}\rho_c =
    \rho_n$, showing that $\pi_{\var(n)}Q_n(\db)=\rho_n$.

    Assume now that $n$ has two children $c_1$ and $c_2$, and that
    $Q_{c_i}=\left({\mathcal R}_i\mid \Theta_i\right)$ for
    $i\in\{1,2\}$. We assume w.l.o.g. that $c_1$ is a guard for
    $n$. First, note that by definition $Q_n=\left({\mathcal
        R}_1\Join{\mathcal
        R}_2\mid\Theta_1\cup\Theta_2\cup\pred(n)\right),$ and then we
    have
    $Q_n(\db)=\sigma_{\pred(n)}\sigma_{\Theta_1}\sigma_{\Theta_2}\left({\mathcal
        R}_1\Join{\mathcal R}_2\right)(\db).$ Since $\Theta_i$ only
    mentions variables of atoms in $\mathcal{R}_i$ (for
    $i\in\{1,2\}$), we can push the selections and obtain
    \begin{align*}
      Q_n(\db) & = \sigma_{\pred(n)}\left(\sigma_{\Theta_1}{\mathcal
          R}_1\Join\sigma_{\Theta_2}{\mathcal R}_2\right)(\db) \\ 
      & = \sigma_{\pred(n)}\left(\sigma_{\Theta_1}{\mathcal
          R}_1(\db)\Join\sigma_{\Theta_2}{\mathcal R}_2(\db)\right) \\
      & =\sigma_{\pred(n)}\left(Q_{c_1}(\db)\Join Q_{c_2}(\db)\right)
    \end{align*}
    Therefore,
    $$\pi_{\var(n)}Q_n(\db)=\pi_{\var(n)}\sigma_{\pred(n)}\left(Q_{c_1}(\db)\Join Q_{c_2}(\db)\right).$$
    Since $\var(\pred(n))\subseteq\var(c_1)\cup\var(c_2)\cup \var(n)$
    and $\var(n)\subseteq\var(c_1)$ we have
    $\var(\pred(n))\subseteq\var(c_1)\cup\var(c_2)$. This, combined
    with the fact that, due to the connectedness property of $T$ we,
    have $\var(Q_{c_1})\cap \allowbreak \var(Q_{c_2}) \allowbreak \subseteq\var(c_i)$ for
    $i\in\{1,2\}$, we can add the following projections
    $$=\pi_{\var(n)}\sigma_{\pred(n)}\left(\pi_{\var(c_1)}Q_{c_1}(\db)\Join \pi_{\var(c_2)}Q_{c_2}(\db)\right).$$
    Then, by induction hypothesis we have
    $$\pi_{\var(n)}Q_n(\db)=\pi_{\var(n)}\sigma_{\pred(n)}\left(\rho_{c_1}\Join\rho_{c_2}\right)=\rho_n,$$
    concluding our proof.
\end{proof}

\enumSubRoutine*
\begin{proof}
  Within the proof, we abuse notation and allow for projections over
  supersets of variables. For example, if $\var(Q)\subseteq \seq{x}$
  then $\pi_{\seq{x}}Q=\pi_{\seq{x}\cap\var(Q)}Q$.

  Let $n \in N$ and $\tup{t}\in\rho_n$.  We proceed by induction on the number of
  nodes in $N\cap T_n$. If $N\cap T_n=\{n\}$, we have
  $\var(N)\cap\var(Q_n)=\var(n)$ and therefore
  $\pi_{\var(N)}Q_n(\db)=\pi_{\var(n)}Q_n(\db)$. Then, by
  Lemma~\ref{lem:subtree-query} we have
  $\pi_{\var(N)}Q_n(\db)=\rho_n$. As $\tup{t}\in\rho_n$, this implies
  that the only tuple in $\pi_{\var(N)}Q_n(\db)$ that is compatible
  with $\tup{t}$ is $\tup{t}$ itself. As $n$ is in the frontier of
  $N$, $\routenum_{T,N}(n, \tup{t},\rho)$ will enumerate precisely
  $\{(\tup{t}, \rho_n(\tup{t}))\}$ (Line~\ref{line:frontiertuple}),
  which concludes the base case.
    
    For the inductive step we need to consider two cases depending on the number of children of $n$.

    Case (1).  If $n$ has a single child $c$ then necessarily $c$ is
    a guard of $n$, i.e., $\var(n) \subseteq \var(c)$.  In this case,
    Algorithm~\ref{alg:enumeration-st} will call $\routenum_{T, N}(c,
    \tup{s}, \rho)$ for each tuple $\tup{s}\in
    \rho_c\semijoin_{\pred(n)} \tup{t}$. By induction hypothesis and
    Lemma~\ref{lem:subtree-query}, this will correctly enumerate every
    tuple in $\pi_{\var(N)}Q_c(\db)\semijoin\tup{s}$ for every 
    $\tup{s}$ in $\sigma_{\pred(n)}(\pi_{\var(c)}Q_c(\db)\semijoin \allowbreak
    \tup{t})$. Therefore, this enumerates the set
        $$\pi_{\var(N)}Q_c(\db)\semijoin\sigma_{\pred(n)}(\pi_{\var(c)}Q_c(\db)\semijoin\tup{t}).$$
        As $\var(\pred(n))\subseteq\var(c) \cup\var
(n) = \var(c) \subseteq\var(Q_c)$, we can pull out the projection and selection
        $$=\pi_{\var(N)}\sigma_{\pred(n)}(Q_c(\db)\semijoin(\pi_{\var(c)}Q_c(\db)\semijoin\tup{t})).$$
        Because the variables in $\tup{t}$ are a subset of $\var(c)$, this is the same as
        $\pi_{\var(N)}\sigma_{\pred(n)}(Q_c(\db)\semijoin\tup{t}).$
        Finally, we push the selection and projection inside and obtain
        $$=\pi_{\var(N)}\sigma_{\pred(n)}Q_c(\db)\semijoin\tup{t}=\pi_{\var(N)}Q_n(\db)\semijoin\tup{t}.$$

        Case (2). Otherwise, $n$ has two children $c_1$ and $c_2$. Since $|N\cap T_n|>1$ and $N$ is sibling closed we have $\{c_1,c_2\}\subset N$. In this case,
Algorithm~\ref{alg:enumeration-st} will first enumerate $\tup{t_i} \in
\rho_{c_i}\semijoin_{\pred(n\rightarrow c_1)}\tup{t}$ for $i\in\{1, 2\}$. By Lemma~\ref{lem:subtree-query} this is equivalent to enumerate every $\tup{t_i}$ in $\sigma_{\pred(n\rightarrow c_i)}\pi_{\var(c_i)}Q_{c_i}(\db)\semijoin \tup{t}$. Then, for each such $\tup{t_i}$ the algorithm will enumerate every pair $(\tup{s_i}, \mu_i)$ generated by $\routenum_{T, N}(c_i, \tup{t_i}, \rho)$, which by induction is the same as enumerating every $(\tup{s_i}, \mu_i)$ in $\pi_{\var(N)}Q_{c_i}(\db)\semijoin \tup{t_i}$. Therefore the algorithm is enumerating
        $$\pi_{\var(N)}Q_{c_i}(\db)\semijoin(\sigma_{\pred(n\rightarrow c_i)}\pi_{\var(c_i)}Q_{c_i}(\db)\semijoin \tup{t})$$
        By the same reasoning as in the previous case, this is equivalent to enumerating every $(\tup{s_i}, \mu_i)$ in
        $$\sigma_{\pred(n\rightarrow c_i)}\pi_{\var(N)}Q_{c_i}(\db)\semijoin \tup{t}.$$
        From the connectedness property of $T$, it follows that
        $\var(Q_{c_1})\cap\var(Q_{c_2}) \subseteq \var(n)$. Thus,
        $\var(Q_{c_1})\cap\var(Q_{c_2})$ is a subset of the variables
        of $\tup{t}$. Hence, every tuple $\tup{s_1}$ will be compatible with
        every tuple $\tup{s_2}$, and the enumeration of every pair
        $(\tup{s_1}\cup \tup{s_2},\mu_1\times\mu_2)$ is the same as
        the enumeration of
        \begin{multline*}
            \left[\sigma_{\pred(n\rightarrow c_1)}\pi_{\var(N)}Q_{c_1}(\db)\semijoin \tup{t}\right]\Join\\
            \left[\sigma_{\pred(n\rightarrow c_2)}\pi_{\var(N)}Q_{c_2}(\db)\semijoin \tup{t}\right].
        \end{multline*}
        We can now push the projections and selections outside and obtain
        \begin{multline*}
            =\pi_{\var(N)}\sigma_{\pred(n\rightarrow c_1)}\sigma_{\pred(n\rightarrow c_2)}\\
            [(Q_{c_1}(\db)\semijoin \tup{t})\Join (Q_{c_2}(\db)\semijoin \tup{t})]
        \end{multline*}
        Since $\pred(n)=\pred(n\rightarrow c_1)\cup\pred(n\rightarrow c_2)$ and the variables in $\var(Q_{c_1})\cap\var(Q_{c_2})$ are contained in the variables of $\tup{t}$, we have
        \begin{align*}
        & =\pi_{\var(N)}\sigma_{\pred(n)}[(Q_{c_1}(\db)\Join Q_{c_2}(\db))\semijoin \tup{t}]\\
        & =\pi_{\var(N)}[\sigma_{\pred(n)}(Q_{c_1}(\db)\Join Q_{c_2}(\db))]\semijoin \tup{t}\\
        & =\pi_{\var(N)}Q_n(\db)\semijoin \tup{t}  \qedhere
        \end{align*}
\end{proof}


\section{Proofs of Section~6}
\label{sec:app-gyo}

\subsection{Proof of Proposition~\ref{prop:confluence}}
\label{sec:proof-prop-confluence}

Because no infinite sequences of reduction steps are possible, it
suffices to demonstrate local confluence: 
\begin{proposition}
  \label{prop:local-confluence}
    If $\trip{H} \rewr \trip{I}_1$ and $\trip{H} \rewr \trip{I}_2$
  then there exists $\trip{J}$ such that both $\trip{I}_1 \rewr^* \trip{J}$
  and $\trip{I}_2 \rewr^* \trip{J}$. 
\end{proposition}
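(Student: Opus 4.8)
The plan is to prove local confluence directly, by a case analysis on the (unordered) pair of reduction operations that produce $\trip{I}_1$ and $\trip{I}_2$ from $\trip{H}$, and, within each pair, on whether the two redexes are disjoint or overlap. When the redexes are disjoint I will show that the two steps commute, so that one further step of each type drives both $\trip{I}_1$ and $\trip{I}_2$ to a common triplet $\trip{J}$; when they overlap I must show that the redex of one operation can still be recovered (possibly under a different witness) after the other has fired, and then reconcile any discrepancy in the remaining predicate sets.

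The routine cases rest on two monotonicity observations that I would isolate as small lemmas. First, an (ISO) step removes only isolated variables, which by definition lie in no predicate and are not equijoin variables; hence it changes neither the equijoin/isolated classification of any remaining variable, nor any set $\ext_{\trip{H}}(\cdot)$, nor $\preds(\trip{H})$. Consequently (ISO) never enables or disables a (CSE) or (FLT) step, it commutes with any operation acting on a different hyperedge, and it commutes with a second (ISO) on the same hyperedge (remove the union of the two isolated sets). Second, an (FLT) step removes only predicates, which can only shrink the sets $\ext_{\trip{H}}(\cdot)$ and $\var(\preds(\trip{H}))$: this may turn further variables isolated but can never remove a variable from $\isolated(\trip{H})$, and it preserves the conditional-subset relation $\cse$, since its two defining inclusions $\equijoinvars(e)\subseteq\ell$ and $\ext(e\setminus\ell)\subseteq\ell$ only become easier to satisfy. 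These facts dispatch every pair among $\{\mathrm{ISO},\mathrm{FLT}\}$, as well as every (CSE) paired with an (ISO) or (FLT) whose redex is disjoint from the eliminated hyperedge; in each such situation the steps genuinely commute, and where the fired step has already deleted part of the other redex (e.g. overlapping predicate sets), the leftover is removed by one additional step of the same type.

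The real work, and the step I expect to be the main obstacle, is the (CSE)--(CSE) overlap in which the witness of one elimination is the hyperedge deleted by the other: say $\trip{H}\rewr\trip{I}_1$ removes $e_1$ with witness $e_2$ (so $e_1\cse_{\trip{H}}e_2$), while $\trip{H}\rewr\trip{I}_2$ removes $e_2$ with witness $\ell_2$ (so $e_2\cse_{\trip{H}}\ell_2$). Here (CSE) is the only operation that can demote an equijoin variable to an isolated one, so applicability cannot be read off from $\trip{H}$ alone. I would first prove a \emph{transitivity of $\cse$ across a reduction}: using $e_1\cap e_2\subseteq\equijoinvars_{\trip{H}}(e_2)\subseteq\ell_2$ together with the fact that every predicate surviving into $\trip{I}_2$ satisfies $\var(\theta)\cap e_2\subseteq\ell_2$, one checks both $\equijoinvars_{\trip{I}_2}(e_1)\subseteq\ell_2$ and $\ext_{\trip{I}_2}(e_1\setminus\ell_2)\subseteq\ell_2$, so that $e_1\cse_{\trip{I}_2}\ell_2$ and $e_1$ can still be eliminated from $\trip{I}_2$; symmetrically, $e_2\cse_{\trip{H}}\ell_2$ is preserved in $\trip{I}_1$ (assuming $\ell_2\neq e_1$), since removing $e_1$ only shrinks the relevant sets. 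Eliminating the surviving edge on each side yields triplets whose hypergraphs agree but whose predicate sets may differ, because the two eliminations of $e_1$ used different witnesses ($e_2$ versus $\ell_2$) and hence discarded predicates touching $e_1\setminus e_2$ versus $e_1\setminus\ell_2$.

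The final ingredient is to reconcile these predicate sets. The discrepancy is exactly the predicates touching $e_1\cap\ell_2\setminus e_2$; any such predicate $\theta$ that survived the $\ell_2$-witnessed path satisfies, by the ext-inclusion of $e_1\cse_{\trip{H}}e_2$ and the fact that it avoids $(e_1\cup e_2)\setminus\ell_2$, the containment $\var(\theta)\subseteq\ell_2$. Since $\ell_2$ is still present, a single (FLT) step with witness $\ell_2$ removes precisely these predicates, driving both sides to the common $\trip{J}$. The remaining overlap subcases --- two (CSE) steps deleting the same hyperedge under different witnesses, or a mutual pair $e_1\cse_{\trip{H}}e_2$ and $e_2\cse_{\trip{H}}e_1$ --- are handled by the same combination of a $\cse$-transitivity argument and a reconciling (FLT) step, and I would present them as variants of this central case.
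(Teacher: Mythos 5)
Your overall architecture is the same as the paper's: local confluence by case analysis on the pair of operations, supported by monotonicity lemmas (the paper isolates these as its preservation lemma: $\preds$, $\ext$ and $\equijoinvars$ only shrink under any step, $\isolated$ only grows, and $\cse$ is preserved). Your treatment of the central $(\text{CSE},\text{CSE})$ overlap, where the witness of one elimination is the edge deleted by the other ($e_1 \cse_{\trip H} e_2$, $e_2 \cse_{\trip H} \ell_2$), is sound but takes a detour the paper avoids: the paper proves the witness-independence identity $e_1 \setminus \ell_2 = e_1 \setminus e_2$ (from the observation that a variable shared by two distinct hyperedges is an equijoin variable, combined with the two $\cse$ inclusions), so the two eliminations of $e_1$ discard \emph{exactly} the same predicates and the steps commute outright. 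Consequently your claimed discrepancy set --- predicates touching $e_1 \cap \ell_2 \setminus e_2$ --- is provably empty, and your reconciling (FLT) step is vacuous. Your argument still closes the peak (your containment $\var(\theta)\subseteq\ell_2$ for discrepancy predicates is correct, vacuously or not, and $\rewr^*$ permits zero steps), but note that (FLT) requires a \emph{non-empty} $\Theta$, so as written you would have to split on whether the discrepancy is empty; the cleaner resolution is the identity itself. A similar remark applies to two (CSE) steps deleting the same hyperedge under different witnesses: the same identity shows $\trip I_1 = \trip I_2$, so no reconciliation is needed at all.

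The genuine gap is the mutual case $e_1 \cse_{\trip H} e_2$ and $e_2 \cse_{\trip H} e_1$ with $e_1 \neq e_2$ (the paper's case (2d)), which you propose to handle ``by the same combination of a $\cse$-transitivity argument and a reconciling (FLT) step.'' That cannot work: after one step, side 1's hypergraph still contains $e_2$ while side 2's still contains $e_1$, so the two sides differ in their \emph{hyperedges}, not merely their predicate sets, and (FLT) --- which only deletes predicates --- cannot reconcile them; nor is a further (CSE) available, since each side has already deleted the only witness the other elimination could use. The paper closes this peak by applying (FLT) to strip the surviving filters ($\theta$ with $\var(\theta) \subseteq e_2$, resp.\ $\subseteq e_1$) and then (ISO) to remove the now-isolated variables, and the crux is the nontrivial set identity $e_2 \setminus \isolated_{\trip K_1}(e_2) = e_1 \setminus \isolated_{\trip K_2}(e_1)$, proved by a three-way analysis of why a variable fails to be isolated (membership in $\free$, occurrence in a second surviving hyperedge, occurrence in a surviving predicate). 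Your plan contains no substitute for this. Secondarily, and more easily patched, your blanket claim that (ISO) ``never enables or disables'' other steps glosses the corner cases where (ISO) erases an entire hyperedge that serves as a CSE witness, or shrinks the witness or target to $e \setminus X$; the paper handles these explicitly in its case (3), including the w.l.o.g.\ reduction of the sub-case $e_1 = \isolated_{\trip H}(e_1)$ (where the ISO deletion of all of $e_1$ is re-read as a CSE) and the replacement witness $f_2' = e_1 \setminus X$ with the verification that $e_2 \setminus f_2' = e_2 \setminus f_2$.
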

Indeed, it is a standard result in the theory of rewriting systems
that confluence (Lemma~\ref{prop:confluence}) and local confluence 
(lemma~\ref{prop:local-confluence}) coincide when infinite sequences of
reductions steps are impossible~\cite{baader}.

Before proving Lemma~\ref{prop:local-confluence}, we observe that the
property of being isolated or being a conditional subset is preserved
under reductions, in the following sense.

\begin{lemma}
  \label{lem:preservation}
  Assume that $\trip{H} \rewr \trip{I}$. Then $\preds(\trip I)
  \subseteq \preds(\trip H)$ and for every hyperedge $e$ we have
  $\ext_{\trip I}(e) \subseteq \ext_{\trip H}(e)$,
  $\equijoinvars_{\trip I}(e) \subseteq \equijoinvars_{\trip H}(e)$,
  and $\isolated_{\trip{H}}(e) \subseteq
  \isolated_{\trip{I}}(e)$. Furthermore, if $e \cse_{\trip H} f$  then also $e \cse_{\trip I} f$.
\end{lemma}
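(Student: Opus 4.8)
The plan is to reduce the entire lemma to two monotonicity facts about a single reduction step and then read off all five assertions as formal consequences. First I would observe that no reduction operation ever adds a hyperedge, a variable, or a predicate: (ISO) and (CSE) only delete variables or an entire hyperedge from the hypergraph and leave, respectively, the predicate set untouched or strictly shrink it, while (FLT) leaves the hypergraph untouched and only deletes predicates; also $\free$ is never modified. From a short case analysis I would extract the two base facts, namely $\preds(\trip I) \subseteq \preds(\trip H)$ (immediate in all three cases), and that the number of hyperedges containing any given variable is non-increasing from $\trip H$ to $\trip I$.

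From the predicate inclusion I would derive $\ext_{\trip I}(e) \subseteq \ext_{\trip H}(e)$ for every hyperedge $e$ by pure monotonicity of the defining union $\ext_{\trip H}(e) = \bigcup\{\var(\theta) \mid \theta \in \preds(\trip H),\ \var(\theta)\cap e \neq \emptyset\}\setminus e$: the index set $\{\theta \in \preds(\trip H) \mid \var(\theta)\cap e \neq \emptyset\}$ can only shrink when the predicate set shrinks, since the condition $\var(\theta)\cap e\neq\emptyset$ depends only on $\theta$ and $e$, and forming the union over a smaller set and then removing $e$ preserves the inclusion. From the fact that occurrence counts are non-increasing while $\free(\trip H)$ is unchanged, I would obtain $\equijoinvars(\trip I) \subseteq \equijoinvars(\trip H)$: a variable lying in $\free(\trip I)=\free(\trip H)$ or in at least two hyperedges of $\trip I$ necessarily lies in $\free(\trip H)$ or in at least two hyperedges of $\trip H$; intersecting with $e$ yields $\equijoinvars_{\trip I}(e)\subseteq\equijoinvars_{\trip H}(e)$. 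The one point needing a word of care here is the (ISO) case, where a hyperedge loses variables: because the removed variables are \emph{isolated} they occur in exactly one hyperedge, so their deletion cannot drop any other variable's occurrence count below two, which is exactly why the count is genuinely non-increasing.

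For the isolated-variable assertion I would stress that the inclusion runs in the \emph{opposite} direction, $\isolated_{\trip H}(e)\subseteq\isolated_{\trip I}(e)$, precisely because isolatedness is defined by two negative conditions. Indeed, if $x$ is neither an equijoin variable of $\trip H$ nor mentioned in $\var(\preds(\trip H))$, then by the two inclusions just established it is a fortiori neither an equijoin variable of $\trip I$ (as $\equijoinvars(\trip I)\subseteq\equijoinvars(\trip H)$) nor mentioned in $\var(\preds(\trip I))$ (as $\preds(\trip I)\subseteq\preds(\trip H)$); hence $\isolated(\trip H)\subseteq\isolated(\trip I)$, and intersecting with $e$ finishes this part. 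Finally, the preservation of conditional subset is a direct combination: $e \cse_{\trip H} f$ unpacks to $\equijoinvars_{\trip H}(e)\subseteq f$ and $\ext_{\trip H}(e\setminus f)\subseteq f$, and applying $\equijoinvars_{\trip I}(e)\subseteq\equijoinvars_{\trip H}(e)$ together with $\ext_{\trip I}(e\setminus f)\subseteq\ext_{\trip H}(e\setminus f)$ yields exactly the two conditions defining $e \cse_{\trip I} f$. I do not anticipate a real obstacle: the whole argument is monotonicity bookkeeping, and the only traps to avoid are remembering that the isolated-variable inclusion is reversed and verifying that (ISO) really keeps occurrence counts non-increasing.
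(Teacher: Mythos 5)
Your proof is correct and takes essentially the same route as the paper: the paper's own proof is exactly this monotonicity bookkeeping — $\preds(\trip I) \subseteq \preds(\trip H)$ gives $\ext_{\trip I}(e) \subseteq \ext_{\trip H}(e)$, removal-only hypergraph changes give $\equijoinvars_{\trip I}(e) \subseteq \equijoinvars_{\trip H}(e)$, the two negative conditions flip the inclusion for $\isolated$, and the $\cse$ claim combines the first two inclusions. Your extra caution about (ISO) is superfluous but harmless, since deleting variables from a hyperedge cannot increase any variable's occurrence count regardless of isolatedness.
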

\begin{proof}
  First observe that $\preds(\trip I) \subseteq \preds(\trip H)$,
  since reduction operators only remove predicates.  This implies that
  $\ext_{\trip I}(e) \subseteq \ext_{\trip H}(e)$ for every hyperedge
  $e$. Furthermore, because reduction operators only remove hyperedges
  and never add them, it is easy to see that $\equijoinvars_{\trip
    H}(e) \subseteq \equijoinvars_{\trip I}(e)$. Hence, if $x \in
  \isolated_{\trip H}(e)$ then $x \not \in \equijoinvars_{\trip H}(e)
  \supseteq \equijoinvars_{\trip I}(e)$ and $x \not \in
  \var(\preds(\trip H)) \supseteq \var(\preds(\trip I))$. Therefore, $x
  \in \isolated_{\trip I}(e)$. As such, $\isolated_{\trip I}(e)
  \subseteq \isolated_{\trip H}(e)$.

  Next, assume that $e \cse_{\trip H} f$.  We need to show that
  $\equijoinvars_{\trip{I}}(e) \subseteq f$ and $\ext_{\trip{I}}(e
  \setminus f) \subseteq f$. The first condition follows since
  $\equijoinvars_{\trip{I}}(e) \subseteq \equijoinvars_{\trip H}(e)
  \subseteq f$ where the last inclusion is due to $e \cse_{\trip H}
  f$. The second also follows since $\ext_{\trip{I}}(e \setminus f)
  \subseteq \ext_{\trip{H}}(e \setminus f) \subseteq f$ where the last
  inclusion is due to $e \cse_{\trip H} f$.
\end{proof}

\begin{proof}[Proof of Proposition~{\protect\ref{prop:local-confluence}}]
  If $\trip{I}_1 = \trip{I}_2$ then it suffices to take $\trip{J}
  =\trip{I}_1 = \trip{I}_2$. Therefore, assume in the following that
  $\trip{I}_1 \not = \trip{I}_2$. Then, necessarily $\trip{I}_1$ and
  $\trip{I}_2$ are obtained by applying two different reduction
  operations on $\trip{H}$. We make a case analysis on the types of
  reductions applied.

  \emph{\uline{(1) Case (ISO, ISO)}}: assume that $\trip{I}_1$ is obtained by
  removing the non-empty set $X_1 \subseteq \isolated_{\trip H}(e_1)$
  from hyperedge $e_1$, while $\trip I_2$ is obtained by removing
  non-empty $X_2 \subseteq \isolated_{\trip H}(e_2)$ from $e_2$ with
  $X_1 \not = X_2$. There are two possibilities.

  (1a) $e_1 \not = e_2$. Then $e_2$ is still a hyperedge in $\trip I_2
  $ and $e_1$ is still a hyperedge in $\trip I_1$. By
  Lemma~\ref{lem:preservation}, $\isolated_{\trip H}(e_1) \subseteq
  \isolated_{\trip I_2}(e_1)$ and $\isolated_{\trip H}(e_2) \subseteq
  \isolated_{\trip I_1}(e_2)$. Therefore, we can still remove $X_2$
  from $\trip{I}_1$ by means of rule ISO, and similarly remove $X_1$
  from $\trip{I}_2$. Let $\trip{J}_1$ (resp. $\trip{J}_2$) be the
  result of removing $X_2$ from $\trip{I}_1$
  (resp. $\trip{I}_2$). Then $\trip{J}_1 = \trip{J}_2$ (and hence
  equals triplet $\trip{J}$):
  \begin{align*}
    \hypergraph(\trip{J}_1) & = \hypergraph(\trip{H}) \setminus
    \{e_1,e_2\} \cup \{ e_1 \setminus X_1 \mid e_1 \setminus X_1 \not
    = \emptyset\} \\ & \quad \cup \{ e_2 \setminus
    X_2 \mid e_2 \setminus X_2 \not = \emptyset\}  \\
    & = \hypergraph(\trip{J}_2) \\
    \preds(\trip{J}_1) & = \preds(\trip H) = \preds(\trip J_2)
  \end{align*}

  (1b) $e_1 = e_2$. We show that $X_2 \setminus X_1 \subseteq
  \isolated_{\trip I_1}(e_1 \setminus X_1)$ and similarly $X_1
  \setminus X_2 \subseteq \isolated_{\trip I_1}(e_2 \setminus
  X_1)$. This suffices because we can then apply ISO to remove $X_2
  \setminus X_1$ from $\trip I_1$ and $X_1 \setminus X_2$ from $\trip
  I_2$. In both cases, we reach the same triplet as removing $X_1 \cup X_2
  \subseteq \isolated_{\trip H}(e_1)$ from $\trip H$.\footnote{Should
    $X_2 \setminus X_1$ be empty, we don't actually need to do
    anything on $\trip I_1$: $X_1 \cup X_2$ is already removed from
    it. A similar remark holds for $\trip I_2$ when $X_1 \setminus
    X_2$ is empty.}
  
  To see that $X_2 \setminus X_1 \subseteq \isolated_{\trip I_1}(e_1
  \setminus X_1)$, let $x \in X_2 \setminus X_1$. We need to show $x
  \not \in \equijoinvars_{\trip I_1}(e_1 \setminus X_1)$ and
  $x \not \in \var(\preds(\trip I_1))$. Because $x \in
  X_2 \subseteq \isolated_{\trip H}(e_1)$ we know $x \not \in
  \equijoinvars_{\trip H}(e_1)$. Then, since $x \not \in X_1$, also $x
  \not \in \equijoinvars_{\trip H}(e_1 \setminus X_1)$. By
  Lemma~\ref{lem:preservation}, $\equijoinvars_{\trip I_1}(e_1
  \setminus X_1) \subseteq \equijoinvars_{\trip H}(e_1 \setminus
  X_1)$. Therefore, $x \not \in \equijoinvars_{\trip I_1}(e_1
  \setminus X_1)$. Furthermore, because $x
  \in \isolated_{\trip H}(e_1)$ we know $x \not \in \var(\preds(\trip
  H))$. Since $\var(\preds(\trip I_1)) \subseteq \var(\preds(\trip
  H))$ by Lemma~\ref{lem:preservation}, also $x\ not \in
  \var(\preds(\trip I_1))$. 
  
  $X_1
  \setminus X_2 \subseteq \isolated_{\trip I_1}(e_2 \setminus
  X_1)$ is shown similarly.

  \medskip
  \emph{\uline{(2) Case (CSE, CSE)}}: assume that $\trip{I}_1$ is obtained by
  removing hyperedge $e_1$ because it is a conditional subset of
  hyperedge $f_1$, while $\trip{I}_2$ is obtained by removing $e_2$,
  conditional subset of $f_2$. Since $\trip{I}_1 \not = \trip{I}_2$ it
  must be $e_1 \not = e_2$. We need to further
  distinguish the following cases.

  (2a) $e_1 \not = f_2$ and $e_2 \not = f_1$.  In this case, $e_2$ and
  $f_2$ remain hyperedges in $\trip{I}_1$ while $e_1$ and $f_1$ remain
  hyperedges in $\trip I_2$. Then, by Lemma~\ref{lem:preservation}, $e_2
  \cse_{\trip I_1} f_2$ and $e_1 \cse_{\trip I_2} f_2$.  Let $\trip
  J_1$ (resp. $\trip J_2$) be the triplet obtained by removing $e_2$
  from $\trip I_1$ (resp. $e_1$ from $\trip I_2$). Then $\trip J_1 =
  \trip J_2$ since clearly $\free(\trip J_1) = \free(\trip J_2)$ and 
  \begin{align*}
    \hypergraph(\trip{J}_1) & = \hypergraph(\trip{H}) \setminus
    \{e_1,e_2\}  = \hypergraph(\trip{J}_2) \\
    \preds(\trip{J}_1) & = \{ \theta \in \preds(\trip{H}) \mid
    \var(\theta) \cap (e_1 \setminus f_1)= \emptyset, \\
    & \qquad  \var(\theta) \cap
    (e_2 \setminus f_2) = \emptyset
    \}\\
    & = \preds(\trip{J}_2) 
  \end{align*}
  From this the result follows by taking $\trip J = \trip J_1 = \trip
  J_2$.

  (2b) $e_1 \not = f_2$ but $e_2 = f_1$. Then $e_1 \cse_{\trip{H}}
  e_2$ and $e_2 \cse_{\trip{H}} f_2$ with $f_2 \not = e_1$. It
  suffices to show that $e_1 \cse_{\trip{H}} f_2$ and $e_1 \setminus
  f_2 = e_1 \setminus f_1$, because then (CSE) due to $e_1\cse_{\trip
    H} f_1$ has the same effect as CSE on $e_1 \cse_{\trip H} f_2$,
  and we can apply the reasoning of case (2a) because $e_1 \not = f_2$
  and $e_2 \not = f_2$.

  We first show $e_1 \setminus f_2 = e_1 \setminus f_1$. Let $x \in
  e_1 \setminus f_2$ and suppose for the purpose of contradiction that
  that $x \in e_2 = f_1$. Then, since $e_1 \not = e_2$, $x \in
  \equijoinvars(e_2) \subseteq f_2$ where the last inclusion is due to
  $e_2 \cse_{\trip H} f_2$. Hence, $e_1 \setminus f_2 \subseteq e_1
  \setminus f_1$. Conversely, let $x \in e_1 \setminus f_1$. Since
  $f_1 = e_2$, $x \not \in e_2$. Suppose for the purpose of
  contradiction that $x \in f_2$. Because $e_1 \not = f_2$, $x \in
  \equijoinvars_{\trip H}(e_1) \subseteq e_2$ where the last inclusion
  is due to $e_1 \cse_{\trip H} e_2$. Therefore, $e_2 \setminus f_1 =
  e_1 \setminus f_2$. 

  To show that $e_1 \cse_{\trip H} f_2$, let $x
  \in \equijoinvars_{\trip H}(e_1)$. Because $e_1 \cse_{\trip{H}}
  e_2$, $x \in e_2$. Because $x$ occurs in two distinct hyperedges in
  $\trip{H}$, also $x \in \equijoinvars_{\trip H}(e_2)$. Then, because
  $e_2 \cse_{\trip H} f_2$, $x \in f_2$. Hence $\equijoinvars_{\trip
    H}(e_1) \subseteq f_2$. It remains to show $\ext_{\trip H}(e_1
  \setminus f_2) \subseteq f_2$. To this end, let $x \in \ext_{\trip
    H}(e_1 \setminus f_2)$ and suppose for the purpose of
  contradiction that $x \not \in f_2$. By definition of $\ext$ there
  exists $\theta \in \preds(\trip H)$ and $y \in \var(\theta) \cap
  (e_1 \setminus f_2)$ such that $x \in \var(\theta) \setminus (e_1
  \setminus f_2)$. In particular, $y \not \in f_2$.  Since $e_1
  \setminus f_2 = e_1 \setminus e_2$, $y \in \var(\theta) \cap (e_1
  \setminus e_2) $ and $x \in \var(\theta) \setminus (e_1 \setminus
  e_2)$. Thus, $x \in \ext_{\trip H}(e_1 \setminus e_2)$. Then, since
  $e_1 \cse_{\trip H} e_2$, $x \in e_2$. Thus, $x \in e_2 \setminus
  f_2$ since $x \not \in f_2$. Hence $x \in \var(\theta) \cap (e_2
  \setminus f_2)$. Furthermore, since $y \not \in e_2$ also $y \not
  \in e_2 \setminus f_2$. Hence, $y \in \var(\theta) \setminus (e_2
  \setminus f_2)$. But then $\theta$ shows that $y \in \ext_{\trip
    H}(e_2 \setminus f_2)$. Then, by because $e_2 \cse_{\trip H} f_2$,
  also $y \in f_2$ which yields the desired contradiction.

  (2c) $e_1  = f_2$ but $e_2 \not = f_1$. Similar to case (2b).

  (2d) $e_1 = f_2$ and $e_2 = f_1$. Then
  $e_1 \cse_{\trip{H}} e_2$ and $e_2 \cse{_\trip{H}} e_1$ and $e_1
  \not = e_2$. Let $\trip K_1$ (resp. $\trip K_2$) be the triplet
  obtained by applying (FLT) to remove all $\theta \in \preds(\trip
  I_1)$ (resp. $\theta \in \preds(\trip I_2)$ for which $\var(\theta)
  \subseteq \var(e_2)$ (resp. $(\var(\theta) \subseteq
  \var(e_2)$. Furthermore, let $\trip{J}_1$ (resp. $\trip{J}_2$) be
  the triplet obtained by applying ISO to removing $\isolated_{\trip
    I_1}(e_2)$ from $\trip K_1$ (resp. removing $\isolated_{\trip
    I_2}(e_1)$ from $\trip K_2$). Here, we take $\trip{J}_1 = \trip
  K_1$ if $\isolated_{\trip K_1}(e_2)$ is empty (and similarly for
  $\trip{J}_2$). Then clearly $\trip H \rewr \trip I_1 \rewr^* \trip K_1
  \rewr^* J_1$ and $\trip H \rewr \trip I_2 \rewr^* \trip K_2 \rewr^*
  \trip J_2$. The result then follows by showing that $\trip J_1 =
  \trip J_2$.  Towards this end, first observe that $\free(\trip J_1)
  = \free(\trip K_1) = \free(\trip I_1) = \free(\trip H) = \free(\trip
  I_2) = \free(\trip K_2) = \free(\trip J_2)$. Next, we show that
  $\preds(\trip J_1) = \preds(\trip J_2)$. We first observe that
  $\preds(\trip J_1) = \preds(\trip K_1)$ and $\preds(\trip J_2) =
  \preds(\trip K_2)$ since the ISO operation does not remove
  predicates. Then observe that 
  \begin{align*}
    \preds(\trip K_1)  & = \{ \theta \in \preds(\trip I_1) \mid
    \var(\theta) \not \subseteq \var(e_2) \} \\
    & = \{ \theta \in \preds(\trip H) \mid
    \var(\theta) \cap (e_1\setminus e_2) = \emptyset \text{ and } \\
    & \qquad \qquad \qquad \quad\  \var(\theta) \not \subseteq e_2 \}, \\
    \preds(\trip K_2) & = \{ \theta \in \preds(\trip I_2) \mid
    \var(\theta) \not \subseteq e_1 \} \\
    & = 
 \{ \theta \in \preds(\trip H) \mid
    \var(\theta) \cap (e_2\setminus e_1) = \emptyset \text{ and }\\
    & \qquad \qquad \qquad \quad\  \var(\theta)
    \not \subseteq e_1 \}.
  \end{align*}
  We only show the reasoning for $\preds(\trip K_1) \subseteq
  \preds(\trip K_2)$, the other direction being similar. Let $\theta
  \in \preds(\trip K_1)$. Then $\var(\theta \cap (e_1 \setminus e_2)
  =\emptyset$ and $\var(\theta) \not \subseteq e_2$.  Since
  $\var(\theta) \not \subseteq e_2$ there exists $y \in \var(\theta)
  \setminus e_2$. Then, because $\var(\theta) \cap (e_1 \setminus e_2)
  = \emptyset$, $y \not \in e_1$. Thus, $\var(\theta) \not \subseteq
  e_1$. Now, suppose for the purpose of obtaining a contradiction,
  that $\var(\theta) \cap (e_2 \setminus e_1) \not = \emptyset$. Then
  take $z \in \var(\theta) \cap (e_2 \setminus e_1)$. But then $y \in
  \ext_{\trip H}(e_2 \setminus e_1)$. Hence, $y \in e_1$ because $e_2
  \cse_{\trip H} e_1$, which yields the desired contradiction with $y
  \not \in e_2$. Therefore, $\var(\theta) \cap (e_2 \setminus e_1) =
  \emptyset$, as desired. Hence $\theta \in \preds(\trip K_2)$.

  It remains to show that $\hypergraph(\trip J_1) = \hypergraph(\trip
  J_2)$. To this end, first observe
  \begin{align*}
    \hypergraph(\trip J_1) & = \hypergraph(\trip K_1) \setminus
    \{e_2\}
    \cup \{ e_2 \setminus \isolated_{\trip K_1}(e_2) \},\\
    & = \hypergraph(\trip H) \setminus \{ e_1 \} \setminus \{e_2\}
    \cup \{ e_2 \setminus \isolated_{\trip K_1}(e_2) \},\\
    \hypergraph(\trip J_2) & = \hypergraph(\trip K_2) \setminus \{e_1
    \} \cup \{ e_1 \setminus \isolated_{\trip K_2}(e_1) \} \\
    & = \hypergraph(\trip H) \setminus \{ e_2 \} \setminus \{e_1 \}
    \cup \{ e_1 \setminus \isolated_{\trip K_2}(e_1) \}.
  \end{align*}
  Clearly, $\hypergraph(\trip J_1) = \hypergraph(\trip J_2)$ if $e_2
  \setminus \isolated_{\trip K_1}(e_2) = e_1 \setminus
  \isolated_{\trip K_2}(e_1)$. 

  We only show $e_2 \setminus \isolated_{\trip K_1}(e_2) \subseteq e_1
  \setminus \isolated_{\trip K_2}(e_1)$, the other inclusion being
  similar. Let $x \in e_2 \setminus \isolated_{\trip K_1}(e_2)$. Since
  $x \not \in \isolated_{\trip K_1}(e_2)$ one of the following hold.
  \begin{itemize}
  \item $x \in \free(\trip K_1)$. But then, $x \in \free(\trip K_1) =
    \free(\trip I_1) = \free(\trip H) = \free(\trip I_2) = \free(\trip
    K_2)$. In particular, $x$ is an equijoin variable in $\trip{H}$
    and $\trip{K_2}$. Then $x \in \equijoinvars_{\trip H}(e_2)
    \subseteq e_1$ because $e_2 \cse_{\trip H} e_1$. From this and the
    fact that $x$ remains an equijoin variable in $\trip K_2$, we
    obtain $x \in e_1 \setminus \isolated_{\trip K_2}(e_1)$.
  \item $x$ occurs in $e_2$ and in some hyperedge $g$ in $\trip K_1$
    with $g \not = e_2$. Since $e_1$ is not in $\trip K_1$ also $g
    \not = e_1$. Since every hyperedge in $\trip K_1$ is in $\trip
    I_1$ and every hyperedge in $\trip I_1$ is in $\trip H$, also $g$
    is in $\trip H$. But then, $x$ occurs in two distinct hyperedges
    in $\trip H$, namely $e_2$ and $g$, and hence $x \in
    \equijoinvars_{\trip H}(e_2) \subseteq e_1$ because $e_2
    \cse_{\trip H} e_1$. However, because $x$ also occurs in $g$ which
    must also be in $\trip I_2$ and therefore also in $\trip K_2$, $x$
    also occurs in two distinct hyperedges in $\trip K_2$, namely
    $e_1$ and $g$. Therefore, $x \in \equijoinvars_{\trip I_2}(e_1)$
    and hence $x \in e_1 \setminus \isolated_{\trip I_2}(e_1)$, as
    desired.
  \item $x \in \var(\preds(\trip K_1))$. Then there exists $\theta \in
    \preds(\trip K_1)$ such that $x \in \var(\theta)$.  Since
    $\preds(\trip K_1) =\preds(\trip K_2)$, $\theta \in \preds(\trip
    K_2)$. As such, $\theta \in \preds(\trip H)$, $\var(\theta) \cap
    (e_2 \setminus e_1) = \emptyset$, and $\var(\theta) \not \subseteq
    e_1$. But then, since $x \in \var(\theta)$; $x \in e_2$; and 
$\var(\theta) \cap (e_2 \setminus e_1)
    = \emptyset$,  it must be the
    case that $x \in e_1$. As such, $x \in e_1$ and $x
    \in \var(\trip K_2)$. Hence $x \in e_1 \setminus \isolated_{\trip
      K_2}(e_1)$.

  \end{itemize}

  \medskip
  \emph{\uline{(3) Case (ISO, CSE)}}: assume that $\trip{I}_1$ is obtained by
  removing the non-empty set of isolated variables $X_1 \subseteq
  \isolated_{\trip H}(e_1)$ from $e_1$, while $\trip{I}_2$ is obtained
  by removing hyperedge $e_2$, conditional subset of hyperedge
  $f_2$. We may assume w.l.o.g. that $e_1 \not = \isolated_{\trip
    H}(e_1)$: if $e_1 = \isolated_{\trip H}(e_1)$ then the ISO
  operation removes the complete hyperedge $e_1$. However, because no
  predicate in $\trip H$ shares any variable with $e_1$, it is readily
  verified that $e_1 \cse_{\trip H} e_2$ and thus the removal of $e_1$
  can also be seen as an application of CSE on $e_1$\footnote{Note
    that, since $e_1$ does not share variables with any predicate, the
    CSE operation also does not remove any predicates from $\trip
    H_1$, similar to the ISO operation and hence yields $\trip I_1$.},
  and we are hence back in case (2).

  Now reason as follows. Because $e_2 \cse_{\trip H} f_2$ and because
  isolated variables of $e_1$ occur in no other hyperedge in $\trip
  H$, it must be the case that $e_2 \cap X_1 = \emptyset$. In
  particular, $e_1$ and $e_2$ must hence be distinct. Therefore, $e_1
  \in \hypergraph(\trip I_2)$ and $e_2 \in \hypergraph(\trip I_1)$. By
  Lemma~\ref{lem:preservation}, we can apply ISO on $\trip{I}_2$ to
  remove $X_1$ from $e_1$. It then suffices to show that $e_2$ remains
  a conditional subset of some hyperedge $f'_2$ in $\trip{I}_1$ with
  $e_2 \setminus f_2 = e_2 \setminus f'_2$. Indeed, we can then use
  ECQ to remove $e_2$ from $\hypergraph(\trip{I}_1)$ as well as
  predicates $\theta$ with $\var(\theta) \cap (e_2 \setminus f_2) \not
  = \emptyset$ from $\preds(\trip I_1)$. This clearly yields the same
  triplet as the one obtained by removing $X_1$ from $e_1$ in
  $\trip{I}_2$. We need to distinguish two cases.

  (3a) $f_2 \not = e_1$. Then $f_2 \in \hypergraph(\trip I_1)$ and
  hence $e_2 \cse_{\trip I_1} f_2$ by Lemma~\ref{lem:preservation}. We
  hence take $f'_2 = f_2$.

  (3b) $f_2 = e_1$. Then we take $f'_2 = e_1 \setminus X$.  Since $e_1
  \not = \isolated_{\trip H}(e_1)$ it follows that $e_1 \setminus X_1
  \not = \emptyset$. Therefore, $f'_2 = e_1 \setminus X_1 \in
  \hypergraph(\trip I_1)$. Furthermore, since $X \subseteq
  \isolated_{\trip H}(e_1)$, no variable in $X$ is in any other
  hyperedge in $\trip H$. In particular $X \cap e_2 = \emptyset$.
  Therefore, $e_2 \setminus f'_2 = e_2 \setminus (e_1 \setminus X) =
  (e_2 \setminus e_1) \cup (e_2 \cap X) = e_2 \setminus e_1 \setminus
  e_1 = e_2 \setminus f_2$. It remains to show that $e_2 \cse_{\trip
    I_1} e_1 \setminus X_1$.
  \begin{itemize}
  \item $\equijoinvars_{\trip I_1}(e_2) \subseteq e_1 \setminus
    X_1$. Let $x \in \equijoinvars_{\trip I_1}(e_2)$. By
    Lemma~\ref{lem:preservation}, $x \in \equijoinvars_{\trip
      I_1}(e_2) \subseteq \equijoinvars_{\trip H}(e_2) \subseteq e_1$
    where the last inclusion is due to $e_2 \cse_{\trip H} e_1$. In
    particular, $x$ is an equijoin variable in $\trip H$. But then it
    cannot be an isolated variable in any hyperedge. Therefore, $x
    \not \in X_1$.
  \item $\ext_{\trip I_1}(e_2 \setminus e_1) \subseteq e_1 \setminus
    X$. Let $x \in \ext_{\trip I_1}(e_2 \setminus e_1)$.  Then $x \in
    \ext_{\trip I_1}(e_2 \setminus e_1) \subseteq \ext_{\trip H}(e_2
    \setminus e_1) \subseteq e_1$ where the first inclusion is by
    Lemma~\ref{lem:preservation} and the second by $e_2 \cse_{\trip H}
    e_1$. Then, because $x \in \ext_{\trip H}(e_2 \setminus e_1)$ it
    follows from the definition of $\ext$, that $x$ occurs in some
    predicate in $\preds(\trip H)$. However, $X$ is disjoint with
    $\var(\preds(\trip H))$ since it consist only of isolated
    variables. Therefore, $x \not \in X$.
  \end{itemize}

  \medskip
  \emph{\uline{(4): Case (ISO, FLT)}} Assume that $\trip I_1$ is obtained by
  removing the non-empty set $X_1 \subseteq \isolated_{\trip H}(e_1)$
  from hyperedge $e_1$, while $\trip I_2$ is obtained by removing all
  predicates in the non-empty set $\Theta \subseteq \preds(\trip H)$
  with $\var(\Theta) \subseteq e_2$ for some hyperedge $e_2$ in
  $\hypergraph(\trip H)$. Observe that $e_1 \in \hypergraph(\trip
  I_2)$. By Lemma~\ref{lem:preservation}, $X \subseteq
  \isolated_{\trip H}(e_1) \subseteq \isolated_{\trip
    I_2}(e_1)$. Therefore, we may apply reduction operation (ISO) on
  $\trip I_2$ to remove $X_1$ from $e_1$. We will now show that,
  similarly, we may still apply (FLT) on $\trip I_1$ to remove all
  predicates in $\Theta$ from $\preds(\trip I_1) = \preds(\trip
  H)$. The two operations hence commute, and clearly the resulting
  triplets in both cases is the same. We distinguish two
  possibilities. (i) $e_1 \not = e_2$. Then $e_2 \in \trip I_1$ and,
  $\var(\Theta) \subseteq e_2$ and, since (ISO) does not remove
  predicates, $\Theta \subseteq \preds(\trip H) = \preds(\trip
  I_1)$. As such the (FLT) operation indeed applies to remove all
  predicates in $\Theta$ from $\preds(\trip I_1)$. (ii) $e_1 =
  e_2$. Then, since $X \subseteq \isolated_{\trip H}(e_1)$ and
  isolated variables do no occur in any predicate, $X \cap
  \var(\Theta) = \emptyset$. Then, since $\var(\Theta) \subseteq e_2 =
  e_1$, it follows that also $\var(\Theta) \subseteq e_1 \setminus
  X$. In particular, since we disallow nullary predicates and $\Theta$
  is non-empty, $e_1 \setminus X \not = \emptyset$. Thus, $e_1
  \setminus X \in \hypergraph(\trip I_1)$ and hence operation (FLT)
  applies indeed applies to remove all predicates in $\Theta$ from
  $\preds(\trip I_1)$

  \medskip
  \emph{\uline{(5) Case (CSE, FLT):}} assume that $\trip I_1$ is obtained by
  removing hyperedge $e_1$, conditional subset of $e_2$ in $\trip H$,
  while $\trip I_2$ is obtained by removing all predicates in the
  non-empty set $\Theta \subseteq \preds(\trip H)$ with $\var(\Theta)
  \subseteq e_3$ for some hyperedge $e_3 \in \hypergraph(\trip
  H)$. Since the (FLT) operation does not remove any hyperedges, $e_1$
  and $e_2$ are in $\hypergraph(\trip I_2)$. Then, since $e_1
  \cse_{\trip H} e_2$ also $e_1 \cse_{\trip I_2} e_2$ by
  Lemma~\ref{lem:preservation}. Therefore, we may apply reduction
  operation (CSE) on $\trip I_2$ to remove $e_1$ from
  $\hypergraph(\trip I_2)$ as well as all predicates $\theta \in
  \preds(\trip I_2)$ for which $\var(\theta) \cap (e_1 \setminus e_2)
  \not = \emptyset$. Let $\trip J_2$ be the triplet resulting from
  this operation. We will show that, similarly, we may apply (FLT) on
  $\trip I_1$ to remove all predicates in $\Theta \cap \preds(\trip
  I_1)$ from $\preds(\trip I_1)$, resulting in a triplet $\trip
  J_1$. Observe that necessarily, $\trip J_1 = \trip J_2$ (and hence
  they form the triplet $\trip J$).  Indeed, $\free(\trip J_1) =
  \free(\trip I_1) = \free(\trip H) = \free(\trip I_2) = \free(\trip
  J_2)$ since reduction operations never modify output
  variables. Moreover,
  \begin{align*}
    \hypergraph(\trip J_1) & = \hypergraph(\trip I_1) \\
    & = \hypergraph(\trip H) \setminus \{ e_1 \} \\
    & = \hypergraph(\trip I_2) \setminus \{ e_1 \} \\
    & = \hypergraph(\trip J_2)
  \end{align*}
  where the first and third equality is due to fact that (FLT) does
  not modify the hypergraph of the triplet it operates on. Finally,
  observe
  \begin{align*}
    \preds(\trip J_1) & = \preds(\trip I_1) \setminus (\Theta \cap
    \preds(\trip I_1)) \\
    & = \preds(\trip I_1) \setminus \Theta \\
    & = \{ \theta \in \preds(\trip H) \mid \var(\theta) \cap (e_1
    \setminus e_2) = \emptyset \} \setminus \Theta \\
    & = \{ \theta \in \preds(\trip H) \setminus \Theta \mid
    \var(\theta) \cap (e_1 \setminus e_2) = \emptyset \} \\
    & = \{ \theta \in \preds(\trip I_2) \mid
    \var(\theta) \cap (e_1 \setminus e_2) = \emptyset \} \\
    & = \preds(\trip J_2)
  \end{align*}

  It remains to show that we may apply (FLT) on $\trip I_1$ to remove
  all predicates in $\Theta \cap \preds(\trip I_1)$, resulting in a
  triplet $\trip J_1$.  There are two possibilities.
  \begin{itemize}
  \item $e_3 \not = e_1$.  Then $e_3 \in \trip I_1$, $\Theta \cap
    \preds(\trip(I_1)) \subseteq \preds(\trip I_1))$, and $\var(\Theta
    \cap \preds(\trip I_1)) \subseteq \var(\Theta) \subseteq
    e_3$. Hence the (FLT) operation indeed applies to $\trip I_1$ to
    remove all predicates in $\Theta \cap \preds(\trip I_1)$.

  \item $e_3 = e_1$. In this case we claim that for every $\theta \in
    \Theta \cap \preds(\trip I_1)$ we have $\var(\theta) \subseteq
    e_2$. As such, $\var(\Theta \cap \preds(\trip I_1)) \subseteq
    e_2$. Since $e_2 \in \hypergraph(\trip I_1)$ and $\Theta \cap
    \preds(\trip I_1) \subseteq \preds(\trip I_1)$ we may hence apply
    (FLT) to remove all predicates in $\Theta \cap \preds(\trip I_1)$
    from $\trip I_1$. Concretely, let $\theta \in \Theta \cap
    \preds(\trip I_1)$. Because, in order to obtain $\trip I_1$, (CSE)
    removes all predicates from $\trip H$ that share a variable with
    $e_1 \setminus e_2$, we have $\var(\theta) \cap (e_1 \setminus
    e_2) = \emptyset$. Moreover, because $\theta \in \Theta$,
    $\var(\theta) \subseteq e_1$. Hence $\var(\theta) \subseteq e_2$,
    as desired.
  \end{itemize}

  The remaining cases, (CSE, ISO), (FLT, ISO), and (FLT, CSE), are
  symmetric to case (3), (4), and (5), respectively.
\end{proof}


\subsection{Proof of Proposition~\ref{prop:canonical}}
\label{sec:proof-proposition-canonical-pair}

\canonicalPair*
\begin{proof}
	Let $T$ be a GJT.  The proof proceeds in three steps.  \emph{Step
		1.}  Let $T_1$ be the GJT obtained from $T$ by (i) removing all
	predicates from $T$, and (ii) creating a new root node $r$ that is
	labeled by $\emptyset$ and attaching the root of $T$ to it, labeled
	by the empty set of predicates. $T_1$ satisfies the first
	canonicality condition, but is not equivalent to $T$ because it has
	none of $T$'s predicates.  Now re-add the predicates in $T$ to $T_1$
	as follows. For each edge $m \to n$ in $T$ and each predicate
	$\theta \in \preds_T(m \to n)$, if $\var(\theta) \cap (\var(n)
	\setminus \var(m)) \not =\emptyset$ then add $\theta$ to
	$\preds_{T_1}(m \to n)$. Otherwise, if $\var(\theta) \cap (\var(n)
	\setminus \var(m)) = \emptyset$, do the following. First observe
	that, by definition of \gjts, $\var(\theta) \subseteq \var(n) \cup
	\var(m)$. Because $\var(\theta) \cap (\var(n) \setminus \var(m)) =
	\emptyset$ this implies $\var(\theta) \subseteq \var(m)$. Because we
	disallow nullary predicates, $\var(m) \not = \emptyset$. Let $a$ be
	the first ancestor of $m$ in $T_1$ such that $\var(\theta) \not
	\subseteq \var(a)$. Such an ancestor exists because the root of
	$T_1$ is labeled $\emptyset$. Let $b$ be the child of $a$ in
	$T_1$. Since $a$ is the first ancestor of $m$ with $\var(\theta)
	\not \subseteq \var(a)$, $\var(\theta) \subseteq
	\var(b)$. Therefore, $\var(\theta) \subseteq \var(b) \cup \var(a)$
	and $\var(\theta) \cap (\var(b) \setminus \var(a)) \not =
	\emptyset$. As such, add $\theta$ to $\preds_{T_1}(a \to b)$. After
	having done this for all predicates in $T$, $T_1$ becomes equivalent
	to $T$, and satisfies canonicality conditions (1) and (3). Then take
	take $N_1 = N \cup \{r\}$. Clearly, $N_1$ is a 
	connex subset of $T_1$ and $\var(N) = \var(N')$. 
	Therefore, $(T_1,N_1)$ is equivalent to $(T, N)$.
	
	\emph{Step 2.} Let $T_2$ be obtained from $T_1$ by adding, for each
	leaf node $l$ in $T_1$ a new interior node $n_l$ labeled by
	$\var(l)$ and inserting it in-between $l$ and its parent in
	$T_1$. I.e., if $l$ has parent $p$ in $T_1$ then we have $p \to n_l
	\to l$ in $T_2$ with $\preds_{T_2}(p \to n_l) = \pred_{T_1}(p \to n)$
	and $\preds_{T_2}(n_l \to l)= \emptyset$.\footnote{Note that all
		leafs have a parent since the root of $T_1$ is an interior node
		labeled by $\emptyset$.}  Furthermore, let $N_2$ be the connex subset
	of $T_2$ obtained by replacing every leaf node $l$ in $N_1$ by its
	newly inserted node $n_l$. Clearly, $\var(N_2) = \var(N_1) = \var(N)$
	because $var(l) = \var(n_l)$ for every leaf $l$ of $T_1$. By our
	construction, $(T_2, N_2)$ is equivalent to $(T, N)$; $T_2$ satisfies
	canonicality conditions (1), (2), and (4); and $N_2$ is canonical.
	
	\emph{Step 3.} It remains to enforce condition (3). To this end,
	observe that, by the connectedness condition of \gjts, $T_2$
	violates canonicality condition (3) if and only if there exist
	internal nodes $m$ and $n$ where $m$ is the parent of $n$ such that
	$\var(m) = \var(n)$. In this case, we call $n$ a culprit node. We
	will now show how to obtain an equivalent pair $(U, M)$ that removes
	a single culprit node; the final result is then obtained by
	iterating this reasoning until all culprit nodes have been removed.
	
	The culprit removal procedure is essentially the reverse of the
	binarization procedure of Fig.~\ref{fig:binary}.  Concretely, let
	$n$ be a culprit node with parent $m$ and let $n_1,\dots, n_k$ be
	the children of $n$ in $T_2$.  Let $U$ be the GJT obtained from
	$T_2$ by removing $n$ and attaching all children $n_i$ of $n$ as
	children to $m$ with edge label $\preds_U(m \to n_i) =
	\preds_{T_2}(n \to n_i)$, for $1 \leq i \leq k$. Because $\var(n) =
	\var(m)$, the result is still a valid \gjt. Moreover, because
	$\var(n) = \var(m)$ and $T_2$ satisfied condition (4), we had
	$\pred_{T_2}(m \to n) = \emptyset$, so no predicate was lost by the
	removal of $n$. Finally, define $M$ as follows. If $n \in N_2$, then
	set $M = N_2 \setminus \{n\}$, otherwise set $M = N_2$. In the
	former case, since $N_2$ is connex and $n \in N_2$, $m$ must also be
	in $N_2$. It is hence in $M$. Therefore, in both cases, $\var(N) =
	\var(N_2) = \var(M)$. Furthermore, it is straightforward to check
	that $M$ is a connex subset of $U$. Finally, since $N_2$ consisted
	only of interior nodes of $T_2$, $M$ consists only of interior nodes
	of $U$ and hence remains canonical.
\end{proof}

\subsection{Proof of Lemma~\ref{lem:progress}}
\label{sec:proof-lemma-progress}

We first require a number of auxiliary results.

We first make the following observations regarding canonical \gjt pairs.
\begin{lemma}
	\label{lem:nice-removal-facts}
	Let $(T,N)$ be a canonical \gjt pair, let $n$ be a frontier node of $N$
	and let $m$ be the parent of $n$ in $T$.
	\begin{enumerate}
		\item $x \not \in \var(N \setminus \{n\})$, for every $x \in \var(n)
		\setminus \var(m)$.
		\item $\hypergraph(T, N \setminus \{n\}) = \hypergraph(T, N)
		\setminus \{\var(n)\})$. 
		\item $\theta \not \in \preds(m \to n)$, for every $\theta \in
		\preds(T, N \setminus \{n\})$
		\item $\preds(T, N \setminus \{n\}) = \preds(T, N) \setminus
		\preds(m \to n)$.  
		\item $\preds(m \to n) = \{ \theta \in \pred(T, N) \mid \var(\theta) 
		\cap (\var(n) \setminus \var(m)) \not = \emptyset \}$.
		\item $\pred(T, N \setminus \{n\}) = \{ \theta \in \pred(T, N) \mid
		\var(\theta) \cap (\var(n) \setminus \var(m)) = \emptyset \}$.
	\end{enumerate}
\end{lemma}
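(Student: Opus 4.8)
The plan is to establish the six items in the order in which they are stated, observing that item~(1) is the crux on which everything else rests, while items~(2)--(6) amount to bookkeeping once~(1) is available. Throughout I will use three facts about the situation: that $n$ is a frontier node of $N$, hence a leaf of the subtree of $T$ induced by $N$, so $n$ has \emph{no children in $N$}; that $m \in N$, since $N$ is connex and $n$ is not the root (it has a parent); and the canonicality conditions~(3) (distinct interior nodes carry distinct labels) and~(4) (every $\theta \in \preds(m \to n)$ mentions a variable in $\var(n)\setminus\var(m)$).

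For item~(1), I fix $x \in \var(n)\setminus\var(m)$ and suppose toward a contradiction that $x \in \var(n')$ for some $n' \in N\setminus\{n\}$. I first rule out that $n'$ is a descendant of $n$ in $T$: if it were, connexity of $N$ would force the child of $n$ on the path from $n$ to $n'$ to lie in $N$, contradicting that $n$ is a frontier node. Hence the unique $T$-path from $n$ to $n'$ passes through the parent $m$, and the connectedness condition of GJTs then yields $x \in \var(m)$, contradicting $x \notin \var(m)$. This gives $x \notin \var(N\setminus\{n\})$. The main obstacle of the whole lemma is making precisely this argument watertight, i.e.\ correctly excluding the ``downward'' case via the frontier hypothesis and then invoking connectedness on the remaining ``upward'' path.

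Item~(2) follows from canonicality~(3): since every node of $N$ is interior and distinct interior nodes have distinct labels, $n$ is the unique node of $N$ labelled $\var(n)$, so deleting it removes exactly the hyperedge $\var(n)$ from $\hypergraph(T,N)$ (the case $\var(n)=\emptyset$ is vacuous, as $\emptyset$ never appears in a hypergraph). For item~(3), I take $\theta \in \preds(T, N\setminus\{n\})$ sitting on an edge $p \to c$ with $p,c \in N\setminus\{n\}$, so that $\var(\theta)\subseteq \var(p)\cup\var(c)\subseteq \var(N\setminus\{n\})$; if $\theta \in \preds(m\to n)$ held, canonicality~(4) would supply some $x \in \var(\theta)\cap(\var(n)\setminus\var(m))$, and item~(1) would then force $x \notin \var(N\setminus\{n\})$, a contradiction.

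Items~(4)--(6) are then purely combinatorial. For~(4) I note that the only edge between $N$-nodes incident to $n$ is $m\to n$ (since $n$ has no children in $N$), so $\preds(T,N) = \preds(T,N\setminus\{n\}) \cup \preds(m\to n)$; combining this with the disjointness proved in~(3) yields $\preds(T,N\setminus\{n\}) = \preds(T,N)\setminus\preds(m\to n)$. For~(5) the inclusion $\subseteq$ is immediate from canonicality~(4), and for $\supseteq$ I take $\theta \in \preds(T,N)$ with a witness $x \in \var(\theta)\cap(\var(n)\setminus\var(m))$: item~(1) shows the edge carrying $\theta$ cannot have both endpoints in $N\setminus\{n\}$, and the frontier property rules out $n$ being the parent endpoint, so that edge must be exactly $m\to n$. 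Finally~(6) is simply the combination of~(4) and~(5). I expect no difficulties in~(2)--(6); the entire weight of the argument lies in the contradiction step of~(1).
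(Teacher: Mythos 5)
Your proof is correct and takes essentially the same route as the paper's: item (1) via the argument that the unique $T$-path from any other node of $N$ to $n$ must pass through $m$ (you merely make explicit the exclusion of descendants of $n$, which the paper compresses into a ``clearly''), and items (2)--(6) by the same bookkeeping based on canonicality conditions (3)--(4), the frontier property, and item (1). There are no gaps.
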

\begin{proof}
	(1) Let $x \in \var(n) \setminus \var(m)$ and let $c$ be a node in
	$N \setminus \{n\}$. Clearly the unique undirected path
	between $c$ and $n$ in $T$ must pass through $m$. Because $x \not
	\in \var(m)$ it follows from the connectedness condition of \gjts that
	also $x \not \in \var(c)$. As such, $x \not \in
	\var(N \setminus \{n\})$.
	
	(2) The $\supseteq$ direction is trivial. For the $\subseteq$
	direction, assume that $m \in N \setminus \{n\}$ with $\var(m) \not
	= \emptyset$. Then clearly $m \in N$ and hence $\var(m) \in
	\hypergraph(T,N)$. Furthermore, because $N$ is canonical, both $m$ and
	$n$ are interior nodes in $T$. Then, because $T$ is canonical and $m
	\not = n$ we have $\var(m) \not =\var(n)$. Therefore, $\var(m) \in
	\hypergraph(T,N) \setminus \{\var(n)\}$.
	
	(3) Let $\theta \in \pred(T, N \setminus n)$. Then $\theta$ occurs
	on the edge between two nodes in $N \setminus n$, say $m' \to
	n'$. By definition of \gjts, $\var(\theta) \subseteq \var(n') \cup
	\var(m') \subseteq \var(N \setminus \{n\})$. Now suppose for the
	purpose of contradiction that also $\theta \in \pred(m \to
	n)$. Because $T$ is nice, there is some $x \in \var(\theta) \cap
	(\var(n) \setminus \var(m)) \not = \emptyset$. Hence, by (1), $x \not
	\in \var(N \setminus \{n\})$, which contradicts $\var(\theta)
	\subseteq \var(N \setminus \{n\})$. 
	
	(4) Clearly, $\preds(T,N) \setminus \preds(m \to n) \subseteq
	\preds(T, N \setminus \{n\})$. The converse inclusion follows from (3).
	
	(5) The $\subseteq$ direction follows from the fact that $m$ and $n$
	are in $N$, and $T$ is nice. To also see $\supseteq$, let $\theta
	\in \pred(T, N)$ with $\var(\theta) \cap (\var(n) \setminus \var(m))
	\not = \emptyset$. There exists $x \in \var(\theta) \cap (\var(n)
	\setminus \var(m))$. By (1), $x \not \in \var(N \setminus
	\{n\})$. Therefore, $\theta$ cannot occur between edges in $N
	\setminus \{n\}$ in $T$. Since it nevertheless occurs in
	$\pred(T,N)$, it must hence occur in $\preds(m \to n)$.
	
	(6) Follows directly from (4) and (5).
\end{proof}

\begin{lemma}
	\label{lem:nice-rewr-applicable}
	Let $(T,N)$ be a canonical \gjt pair, let $n$ be a frontier node of $N$
	and let $m$ be the parent of $n$ in $T$. Let $\seq{z} \subseteq
	\var(N \setminus \{n\})$. 
	\begin{enumerate}
		\item $\var(n) \cse_{\hypertrip(T, N, \seq{z})} \var(m)$.
		\item $x \not \in \equijoinvars(\hypertrip(T, N, \seq{z}))$, for
		every $x \in (\var(n) \setminus \var(m))$.
	\end{enumerate}
\end{lemma}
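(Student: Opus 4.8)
The plan is to unfold both claims against the triplet $\trip H = \hypertrip(T,N,\seq{z})$, for which $\hypergraph(\trip H) = \hypergraph(T,N)$, $\free(\trip H) = \seq{z}$, and $\preds(\trip H) = \preds(T,N)$, and to lean heavily on Lemma~\ref{lem:nice-removal-facts}. I would prove part (2) first, since part (1) reduces to it. So fix $x \in \var(n) \setminus \var(m)$. By Lemma~\ref{lem:nice-removal-facts}(1), $x \notin \var(N \setminus \{n\})$, and two consequences follow. First, since $\seq{z} \subseteq \var(N \setminus \{n\})$ by hypothesis, $x \notin \seq{z} = \free(\trip H)$, so $x$ is not an equijoin variable through the free-variable criterion. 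Second, the hyperedges of $\hypergraph(T,N)$ are exactly the nonempty labels $\var(c)$ of nodes $c \in N$; because $N$ is canonical its nodes are interior in $T$, and because $T$ is canonical distinct interior nodes carry distinct labels, so each node of $N$ contributes a distinct hyperedge. As $x$ occurs in no label other than $\var(n)$, it lies in at most one hyperedge of $\hypergraph(\trip H)$ and hence cannot be an equijoin variable through the two-hyperedge criterion either. Thus $x \notin \equijoinvars(\trip H)$, which is part (2).

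For part (1) I must verify the two defining conditions of $\var(n) \cse_{\trip H} \var(m)$. The inclusion $\equijoinvars_{\trip H}(\var(n)) \subseteq \var(m)$ is immediate from part (2): any $x \in \var(n)$ that is an equijoin variable of $\trip H$ cannot lie in $\var(n) \setminus \var(m)$, hence lies in $\var(m)$. For the second condition $\ext_{\trip H}(\var(n) \setminus \var(m)) \subseteq \var(m)$, I would take an arbitrary $x$ in this set and extract, from the definition of $\ext$, a predicate $\theta \in \preds(T,N)$ with $\var(\theta) \cap (\var(n) \setminus \var(m)) \neq \emptyset$ and $x \in \var(\theta) \setminus (\var(n) \setminus \var(m))$. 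Lemma~\ref{lem:nice-removal-facts}(5) then pins $\theta$ down to the edge $m \to n$, and the edge-label constraint in the definition of \gjts gives $\var(\theta) \subseteq \var(m) \cup \var(n)$. Hence $x \in \var(m) \cup \var(n)$, and combined with $x \notin \var(n) \setminus \var(m)$ this forces $x \in \var(m)$, as required.

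The argument is essentially a routine traversal of the definitions once Lemma~\ref{lem:nice-removal-facts} is available, so I do not anticipate a genuine obstacle. The one step that needs care is the ``at most one hyperedge'' claim in part (2): it is precisely here that I must invoke both canonicality of $T$ (to rule out two distinct interior nodes of $N$ sharing the label containing $x$) and Lemma~\ref{lem:nice-removal-facts}(1) (to confine $x$ to the label of $n$). The other point worth flagging is that in the $\ext$ computation the witnessing predicate $\theta$ is guaranteed by fact (5) to sit on the single edge $m \to n$, which is exactly what makes the \gjt edge-label bound $\var(\theta) \subseteq \var(m) \cup \var(n)$ applicable.
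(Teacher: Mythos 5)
Your proof is correct and follows essentially the same route as the paper's: part (2) first via Lemma~\ref{lem:nice-removal-facts}(1) (ruling out both the free-variable and the two-hyperedge criteria), then part (1) by contraposition of (2) for the equijoin inclusion and by Lemma~\ref{lem:nice-removal-facts}(5) together with the \gjt edge-label constraint for the $\ext$ inclusion. The only difference is cosmetic: your appeal to canonicality to get distinct hyperedge labels is unnecessary, since Lemma~\ref{lem:nice-removal-facts}(1) already confines $x$ to the single label $\var(n)$, which is exactly how the paper argues.
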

\begin{proof}
	For reasons of parsimony, let $\trip H = \hypertrip(T, N,
	\seq{z})$. We first prove (2) and then (1).
	
	(2) Let $x \in \var(n) \setminus \var(m)$. By
	Lemma~\ref{lem:nice-removal-facts}(1), $x \not \in \var(N \setminus
	\{n\})$. Therefore, $x$ occurs in $\var(n)$ in $\trip H$ and in no
	other hyperedge. Furthermore, because $\seq{z} \subseteq \var(N
	\setminus \{n\})$, also $x \not \in \seq{z}$. Hence $x \not \in
	\equijoinvars_{\trip H}(\var(n))$. 
	
	(1) We need to show that $\equijoinvars_{\trip H}(\var(n)) \subseteq
	\var(m)$ and $\ext_{\trip H}(\var(n) \setminus \var(m)) \subseteq
	\var(m)$. Let $x \in \equijoinvars_{\trip H}(\var(n))$. By
	contraposition of (2), we know that $x \not \in (\var(n) \setminus
	\var(m))$. Therefore, $x \in \var(m)$ and thus $\equijoinvars_{\trip
		H}(\var(n)) \subseteq \var(m)$. To show $\ext_{\trip H}(\var(n)
	\setminus \var(m)) \subseteq \var(m)$, let $y \in \ext_{\trip
		H}(\var(n) \setminus \var(m))$. Then $y \not \in \var(n) \setminus
	\var(m)$ and there exists $\theta \in \pred(T, N)$ with
	$\var(\theta) \cap (\var(n) \setminus \var(m)) \not = \emptyset$ and
	$y \in \var(\theta)$. By Lemma~\ref{lem:nice-removal-facts}(5),
	$\theta \in \pred_T(m \to n)$. Thus, $y \in \var(m) \cup
	\var(n)$. Since also $y \not \in \var(n) \setminus \var(m)$, it
	follows that $y \in \var(m)$. Therefore, $\ext_{\trip H}(\var(n)
	\setminus \var(m)) \subseteq \var(m)$.
\end{proof}

\begin{lemma}
	\label{lem:progress-1step}
	Let $(T, N)$ be a canonical \gjt pair and let $n$ be a frontier node
	of $N$. Then $\hypertrip(T,N, \seq{z}) \rewr^* \hypertrip(T,N
	\setminus\{n\}, \seq{z})$ for every $\seq{z} \subseteq \var(N
	\setminus \{n\})$.
\end{lemma}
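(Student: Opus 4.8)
The plan is to realize the passage from $N$ to $N\setminus\{n\}$ as a short, explicit sequence of reduction operations on the associated hypergraph triplet. Write $\trip H = \hypertrip(T,N,\seq z)$, and let $m$ be the parent of $n$. Recall that the target $\hypertrip(T,N\setminus\{n\},\seq z)$ differs from $\trip H$ only in that the hyperedge $\var(n)$ disappears and the predicates living on the edge $m\to n$ are dropped. Lemma~\ref{lem:nice-removal-facts}(2) and~(6) already pin these changes down exactly: $\hypergraph(T,N\setminus\{n\}) = \hypergraph(T,N)\setminus\{\var(n)\}$ and $\preds(T,N\setminus\{n\}) = \{\theta\in\preds(T,N): \var(\theta)\cap(\var(n)\setminus\var(m)) = \emptyset\}$. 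So the whole task reduces to exhibiting reduction steps whose net effect matches this bookkeeping, keeping in mind that every operation leaves $\free(\trip H) = \seq z$ untouched.

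First I would dispose of the degenerate case $N=\{n\}$, in which $n$ is the (canonical, hence $\emptyset$-labelled) root: then both triplets are the empty triplet and zero steps suffice. Otherwise $n$ has a parent $m\in N$, and I would split on whether $\var(m)=\emptyset$. When $\var(m)\neq\emptyset$, the hyperedge $\var(m)$ genuinely occurs in $\hypergraph(T,N)$, and Lemma~\ref{lem:nice-rewr-applicable}(1) (whose hypothesis $\seq z\subseteq\var(N\setminus\{n\})$ is exactly ours) gives $\var(n)\cse_{\trip H}\var(m)$. A single (CSE) step with $e=\var(n)$ and $f=\var(m)$ then removes $\var(n)$ together with precisely the predicates $\theta$ satisfying $\var(\theta)\cap(\var(n)\setminus\var(m))\neq\emptyset$; comparing with Lemma~\ref{lem:nice-removal-facts}(2),(6) shows this one step lands exactly on $\hypertrip(T,N\setminus\{n\},\seq z)$.

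The interesting case is $\var(m)=\emptyset$, i.e.\ $m$ is the root, where (CSE) is unavailable because $\emptyset$ is never a hyperedge. Here I would exploit that $\var(n)\cse_{\trip H}\emptyset$ unfolds to $\equijoinvars_{\trip H}(\var(n))=\emptyset$ and $\ext_{\trip H}(\var(n))=\emptyset$, which is exactly the second clause in the definition of an ear. From $\ext_{\trip H}(\var(n))=\emptyset$ one deduces that every predicate meeting $\var(n)$ is in fact a filter with $\var(\theta)\subseteq\var(n)$, so a single (FLT) step strips all of them (and is simply omitted if there are none); by Lemma~\ref{lem:nice-removal-facts}(6) the surviving predicate set is then the desired one. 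After that, Lemma~\ref{lem:nice-rewr-applicable}(2), giving that no variable of $\var(n)$ is an equijoin variable, together with Lemma~\ref{lem:nice-removal-facts}(1), giving that no variable of $\var(n)$ occurs elsewhere in $N$, shows all of $\var(n)$ has become isolated, so one (ISO) step with $X=\var(n)$ deletes the hyperedge and produces the target.

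I expect the main obstacle to be precisely this root subcase: one must notice that a frontier node hanging directly off the $\emptyset$-labelled root cannot be eliminated by conditional-subset elimination and must instead be treated as a type-2 ear via (FLT) followed by (ISO), and then verify that the predicate accounting in both subcases agrees with the clean characterisations of Lemma~\ref{lem:nice-removal-facts}. The remaining steps are routine checks that each operation is genuinely applicable -- $\var(n)\neq\emptyset$ and $\var(n)\neq\var(m)$ from canonicality, and a nonempty $\Theta$ for (FLT) -- and that the free set $\seq z$ is preserved throughout.
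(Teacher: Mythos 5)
Your proposal is correct and follows essentially the same route as the paper's proof: the same three-way case split (the degenerate root case; $\var(m)\neq\emptyset$ handled by a single (CSE) step justified via Lemma~\ref{lem:nice-rewr-applicable}(1); and the root-parent case $\var(m)=\emptyset$ handled by (FLT) on the filter predicates followed by (ISO) on the now-isolated $\var(n)$), with the bookkeeping matched against Lemma~\ref{lem:nice-removal-facts}(2),(4),(6) exactly as the paper does. Your framing of the root subcase via ``$\var(n)\cse_{\trip H}\emptyset$'' is only a cosmetic repackaging of the paper's direct observation that all predicates on the edge $m\to n$ satisfy $\var(\theta)\subseteq\var(n)$ when $\var(m)=\emptyset$.
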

\begin{proof}
	For reasons of parsimony, let us abbreviate $\trip H_1 =
	\hypertrip(T, N, \seq{z})$ and $\trip H_2 = \hypertrip(T, N
	\setminus\{n\}, \seq{z})$.  We make the following case analysis.
	
	Case (1): Node $n$ is the root in $N$. Because the root of a
	canonical tree is labeled by $\emptyset$ we have $\var(n) =
	\emptyset$. Since $n$ is a frontier node of $N$, $N = \{n\}$. Thus,
	$\hypergraph(T, N) = \emptyset$ and
	$\hypergraph(T, N \setminus \{n\}) =
	\emptyset$.  Furthermore, $\preds(T, N) = \preds(T, N \setminus
	\{n\}) = \emptyset$ and $\seq{z} \subseteq \var(N \setminus \{n\}) =
	\var(\emptyset) = \emptyset$. As such, both $\trip H_1$ and $\trip
	H_2$ are the empty triplet $(\emptyset, \emptyset,
	\emptyset)$. Therefore $\trip H_1 \rewr^* H_2$.
	
	Case (2): $n$ has parent $m$ in $N$ and $\var(m) \not =
	\emptyset$. Then $\var(n) \not = \emptyset$ since in a canonical
	tree the root node is the only interior node that is labeled by the
	empty hyperedge. Therefore, $\var(n) \in \hypergraph(T, N)$, $\var(m) \in
	\hypergraph(T, N)$, and $\var(n) \cse_{\trip H_1} \var(m)$ by
	Lemma~\ref{lem:nice-rewr-applicable}(1). We can hence apply
	reduction (CSE) to remove $\var(n)$ from $\hypergraph(\trip H_1)$
	and all predicates that intersect with $\var(n) \setminus \var(m)$
	from $\preds(\trip H_1)$. By Lemma~\ref{lem:nice-removal-facts}(2)
	and \ref{lem:nice-removal-facts}(6) the result is exactly $\trip H_2$:
	\begin{align*}
	& \hypergraph(\trip H_2) \\
	& = \hypergraph(T, N \setminus \{n\}) \\
	& =  \hypergraph(T, N) \setminus \{ \var(n) \} 
	= \hypergraph(\trip H_1) \setminus \{ \var(n) \} \\[1.5ex]
	&  \preds(\trip H_2) \\
	& =  \preds(T, N \setminus \{ n\}) \\
	& = \{ \theta \in \preds(T,N) \mid \var(\theta) \cap (\var(n)
	\setminus \var(m)) = \emptyset \}\\
	& = \{ \theta \in \preds(\trip H_1) \mid \var(\theta) \cap (\var(n)
	\setminus \var(m)) = \emptyset \}
	\end{align*}
	
	Case (3): $n$ has parent $m$ in $N$ and $\var(m) = \emptyset$. Then
	$\var(n) \not = \emptyset$ since since in a canonical tree the root
	node is the only interior node that is labeled by the empty
	hyperedge.  By definition of \gjts, it follows that for every
	$\theta \in \pred(m \to n)$ we have $\var(\theta) \subseteq \var(n)
	\cup \var(m) = \var(n)$. In other words: all $\theta \in \pred(m \to
	n)$ are filters. As such, we can use reduction (FLT) to remove all
	predicates in $\pred(m \to n)$ from $\trip H_1$. This yields a
	triplet $\trip I$ with the same hypergraph as $\trip H_1$, same set
	of output variables as $\trip H_1$, and
	\begin{align*}
	\preds(\trip I) & = \preds(\trip H_1) \setminus \pred_T(m \to n)
	\\
	& = \preds(T, N) \setminus \pred_T(m \to n) \\
	& = \preds(T, N \setminus \{n\}) = \preds(\trip H_2),
	\end{align*}
	where the third equality is due to
	Lemma~\ref{lem:nice-removal-facts}(4). We claim that every variable
	in $e$ is isolated in $\trip I$. From this the result follows,
	because then we can apply (ISO) to remove the entire hyperedge
	$\var(e)$ from $\hypergraph(\trip I) = \hypergraph(\trip H_1)$ while
	preserving $\free(\trip I)$ and $\preds(\trip I)$. The resulting
	triplet hence equals $\trip H_2$.  To see that $e \subseteq
	\isolated(\trip I)$, observe that no predicate in $\preds(\trip I) =
	\preds(T, N \setminus \{n\})$ shares a variable with $\var(n) =
	(\var(n) \setminus \var(m))$ by
	Lemma~\ref{lem:nice-removal-facts}(6). Therefore $\var(n) \cap
	\var(\preds(\trip I)) = \emptyset$. Furthermore, $\var(n)
	\cap \equijoinvars(\trip I) = \emptyset$ because
	$\equijoinvars(\trip I) = \equijoinvars(\trip H_1)$ and no $x \in
	\var(n) = \var(n) \setminus \var(m)$ is in $\equijoinvars(\trip
	H_1)$ by Lemma~\ref{lem:nice-rewr-applicable}(2).
\end{proof}

\progress*
\begin{proof}
	By induction on $k$, the number of nodes in $N_1 \setminus N_2$. In
	the base case where $k = 0$, the result trivially holds since then
	$N_1 = N_2$ and the two triplets are identical. For the induction
	step, assume that $k > 0$ and the result holds for $k-1$. Because
	both $N_1$ and $N_2$ are connex subsets of the same tree $T$, there
	exists a node $n \in N_1$ that is a frontier node in $N_1$, and
	which is not in $N_2$. Then define $N'_1 = N_1 \setminus
	\{n\}$. Clearly $(T, N'_1)$ is again canonical, and $|N'_1\setminus N_2|
	= k-1$. Therefore, $\hypertrip(T, N'_1, \seq{z}) \rewr^*
	\hypertrip(T, N_2, \seq{z})$ by induction hypothesis. Furthermore,
	by $\hypertrip(T, N_1, \seq{z}) \rewr^* \hypertrip(T, N'_1,
	\seq{z})$ by Lemma~\ref{lem:progress-1step}, from which the result
	follows.
\end{proof}

\subsection{Proof of Lemma~\ref{lem:subset-removal}}
\label{sec:proof-lemma-subsetremoval}

\subsetRemoval*

\begin{proof}
	The proof is by induction on $k$, the number of hyperedges in $H_2
	\setminus H_1$. In the base case where $k = 0$, the result trivially
	holds since $H_1 \cup H_2 = H_1$ and the two triplets are hence
	identical. For the induction step, assume that $k > 0$ and the
	result holds for $k -1$.  Fix some $e \in H_2 \setminus H_1$ and
	define $H'_2 = H_2 \setminus \{e\}$. Then $|H'_2 \setminus H_1| = k
	-1$. We show that $(H_1 \cup H_2, \seq{z}, \Theta) \rewr^* (H_1 \cup
	H'_2, \seq{z}, \Theta)$, from which the result follows since $(H_1
	\cup H'_2, \seq{z}, \Theta) \rewr^* (H_1, \seq{z}, \Theta)$ by
	induction hypothesis. To this end, we observe that there exists $\ell
	\in H_1 \setminus \{e\}$ with $e \subseteq \ell$. Therefore,
	$\equijoinvars_{(H_1 \cup H_2, \seq{z}, \Theta)}(e) \subseteq e
	\subseteq \ell$. Moreover, $e \setminus \ell = \emptyset$. Therefore,
	$\ext_{(H_1\cup H_2, \seq{z}, \Theta)}(e \setminus \ell) = \emptyset
	\subseteq \ell$. Thus $e \cse_{(H_1 \cup H_2, \seq{z}, \Theta)} \ell$. We
	may therefore apply (CSE) to remove $e$ from $H_1 \cup H_2$,
	yielding $H_1 \cup H'_2$. Since no predicate shares variables with
	$e \setminus \ell = \emptyset$ this does not modify
	$\Theta$. Therefore, $(H_1 \cup H_2, \seq{z}, \Theta) \rewr^* (H_1
	\cup H'_2, \seq{z}, \Theta)$.
\end{proof}


\end{appendix}

\end{document}